\newcommand{\comments}[1]{}
\DeclareMathOperator{\im}{im}
\newcommand{\CZ}{\mathrm{C}Z}
\newcommand{\CCZ}{\mathrm{C}\mathrm{C}Z}
\DeclareMathOperator{\supp}{supp}
\DeclareMathOperator{\C}{\mathcal{C}}
\DeclareMathOperator{\type}{\text{type}}
\DeclareMathOperator{\1}{\textbf{1}}
\def\F{\mathcal{F}}
\def\t{\tilde}
\def\id{\mathrm{id}}
\newcommand{\te}[1]{\emph{#1}}
\numberwithin{equation}{section} 
\theoremstyle{definition}
\newtheorem{definition}{Definition}[section]
\newtheorem*{example}{Example}
\newtheorem{remark}{Remark}
\newtheorem{claim}[definition]{Claim}
\theoremstyle{plain}
\newtheorem{theorem}[definition]{Theorem}
\newtheorem{proposition}[definition]{Proposition}
\newtheorem{lemma}[definition]{Lemma}
\newtheorem{corollary}[definition]{Corollary}
\newtheorem{conjecture}[definition]{Conjecture}
\title{Poincaré Duality and Multiplicative Structures on Quantum Codes}
\author{
    Yiming Li \\  \normalsize Tsinghua \& Harvard
	\and Zimu Li \\ \normalsize Tsinghua
	\and Zi-Wen Liu  \\  \normalsize Tsinghua
	\and Quynh T. Nguyen \\ \normalsize  Harvard
}
\date{}
\begin{document}

\maketitle

\begin{abstract}

Quantum LDPC codes have attracted intense interest due to their advantageous properties for realizing efficient fault-tolerant quantum computing. In particular, sheaf codes represent a novel framework that encompasses all known good qLDPC codes with profound underlying mathematics. In this work, we generalize Poincaré duality from manifolds to both classical and quantum codes defined via sheaf theory on $t$-dimensional cell complexes. Viewing important code properties including the encoding rate, code distance, local testability soundness, and efficient decoders as parameters of the underlying (co)chain complexes, we rigorously prove a duality relationship between the $i$-th chain and the $(t-i)$-th cochain of sheaf codes.

We further build multiplicative structures such as cup and cap products on sheaved chain complexes, inspired by the standard notions of multiplicative structures and Poincaré duality on manifolds. This immediately leads to an explicit isomorphism between (co)homology groups of sheaf codes via a cap product. As an application, we obtain transversal disjoint logical $\CZ$ gates with $k_{\CZ}=\Theta(n)$ on families of good qLDPC and almost-good quantum locally testable codes. Moreover, we provide multiple new methods to construct transversal circuits composed of $\CCZ$ gates as well as for higher order controlled-$Z$ that are provably logical operations on the code space. We conjecture that they generate nontrivial logical actions, pointing towards fault-tolerant non-Clifford gates on nearly optimal qLDPC sheaf codes.
Mathematically, our results are built on establishing the equivalence between sheaf cohomology in the derived-functor sense, Čech cohomology, and the cohomology of sheaf codes, thereby introducing new mathematical tools into quantum coding theory.
\end{abstract}

\newpage
\tableofcontents

\newpage
\section{Introduction}

\subsection{Background and main contributions}\label{sec:background}

As we strive for practically useful large-scale quantum computers, quantum error-correcting codes are a cornerstone, allowing quantum information to be stored and processed reliably in the presence of noise and ultimately enabling fault tolerance with sufficiently low overhead. Recently, there has been a surge of interest in quantum low-density parity-check codes (qLDPCs or qLDPC codes) and locally testable codes (qLTCs) (especially those with high encoding rates) for both practical and theoretical reasons, as they offer pathways toward efficient fault tolerance in the real world~\cite{gottesman2013fault,FawziGrospellierLeverrier2018_ConstOverheadFT,CohenEtAl2022_LowOverheadFTqLDPC,LeverrierZemor2023_DecodingQTanner,yamasaki2024time,GuEtAl2024_SingleShotGoodqLDPC,nguyen2025quantum,TamiyaKoashiYamasaki2024_PolylogConstSpaceqLDPC} as well as bearing profound connections with fundamental questions in computation theory such as the prominent quantum probabilistically checkable proofs (qPCP) conjecture~\cite{AharonovAradVidick2013_qPCPSurvey,AharonovEldar2015_QLTC,eldar2017local,anshu2023nlts,CrossHeNatarajanSzegedyZhu2024_QLTCConstantSoundness,Golowich2024_NLTSPlantedqLDPC,Vidick2023_QCodesLocalTestabilityIP, anshu2024circuit}.

Besides improving the information protection or memory capability of quantum codes (which is what the extensive work on improving the standard code parameters is about), the implementation of logical actions, ideally in a simple and fault-tolerant manner, represents another vital yet morally competing objective.
Recent years have seen intensive interest and effort devoted to understanding fault-tolerant gates on qLDPC codes, leading to results on both achievable constructions \cite{KrishnaPoulin2021_FTGatesHGP,QuintavalleWebsterVasmer2023_PartitionHGPClifford,breuckmann2022fold,PatraBarg2025_TargetedCliffordHGP,Breuckmann:2024cupandgate,zhu2023non,ScrubyPesahWebster2024_QuantumRainbowCodes,lin2024transversalnoncliffordgatesquantum,Golowich_Lin2024,zhu2025transversalnoncliffordgatesqldpc} and fundamental no-go theorems~\cite{BurtonBrowne2022_HGPTransversalLimits,FuZhengLiLiu2025_ProductCodesNoGo}.  However, the extent to which desirable code parameters and logical gates can be combined remains far from being well understood.

The construction and study of codes have drawn deeply on formalisms, insights and methods from algebraic topology. 
More concretely, CSS codes admit a natural algebraic correspondence to (co)chain complexes, enabling the usage of various powerful mathematical techniques to advance quantum coding theory. 
Notably, it has been recognized in recent years that sheaf theory provides a powerful machinery for systematically tracking locally defined data and gluing it into the global structures, giving rise to a rich framework of CSS codes known as sheaf codes. In particular, recent breakthroughs in the construction of good qLDPC, good classical LTC and almost good qLTC \cite{PK2022Good,DHLV2022,leverrier2022quantum,Dinur2024sheaf} are achieved within the sheaf code framework.

In the constructions of  \cite{PK2022Good,leverrier2022quantum}, the chain complexes associated with the codes are symmetric, so one only needs to establish one-sided code parameters. In general, however, the chain complex needs not be symmetric, as is the case in \cite{DHLV2022,Dinur2024sheaf}. In these constructions, methods from high-dimensional expanders are first used to establish one-sided (coboundary direction) parameters such as distance or soundness, and then the parameters on the other side (boundary direction) are related to the coboundary direction parameters of the same complex yet equipped with a different ``dual" sheaf. Roughly speaking, this relation involves the parameters on the $i$-th chain complex equipped with a sheaf and the parameters on the $(t-i)$-th cochain complex equipped with the dual sheaf, where $t$ is the dimension of the cubical complex. Additionally, in \cite{DHLV2022,nguyen2025quantum}, it is also shown that the decoders have a similar relationship. This prompts us to ask the question: is this kind of ``duality" a general phenomenon?

In algebraic topology, a fundamental result known as Poincar\'e duality states that for a $t$-dimensional oriented closed manifold, the $i$-th cohomology group is isomorphic to the $(t-i)$-th homology group, and the isomorphism is given by the cap product with the fundamental class, which is a homology class lying in $t$-th homology group. The ``duality" phenomenon discovered in sheaf codes on cubical complexes closely resembles the Poincaré duality for manifolds, or a more general version called Verdier duality. In a recent work \cite{lin2024transversalnoncliffordgatesquantum}, it is argued that a duality of homology groups for sheaf codes holds provided that the sheaf satisfies a property called local acyclicity, but an explicit isomorphism is missing. Meanwhile, if the associated local codes satisfy the multi-orthogonality condition \cite{bravyi2012magic,Paetznick_2013}, a kind of cup product can be defined to support transversal non‑Clifford gates. Inspired by these observations, we go further and establish a systematic theory of duality for sheaf codes as well as a wide variety of multiplicative structures that enable logical gates. We also define a topological basis on the combinatorial cell complex as in \cite{First2022,Panteleev2024}, which enables a comprehensive sheafification process to formalize the sheaf structures for quantum codes. More importantly, this fully rigorous framework illuminates our discovery of the equivalence between several important cohomologies and motivates the definition of our cup products, along with other results.

The key message of our code duality theory can be distilled as follows:

\begin{theorem}[Informal, see Theorem \ref{thm:Poincaré_duality}]
    Let $X$ be a $t$-dimensional cell complex equipped with locally acyclic sheaf $\F$, then there is a dual sheaf $\F^\perp$ such that
   for any location $0 \leq i \leq t$, the code rate, code distance, soundness and decoder properties of the quantum or classical codes associated with $C_i(X,\F)$ and $C^{t-i}(X,\F^\perp)$ are essentially equivalent. There is also an explicit isomorphism $H_i(X,\F) \cong H^{t-i}(X,\F^\perp)$ induced by cap product. 
\end{theorem}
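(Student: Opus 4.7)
The plan is to construct an explicit chain map realizing the cap product with a fundamental class, verify it is a quasi-isomorphism under local acyclicity, and then track how each code-theoretic quantity transports along it. I would first define the dual sheaf $\F^\perp$ cell-wise: on each cell $\sigma$, let $\F^\perp(\sigma)$ be the orthogonal complement of $\F(\sigma)$ in the ambient local space, with corestriction/restriction maps adjoint to those of $\F$. An early lemma to prove is that local acyclicity is self-dual, i.e., $\F$ is locally acyclic iff $\F^\perp$ is, so that the hypothesis survives the passage to the dual. Next, choose combinatorial orientations on the top-dimensional cells and form the fundamental class $[X] = \sum_{\sigma \in X_t} \sigma$ in $C_t(X)$; this is a cycle when the orientations are compatible.

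With $[X]$ in hand, I would define $\_ \cap [X] : C^{t-i}(X,\F^\perp) \to C_i(X,\F)$ by declaring, for each $i$-cell $\tau$ and each top cell $\sigma \supset \tau$, a local pairing $\F^\perp(\sigma)\otimes \F(\sigma) \to R$ induced by the orthogonal decomposition, and summing the resulting contributions with the usual simplicial/cubical sign conventions. The chain-map identity $\partial_i \circ (\_ \cap [X]) = \pm (\_ \cap [X]) \circ \delta^{t-i-1}$ should then follow from a direct expansion using $\partial[X] = 0$ and compatibility of the adjoint transition maps.

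The central step is showing that $\_ \cap [X]$ is a quasi-isomorphism. Here I would exploit the equivalence, established earlier in the paper, between the cohomology of the sheaf code complex and derived-functor sheaf cohomology, and then run a local-to-global argument along the stars $\St(\sigma)$. On each star, local acyclicity of $\F$ trivializes the complex to its stalk $\F(\sigma)$, and the same for $\F^\perp$; the cap product there reduces to the nondegenerate pairing $\F(\sigma) \otimes \F^\perp(\sigma) \to R$, which is an isomorphism between stalks. A Mayer-Vietoris / Čech spectral sequence argument over the cover by stars then glues these local isomorphisms into the global statement $H^{t-i}(X,\F^\perp) \cong H_i(X,\F)$.

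Finally, I would transport the code parameters. Rate follows immediately from the (co)homology isomorphism. For distance, observe that the cap product is supported cell-locally, so the weight of a cochain $\alpha$ and of $\alpha \cap [X]$ differ by at most a constant factor bounded by the maximal multiplicity with which a cell appears in the boundary of a top cell, which is $O(1)$ for bounded-degree complexes; hence minimum-weight representatives match up to constants. Soundness and decoder efficiency transport along the same cell-local map by the same bounded-multiplicity argument: a local inconsistency certificate or a decoding step on one side pushes forward to one on the other with only constant-factor overhead. The main obstacle I anticipate is the quantitative, as opposed to merely cohomological, preservation — ensuring that soundness constants and decoding radii survive the cap product requires carefully exploiting the bounded local geometry of the cell complex and the compatibility of the orthogonal decomposition of $\F(\sigma)$ with the distance/soundness filtration, rather than the relatively cleaner rank-level isomorphism.
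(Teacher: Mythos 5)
Your route is genuinely different from the paper's, and it has two substantive gaps.

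\textbf{The fundamental class is not a cycle.} You propose defining $[X]=\sum_{\sigma\in X_t}\sigma$, observing $\partial[X]=0$ when orientations are compatible, and deducing the chain-map identity $\partial\circ(\_\frown[X])=(\_\frown[X])\circ\delta$ from the Leibniz rule. The difficulty is that in the sheaved setting, $[X]$ lives in $C_t(X,\F^\perp\!\otimes\F)$, and the boundary map there is not the scalar boundary map; it involves the transposed restriction maps $\F^\perp_{\tau,\sigma}\otimes\F_{\tau,\sigma}$. In general $\partial[X]\neq 0$, and the paper explicitly flags this (``$[X]$ needs not to be a cycle''). Consequently $\partial(\alpha\frown[X])=\delta\alpha\frown[X]+\alpha\frown\partial[X]$ has a nonvanishing second term, so your argument that cocycles map to cycles does not go through directly. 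The paper sidesteps this by first constructing the isomorphism $D$ via a flabby resolution of $\F^\perp$ by the sheaves $\F_k$ (equivalently by a spectral-sequence collapse on the double complex $K^{p,q}=C^p(X,\F_{t-q})$) and then showing, by an explicit zig-zag / diagram-chase identity, that $D[\alpha]=[\alpha\frown[X]]$; the fact that $\alpha\frown[X]$ is a cycle is then a \emph{consequence}, not a hypothesis. Your local-to-global / Čech argument that the cap product is a quasi-isomorphism similarly assumes a local triviality (``the cap product there reduces to the nondegenerate stalk pairing'') that does not hold: cap product pushes information from $t$-cells down to $i$-cells through a chain of boundary operations, and does not localize to an isomorphism on each star. (You would also need $\F^\perp$ to be locally acyclic, which is a claim you assert but would have to prove; the paper's argument deliberately uses acyclicity of $\F$ only, routed through the flabby resolution of $\F^\perp$.)

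\textbf{Weight transport needs both directions.} Your claim that the cap product is ``supported cell-locally, so minimum-weight representatives match up to constants'' controls only one direction: a low-weight cochain $\alpha$ yields a low-weight chain $\alpha\frown[X]$. But the systolic-distance, soundness, and decoder statements are two-sided equivalences, and the \emph{inverse} of $D$ is not itself a cap product and is not obviously cell-local. A small-weight cycle in $C_{t-i}(X,\F)$ need not be hit by a small-weight cocycle under $\frown[X]$; one must find a low-weight preimage \emph{in the right cohomology class}. The paper achieves this via an explicit diagram chase through the double complex, iteratively choosing the intermediate terms $x^{k,\ell}$ to minimize block Hamming weight at each step and bounding the blow-up by the sparsity constants $m^j_i,M^j_i$; the inverse map, the decoder reduction, and the expansion bound all come out of the same chase. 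Without a comparable mechanism, your ``bounded-multiplicity'' argument establishes only one of the two inequalities in each parameter duality.
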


{The phenomenon that sheaf codes constructed from $\mathcal{F}$ and $\mathcal{F}^\perp$ exhibit closely related parameters has been observed in prior works such as \cite{DHLV2022, Dinur2024sheaf, First2024}. However, these results were obtained only in explicit constructions, and the relationships they present are not always formulated in a clean or general manner. Our work generalizes these earlier observations into a systematic duality theory, which also yields new general insights for code and logical gate constructions.
}

The multiplicative structures (such as cup and cap products) and Poincar\'e duality are closely related. We also establish a systematic theory of cup and cap products for sheaf codes. It has already been recognized that the cup product can be used to construct logical multi-controlled-$Z$ gates  \cite{10.21468/SciPostPhys.14.4.065,Chen_2023,Wang_2024,Breuckmann:2024cupandgate,zhu2025transversalnoncliffordgatesqldpc,Golowich_Lin2024}. Here we can also use our cup and cap products to produce logical gates, leading to new results. Using our code Poincar\'e duality theorem, we show in Theorem \ref{thm: CZ on good qLDPC} and \ref{thm: CZ on almost good qLTC} that  
    there exist $[\![n,\Theta(n),\Theta(n)]\!]$ qLDPC codes with transversal disjoint logical $\CZ$ gates such that $k_{\CZ}=\Theta(n)$ (linearly many independent logical $\CZ$'s guaranteed by subrank; see formal definitions in  Section~\ref{sec:controlled-Z}), 
    and $[\![n,\Theta(n),\Theta(n/(\log\,n)^3)]\!]$ quantum LTCs with soundness $1/(\log\, n)^3$ and transversal disjoint logical $\CZ$ gates such that $k_{\CZ}=\Theta(n)$.
To our knowledge, these results are the first demonstration of (almost) good qLDPC and qLTC supporting transversal $\CZ$ gates on $\Theta(n)$ many logical qubits.

We are also able to construct multi-linear cohomological invariants on the almost good qLTC of \cite{Dinur2024sheaf}, which induce transversal logical multi-controlled-$Z$ gates as long as they are not trivial, i.e., $k_{\CCZ}>0$. While we cannot rigorously prove this at the moment, we provide evidence (see Section~\ref{sec:explicit gate construction}) that one of our invariants will indeed yield $k_{\CCZ}>0$ and present the problem as a conjecture.
\begin{conjecture}\label{main conjecture}
    Let $X$ be a $t$-dimensional cubical complex and $\F$ be a sheaf satisfying the requirement in~\cite{Dinur2024sheaf}. We conjecture that for $2\leq i,j,k,l\leq t-2$, $i+j+k\leq t$, $i+j\leq l$, there exist (co)homology classes $\alpha\in H^i(X,\F)$, $\beta\in H^j(X,\F)$, $\gamma\in H^k(X,\F)$, $\theta\in H_l(X,\F)$, such that at least one of the following three (co)homological classes are not zero:
    \begin{itemize}
        \item $\alpha\smile_{\mathrm{I}}\beta\smile_{\mathrm{I}}\gamma\neq 0\in H^{i+j+k}(X,\F^{\otimes 3})$,
        \item 
        $(\alpha\smile_{\mathrm{II}}\beta)\smile_{\mathrm{III}}\gamma\neq 0\in H^{i+j+k}(X,\F),$
        \item 
        $(\alpha\smile_{\mathrm{I}}\beta)\frown_{\mathrm{II}}\theta\neq 0\in H_{l-i-j}(X,\F)$.
    \end{itemize}
    Then there exist $[\![n,\Theta(n),\Theta(n/\emph{poly}\log\,n)]\!]$  quantum codes with soundness $1/\Theta(n/\emph{poly}\log\,n)$ (nearly good qLTC) that support nontrivial transversal logical $\CCZ$. 
\end{conjecture}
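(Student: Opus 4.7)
The plan is to split the conjecture into two layers: a coding-theoretic implication translating any non-vanishing cohomological product into a nontrivial transversal $\CCZ$ action, and the underlying non-vanishing statement on the Dinur sheaf, which is the genuinely hard content.

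First, I would fix the base code to be the Dinur almost good qLTC of \cite{Dinur2024sheaf}, whose chain complex already supplies the desired $[\![n,\Theta(n),\Theta(n/\mathrm{polylog}\,n)]\!]$ parameters with $1/\mathrm{polylog}\,n$ soundness at each admissible degree; by the code Poincaré duality of Theorem \ref{thm:Poincaré_duality}, these parameters transport across chain/cochain dimensions using the dual sheaf $\F^\perp$, so codes with the required parameters live at every location appearing in the hypothesis. For the first case, given classes with $\alpha \smile_{\mathrm{I}} \beta \smile_{\mathrm{I}} \gamma \neq 0 \in H^{i+j+k}(X,\F^{\otimes 3})$, the systematic cup-product-to-gate construction of the paper (in the spirit of \cite{Breuckmann:2024cupandgate, zhu2025transversalnoncliffordgatesqldpc, Golowich_Lin2024}) assigns to a triple of cellular representatives a physical $\CCZ$ circuit whose logical action factors through the triple cup pairing on cohomology; non-vanishing of the pairing then forces the induced trilinear logical form to have rank at least one, hence $k_{\CCZ} \geq 1$. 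The iterated-cup case $(\alpha \smile_{\mathrm{II}} \beta) \smile_{\mathrm{III}} \gamma$ is handled analogously using the compatibility of the three cup products established earlier, and the cap-product case $(\alpha \smile_{\mathrm{I}} \beta) \frown_{\mathrm{II}} \theta$ reduces to the first two by applying the explicit cap-product isomorphism $H_l(X,\F) \cong H^{t-l}(X,\F^\perp)$ to re-express the output as a cohomological pairing, at which point the same logical-gate translation applies.

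The main obstacle, and the reason this is stated as a conjecture rather than a theorem, is establishing the non-vanishing hypothesis itself on the specific Dinur sheaf. A natural attack would proceed locally-to-globally: use the equivalence between derived-functor, \v{C}ech, and sheaf-code cohomology established in our framework to reduce non-vanishing of a cup product to a multi-orthogonality statement on the local codes (in the spirit of \cite{bravyi2012magic, Paetznick_2013}) attached to each cell-star, verified via the explicit constructions sketched in Section \ref{sec:explicit gate construction}, and then propagate local non-triviality to global cohomology through a spectral-sequence or Mayer--Vietoris argument on the cubical complex. The hard part is that the cohomology of a highly nontrivial sheaf such as Dinur's can hide global coboundary obstructions that annihilate locally non-zero products, so the real task is showing that the long exact sequences of the sheafification do not kill the pairing while simultaneously preserving the almost-good qLTC parameters --- precisely the tension between code quality and logical-gate richness that makes non-Clifford fault tolerance on nearly optimal qLDPC codes so delicate.
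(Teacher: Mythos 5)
Since this statement is labeled a \emph{conjecture}, the paper does not prove it; what it does supply (at the end of Section~\ref{sec:explicit gate construction}) is the derivation of the implication: given a non-vanishing cohomological product as in one of the three bullets, the dual pairing of Proposition~\ref{dual pairing between homology groups} produces a cycle $x$ with nonzero pairing, which then defines a trilinear cohomology invariant $f_x$ of constant sparsity and $k_{\CCZ}\geq 1$ on the almost-good qLTC of \cite{Dinur2024sheaf}, and Lemma~\ref{lem:constant-depth-to-transversal} converts this to a transversal circuit. Your ``layer 1'' reproduces this implication in essentially the same way, modulo a cosmetic difference in the cap case: the paper pairs the nonzero class $(\alpha\smile_{\mathrm{I}}\beta)\frown_{\mathrm{II}}\theta$ directly against a cycle via the dual pairing, whereas you first push it through the Poincar\'e duality isomorphism to a cohomology class and then pair; both routes work and yield the same invariant up to the identification $D$ of Theorem~\ref{thm:Cap induced poincare dual map}.

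The gap is in your ``layer 2'' attack strategy for the non-vanishing. You propose to ``reduce non-vanishing of a cup product to a multi-orthogonality statement on the local codes attached to each cell-star, verified via the explicit constructions in Section~\ref{sec:explicit gate construction}.'' This runs directly against the paper's stated framework: the paper emphasizes that its cup and cap products are defined \emph{without} imposing multi-orthogonality on the local codes, precisely because multi-orthogonality is believed incompatible with the two-way product expansion that is needed for (nearly) optimal parameters (see the discussion at the end of Section~\ref{sec:background} and the opening of Section~\ref{sec:multiplicative}, citing \cite{kalachev2025maximallyextendableproductcodes}). The explicit constructions in Section~\ref{sec:explicit gate construction} do not use multi-orthogonality; Theorem~\ref{thm:ccz1} instead plants all-ones vectors into local codes and carefully deploys repetition codes so that the product-expansion condition of Claim~\ref{claim:plant-rep-code} survives, at the price of only controlling one-sided parameters. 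So a multi-orthogonality-based local-to-global approach would be expected to destroy exactly the product-expansion property you need to keep the $\Theta(n/\mathrm{poly}\log n)$ distance and soundness; you would need a different local criterion for cup-product non-vanishing, one that is consistent with product expansion rather than with multi-orthogonality.
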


Note that one can always perform numerical experiments to compute the cup and cap products, thereby determining whether the resulting logical operation is trivial (i.e., identity) or not. While the computational efficiency of such experiments is presently unclear, they may yield concrete insights and merit further investigation.

We point out that our framework does not impose multi-orthogonality or any other extra properties on the local codes, different from existing methods for building (almost) good qLDPC codes \cite{kalachev2025maximallyextendableproductcodes}. This is because our cohomological invariants follow from a rigorous synthesis of cup and cap products on combinatorial cell complexes which is independent of the choice of local coefficients.

From a more mathematical perspective, another contribution of our work is that it establishes connections among three cohomology theories: the sheaf cohomology for sheaf codes, the sheaf cohomology as the right derived functor of global sections, and \v{C}ech cohomology. There are studies on the relation between the cohomology of sheaf codes and right derived functors in \cite{curry2014sheavescosheavesapplications,First2022,First2024}. However, the advanced techniques from the theory of right derived functors were not fully exploited and utilized in studying quantum codes before. In our work, we obtain the bounds for $k_{\CZ}$ in Theorems \ref{thm: CZ on good qLDPC} and \ref{thm: CZ on almost good qLTC} by employing these abstract yet powerful tools from sheaf theory, and we are not aware of any possible alternative proof. Moreover, our framework is established by  systematically switching perspectives among these three equivalent cohomology theories: for example, we establish the cup and cap products by translating sheaf codes into the theory of \v{C}ech cohomology, and provide a proof of duality of logical qubits using flabby resolutions of sheaves. These bridges enable us to import a broad range of techniques from the well-developed theory of sheaves in mathematics into the study of quantum codes.
Furthermore, we believe that our constructions are of independent mathematical interest  in algebraic topology and combinatorics.

\subsection {Proof overview}
\paragraph{Poincar\'e duality.} Given a sheaf $\F$ on a $t$-dimensional cell complex $X$, we are able to construct a series of sheaves $\F_i, 0\leq i\leq t$ which serves as an resolution of the dual sheaf $\F^\perp$ (Definition \ref{def:sheaf_dual_sheaf}). In Proposition \ref{prop:flabby_resolution}, we show that there is an exact sequence of sheaves
\begin{equation}
	\begin{tikzcd}
		0 \arrow[r] & \mathcal{F}^{\perp}\arrow[r]
		& \mathcal{F}_t \arrow[r] & \mathcal{F}_{t-1}\arrow[r] & \cdots \arrow[r] & \mathcal{F}_1 \arrow[r] & \mathcal{F}_0\arrow[r] & 0
	\end{tikzcd}.
\end{equation}
Formally, this is a flabby resolution (Definition \ref{def:flabby}), based on which we are able to prove that the following two (co)homology groups are isomorphic:
\begin{equation}
	H^i(X,\F^\perp) \cong H_{t-i}(X,\F)
\end{equation}
for any $0 \leq i \leq t$.

Given this resolution, it is natural to consider the double complex $K^{p,q} \coloneqq  C^p(X,\mathcal F_{t-q})$, as is standard in Čech cohomology (for example, see \cite[Section 20.25]{stacks-project} for more details). This double complex has already been used in \cite{Dinur2024sheaf}.
As a double complex, it admits two differentials or (co)boundary operators: the horizontal differential $d'$ (Eq.~\eqref{eq:d'_1}) and the vertical differential $d''$ (Eq.~\eqref{eq:d''}). If the based complex $X$ is well-behaved, e.g., any simplicial complex or cubical complex, then $d'$ should always be exact. If $\F$ is a locally acyclic sheaf (Definition \ref{def:acyclic_sheaf}), then $d''$ is also exact. That is, the local acyclicity of $\F$ endows the double complex $K^{p,q}$ with a certain symmetry. By explicit calculation (Proposition \ref{prop:d'_d''_exact}), it can be shown that the chain complex $C_\bullet(X,\F)$ is encoded in the vertical direction of the double complex, and $C^\bullet(X,\F^\perp)$ is encoded in the horizontal direction. The exactness of $d'$ and $d''$ acts as a bridge connecting the two directions. More explicitly, this allows us to apply the technique of diagram chasing from homological algebra, which produces the explicit isomorphism between $H^i(X,\F^\perp)$ and $H_{t-i}(X,\F)$. By studying the relation of block Hamming weights of elements in $K^{p,q}$ during diagram chasing, we are able to demonstrate that the code distances, decoders, and (co)boundary expansions of the codes on $C^i(X,\F^\perp)$ and $C_{t-i}(X,\F)$, respectively, are tightly related (Theorem \ref{thm:Poincaré_duality}). Actually, a similar method can be used to prove the original Poincar\'e duality with scalar field coefficients.

Moreover, after establishing multiplicative structures on sheaf codes in Section \ref{sec:multiplicative}, we find that the isomorphism $H^i(X,\F^\perp) \cong H_{t-i}(X,\F)$ can be expressed by cap product with a particular chain element, denoted by $[X]$ (Theorem \ref{thm:Cap induced poincare dual map}). This completes the description of Poincar\'e duality on sheaf codes.

\paragraph{Logical gates.} CSS codes can be represented by cochain complexes $C^\bullet$ over a field $\mathbb{F}$ (typically of characteristic $2$). Usually, these cochain complexes are obtained from certain topological objects such as manifolds, graphs or high-dimensional expanders. It is well known that the key to constructing a multi-controlled-$Z$ gate is to find a multi-linear function. For instance, to build a logical $\CCZ$, we need 
\begin{equation}
	f: C^i \times C^j \times C^k \longrightarrow \mathbb{F},
\end{equation}
such that $f$ is well-defined on cohomology classes; that is, if $\forall \alpha'\in C^{i-1},\beta\in \ker\delta^j, \gamma\in\ker\delta^k$, we require $f(\delta\alpha',\beta,\gamma)=0$, and similarly for coboundaries in the other arguments. Such maps are called cohomological invariants.

A well-studied multiplicative structure on cohomologies is the cup product. For example, we may have
\begin{equation}
	\smile:C^i\times C^j\longrightarrow C^{i+j},
\end{equation}
such that the Leibniz rule holds, i.e., for $\forall \alpha\in C^i,\beta\in C^j$,
\begin{equation}
	\delta(\alpha\smile\beta)=\delta\alpha\smile\beta+\alpha\smile\delta\beta.
\end{equation}
Now, given any cohomological invariant linear function $g: C^{i+j}\rightarrow\mathbb{F}$, let $h(\alpha,\beta)=g(\alpha\smile\beta)$ and direct calculations show that $h$ is a cohomological invariant. This argument can be generalized to higher orders by the associativity of cup products (Proposition \ref{prop:cup_assoc}). Therefore, the cup product helps to reduce the task of constructing logical gates to finding single-linear cohomological invariant functions.

Fortunately, the multiplicative structures on cohomology also offer a natural function for this purpose, namely the pairing between (co)homology groups (see Section \ref{subsec:cup_cap_pairing}):
\begin{equation}
	\langle-,-\rangle : C^i\times C_i \longrightarrow\mathbb{F}.
\end{equation}
Any cocycle $\alpha\in C^i$ or cycle $x\in C_i$ will induce such linear functions $\langle\alpha,-\rangle$, $\langle-,x\rangle$. More importantly, the cup product is often given by geometrical relations. For example, the cup product on a simplicial chain complex $C^\bullet(X,\mathbb{F})$ is given by
\begin{equation}
	(\alpha\smile\beta)([v_0,\cdots, v_{i+j}])=\alpha([v_0,\cdots, v_i])\beta([v_i,\cdots, v_{i+j}]),
\end{equation}
where $[v_0,\cdots, v_{i+j}]$ is an $(i+j)$-dimensional simplex in a simplicial complex $X$. Therefore, the cup product of two standard bases (representing $i$- and $j$-dimensional cells respectively) is nonzero if and only if they are adjacent. Note that qLDPC constructions the underlying complex is sparse, meaning that each cell is joined to only $O(1)$ other cells.
Therefore, the multi-linear cohomological invariants built by cup products always lead to constant-depth physical circuits, which are fault-tolerant and thus crucial for our quest for practical quantum computation. Another multiplicative structure, the cap product, can also be used to multi-linear cohomology invariants.

 We can construct logical multi-controlled-$Z$ gates on sheaf codes by extending these multiplicative structures to sheaf codes. We make the following observation: given a simplicial complex $X$ equipped with Alexandrov topology, the sheaf cochain complex and the cochain complex of \v{C}ech are equivalent. Since the cup product is well-studied in \v{C}ech theory, we can directly translate it to sheaf codes. For any two sheaves $\F$ and $\mathcal{G}$ on $X$, we define the cup product
\begin{equation}
	\smile: C^i(X,\F)\times C^j(X,\mathcal{G})\longrightarrow C^{i+j}(X,\F\otimes\mathcal{G}),
\end{equation}
which satisfies the Leibniz rule. 
In Section~\ref{subsec:cup_cap_pairing} we also define other kinds of products.

The qLDPC codes and qLTCs with best known parameters are constructed based on more general cell complexes, such as the cubical complexes in \cite{PK2022Good,leverrier2022quantum,Dinur2024sheaf}. To deal with this general case, we follow the approach of \cite{FreedmanHastings2021,Portnoy2023,lin2024transversalnoncliffordgatesquantum} to subdivide $X$ into a simplicial complex $\tilde{X}$. We then apply powerful tools from sheaf theory, such as pullback and pushforward of sheaves, to define a sheaf $\tilde{\F}$ on $\tilde{X}$ and prove that there is an isomorphism
\begin{equation}
	H^\bullet(X,\F)\cong H^\bullet(\t{X},\t{\F}).
\end{equation}
through some (co)chain maps (Eq.~\eqref{eq:S_sharp} and \eqref{eq:A_sharp}). With an explicit inverse of the isomorphism, we are able to define both cup and cap products on $X$ via those on $\tilde{X}$. When $X$, $\tilde{X}$ are sparse, the (co)chain maps are represented by sparse matrices, which {ensure that the induced gates are constant-depth}. 

Again, using techniques from sheaf theory, we can prove that $\t{\F}$ is locally acyclic when $\F$ is. Consequently, for a $t$-dimensional cell complex $X$ with locally acyclic sheaf $\F$, we are able to construct the following bilinear cohomological invariant:
\begin{equation}
	P:C^i(X,\mathcal{F}^\perp)\times C^{t-i}(X,\mathcal{F})\longrightarrow\mathbb{F},
\end{equation}
given by 
\begin{equation}
	P(\alpha,\beta)=\langle\alpha\smile\beta,[X]\rangle,
\end{equation}
where we call $[X]=\sum_{\tau\in X(t) }\tau\in C_t(X,\F^\perp\!\otimes \F)$ a generalized ``fundamental" class. It is defined in Theorem \ref{thm:Cap induced poincare dual map} and needs not to be a cycle. We prove that $P$ is a dual pairing, i.e., behaves like inner products on cohomology groups, and we may define a dual basis according to this pairing on cohomology groups. As an immediate application of the isomorphism $P$, we note that the sheaves in \cite{DHLV2022} and \cite{Dinur2024sheaf} are locally acyclic. The former is built on $2$-dimensional cubical complexes, and by setting $i=1$ we get Theorem \ref{thm: CZ on good qLDPC}. The latter is built on $4$-dimensional cubical complexes, and by setting $i=2$ we get Theorem \ref{thm: CZ on almost good qLTC}.

Furthermore, we define more kinds of both cup products and cap products, which give rise to a variety of multilinear cohomological invariants in general sheaf codes. As mentioned at the end of Section \ref{sec:background}, it is not yet clear how to prove nontrivial bounds on $k_{\CCZ}$ (Definition \ref{def:invariant}) for these invariants, but we conjecture that under the same assumption as in \cite{Dinur2024sheaf}, one can confirm that some of these invariants do not induce logical identity. By planting all-ones vectors into the local codes, one may show $k_{\CCZ}>0$ at the price of preserving the original code parameters for only one side, see Theorem~\ref{thm:ccz1}. Therefore, additional techniques are expected for the analysis of cup and cap products in order to provide a more precise estimation on $k_{\CCZ}$.

\subsection{Future directions}
\noindent\textbf{Good quantum locally testable code.} We anticipate that our result will be useful for the major open problem of constructing triply good qLTCs. In the recent work \cite{Dinur2024sheaf}, almost-good qLTCs were constructed. However, the polylog factor may be fundamental within their framework. In this construction, one needs a set $G$ with at least four pairwise commuting permutation sets. Intuitively, abelian structures lead to bad expansion properties, which is roughly the reason why the polylog factor occurs in their construction. Our results provide a framework for constructing sheaf codes which potentially enable us to abandon this pairwise commutative requirement, and by our Poincar\'e duality, since the parameters are related, it may suffice to bound code parameters and construct decoders for only one side. \\

\noindent\textbf{Transversal logical $\CCZ$ in good qLDPC and qLTC.} In our framework, the problem of transversal logical $\CCZ$ reduces to the calculation of cup and cap products. In algebraic topology, such multiplicative structures are notoriously difficult to compute, and a substantial body of theory has been developed for this purpose. We expect to develop analogous computational techniques for sheaf codes, which enable the certification of nontrivial logical $\CCZ$ in our framework. 

Proving a nontrivial $k_{\CCZ}$ lower bound on the almost good qLTC~\cite{Dinur2024sheaf} (as we conjecture) will have significant implications for the overhead of quantum fault tolerance. In~\cite{nguyen2025quantum}, the authors rely on this code to achieve a constant-space and $\log^{1+o(1)}$-time overhead. The factor $o(1)$ in the exponent originates from the use of $\CCZ$ state distillation subroutine in their fault tolerance scheme. If the almost-good qLTC has transversal gate with $k_{\CCZ} > 0$, then this subroutine can be omitted, thereby reducing the $\log^{o(1)}$ factor to $\mathrm{polyloglog}$.

\subsection{Organization}

In Section~\ref{sec:preliminaries}, we review the necessary background in topology (cell complexes, Alexandrov topology, sheaves) and the basic notions of quantum codes and logical gates. In Section~\ref{sec:duality}, we establish our Poincaré duality for sheaf codes. In Section~\ref{sec:multiplicative}, we develop cup and cap products for sheaf codes and apply them to construct fault-tolerant logical multi-controlled-$Z$ gates. Finally, in Appendix~\ref{appendix plant all-ones vector}, we generalize the technique of planting all-ones vectors into local codes in \cite{Golowich2024_NLTSPlantedqLDPC}, which we expect to be useful for bounding \(k_{\CCZ}\) for our construction.

\subsection{Acknowledgements}
We thank Anurag Anshu for valuable discussion and support. YL is supported by Tsinghua Xuetang Talents Program. ZL and ZWL are supported in part by a startup funding from YMSC, Dushi Program, and NSFC under Grant No.~12475023.

\section{Preliminaries}\label{sec:preliminaries}

Algebraic topology has been deeply involved in the study of quantum codes. Notably, sheaf theory underpins the recent construction of good classical LTC, qLDPC and almost good qLTC~\cite{PK2022Good,DHLV2022,leverrier2022quantum,Dinur2024sheaf}. In these works, sheaves on posets are used as a generalized coefficient in (co)chain complexes. However, in mathematics,  sheaves and sheaf cohomology are typically not formulated in this way. One may naturally ask whether the ``sheaf cohomology" in the context of quantum code is the same as that in mathematics defined by the right derived functor of global sections. Fortunately, this is a well-studied problem, and one can find a systematic exposition in \cite{curry2014sheavescosheavesapplications,First2022,First2024}. 

By equipping a poset $X$ with Alexandrov topology, we will show the equivalence between the cohomology theory of sheaf codes, the right derived functor of global sections, and the \v{C}ech cohomology. The interesting point is that by changing perspectives from different definitions, we are able to obtain an abundance of new mathematical objects with more properties that are not discovered before. These abstract tools have important applications: we offer a new theory of ``Poincar\'e duality" on sheaf codes first proposed in \cite{lin2024transversalnoncliffordgatesquantum} by using flabby resolutions of sheaves; we define cup and cap products by translating into the language of \v{C}ech cohomology, which leads to the construction of logical multi-controlled-$Z$ gate; we prove that the pullback sheaf of a locally acyclic sheaf is still locally acyclic, which plays an important role in working out the isomorphism of Poincar\'e duality using tools from the right derived functor definition.
These methods and results are newly introduced to the context of quantum codes and are essential to our study. They build on deep insights on the connections among various areas in sheaf cohomology and are of independent interest in algebraic topology and combinatorics.

Unless otherwise stated, we assume throughout the paper that the linear spaces are over a fixed finite field $\mathbb{F}_q$ with characteristic $2$ and are finite dimensional. Sometimes we may write $\mathbb{F}_q = \mathbb{F}$ for simplicity.

\subsection{Cell complexes and sheaves}
We first give some standard definitions on cell complexes and sheaves. Readers may refer to e.g., \cite{Bott:1982xhp,Hatcheralgtop} for a more comprehensive introduction and discussion of these concepts. 

\begin{definition}[Cell complex]
	A \te{finite cell complex} (or finite CW complex) is a topological space $X$ constructed inductively as follows:
	
	\begin{enumerate}
		\item Start with a discrete set $X^0$, whose points are regarded as $0$–cells.
		
		\item Inductively form the \emph{$n$–skeleton} $X^n$ from $X^{n-1}$ by attaching $n$–cells $e_\alpha^n$
		via maps
		\begin{equation}
		\varphi_\alpha : S^{n-1} \longrightarrow X^{n-1},
		\end{equation}
        where $S^{n-1} = \partial D^n$ is the boundary of an $n$-dimensional disk (a sphere).
		This means that $X^n$ is the quotient space of the disjoint union with $n$-dimensional disks $D_\alpha^n$
		\begin{equation}
		X^{n-1} \;\sqcup\; \bigsqcup_\alpha D_\alpha^n
		\end{equation}
		under the identifications $x \sim \varphi_\alpha(x)$ for $x \in \partial D_\alpha^n$.
		The cell $e_\alpha^n$ is the homeomorphic image of
		$D_\alpha^n - \partial D_\alpha^n$ under the quotient map.
		Therefore, as a set,
		\begin{equation}
		X^n = X^{n-1} \cup_\alpha e_\alpha^n,
		\end{equation}
		where each $e_\alpha^n$ is an open $n$–disk.
		
		\item Terminate this inductive process at a finite stage,
		setting $X = X^n$ for some $n < \infty$.
	\end{enumerate}
\end{definition}

Throughout the paper, we only care about finite cell complexes and will omit  ``finite" henceforth. We further require that the closure of each cell is compact to avoid unnecessary technicalities. For our coding-theoretic applications, we will not emphasize the point-set topological aspects of cell complexes. Instead, we view a cell complex primarily as a combinatorial object and focus on the incidence relations between cells. We therefore introduce the following definition capturing this combinatorial structure.

\begin{definition}[Cell poset]
	A \te{cell poset} $P_X$ is a poset constructed from a cell complex $X$ by
    \begin{equation}
    P_X\coloneqq \{e^n_\alpha:e^n_\alpha \ \text{is a $n$-cell in} \ X\},
    \end{equation}
    and for $e^n_\alpha$ and $e^m_\beta$, we define $e^n_\alpha\leq e^m_\beta$ if and only if $e^n_\alpha$ is a subset of $e^m_\beta$ in $X$. 
\end{definition}

From now on, we will abuse notation and simply denote  $P_X$ as $X$. To avoid confusion, we will explicitly emphasize ``cell complex $X$" or ``topological space $X$" for the original $X$, and call $P_X$ ``cell poset $X$".
We denote by $X(k)$ the set of $k$-dimensional cells in $X$, and call an element $\sigma\in X(k)$ a $k$-cell. For $\sigma\leq \tau\in X$ and $\dim(\sigma)<\dim(\tau)$, we will write $\sigma<\tau$. In particular, when $\dim(\sigma)=\dim(\tau)-1$, we write $\sigma\lessdot \tau$, and say $\sigma$ and $\tau$ are joined.

In this paper, we restrict attention to cell complexes whose cellular incidence numbers are either $\pm 1$ or $0$, or equivalently, a $k$-cell is incident to any given $(k-1)$-cell at most once.  This further implies that for fixed $k$-cell $\sigma$ and $(k+2)$-cell $\pi$ and $\sigma < \pi$, there exist even numbers of $(k+1)$-cell $\tau$ such that $\sigma \lessdot \tau \lessdot \pi$. For simplicity, one may also regard this property as the definition.

In fact, all results can be established without this assumption. However, this assumption substantially simplifies the notation and most constructions of interest satisfy it.
\begin{definition}[Sparse cell complex]
	We say that a family of cell complexes $\{X_n\}_{n=1}^\infty$ is \te{sparse} if for each $k$-cell in $X_n$, it is only joined to a uniform constant number of $(k+1)$-cells and $(k-1)$-cells.
\end{definition}

Throughout this paper, each $X_n$ is a finite cell complex, while the number of cells $|X_n|$ goes to infinity. We often neglect the subscript $n$ for $X_n$ and say $X$ is a sparse cell complex. To build LDPC codes on cell complex, this assumption is common. 

\begin{proposition}
	Suppose $X$ is a sparse cell complex, than for each $i<j<k$ and $j$-cell $f$, $|X_{\leq f}(i)|$, $|X_{\geq f}(k)|$ are all constants.  
\end{proposition}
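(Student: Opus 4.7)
The plan is to establish both bounds by induction on the dimension gap, using the sparsity hypothesis as the base case and the graded structure of the cell poset for the inductive step. The constants I obtain will depend only on the dimension gap $j-i$ or $k-j$ together with the uniform sparsity bound, not on $f$ itself or on which member of the family $\{X_n\}$ we are working in.

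First I would fix a uniform sparsity parameter $D$ such that every cell in every $X_n$ is joined (as $\lessdot$ or $\gtrdot$) to at most $D$ cells of adjacent dimension; such a $D$ exists by the definition of a sparse family. For the bound on $|X_{\leq f}(i)|$ I would induct on $j-i$. The base case $j-i = 1$ is immediate, since $X_{\leq f}(j-1) = \{g : g \lessdot f\}$ has cardinality at most $D$ by sparsity.

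For the inductive step the key structural observation is that the cell poset is graded by dimension: any $\sigma < f$ with $\dim \sigma = i < j = \dim f$ extends to a saturated chain $\sigma = \sigma_i \lessdot \sigma_{i+1} \lessdot \cdots \lessdot \sigma_j = f$. This follows from the standing assumption that incidence numbers lie in $\{0, \pm 1\}$ together with the closure-finiteness of CW complexes (the boundary of the $j$-cell $f$ is built inductively from lower-dimensional cells, so any $i$-face of $f$ lies in the closure of some codimension-one face $g \lessdot f$). Granting this, I have the decomposition
\begin{equation}
    X_{\leq f}(i) \;=\; \bigcup_{g \,\lessdot\, f} X_{\leq g}(i),
\end{equation}
and combining sparsity with the inductive hypothesis gives $|X_{\leq f}(i)| \leq D \cdot c(j-1-i)$. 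Iterating yields $c(j-i) \leq D^{\,j-i}$, a constant depending only on the dimension gap.

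The argument for $|X_{\geq f}(k)|$ is formally dual: I would induct on $k-j$ with base case $|X_{\geq f}(j+1)| \leq D$ from the upward half of sparsity, and inductive step using
\begin{equation}
    X_{\geq f}(k) \;=\; \bigcup_{h \,\gtrdot\, f} X_{\geq h}(k)
\end{equation}
to propagate the bound, giving $|X_{\geq f}(k)| \leq D^{\,k-j}$. The only substantive point to verify is the graded-poset property used in the inductive step; in the setting of this paper the $\pm 1/0$ incidence hypothesis and the CW structure make this routine, so I do not anticipate a serious obstacle. The proposition is essentially combinatorial bookkeeping: sparsity locally, propagated globally by induction.
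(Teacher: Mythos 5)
The paper states this proposition without proof, so I can only assess your argument on its own terms. The inductive strategy and the bound $D^{j-i}$ (dually $D^{k-j}$) are exactly right, and the step you correctly identify as the ``only substantive point to verify'' --- that the cell poset is graded --- is indeed the crux. But your dismissal of it as routine is where the proof has a genuine gap: the $\pm 1/0$ incidence hypothesis together with closure-finiteness does \emph{not} imply that every $i$-cell $\sigma \leq f$ lies below some codimension-one face $g \lessdot f$, so the decomposition $X_{\leq f}(i)=\bigcup_{g\lessdot f} X_{\leq g}(i)$ can fail.

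A concrete counterexample: let $X_N$ consist of $0$-cells $p_0,\dots,p_N$, $1$-cells $e_1,\dots,e_N$ forming a path, no $2$-cells, and a single $3$-cell $f$ attached by a continuous surjection $S^2\to X_N^1$ (such a map exists since $X_N^1$ is a contractible $1$-complex). All incidence numbers are $\pm 1$ or $0$, and every cell is joined to at most two cells of adjacent dimension, so $\{X_N\}$ is a sparse family satisfying the paper's stated standing hypotheses; yet $X_{\leq f}(1)=\{e_1,\dots,e_N\}$ is unbounded while $\bigcup_{g\lessdot f} X_{\leq g}(1)$ is empty. In other words, the proposition itself fails for this family, which means it tacitly relies on a regularity or gradedness assumption not spelled out in the paper --- for instance, that the closure of each cell is a subcomplex (a regular CW complex), which does hold for the simplicial and cubical complexes the paper actually works with. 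Under such a hypothesis your proof is correct and complete; without it the inductive step is unjustified. The fix is to state the gradedness (or regularity) assumption explicitly rather than claiming it follows from the $\pm 1/0$ incidence condition, which it does not.
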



\begin{definition}[Direct system and direct limit]\label{def:direct_limit}
Given a poset  $I$, a \te{direct system} of $\mathbb{F}$-vector spaces is a set of vector spaces and linear maps $\{V_i, f_{ij}:V_i\rightarrow V_j\}_{i\le j,\; i,j\in I}$ such that
\begin{equation}
f_{ii} = \mathrm{id}_{V_i}, \qquad f_{ik} = f_{jk}\circ f_{ij} \quad \text{for all } i\le j\le k.
\end{equation}
The \te{direct limit} of this system, denoted $\varinjlim_i V_i$, is defined as
\begin{equation}
\varinjlim_i V_i \;=\;
\left( \bigoplus_{i\in I} V_i \right)
\Big/ R,
\end{equation}
where $R$ is the subspace generated by all elements of the form
\begin{equation}
f_{ij}(v_i) - v_j, \qquad i\le j,\; v_i\in V_,\:v_j\in V_j.
\end{equation}
\end{definition}

\begin{definition}[Inverse system and inverse limit]
Given a poset $I$, an \te{inverse system} of $\mathbb{F}$-vector spaces is a set of vector spaces and linear maps $\{V_i, f_{ij}:V_j\rightarrow V_i\}_{i\le j,\; i,j\in I}$ such that
\begin{equation}
f_{ii} = \mathrm{id}_{V_i}, \qquad f_{ik} = f_{ij}\circ f_{jk} \quad \text{for all } i\le j\le k.
\end{equation}
The \te{inverse limit} of this system, denoted $\varprojlim_i V_i$, is the subspace
\begin{equation}
\varprojlim_i V_i
= \left\{ (v_i)_{i\in I} \in \prod_{i\in I} V_i
  \;\middle|\;
  f_{ij}(v_j)=v_i \text{ for all } i\le j
\right\}.
\end{equation}
\end{definition}

\begin{definition}[Presheaf]
Given a topological space $X$,  a \te{presheaf} $\F$ on  $X$ is a rule which assigns each open set $U\subseteq X$ a $\mathbb{F}$-linear space $\F(U)$ and to each inclusion $V\subseteq U$ a linear map called \te{restriction map} $\F_{U,V}:\F(U)\rightarrow\F(V)$ such that $\F_{U,U}=\id_U$, and whenever $W\subseteq V\subseteq U$ we have $\F_{U,W}=\F_{V,W}\circ \F_{U,V}$. 
\end{definition}

\begin{definition}[Morphism of presheaves]
Let $\F,\mathcal{G}$ be two presheaves. A \te{morphism} $\varphi : \mathcal{F} \to \mathcal{G}$ of presheaves on $X$ is a rule which assigns to each open $U \subseteq X$ a $\mathbb{F}$-linear map
\begin{equation}
\varphi(U) : \mathcal{F}(U) \to \mathcal{G}(U)
\end{equation}
compatible with restriction maps, i.e., whenever $V \subseteq U \subseteq X$ are open,
the diagram
\begin{equation}
\begin{tikzcd}
\mathcal{F}(U) \arrow[r, "\varphi"] \arrow[d, "\mathcal F_{U,V}"'] &
\mathcal{G}(U) \arrow[d, "\mathcal F_{U,V}"] \\
\mathcal{F}(V) \arrow[r, "\varphi"'] &
\mathcal{G}(V)
\end{tikzcd}
\end{equation}
commutes.
\end{definition}

\begin{definition}[Stalk]
Let $\mathcal F$ be a presheaf on $X$, and let $x \in X$.  
The \emph{stalk} of $\mathcal F$ at $x$ is defined as the direct limit of $\mathcal F(U)$ over all open sets $U$ containing $x$, namely
\begin{equation}
\mathcal F_x \coloneqq  \varinjlim_{x \in U} \mathcal F(U).
\end{equation}
\end{definition}

\begin{definition}[Sheaf]\label{def:sheaf}
A \te{sheaf} $\F$ on a topological space $X$ is a presheaf that satisfies the \te{equalizer condition}, i.e. for each open set $U$ and open cover $U=\bigcup_{i\in I}U_i$, the following sequence is exact 
	\begin{equation}
	\begin{tikzcd}
	0 \arrow[r] & \mathcal{F}(U) \arrow[r] & \displaystyle\prod_{i\in I} \mathcal{F}(U_i) \arrow[r]
	& \displaystyle\prod_{i,j\in I}\mathcal{F}(U_i\cap U_j) \\ [-4ex]
	& f \arrow[r, mapsto] &\displaystyle\prod_{i\in I}\mathcal{F}_{U,U_i}(f) \\ [-4ex]
	& & \displaystyle\prod_{i\in I}g_i \arrow[r,mapsto] & \displaystyle\prod_{i,j\in I}(\mathcal{F}_{U_i,U_i\cap U_j}(g)-\mathcal{F}_{U_j,U_i\cap U_j}(g)).
	\end{tikzcd}
	\end{equation}

	A morphism of sheaves is defined as a morphism of the underlying presheaves. Similarly, the stalk of a sheaf is defined to be the stalk of the underlying presheaf.

\end{definition}
\begin{definition}[Sheafification]
Let $\F$ be a presheaf on $X$. The \te{sheafification} of $\F$ is the sheaf $\widehat{\F}$ defined as follows: for each open set $U \subseteq X$, 
\begin{equation}
  \widehat{\F}(U)\coloneqq  \Bigl\{s \in \prod_{x \in U} \F_x \ :\ \forall x \in U,\;\exists V \ni x,\; V \subseteq U,\;\exists t \in \F(V)\text{ such that }t_y = s(y)\ \forall y \in V\Bigr\},
\end{equation}
where $t_y$ denotes the image of $t$ under the map $\F(V)\rightarrow\F_y$.
\end{definition}

It can be verified that sheafification preserve stalks, i.e. $\widehat{\F}_x\cong \F_x$ for each $x\in X$.

\subsection{Alexandrov topology}
\begin{definition}[Cellular sheaf]
	Let $X$ be a cell poset.  
A \emph{cellular sheaf} $\mathcal F$ on $X$ is a functor from $X$ to $\mathbb{F}$-linear spaces, i.e., assigns to  each cell $\sigma$ assigns a linear space $\mathcal{F}_{\sigma}$, and to each $\sigma\leq \sigma'$  a linear map $\mathcal{F}_{\sigma, \sigma'}:\mathcal{F}_{\sigma}\rightarrow \mathcal{F}_{\sigma'}$ such that for all $\sigma\leq\sigma'\leq\sigma''$, we have $\mathcal{F}_{\sigma,\sigma''}=\mathcal{F}_{\sigma',\sigma''}\circ \mathcal{F}_{\sigma,\sigma'}$. 
A morphism $f$ between two sheaves $\mathcal{F}$ and $\mathcal{G}$ is a natural transformation between the corresponding functors, i.e., a collection of linear maps $
f_\sigma : \mathcal F_\sigma \to \mathcal G_\sigma$
such that for every $\sigma \le \tau$ the following diagram commutes:
	\begin{equation}
	\begin{tikzcd}
		\mathcal{F}_{\sigma} \arrow[r, "f_{\sigma}"] \arrow[d, "\mathcal{F}_{\sigma,\tau}"'] & 
		\mathcal{G}_{\sigma} \arrow[d, "\mathcal{G}_{\sigma,\tau}"] \\
		\mathcal{F}_{\tau} \arrow[r, "f_{\tau}"'] & 
		\mathcal{G}_{\tau}
	\end{tikzcd}
	\end{equation}
\end{definition}

One may naturally ask about the relationship between sheaves on cell posets and sheaves on topological spaces. This question is well studied, with Alexandrov topology playing a critical role.

\begin{definition}[Basis of topology]
	Given a set  $X$, a \te{basis} for a topology on $X$ is a collection $\mathcal{B}$ of subsets of $X$ satisfying the following conditions:
\begin{enumerate}
    \item For every $x \in X$, there exists $B \in \mathcal B$ such that $x \in B$.
    \item If $x \in B_1 \cap B_2$ for some $B_1, B_2 \in \mathcal B$, then there exists $B_3 \in \mathcal B$ such that
    \[
    x \in B_3 \subseteq B_1 \cap B_2.
    \]
\end{enumerate}
	{We define the induced topology by declaring a subset $U\subseteq X$ to be open if and only if it is a union of sets in $\mathcal B$.}
\end{definition}

Given a cell poset $X$, the set $\mathcal{B}\coloneqq \{X_{\geq \sigma}:\sigma\in X \}$ forms a basis of topology, since $X = \bigcup_{v\in X(0)} X_{\geq v}$, and whenever there exists $\rho\in X_{\geq\sigma}\cap X_{\geq\tau}$, we have $X_{\geq\rho}\subseteq X_{\geq\sigma}\cap X_{\geq\tau}$. For notational convenience, we let $U_{\sigma} \coloneqq  X_{\geq\sigma}$ denote the open set. Throughout this paper, we let $\mathcal{B}$ stand for the basis define above.

\begin{definition}[Alexandrov topology]
    Given a poset $X$, the \te{Alexandrov topology} of  $X$ is the topology generated by the basis 
    \begin{equation}
       \mathcal{B} \coloneqq  \{X_{\ge \sigma} : \sigma \in X \}, 
    \end{equation}
    {i.e., a subset $U\subseteq X$ is open if and only if it is a union of sets in $\mathcal B$.}
\end{definition}

Given the Alexandrov topology on a poset $X$ and a sheaf $\F$ on the poset, we are able to use the method from Kan extension to define $\F(U)$ on an arbitrary open set $U$,
\begin{equation}\label{eq: inverse limit as extended def of cell sheaf}
\F(U)\coloneqq \varprojlim_{U_\sigma\subseteq U}\F_\sigma.
\end{equation}
Actually this yields an one-one correspondence between sheaves on poset $X$ and sheaves on topological space $X$. For a proof, one may refer to  \cite[Theorem 4.2.10]{curry2014sheavescosheavesapplications}. Henceforth, we no longer need to distinguish between them. We note that this is a self-consistent notation because 
\begin{equation}
\F(U_\sigma)=\varprojlim_{U_{\rho}\subseteq U_\sigma}\F_\rho=\F_\sigma,
\end{equation}
and
\begin{equation}
\varinjlim_{\sigma\in U}\ \F(U)=\F(U_\sigma).
\end{equation}
Consequently, the stalk of the sheaf $\F$ on topological space $X$ at $\sigma$ should be exactly $\F_\sigma$ as defined in the sheaf on poset $X$: they are essentially the same.

We give a basic but important example of a sheaf:
\begin{proposition}\label{definition of F_k}
    Given a sheaf $\F$ over a t-dimensional cell complex $X$, for each $0\leq k \leq t $, we can construct sheaf $\F_k$ as follows: for each open set $U\subseteq X$, 
    \begin{equation}
    \F_{k}(U)\coloneqq \prod_{\sigma\in U(k)}\F_\sigma,
    \end{equation}
    where $U(k)$ denotes the set of $k$-cells contained in $U$.
    For $V\subseteq U$, we define $\F_{U,V}$ to be the restriction of domain, i.e., for a section $f\in \F_k(U)$, $\F_{U,V}(f)\coloneqq f|_{V}$.
\end{proposition}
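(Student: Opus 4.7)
The plan is to verify directly that $\mathcal F_k$, defined as the assignment $U \mapsto \prod_{\sigma \in U(k)} \mathcal F_\sigma$ with restriction given by coordinate projection, satisfies both the presheaf and sheaf axioms.

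First I would check the presheaf axioms. For an inclusion of open sets $V \subseteq U$, the containment $V(k) \subseteq U(k)$ is immediate from the definitions, so the proposed restriction is literally a coordinate projection
\begin{equation}
\prod_{\sigma \in U(k)} \mathcal F_\sigma \longrightarrow \prod_{\sigma \in V(k)} \mathcal F_\sigma,\qquad (f_\sigma)_{\sigma \in U(k)} \longmapsto (f_\sigma)_{\sigma \in V(k)}.
\end{equation}
This is manifestly the identity when $V = U$ and is functorial under any further inclusion $W \subseteq V \subseteq U$, so $\mathcal F_k$ is a presheaf.

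Next I would verify the equalizer condition of Definition \ref{def:sheaf} for an arbitrary open cover $U = \bigcup_{i \in I} U_i$ in the Alexandrov topology. The decisive combinatorial input is that $U(k) = \bigcup_i U_i(k)$ and $(U_i \cap U_j)(k) = U_i(k) \cap U_j(k)$; both assertions are immediate once one notes that a cell $\sigma$ of $X$ is, in particular, a point of the topological space $X$, so $\sigma \in U$ implies $\sigma \in U_i$ for some $i$. Injectivity of $\mathcal F_k(U) \to \prod_i \mathcal F_k(U_i)$ follows at once: a section $f$ whose restriction to every $U_i$ vanishes has $f_\sigma = 0$ at every coordinate $\sigma \in U(k)$. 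For the gluing step, given compatible sections $g^{(i)} \in \mathcal F_k(U_i)$, I would define $f \in \mathcal F_k(U)$ componentwise by $f_\sigma \coloneqq g^{(i)}_\sigma$ for any $i$ with $\sigma \in U_i(k)$. The compatibility $g^{(i)}|_{U_i \cap U_j} = g^{(j)}|_{U_i \cap U_j}$ says exactly that $g^{(i)}_\sigma = g^{(j)}_\sigma$ whenever $\sigma \in U_i(k) \cap U_j(k)$, so the definition is unambiguous, and $f|_{U_i} = g^{(i)}$ by construction.

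No real obstacle is expected: the definition of $\mathcal F_k$ uses none of the nontrivial restriction maps $\mathcal F_{\sigma,\tau}$ of the underlying sheaf $\mathcal F$, so the sheaf axiom collapses to the combinatorial facts about $k$-cells in open covers noted above. As a sanity check consistent with the inverse-limit extension \eqref{eq: inverse limit as extended def of cell sheaf}, one may compute the stalks: $(\mathcal F_k)_\rho = \mathcal F_k(U_\rho) = \prod_{\tau \in U_\rho(k)} \mathcal F_\tau$, which equals $\mathcal F_\rho$ when $\rho$ is itself a $k$-cell and vanishes when $\dim \rho > k$, confirming the intuition that $\mathcal F_k$ is concentrated on the $k$-dimensional cells.
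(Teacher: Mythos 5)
Your proof is correct and takes essentially the same approach as the paper's: direct verification of the presheaf axioms and the equalizer condition, with injectivity following since the $U_i$ cover $U$, and gluing by defining $f$ coordinatewise from the compatible family. The paper's proof is terser but identical in substance; your closing stalk computation is a harmless extra sanity check, though note $\mathcal F_k$ is supported on cells of dimension $\le k$, not just on $k$-cells.
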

\begin{proof}
	Consider an open set $U\subseteq X$ with an open cover $U = \bigcup_{i\in I}U_i$. Then
	\begin{equation}
	\begin{tikzcd}
	0 \arrow[r] & \mathcal{F}(U) \arrow[r] & \displaystyle \prod_{i\in I} \mathcal{F}(U_i) \arrow[r]
	& \displaystyle\prod_{i,j\in I}\mathcal{F}(U_i\cap U_j)\\ [-4ex]
	& f \arrow[r, mapsto] &\displaystyle\prod_{i\in I}f|_{U_i} \\ [-4ex]
	& & \displaystyle\prod_{i\in I} f_i \arrow[r,mapsto] & \displaystyle\prod_{i,j\in I}(f_i|_{U_i\cap U_j}-f_j|_{U_i\cap U_j}).
	\end{tikzcd}
	\end{equation}
	Suppose $\prod_{i \in I} f|_{U_i} = 0$, since $\{U_i\}_{i\in I}$ is an open cover of $U$, $f = 0$ and hence the first map is injective. Suppose $\prod_{i\in I} f_i$ is mapped to zero, then $f_i$ and $f_j$ always agree on the intersection of their domain. We can define $f \in \prod_{\sigma \in U(k)} \mathcal{F}_{\sigma}$ by $f|_{U_i} \coloneqq  f_i$. Therefore, $\mathcal{F}_k$ is indeed a sheaf. 
\end{proof}

\subsection{Chain complexes and quantum codes}
Here we formally introduce classical and quantum codes, and the chain complex descriptions of them.
\begin{definition}[Classical code]
    A \te{classical linear code} with parameters $[n,k,d]_q$ over alphabet $\mathbb{F}_q$ is a $k$-dimensional linear subspace $\mathcal C \subseteq \mathbb{F}_q^n$. The distance of the code is $d = \min_{x \in \mathcal C \backslash \{0\}} |x|_H$, where $|\cdot|_H$ denotes the Hamming weight. When the alphabet is $2$ or clear from context, we simply write $[n,k,d]$. The \te{dual code} of $\mathcal C$ is $\mathcal C^\perp = \{y \in \mathbb{F}_q^n: x \cdot y =0 \ \forall x \in \mathcal{C}\}$. A \emph{generator matrix} of $\mathcal{C}$ is a matrix $G \in \mathbb{F}_q^{k\times n}$  such that $\mathcal{C} = \mathrm{Im}(G^\top)$. A \te{parity check matrix} of $\mathcal{C}$ is a matrix $H$ for which $\mathcal{C} = \ker(H)$.

    The code $\mathcal{C}$ has \te{sparsity} $\Delta$ if there exists a parity-check matrix $H \in \mathbb{F}_q^{r \times n}$ such that every row and column of $H$ has at most $\Delta$ nonzero entries. The code \(\mathcal{C}\) defined by the check matrix \(H \in \mathbb{F}_q^{r \times n}\) is \te{locally testable} with soundness \(\rho\) if for all \(x \in \mathbb{F}_q^{n}\),
  \begin{align}
    \frac{|H x|}{r} \ge \rho \frac{d(x, \mathcal{C})}{n}.
  \end{align}
\end{definition}

\begin{definition}[Chain complex and cochain complex]
    A \te{chain complex} $C_\bullet$ is a sequence of $\mathbb{F}$-vector spaces and linear spaces 
    \begin{equation}
    C_\bullet=\cdots{\longrightarrow} C_{i+1}\overset{\partial_{i+1}}{\longrightarrow} C_i\overset{\partial_i}{\longrightarrow} C_{i-1}\longrightarrow\cdots
    \end{equation}
    such that $\partial_i\partial_{i+1}=0$. The corresponding \te{cochain complex} $C^\bullet$ is defined by applying the functor $\text{Hom}(-,\mathbb{F})$
    \begin{equation}
    C^\bullet=\cdots{\longrightarrow} C^{i-1}\overset{\delta^{i-1}}{\longrightarrow} C^i\overset{\delta^i}{\longrightarrow} C^{i+1}\longrightarrow\cdots
    \end{equation}
    where $C^i=\text{Hom}(C_i,\mathbb{F})$ is the dual vector space of $C_i$, $\delta^i=\text{Hom}(\partial_{i+1},\mathbb{F})$.
\end{definition}

Note that we only consider finite-dimensional linear spaces, hence there is a canonical isomorphism $C^i\cong C_i$, under which the coboundary operator $\delta^i=\partial_{i+1}^T$ is simply the matrix transpose of the boundary operator. 

\begin{definition}
    The \te{$i$-boundaries} are elements in $B_i(C_\bullet)\coloneqq \im \partial_{i+1}$ and the \te{$i$-cycles} are elements in $Z_i(C_\bullet)\coloneqq \ker \partial_i$. Similarly, the \te{$i$-coboundaries} $B^i(C^\bullet)\coloneqq \im \delta^{i-1} $ and the \te{$i$-cocycles} are $Z^i(C^\bullet)\coloneqq \ker \delta^i$.
    
    The $i$-th \te{homology group} of the chain complex $C^\bullet$ and \te{cohomology group} of the cochain complex are defined respectively by
    \begin{equation}
    H_i(C_\bullet)\coloneqq \ker \partial_i/\im \partial_{i+1},\quad H^i(C^\bullet)\coloneqq \ker\delta^i/\im\delta^{i-1}.
    \end{equation}
\end{definition}
\begin{definition}[Chain map]\label{def:chain_map}
    Suppose $C_\bullet$ and $D_\bullet$ are chain complexes with boundary map $\partial_C$ and $\partial_D$, respectively. We say  $f_\bullet: C_\bullet \rightarrow D_\bullet$ is a \te{chain map} if $f_\bullet \circ \partial_C = \partial_D \circ f_\bullet$.  \te{Cochain maps} between cochain complexes are defined analogously.
\end{definition}

\begin{definition}[CSS quantum code]
    A \te{Calderbank–Shor–Steane (CSS) code} with parameters $[\![n,k,d]\!]_q$ over alphabet $\mathbb{F}_q$ is a tuple of two classical codes $\mathcal Q = (\mathcal C_X, \mathcal C_Z)$ such that $\mathcal C_X^\perp \subseteq \mathcal C_Z$.

Given a cochain complex $C^\bullet$, we can define an associated CSS code by placing physical qubits on $C^i$, $Z$-checks on $C^{i+1}$ and $X$-checks on $C^{i-1}$ with the corresponding parity-check matrices given by $H_Z\coloneqq \delta^i$, $H_X\coloneqq \partial_i$ where $\partial_i = (\delta^{i-1})^{T}$ under the canonical identification of chains and cochains. 
The number of logical qudits is
\begin{equation}
k_i \coloneqq  \dim H^{i}(C^\bullet)
     = \dim \ker H_Z - \dim \operatorname{im} H_X^{T},
\end{equation}
and the distance is $d_i \coloneqq  \min\{d_X,d_Z\}$, where
\begin{equation}
d_X = \min_{c^i \in \ker H_Z \setminus \operatorname{im} H_X^T} |c^i|_H,
\qquad
d_Z = \min_{c_i \in \ker H_X \setminus \operatorname{im} H_Z^T} |c_i|_H .
\end{equation}
\end{definition}

\begin{definition}[Quantum low-density parity-check code (qLDPC)]
A family of CSS codes
\(\{\mathcal{Q}_n\}_{n \in \mathbb{N}}\) where each code \(\mathcal{Q}_n\) has blocklength \(n\)
and parity-check matrices \(H_X^{(n)}\) and \(H_Z^{(n)}\)  is called \emph{quantum low-density parity-check}, if there exists a constant
\(w = O(1)\) such that
every row of \(H_X^{(n)}\) and \(H_Z^{(n)}\) has Hamming weight at most \(w\), uniformly in \(n\).
\end{definition}

\begin{definition}[Quantum locally testable code (qLTC)]
A family of CSS codes
\(\{\mathcal{Q}_n\}_{n \in \mathbb{N}}\) where each code \(\mathcal{Q}_n\) has blocklength \(n\)
and parity-check matrices \(H_X^{(n)}\) and \(H_Z^{(n)}\) is called \emph{quantum locally testable}, if there exists a constant \(\rho>0\)
independent of \(n\) such that both \(H_X^{(n)}\) and \(H_Z^{(n)}\) define classical
locally testable codes with soundness at least \(\rho\).
Namely, for all \(n\) and for all \(x \in \mathbb{F}_q^{n}\),
\begin{equation}
\frac{|H^{(n)} x|}{r_n} \;\ge\; \rho \,\frac{d(x,\mathcal{C}_n)}{n},
\end{equation}
where \(H^{(n)}\) denotes either \(H_X^{(n)}\) or \(H_Z^{(n)}\),
\(r_n\) is the number of checks, and \(\mathcal{C}_n\) is the corresponding classical code.
\end{definition}

\subsection{Sheaf cohomology}\label{sec:sheaf_cohomology}
\subsubsection{Cohomology of sheaf codes}
{
Let $X$ be a cell complex equipped with a cellular sheaf $\F$. We define the corresponding sheaved cellular chain complex by setting
\begin{equation}
    C^i(X,\F)\coloneqq \prod_{\sigma\in X(i)}\F_\sigma,
\end{equation}
with the coboundary map $\delta:  C^i(X, \mathcal{F}) \rightarrow C^{i+1}(X, \mathcal{F})$ given by
\begin{equation}\label{eq: sheaved coboundary map}
(\delta \alpha)(\sigma') = \sum_{\sigma \lessdot \sigma'} \mathcal{F}_{\sigma,\sigma'}(\alpha(\sigma)).
\end{equation}
The cohomology of this sheaved chain complex, which we refer to as the cohomology of sheaf codes, is defined as
\begin{equation}
     H^i(X,\F)\coloneqq \ker\delta^i/\im\delta^{i-1}.
\end{equation}

Throughout this paper we consider finite-dimensional vector spaces. We may therefore identify each vector space $C^i(X,\F)$ with its dual $C_i(X,\F)$ by choosing a fixed basis. Under this identification, the boundary map $\partial_{i+1}:C_{i+1}(X,\F)\rightarrow C_{i}(X,\F)$ is taken to be the transpose of the matrix representing $\delta^{i}$; explicitly,
\begin{equation}\label{eq: sheaved boundary map}
(\partial x)(\sigma'') = \sum_{\sigma\gtrdot \sigma ''} \mathcal{F}_{\sigma'',\sigma}^T(x(\sigma)).
\end{equation}
The corresponding homology group is then defined as
\begin{equation}
H_i(X,\F)\coloneqq \ker \partial_i/\im \partial_{i+1}.
\end{equation}

Note that for each cell $\sigma\in X$, we can restrict the sheaf $\F$ to $U_\sigma$. This restriction yields a new chain complex, called the \emph{local chain complex}, defined by
\begin{equation}
    C^i(U_\sigma,\F)\coloneqq \prod_{\tau\in U_\sigma(i)}\F_\tau,
\end{equation}
with coboundary maps $\delta_L$ and boundary maps $\partial_L$ defined analogously to Eqs.~\eqref{eq: sheaved coboundary map} and \eqref{eq: sheaved boundary map}, where all cells appearing in the sums are taken within $U_\sigma$.

}

In our sheaf code context, we use the \te{block Hamming weight} instead of the standard Hamming weight.

\begin{definition}[Block Hamming weight]
The \emph{block Hamming weight} on a sheaf chain complex $C^\bullet(X,\mathcal{F})$ (or $C_\bullet(X,\mathcal{F})$) is defined as follows: for $\alpha \in C^i(X,\mathcal{F})$,
\begin{equation}
\|\alpha\| \coloneqq \sum_{\sigma \in X(i)} \mathbbm{1}_{\alpha(\sigma) \neq 0}.
\end{equation}
\end{definition}

It is easy to see that, for a sparse complex $X$, the block Hamming weight differs from the usual Hamming weight only by a constant factor. We also find that the block Hamming weight is more convenient to use, so we will adopt it throughout this paper.

With the block Hamming weight, we are ready to define:

\begin{definition}[Systolic and cosystolic distances]
Let $C_\bullet(X,\mathcal{F})$ be a chain complex with boundary maps $\partial_i$ and associated cochain complex
$C^\bullet(X,\mathcal{F})$ with coboundary maps $\delta^i$.
The \emph{systolic distance} and \emph{cosystolic distance} at degree $i$ are defined respectively as
\begin{equation}
\mu_\partial(i)
\coloneqq
\min_{x \in \ker \partial_i \setminus \operatorname{im} \partial_{i+1}}
\|x\|,
\qquad
\mu_\delta(i)
\coloneqq
\min_{\alpha \in \ker \delta^i \setminus \operatorname{im} \delta^{i-1}}
\|\alpha\|.
\end{equation}
\end{definition}

Then the following parameters are closely related to soundness.

\begin{definition}[Boundary and coboundary expansion]
    The \te{boundary expansion} for $\partial_i$ and the \te{coboundary expansion} for $\delta^i$ are defined respectively by
    \begin{equation}
    \varepsilon_\partial(i)=\min_{x\in C_i(X,\mathcal{F})\setminus\ker\partial_i}\frac{\|\partial_ix\|}{\mathrm{dist}(x,\ker\partial_i)},\quad   \varepsilon_\delta(i)=\min_{\alpha\in C^i(X,\mathcal{F})\setminus\ker\delta^i}\frac{\|\delta^i\alpha\|}{\mathrm{dist}(\alpha,\ker\delta^i)}.
    \end{equation}
\end{definition}

The (co)boundary expansion differs from the soundness of the associated code only by constant factors. For a detailed discussion of the relationship between (co)systolic distance, (co)boundary expansion, distance, and soundness, see e.g.,~\cite[Lemma 2.7]{Dinur2024sheaf}.

\subsubsection{Right derived functor}

\begin{definition}[Sheaf cohomology]
    Let $X$ be a topological space. For any sheaf $\mathcal{F}$ on $X$, the \emph{sheaf cohomology} of $\mathcal{F}$ is defined by
\begin{equation}
    H^p(X,\mathcal{F}) \coloneqq  (R^p \Gamma)(\mathcal{F}),
\end{equation}
where 
\begin{equation}
\Gamma:\F\longmapsto\Gamma(X,\F)\coloneqq \F(X).
\end{equation}
denotes the global sections functor and $R^p\Gamma$ its $p$-th right derived functor.
\end{definition}

In short, to calculate sheaf cohomology, one needs to first find an \te{injective resolution} $\mathcal{I}^\bullet$ of $\F$; that is, each $\mathcal{I}^n$ is an injective sheaf, and there exists an exact sequence
\begin{equation}
0\longrightarrow\F\longrightarrow\mathcal{I}^0\longrightarrow\mathcal{I}^1\longrightarrow\mathcal{I}^2\longrightarrow\cdots
\end{equation}
of sheaves.
Then we delete the term $\F$ and apply the functor $\Gamma$ to obtain a chain complex
\begin{equation}
0\longrightarrow\Gamma(X,\mathcal{I}^0)\longrightarrow\Gamma(X,\mathcal{I}^1)\longrightarrow\Gamma(X,\mathcal{I}^2)\longrightarrow\Gamma(X,\mathcal{I}^3)\longrightarrow\cdots
\end{equation}
The sheaf cohomology groups $H^p(\mathcal{F})$ are then defined as the $p$-th cohomology of this complex. It is well known in homological algebra that $H^p(\mathcal{F})$ is independent of the choice of injective resolution. Moreover, an injective resolution always exists in our setting, since we only consider finite-dimensional vector spaces over a finite field.

In this paper, we will not use the concept of injective sheaves. It is well-known that if each sheaf $\mathcal{I}^\bullet$ is \te{flabby}, then an exact sequence of sheaves can also be used to compute sheaf cohomology in the same manner.

\begin{definition}[Flabby sheaf]\label{def:flabby}
    We say that a sheaf $\F$ over space $X$ is \te{flabby} if, for each open set $U\subseteq V\subseteq X$, the restriction map $\F(V)\rightarrow\F(U)$ is surjective.
\end{definition} 

It is straightforward to obtain the following proposition by definition.

\begin{proposition}
    The sheaf $\F_k$ defined in Proposition \ref{definition of F_k} is flabby.
\end{proposition}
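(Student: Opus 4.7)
The plan is to unpack the definitions of $\mathcal{F}_k$ and of flabbiness and verify the surjectivity condition by explicit extension of sections. The key observation is that $\mathcal{F}_k(U) = \prod_{\sigma \in U(k)} \mathcal{F}_\sigma$ is a direct product indexed by the $k$-cells in $U$, and the restriction map between two such products is simply the projection onto the smaller index set. Projections of direct products onto sub-factors are always surjective, so flabbiness should follow immediately.

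More concretely, I would fix open sets $U \subseteq V \subseteq X$ and an arbitrary section $f \in \mathcal{F}_k(U) = \prod_{\sigma \in U(k)} \mathcal{F}_\sigma$. I would then exhibit a preimage $\tilde f \in \mathcal{F}_k(V)$ under the restriction map $\mathcal{F}_{V,U}$ by defining
\begin{equation}
    \tilde f(\sigma) \coloneqq
    \begin{cases}
        f(\sigma), & \sigma \in U(k), \\
        0, & \sigma \in V(k) \setminus U(k).
    \end{cases}
\end{equation}
Since $U(k) \subseteq V(k)$ (any $k$-cell contained in $U$ is also contained in $V$), this is well-defined, and by construction $\tilde f|_U = f$, which is exactly the requirement that $\mathcal{F}_{V,U}(\tilde f) = f$.

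There is essentially no obstacle here: the argument reduces to the trivial fact that every element of a product over a subset can be extended by zero to a product over a larger set. The only subtlety worth mentioning explicitly is that the restriction map in $\mathcal{F}_k$ really is the coordinate projection, as stated in Proposition~\ref{definition of F_k}; once this is recalled, the proof is a one-line verification, and the proposition follows. No properties of the stalks $\mathcal{F}_\sigma$ or the restriction maps $\mathcal{F}_{\sigma,\tau}$ of the underlying sheaf $\mathcal{F}$ are needed, which is consistent with the fact that $\mathcal{F}_k$ is built from $\mathcal{F}$ purely by forgetting its internal morphisms and taking a product of stalks.
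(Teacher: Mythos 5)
Your proof is correct and matches the argument the paper implicitly invokes when it calls the proposition ``straightforward to obtain by definition'': the restriction map $\mathcal{F}_{V,U}$ is coordinate projection from $\prod_{\sigma\in V(k)}\mathcal{F}_\sigma$ onto $\prod_{\sigma\in U(k)}\mathcal{F}_\sigma$, and extending by zero on $V(k)\setminus U(k)$ exhibits a preimage. Your closing remark that none of the internal structure of $\mathcal{F}$ is used is also accurate and explains why the conclusion is immediate.
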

\subsubsection{\v{C}ech Cohomology}

Let $\mathcal{U}=\{U_i\}_{i\in I}$ be an open cover of a topological space $X$. We define $N(\mathcal{U})$ to be the abstract simplicial complex with each vertex corresponding to an open set in $\mathcal{U}$, and each $p$-simplex as a $(p+1)$-tuple $\sigma$ of distinct open sets, $\sigma=[U_{i_0},\cdots,U_{i_p}]$ with $\bigcap_{j=0}^pU_{i_j}\neq \varnothing$. Given a presheaf $\mathcal{F}$ on $X$, the \v{C}ech $p$-cochain $C^p(\mathcal{U},\mathcal{F})$ is defined by
\begin{equation}
C^p(\mathcal{U},\F)\coloneqq \prod_{[U_{i_0},\cdots,U_{i_p}]}\F(U_{i_0}\cap\cdots\cap U_{i_p}),
\end{equation}
where the product is taken over the $p$-simplex in $N(\mathcal{U})$, and we denote $U_\sigma=U_{i_0}\cap\cdots\cap U_{i_p}$ (or simply write as $U_{i_0\cdots i_p}$). A $p$-cochain $f\in C^p(\mathcal{U},\F)$ is a function assigning each simplex an element in $\mathcal{F}(U_\sigma)$. The coboundary operator $\delta^p:C^p(\mathcal{U},\F)\rightarrow C^{p+1}(\mathcal{U},\F)$ is defined by, if $\tau=[U_{i_0},\cdots,U_{i_{p+1}}]$ is a $(p+1)$-simplex, then
\begin{equation}\label{eq: coboundary of sheaved cech}
(\delta^pf)(\tau)=\sum_{j=0}^{p+1}(-1)^jf_{i_0\cdots\hat{i}_j\cdots i_{p+1}}|_{U_\tau},
\end{equation}
where the sum on the right hand side means, first get the value of $f$ on $[U_{i_0},\cdots, \hat{U}_{i_j},\cdots,U_{i_{p+1}}]$, denoted as $f_{i_0\cdots\hat{i}_j\cdots i_{p+1}}|_{U_\tau}$, then apply the restriction map $\F(U_{i_0}\cap\cdots\cap \hat{U}_{i_j}\cap\cdots\cap U_{i_{p+1}})\rightarrow\F(U_\tau)$, and sum up.

\begin{definition}[\v{C}ech cohomology group]
    Let $X$ be a topological space with sheaf $\F$, then the $p$-th \te{\v{C}ech cohomology group} is defined as the direct limit
    \begin{equation}
    \check{H}^p(\F)\coloneqq \varinjlim_{\mathcal{U}}\check{H}^p(\mathcal{U},\F).
    \end{equation}
\end{definition}

\subsubsection{Connection between three cohomology theories}
In this section, we focus on a simplicial complex $X$. A key observation is that there is a particular choice of open cover $\mathcal{V}\coloneqq \{U_v:v\in X(0)\}$ of $X$. For $v_0, v_1, \cdots, v_p\in X(0)$, if $U_{v_0}\cap\cdots\cap U_{v_p}\neq \varnothing$, then $[v_0,\cdots, v_p]$ must be a $p$-simplex in $X$, and
\begin{equation}
U_{v_0}\cap\cdots\cap U_{v_p}=U_{[v_0,\cdots,v_p]}
\end{equation}
Therefore, if $\F$ is a sheaf over $X$, the cellular cochain complex always agrees with the \v{C}ech cochain complex:
\begin{equation}
C^p(X,\F)\cong C^p(\mathcal{V},\F).
\end{equation}
It is easy to verify that the coboundary maps agree. Therefore, we always have
\begin{equation}
H^p(X,\F)\cong \check{H}^p(\mathcal{V},\F).
\end{equation}
Furthermore, note that for any other choice of open cover $\mathcal{U}$ of $X$, $\mathcal{V}$ is always a refinement of $\mathcal{U}$, since there must exist open sets $U_1, U_2,\dots\in \mathcal{U}$ such that each vertices $v_1, v_2,\dots\in X(0)$ belong to at least one of these open sets. Without loss of generality, we assume $v_i\in U_i$. By the definition of topology basis $\mathcal{B}$, $U_i$ is the union of sets in $\mathcal{B}$, and the only open set containing $v_i$ in $\mathcal{B}$ is $U_{v_i}$. Consequently, $U_i\supseteq U_{v_i}$, i.e., $\mathcal{V}$ is always a refinement of $\mathcal{U}$. So we have
\begin{equation}
\check{H}^p(\F)=\varinjlim_{\mathcal{U}}\check{H}^p(\mathcal{U},\F)=\check{H}^p(\mathcal{V},\F)\cong H^p(X,\F),
\end{equation}
meaning that the \v{C}ech cohomology is exactly isomorphic to the cohomology of sheaf codes. As shown in \cite[Section 7.3]{curry2014sheavescosheavesapplications} or \cite[Theorem A.8]{First2022}, the cohomology of sheaf codes is isomorphic to the sheaf cohomology (not necessarily a simplicial complex). As a result, we conclude that the three cohomology groups are all isomorphic:
\begin{equation}
H^p(X,\F)\cong \check{H}^p(\F)\cong (R^p\Gamma)(\F).
\end{equation}

This enables us to switch between different perspectives and obtain more insights and results. For example, the perspective of the right derived functor provides a natural framework to prove the Poincar\'e duality in Theorem~\ref{thm:Poincaré_duality}, and the pullback sheaf is locally acyclic as established in Corollary~\ref{pullback sheaf is locally acyclic}. Meanwhile, the perspective of \v{C}ech cohomology motivates the definition of the tensor product of sheaves on sheaved cellular chain complexes, as discussed in Section~\ref{sec:multiplicative}.


\subsection{Logical multi-controlled-$Z$ gates}\label{sec:controlled-Z}
Finally, we present the essential preliminaries for the study of logical multi-controlled-$Z$ gates on quantum codes, which represent a key application of our theory.

\begin{definition}[Multi-controlled-$Z$ gate]
Let $\mathbb F_q$ be a finite field of characteristic $p$, let $r \ge 2$ be an integer, and let $t \in \mathbb F_q^\ast$.  
The \te{multi-controlled-$Z$ gate} $C^{r-1}Z_q^t : (\mathbb C^{q})^{\otimes r} \to (\mathbb C^{q})^{\otimes r}$ is the $r$-qudit diagonal unitary defined by its action on the computational basis:
\[
C^{r-1}Z_q^t \ket{x_1,\dots,x_r}
=
\exp\!\left(
\frac{2\pi i}{p}\,
\Tr_{\mathbb F_q / \mathbb F_p}
\!\bigl(t\, x_1 x_2 \cdots x_r\bigr)
\right)
\ket{x_1,\dots,x_r},
\]
where $x_1,\dots,x_r \in \mathbb F_q$ and $\Tr_{\mathbb F_q / \mathbb F_p}$ denotes the field trace.
\end{definition}

In this work, we will restrict attention to $p=2$ and omit $q$ when it is clear from the context. When $t=1$, we also omit the superscript $t$. We mainly focus on constructing logical multi-controlled-$Z$ gates on quantum sheaf codes for $r=2,3$, but the framework readily generalizes to arbitrary $r$.

\begin{definition}[Cohomological invariant]\label{def:invariant}
    Given cochain complexes $C^\bullet_{(1)},...C^\bullet_{(r)} $ over $\mathbb{F}_q$, an integer $i$, we say a multilinear form $\xi: C^i_{(1)}\times ... \times C^i_{(r)} \rightarrow \mathbb{F}_q$ is a \te{cohomology invariant} if for every $z_{(h)} + B^i(C^\bullet_{(h)}) \in H^i(C^\bullet_{(h)}), \ \forall h \in [r]$ it holds that $\xi(z'_{(1)},...,z'_{(r)})$ is the same for every choice of representatives $z'_{(h)} \in z_{(h)} + B^i(C^\bullet_{(h)}), \ \forall h \in [r]$. In other words, $\xi$ induces a well-defined multilinear form on the cohomology spaces $H^i(C^\bullet_{(h)})$. 
    
    We say that $\xi$ has \te{sparsity} $\Delta_\xi$ if for every $h \in [r]$, each $i$-cell in $C_{(h)}^i$ is involved in at most $\Delta_\xi$ monomials in $\xi$.
\end{definition}

\begin{definition}[Cohomology subrank]
Given cochain complexes $C^\bullet_{(1)},...C^\bullet_{(r)} $ over $\mathbb{F}_q$, an integer $i$, and a cohomology invariant $\xi: C^i_{(1)}\times ... \times C^i_{(r)} \rightarrow \mathbb{F}_q$,  the \te{cohomology subrank} of $\xi$ is  the largest number $s$ such that there exist $s$ tuples of cohomology elements $(z^j_{(1)} ,..., z^j_{(r)}) \in H^i_{(1)}\times ...\times H^i_{(r)}$ for each $j \in [s]$, satisfying $\xi(z^{j_1}_{(1)} ,..., z^{j_r}_{(r)}) = \mathrm{id}_{j_1=...j_r}$ for every $(j_1,...,j_r) \in [s]^r$. 
\end{definition}

Such multilinear forms give rise to a circuit composed of multi-controlled-$Z$ gates on the $r$ quantum code blocks that leaves the codespaces invariant.

\begin{lemma}[See e.g.\ \protect{\cite[Lemma 3.42]{Golowich_Lin2024}}]\label{lem:cohomology invariant and CCZ}Consider $r$ quantum codes $\mathcal Q_1,...,\mathcal Q_r$ defined by placing qubits at level $i$ (Z checks on $i+1$ and X checks on $i-1$) of the cochain complexes $C^\bullet_{(1)},...C^\bullet_{(r)}$ and a cohomology-invariant multilinear form $\xi$, with constant sparsity and cohomology subrank $s$. Then one can construct a constant-depth quantum circuit composed of multi-controlled-$Z$ gates where each monomial in $\xi$ corresponds to a gate $C^{r-1}Z^t$ which is a logical action on the code spaces. The number of physical gates, denoted $n_{C^{r-1}Z}$, is equal to the number of monomials in $\xi$, and the number of induced logical $C^{r-1}Z$ gate is denoted $k_{C^{r-1}Z} \geq s$. We say the quantum codes $\mathcal{Q}_1,...,\mathcal Q_r$ support a \te{constant-depth $C^{r-1}Z$ circuit inducing $s$ logical $C^{r-1}Z$ gates}. If there is no lower bound on the cohomology subrank, we say the codes support a \te{constant-depth logical $C^{r-1}Z$ circuit}. 
\end{lemma}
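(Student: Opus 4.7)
The plan is to construct the circuit $U_\xi$ directly from the monomial expansion of $\xi$, then verify that (i) it is constant-depth, (ii) it preserves the joint codespace of $\mathcal{Q}_1,\ldots,\mathcal{Q}_r$ and therefore realizes a logical unitary, and (iii) its restriction to a suitable logical subspace reproduces $s$ independent $C^{r-1}Z$ gates.

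First I would expand $\xi$ as a sum of monomials,
\begin{equation}
\xi(x_{(1)},\ldots,x_{(r)}) \;=\; \sum_m t_m \prod_{h=1}^r x_{(h)}(\sigma_{m,h}),
\end{equation}
with $\sigma_{m,h}$ an $i$-cell indexing $C^i_{(h)}$ and $t_m\in \mathbb{F}_q^\ast$. For each monomial $m$, the circuit $U_\xi$ applies the physical gate $C^{r-1}Z^{t_m}$ on the $r$-tuple of qubits labelled $(\sigma_{m,1},\ldots,\sigma_{m,r})$. Constant depth follows immediately from the sparsity hypothesis: each physical qubit appears in at most $\Delta_\xi=O(1)$ monomials, hence in at most $\Delta_\xi$ gates of $U_\xi$; since all gates commute (being diagonal in the computational basis), a greedy edge-colouring of the associated hypergraph of gates yields an implementation of depth $O(\Delta_\xi)$.

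Next I would verify that $U_\xi$ preserves the codespace. By construction, $U_\xi$ is diagonal in the computational basis with
\begin{equation}
U_\xi\, \ket{x_{(1)},\ldots,x_{(r)}} \;=\; \omega^{\xi(x_{(1)},\ldots,x_{(r)})}\,\ket{x_{(1)},\ldots,x_{(r)}}, \qquad \omega=e^{2\pi i/p}.
\end{equation}
The logical $Z$-basis states of $(\mathcal{Q}_1,\ldots,\mathcal{Q}_r)$ are uniform superpositions over $X$-stabilizer orbits of representatives $z_{(h)}\in Z^i(C^\bullet_{(h)})$, proportional to $\sum_{b_{(h)}\in B^i(C^\bullet_{(h)})} \ket{z_{(1)}+b_{(1)},\ldots,z_{(r)}+b_{(r)}}$. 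Applying $U_\xi$ multiplies each summand by $\omega^{\xi(z_{(1)}+b_{(1)},\ldots,z_{(r)}+b_{(r)})}$; the cohomological-invariance hypothesis on $\xi$ in each argument ensures this phase is independent of the $b_{(h)}$ and equals $\omega^{\xi(z_{(1)},\ldots,z_{(r)})}$. Thus $U_\xi$ scales each logical $Z$-basis state by a common phase, preserves the codespace, and realizes a diagonal logical unitary whose entries are $\omega^{\xi(\cdot,\ldots,\cdot)}$ on $H^i(C^\bullet_{(1)})\times\cdots\times H^i(C^\bullet_{(r)})$.

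Finally, to conclude $k_{C^{r-1}Z}\ge s$, I would invoke the subrank hypothesis: there exist tuples $(z^j_{(1)},\ldots,z^j_{(r)})_{j=1}^s$ with $\xi(z^{j_1}_{(1)},\ldots,z^{j_r}_{(r)})=\mathrm{id}_{j_1=\cdots=j_r}$. Extending each family $\{z^j_{(h)}\}_{j=1}^s$ to a basis of $H^i(C^\bullet_{(h)})$ and choosing the corresponding logical computational basis, the restriction of the induced diagonal logical unitary to the $s$ designated logical qubits per block acts as $\bigotimes_{j=1}^s C^{r-1}Z$, giving $s$ non-trivial logical $C^{r-1}Z$ gates. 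The main obstacle is bookkeeping rather than conceptual: one must track the CSS codeword normalization carefully, so that the $Z$-stabilizer constraint (automatic by diagonality) and the $X$-stabilizer constraint (captured exactly by cohomological invariance) combine into a single well-defined diagonal logical unitary, and then verify that the multilinearity of $\xi$ on cohomology prevents off-diagonal leakage that could otherwise suppress the subrank lower bound.
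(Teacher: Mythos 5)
The paper does not prove this lemma; it is cited as a known result from Golowich--Lin~2024 and stated without argument. Your proposal is a correct reconstruction of the standard argument that one would find in that reference: expand $\xi$ into monomials and place one physical $C^{r-1}Z^{t_m}$ per monomial; constant depth follows from bounded degree of the gate hypergraph together with commutativity of diagonal gates; cohomology invariance makes the phase constant on each $X$-stabilizer orbit, so the diagonal unitary descends to a well-defined diagonal logical action determined by the induced multilinear form $\bar\xi$ on $H^i_{(1)}\times\cdots\times H^i_{(r)}$; and the subrank hypothesis says that restricting $\bar\xi$ to the span of the designated classes (and fixing the remaining logical qubits to $\ket{0}$) reproduces $\sum_{j=1}^s y_{(1),j}\cdots y_{(r),j}$, i.e.\ $\bigotimes_{j=1}^s C^{r-1}Z$.

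Two small points are worth tightening. First, the phase should be $\exp\!\bigl(\tfrac{2\pi i}{p}\Tr_{\mathbb{F}_q/\mathbb{F}_p}\xi(\cdot)\bigr)$ rather than $\omega^{\xi(\cdot)}$: $\xi$ is $\mathbb{F}_q$-valued, and for $q=p^m$ with $m>1$ the exponent must pass through the field trace, exactly as in the paper's definition of $C^{r-1}Z_q^t$. Second, your closing worry about ``off-diagonal leakage'' is moot: the logical action is diagonal in the logical computational basis by construction (being a restriction of a diagonal physical unitary to an invariant subspace spanned by such basis vectors), so there is nothing to leak into. What the multilinearity of $\bar\xi$ actually buys is that the monomial in the designated coordinates $y_{(h),j}$ for $j\in[s]$ is exactly $\sum_j \prod_h y_{(h),j}$ with no cross terms among $j\in[s]$; cross terms with logical coordinates outside $[s]$ may be present but do not affect the count, since setting those coordinates to zero yields the target tensor product. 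With those adjustments, your proof is complete and, as far as I can tell, matches the argument the paper implicitly appeals to.
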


It is often desirable to realize a transversal logical circuit, namely a depth-1 circuit, equivalently one in which each variable of $\xi$ is involved in at most $1$ monomial. Indeed, we can always convert a constant-depth logical multi-controlled-$Z$ circuit into a transversal one by concatenating the quantum codes with the repetition code.

\begin{lemma}[\cite{nguyen2025good, Golowich_Lin2024}]\label{lem:constant-depth-to-transversal}
    Suppose the quantum codes $\mathcal Q_1,...,\mathcal Q_r$ support a constant-depth $C^{r-1}Z$ circuit inducing $s$ logical $C^{r-1}Z$ gates. Then we can construct new codes $\mathcal Q'_1,...,\mathcal Q'_r$ that support a \te{transversal $C^{r-1}Z$ circuit} inducing $s$ logical $C^{r-1}Z$ gates. Furthermore, the parameters of the new codes (sparsity, qubits, rate, relative distance, soundness) worsen by at most a constant factor.
\end{lemma}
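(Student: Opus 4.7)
The plan is to convert the constant-depth logical $C^{r-1}Z$ circuit into a transversal one by concatenating each code $\mathcal{Q}_h$ with a short classical repetition code, thereby creating enough ``copies'' of each physical qubit to absorb the $O(1)$ multi-controlled-$Z$ gates that previously overlapped on it. The main work is verifying that this concatenation preserves all code parameters up to constant factors, and that the redistributed monomials still form a cohomology-invariant multilinear form with the same subrank.

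First, I would bound the overlap. By the constant-sparsity hypothesis on $\xi$ in Definition~\ref{def:invariant}, there is an absolute constant $\Delta = O(1)$ such that every physical qubit of every $\mathcal{Q}_h$ participates in at most $\Delta$ monomials of $\xi$. Fix this $\Delta$ throughout. Second, I would define $\mathcal{Q}'_h$ as the concatenation of $\mathcal{Q}_h$ with the $[\Delta,1,\Delta]$ repetition code: each physical qubit of $\mathcal{Q}_h$ is replaced by $\Delta$ copies, with $\Delta - 1$ additional $X$-checks per qubit enforcing equality of the copies. At the level of cochain complexes, this is the tensor product of $C^\bullet_{(h)}$ with the repetition-code cochain complex, placed at degree $i$. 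Because the repetition code has constant length, the new code $\mathcal{Q}'_h$ has the same number of logical qubits, its sparsity increases by at most $O(\Delta)$, its rate and relative distance decrease by at most a factor of $\Delta$, and standard soundness arguments for code concatenation show the local-testability soundness also worsens by at most a constant factor.

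Next, I would construct the transversal invariant $\xi'$ on the concatenated complexes. For each monomial $m$ of $\xi$ involving the tuple of qubits $(q_1,\ldots,q_r)$, order the at most $\Delta$ monomials containing each $q_h$ and let $j_h(m) \in [\Delta]$ be the index of $m$ in that list; then define $\xi'$ by replacing $q_h$ with its $j_h(m)$-th copy in each monomial. By construction, every copy qubit appears in at most one monomial of $\xi'$, so the induced physical circuit is depth-$1$, i.e.\ transversal. I would then verify cohomology invariance of $\xi'$ on the concatenated complexes. Adding a coboundary from an original $X$-check pulls back to the same coboundary acting uniformly on all copies, and invariance then reduces to the original invariance of $\xi$. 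Adding a coboundary from a new repetition-code $X$-check flips two copies of the same qubit; on a codeword these two copies hold equal values, and since any logical representative on $\mathcal{Q}'_h$ can be replaced modulo repetition coboundaries by one where all copies agree, $\xi'$ takes the same value as $\xi$ evaluated on the underlying logical data. This shows the cohomology subrank of $\xi'$ equals the cohomology subrank of $\xi$, hence is at least $s$, and the induced logical action coincides with that of the original circuit.

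The main obstacle is the second part of the invariance check: one must show that $\xi'$ is well-defined on cohomology with respect to both the pulled-back original $X$-checks \emph{and} the new repetition-code checks simultaneously, for an arbitrary choice of copy-assignment $j_h(m)$. The cleanest resolution is to note that if $\alpha_h \in C^i_{(h)} \otimes \mathbb{F}_q^\Delta$ is a cocycle, then modulo coboundaries of the new repetition checks it is equivalent to an element of the form $\bar\alpha_h \otimes (1,1,\ldots,1)$ with $\bar\alpha_h \in C^i_{(h)}$ a cocycle of the original complex; after this normalization the evaluation $\xi'(\alpha_1,\ldots,\alpha_r)$ collapses to $\xi(\bar\alpha_1,\ldots,\bar\alpha_r)$, so invariance and subrank follow directly from those of $\xi$. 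Once this key reduction is in place, the parameter bounds and the transversality are immediate from the construction.
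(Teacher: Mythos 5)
The paper itself does not prove this lemma; it cites it from \cite{nguyen2025good, Golowich_Lin2024}, so there is no in-paper proof to compare against. Your high-level plan — concatenate each code block with a length-$\Delta$ repetition code so every physical qubit has $\Delta$ copies, then reassign each of the at most $\Delta$ monomials through a qubit to a distinct copy — is the standard route in the cited references, and your parameter bookkeeping is fine.

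There is, however, a genuine gap in the cohomology-invariance verification, and it comes from putting the repetition checks on the wrong side. You write ``$\Delta-1$ additional $X$-checks per qubit enforcing equality of the copies,'' but an operator enforcing equality of copies in the $Z$-basis (the basis in which the diagonal $C^{r-1}Z$ acts) is a $Z$-check, not an $X$-check. With the $X$-rep-check (phase-flip) convention you then adopt, your key reduction — that modulo repetition coboundaries every cocycle is equivalent to $\bar\alpha_h\otimes(1,\ldots,1)$ — is false in general. In characteristic $2$, adding multiples of $e_{q,j}+e_{q,j+1}$ to $\alpha'(q,\cdot)\in\mathbb{F}_q^\Delta$ preserves exactly the coordinate sum $\sum_j\alpha'(q,j)$, whereas the constant vector $\bar\alpha(q)\cdot(1,\ldots,1)$ has coordinate sum $\Delta\cdot\bar\alpha(q)$; if $\Delta$ is even this is identically zero, so a cocycle with a nonzero coordinate sum at some $q$ has no constant representative, and the available normalization (supported on a single copy) makes $\xi'$ collapse to a partial sum of monomials rather than $\xi(\bar\alpha_1,\dots,\bar\alpha_r)$. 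Requiring $\Delta$ to be odd would rescue the normalization, but you never impose it. The cleaner fix is the bit-flip inner code: the repetition checks $Z_{q,j}Z_{q,j+1}$ go into $H'_Z$, and the original $X$-checks are lifted to act uniformly on all $\Delta$ copies. Then $\ker H'_Z$ consists exactly of the cochains that are constant across copies and restrict to original cocycles, $\operatorname{im}(H'_X)^T$ is likewise constant across copies and restricts to original coboundaries, and the reduction $\xi'(\ldots,\beta'_h,\ldots)=\xi(\ldots,\bar\beta_h,\ldots)=0$ is immediate with no normalization step and no parity condition on $\Delta$. As a minor point, the concatenated complex is not a plain tensor product of $C^\bullet_{(h)}$ with a two-term repetition complex placed at degree $i$ (a tensor product would put a direct sum of several blocks at degree $i$, not just $C^i\otimes\mathbb{F}_q^\Delta$); this imprecision does not affect the substance once the check directions are fixed.
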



\section{Poincaré duality for quantum codes}\label{sec:duality}

In this section, we prove the foundational result of this work: the Poincar\'e duality of quantum codes  (Theorem \ref{thm:Poincaré_duality}). Historically, Poincar\'e duality arose in the study of manifolds \cite{Bott:1982xhp,Hatcheralgtop}. For an oriented closed $n$-dimensional manifold $M$, the $i$-th cohomology group is canonically isomorphic to the $(n-i)$-th homology group, with the isomorphism given by the cap product with the fundamental class. This theorem is a cornerstone of algebraic topology and underlies many structural results in geometry and topology.

An analogue of this phenomenon has recently appeared in the study of quantum codes. In \cite{DHLV2022,Dinur2024sheaf,nguyen2025quantum}, it was shown that for $t$-dimensional cubical complexes equipped with specific sheaves, various code parameters including code distance, soundness, (co)boundary expansion, and decoder at degree $i$ are related to the corresponding parameters at degree $t-i$ for another sheaf. These observations naturally raise the question of whether there exists a generic and rigorous formulation of the ``Poincaré duality” principle for quantum codes.

Here, we prove that for locally acyclic sheaves on sparse cell complexes, there is indeed such a duality that relates the logical qubit number, code distance, (co)boundary expansion and decoder, and this duality takes a conceptually cleaner and more uniform form than in previously known examples.
Furthermore, later  in Section~\ref{sec:multiplicative},  we demonstrate that the isomorphism of logical qubits (homology groups) can be established using cap product, in particular with a chain element that is not necessarily a cycle.
This is rather surprising and new to both the study of quantum codes and topology.

In Section~\ref{sec:strong_sheaf}, we generalize Lemma 6.2 and Lemma 6.3 in \cite{lin2024transversalnoncliffordgatesquantum} from simplicial complexes to cell complexes, and establish the framework of locally acyclic sheaves. In Section~\ref{sec:sheaf_code}, we introduce (dual) sheaves generated by classical local codes. In Section~\ref{sec:duality-proof}, we formally prove our Poincar\'e duality theorem for quantum codes.

\subsection{Strong sheaf axiom and local acyclicity}\label{sec:strong_sheaf}

Given a sheaf $\mathcal{F}$ on a cell complex $X$, recall the sheaf $\mathcal{F}_k$ defined in Proposition~\ref{definition of F_k}, for each $i$-cell $\sigma$ and $k \geq i$ we have 
\begin{align}\label{eq:F_k}
	\mathcal{F}_{k,\sigma} = C^k(X_{\geq \sigma},\mathcal{F}) = \prod_{\tau \in X_{\geq \sigma}(k)} \mathcal{F}_\tau = \prod_{\tau \in U_{\sigma}(k)} \mathcal{F}_\tau.
\end{align}
For any pair of cells $\sigma' \geq \sigma$, there is a natural map $\mathcal{F}_{k,\sigma,\sigma'}: \mathcal{F}_{k,\sigma} \rightarrow \mathcal{F}_{k,\sigma'}$ 
\begin{align}\label{eq:F_k_morphism}
	\mathcal{F}_{k,\sigma,\sigma'}(x) = x \vert_{X_{\geq \sigma'}(k)},
\end{align}
which is a restriction of domain as $X_{\geq \sigma} \supseteq X_{\geq \sigma'}$. Now we define an important double complex. This is a standard construction associated with a complex of sheaves.  Let
\begin{align}
	K^{p,q} \coloneqq  C^p(X,\mathcal{F}_{t-q}) = \prod_{\sigma \in X(p)} C^{t-q}(X_{\geq \sigma},\mathcal{F}).
\end{align}

We now define coboundary operators on the double complex $K^{p,q}$ as follows.

{
We define $d': K^{p,q} \rightarrow K^{p+1,q}$ as the coboundary operator of the sheaved cellular chain complex $C^\bullet(X,\F_{t-q})$. Explicitly, for $\alpha \in K^{p,q}$ and a $(p+1)$-cell $\sigma'$, we set
\begin{align}\label{eq:d'_1}
    (d'\alpha)(\sigma') = \sum_{\sigma \lessdot \sigma'} \mathcal{F}_{k,\sigma,\sigma'}( \alpha(\sigma) ) = \sum_{\sigma \lessdot \sigma'} \alpha(\sigma) \vert_{X_{\geq \sigma'}(t-q)}.
\end{align}
Here $\mathcal{F}_{k,\sigma,\sigma'}$ is simply the restriction map. Consequently, for any $\pi \in X_{\geq \sigma'}(t-q)$ with $\sigma' \geq \sigma$, we have
\begin{align}\label{eq:d'_2}
    (d'\alpha)(\sigma',\pi) = \sum_{\sigma \lessdot \sigma'} \mathcal{F}_{k,\sigma,\sigma'}( \alpha(\sigma) )(\pi) = \sum_{\sigma \lessdot \sigma'} \alpha(\sigma,\pi).
\end{align}
}

To define $d'':K^{p,q}\rightarrow K^{p,q+1}$, we first need to define the \emph{local boundary operator}:
\begin{align}
	C^k(X_{\geq \sigma},\mathcal{F}) \xrightarrow{\partial_L} C^{k-1}(X_{\geq \sigma},\mathcal{F})
\end{align}
by
\begin{align}
	(\partial_L (x) ) (\pi) = \sum_{\sigma \le \pi \lessdot \pi'} \mathcal{F}_{\pi,\pi'}^T (x(\pi')).
\end{align}
Similarly, we can define the local coboundary operator $\delta^L$. Let $\alpha(\sigma) = x$, then we set
\begin{align}\label{eq:d''}
	(d''\alpha)(\sigma) = \partial_L (\alpha(\sigma)) \in K^{p,q+1} = C^p(X,\mathcal{F}_{t-q-1}).
\end{align}
Note that the double complex $K^{p,q}$ is only defined for $p,q \geq 0$ and $p+q \leq t$. We extend the definition to $p,q \in \mathbb{Z}$ by defining the remaining terms and coboundary maps as zero, e.g., we may write $C^t(X,\mathcal{F}_{t-1})=0$.

\begin{proposition}
	The coboundary operators $d'$ and $d''$ commute.
\end{proposition}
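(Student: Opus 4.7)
The plan is to verify this directly by computing the two iterated maps $d'd''$ and $d''d'$ on an arbitrary element $\alpha \in K^{p,q}$, evaluating each at an arbitrary pair $(\sigma', \pi)$ with $\sigma' \in X(p+1)$ and $\pi \in X_{\geq \sigma'}(t-q-1)$, and showing that the two resulting double sums are identical term by term. Since the boundary $\partial_L$ and the horizontal differential $d'$ are defined by restriction of domain and by applying $\mathcal{F}_{\pi,\pi'}^T$ entry by entry, I expect the commutativity to follow purely from linearity plus swapping the order of two independent summations.

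Concretely, I would first unpack $(d'd''\alpha)(\sigma')(\pi)$. Using \eqref{eq:d'_2} applied to $d''\alpha \in K^{p,q+1}$, this equals $\sum_{\sigma \lessdot \sigma'} (d''\alpha)(\sigma)(\pi)$, and substituting the definition \eqref{eq:d''} of $d''$ via the local boundary $\partial_L$ gives
\begin{equation*}
    (d'd''\alpha)(\sigma')(\pi) \;=\; \sum_{\sigma \lessdot \sigma'} \;\sum_{\pi' : \, \sigma \le \pi \lessdot \pi'} \mathcal{F}_{\pi,\pi'}^T\!\bigl(\alpha(\sigma)(\pi')\bigr).
\end{equation*}
Next I would expand $(d''d'\alpha)(\sigma')(\pi)$ by first applying $\partial_L$ to $(d'\alpha)(\sigma')$ and then using \eqref{eq:d'_2} inside:
\begin{equation*}
    (d''d'\alpha)(\sigma')(\pi) \;=\; \sum_{\pi' : \, \sigma' \le \pi \lessdot \pi'} \mathcal{F}_{\pi,\pi'}^T\!\Bigl(\sum_{\sigma \lessdot \sigma'} \alpha(\sigma)(\pi')\Bigr) \;=\; \sum_{\pi \lessdot \pi'} \;\sum_{\sigma \lessdot \sigma'} \mathcal{F}_{\pi,\pi'}^T\!\bigl(\alpha(\sigma)(\pi')\bigr),
\end{equation*}
where linearity of $\mathcal{F}_{\pi,\pi'}^T$ is used to pull it past the inner sum.

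To conclude that the two expressions coincide I only need to argue that the two double-summation index sets are the same. In $d'd''$ the constraint is $\sigma \lessdot \sigma'$ together with $\sigma \le \pi \lessdot \pi'$; in $d''d'$ the constraint is $\sigma' \le \pi \lessdot \pi'$ together with $\sigma \lessdot \sigma'$. Because $\pi$ is already assumed to lie in $X_{\geq \sigma'}(t-q-1)$ (so $\sigma' \le \pi$ is automatic in the first case), and because $\sigma \lessdot \sigma'$ combined with $\sigma' \le \pi$ yields $\sigma \le \pi$ (so the condition $\sigma \le \pi$ is automatic in the second case), the two index sets indeed agree, and after applying Fubini to swap the order of the independent sums the two expressions are literally equal.

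The main thing to be careful about, rather than an actual obstacle, is bookkeeping of these ``automatic'' incidence conditions coming from the definitions of $K^{p,q}$ and of the local boundary operator; one must ensure that $\pi' \in X_{\ge \sigma}(t-q)$ at the point where $\alpha(\sigma)(\pi')$ is invoked, which holds by the chain of inequalities $\sigma \le \sigma' \le \pi \le \pi'$. Once this transitivity is recorded, the commutativity $d'd'' = d''d'$ is immediate and no further input (such as local acyclicity or any property of $\mathcal{F}$) is required.
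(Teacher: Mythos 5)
Your proposal is correct and follows essentially the same route as the paper: both arguments unfold the definitions of $d'$ and $d''$, write out the resulting double sums over $\sigma \lessdot \sigma'$ and $\pi \lessdot \pi'$, and observe that the two iterated operators produce the same summation once the orders of independent sums are swapped and the redundant incidence conditions are noted. Your write-up is if anything a bit more careful about tracking which incidence constraints are automatic (e.g.\ that $\sigma \le \pi$ follows from $\sigma \lessdot \sigma' \le \pi$, and that $\pi'\ge\sigma$ is forced by $\pi'\gtrdot\pi\ge\sigma$), which the paper glosses over with compressed notation.
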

\begin{proof}
    This can be verified by direct calculation. Applying $d'$ followed by $d''$, we obtain
    \begin{align}
    \begin{aligned}
    	(d''d'\alpha)(\sigma,\pi) & = (\partial_L(d'\alpha)(\sigma))(\pi)
    	= \sum_{\pi' \gtrdot \pi \gtrdot \sigma} \mathcal{F}_{\pi,\pi'}^T (d'\alpha(\sigma,\pi'))
    	= \sum_{\pi' \gtrdot \pi \gtrdot \sigma} \mathcal{F}_{\pi,\pi'}^T \left( \sum_{\sigma' \lessdot \sigma} \alpha(\sigma', u') \right) \\
    	& = \sum_{\sigma' \lessdot \sigma \lessdot \pi \lessdot \pi'} \mathcal{F}_{\pi,\pi'}^T(\alpha(\sigma',\pi')).
    \end{aligned}
    \end{align}
    On the other hand, applying $d''$ followed by $d'$, we obtain
    \begin{align}
    \begin{aligned}
    	(d'd''\alpha)(\sigma,\pi) & = \sum_{\sigma' \lessdot \sigma}(d''\alpha)(\sigma,\pi)
    	= \sum_{\sigma' \lessdot \sigma} (\partial_L \alpha(\sigma))(u) 
    	= \sum_{\sigma' \lessdot \sigma} \sum_{\pi' \gtrdot \pi \gtrdot \sigma} \mathcal{F}_{\pi,\pi'}^T (\alpha(\sigma',\pi')) \\
    	& = \sum_{\sigma' \lessdot \sigma \lessdot \pi \lessdot \pi'} \mathcal{F}_{\pi,\pi'}^T(\alpha(\sigma',\pi')).
    \end{aligned}
    \end{align}
    This completes the proof.
\end{proof}

In general, the sheaf axiom (Definition \ref{def:sheaf}) are not straightforward to verify directly, so we develop a criterion for checking it.

\begin{definition}[Strong sheaf axiom]\label{def:strong_sheaf}
    We say that a sheaf $\mathcal{F}$ on a cell complex $X$ satisfies the \emph{strong sheaf axiom} if there is an exact sequence for any $\sigma \in X(i)$ and $0 \leq i\leq t-2$
    \begin{equation}
    \begin{tikzcd}[column sep=0.7cm]
    		0 \arrow[r] & \mathcal{F}_{\sigma} \arrow[r,"\delta_L"] 
    		& \displaystyle C^{i+1}(X_{\geq \sigma},\mathcal{F})  = 
    		\hspace{-3mm} \prod_{\sigma_{i+1} \in X_{\geq \sigma}(i+1)} \hspace{-3mm} \mathcal{F}_{\sigma_{i+1}} \arrow[r,"\delta_L"]
    		& \displaystyle C^{i+2}(X_{\geq \sigma},\mathcal{F}) = 
    		\hspace{-3mm} \prod_{\sigma_{i+2}\in X_{\geq \sigma}(i+2)} \hspace{-3mm} \mathcal{F}_{\sigma_{i+2}}.
    \end{tikzcd}
    \end{equation}
    And when $i=t-1$, we only require $\mathcal{F}_{\sigma} \rightarrow \prod_{\sigma_{t}\in X_{\geq \sigma}(t)}\mathcal{F}_{\sigma_{t}}$ to be injective.
\end{definition}

\begin{remark}\label{remark:strong_sheaf}
    Obviously, when $X$ is a simplicial complex, then the sheaf axiom naturally implies the strong sheaf axiom. And by Theorem \ref{thm:strong_sheaf}, the strong sheaf axiom will also imply the sheaf axiom. However, this may not be true for general cell complexes. 

    Without having the sheaf axiom, the exactness at $\mathcal{F}_\sigma$ holds when the corresponding local coboundary operator is injective, which is equivalent to requiring the parity-check matrices of the local codes to be of full-rank (see Section \ref{sec:sheaf_code}). Intuitively, by viewing nonzero scalars in $\mathbb{F}$ as rank-1 matrices, full-rank local parity-check matrices naturally generalize scalar coefficients to $\F$. This requirement is not so obvious, but indispensable in building good qLDPC codes \cite{PK2021,PK2022Good,leverrier2022quantum,DHLV2022,Dinur2024sheaf}. Our results in the following reveal its significance in the language of sheaf theory. As a preview, given this property and if $X$ is locally acyclic in Definition \ref{def:acyclic_complex}, then the exactness at $C^{i+1}(X_{\geq \sigma},\mathcal{F})$ and hence strong sheaf axiom follow immediately. In this case, $\mathcal{F}$ on $X$ is also said to be locally acyclic in Definition \ref{def:acyclic_sheaf}.

\end{remark}

For any cell $\sigma$, let 
\begin{align}
	\iota_{\sigma}: \mathcal{F}_{\sigma} \rightarrow \prod_{\tau \in U_{\sigma}(t)} \mathcal{F}_{\tau}
\end{align}
be a map defined by
\begin{align}\label{eq:iota}
	\iota_{\sigma}(g) = \prod_{\tau \in U_{\sigma}(t)}\mathcal{F}_{\sigma,\tau}(g)
\end{align}
for any $g \in \mathcal{F}_{\sigma}$. We modify Lemma 6.3 in \cite{lin2024transversalnoncliffordgatesquantum} and get the following theorem.

\begin{theorem}\label{thm:strong_sheaf}
    Suppose the sheaf $\mathcal{F}$ on a cellular complex $X$ satisfies the strong sheaf axiom, then it is isomorphic to the following sheaf $\mathcal{P}$:
    \begin{align}\label{eq:strong_sheaf}
    	\mathcal{P}(U_{\sigma}) \coloneqq  \iota_\sigma \mathcal{F}_{\sigma} 
    	= \{g \in \prod_{\tau\in U_{\sigma}(t)}\mathcal{F}_{\tau}: \forall \tau'\in U_{\sigma}(t-1),\  g|_{U_{\tau'}(t)} \in \iota_{\tau'} \mathcal{F}_{\tau'}\}.
    \end{align}
    The restriction map for $U_{\sigma} \supseteq U_{\pi}$, $\mathcal{P}_{U_{\sigma},U_{\pi}}$ is given by restriction of domain, i.e. for $g \in \mathcal{P}(U_{\sigma})$, $\mathcal{P}_{U_{\sigma},U_{\pi}}(g)=g|_{U_{\pi}}$.
    Then $\iota = \{\iota_\sigma \}$ is the isomorphism between sheaf $\mathcal{F}$ and $\mathcal{P}$ with the following commutative diagram:
    \begin{equation}
    \begin{tikzcd}
    	\mathcal{F}_{\sigma} \arrow[r, "\iota_{\sigma}"] \arrow[d, "\mathcal{F}_{\sigma,\pi}"'] & 
    	\mathcal{P}(U_{\sigma}) \arrow[d, "\mathcal{P}_{U_{\sigma},U_{\pi}}"] \\
    	\mathcal{F}_{\pi} \arrow[r, "\iota_{\pi}"'] & 
    	\mathcal{P}(U_{\pi})
    \end{tikzcd}
    \end{equation}
\end{theorem}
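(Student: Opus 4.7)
The plan is to establish by downward induction on $\dim\sigma$ three claims in order: (i) each $\iota_\sigma$ is injective; (ii) the image of $\iota_\sigma$ coincides with the locally consistent set described in Eq.~\eqref{eq:strong_sheaf}; (iii) the square in the statement commutes. The base cases $\dim\sigma = t$ (where $\iota_\sigma=\id$ and both conditions are trivial) and $\dim\sigma = t-1$ (where injectivity is the direct content of the top-dimensional part of the strong sheaf axiom and the consistent-set condition degenerates to $\iota_\sigma\mathcal{F}_\sigma$ itself) are handled by inspection.

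For the inductive step for (i), suppose $\dim\sigma = i < t-1$ and $\iota_\sigma(g) = 0$. For every $\sigma'\gtrdot\sigma$ and every $\tau\in U_{\sigma'}(t)\subseteq U_\sigma(t)$, functoriality $\mathcal{F}_{\sigma',\tau}\circ\mathcal{F}_{\sigma,\sigma'} = \mathcal{F}_{\sigma,\tau}$ gives $\iota_{\sigma'}(\mathcal{F}_{\sigma,\sigma'}(g))=0$, so by the inductive hypothesis $\mathcal{F}_{\sigma,\sigma'}(g)=0$ for every $\sigma'\gtrdot\sigma$. The injectivity of $\delta_L:\mathcal{F}_\sigma\to C^{i+1}(X_{\geq\sigma},\mathcal{F})$ provided by the strong sheaf axiom then yields $g=0$.

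For the inductive step for (ii), the inclusion $\iota_\sigma\mathcal{F}_\sigma\subseteq\{g : g|_{U_{\tau'}(t)}\in\iota_{\tau'}\mathcal{F}_{\tau'}\}$ is immediate from functoriality. Conversely, fix a locally consistent $g$. The inductive hypothesis applied to each $\rho\geq\sigma$ with $\dim\rho > i$ (note that the consistency condition on $g$ restricts to the consistency condition on $g|_{U_\rho(t)}$ because $U_\rho(t-1)\subseteq U_\sigma(t-1)$) produces unique lifts $f_\rho\in\mathcal{F}_\rho$ with $\iota_\rho(f_\rho) = g|_{U_\rho(t)}$. I next claim that the tuple $(f_{\sigma'})_{\sigma'\gtrdot\sigma}\in C^{i+1}(X_{\geq\sigma},\mathcal{F})$ lies in $\ker\delta_L$: for each $\tau\in X_{\geq\sigma}(i+2)$ and each $\sigma'$ with $\sigma\lessdot\sigma'\lessdot\tau$, the identity $\iota_\tau(\mathcal{F}_{\sigma',\tau}(f_{\sigma'})) = \iota_{\sigma'}(f_{\sigma'})|_{U_\tau(t)} = g|_{U_\tau(t)} = \iota_\tau(f_\tau)$, combined with injectivity of $\iota_\tau$ from (i), forces $\mathcal{F}_{\sigma',\tau}(f_{\sigma'}) = f_\tau$. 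Hence $(\delta_L(f_{\sigma'}))(\tau)$ equals $f_\tau$ summed $|\{\sigma':\sigma\lessdot\sigma'\lessdot\tau\}|$ times, which vanishes in characteristic $2$ by the even-parity assumption on cellular incidences in the preliminaries. The exactness clause of the strong sheaf axiom at $\mathcal{F}_\sigma$ then produces $f\in\mathcal{F}_\sigma$ with $\mathcal{F}_{\sigma,\sigma'}(f) = f_{\sigma'}$ for all $\sigma'\gtrdot\sigma$. Iterating functoriality along any chain $\sigma\lessdot\sigma'\lessdot\cdots\lessdot\tau$ (available since the ambient cell complex comes equipped with such chains in all our examples) and the cocycle relations $\mathcal{F}_{\rho,\rho'}(f_\rho) = f_{\rho'}$ established above, one obtains $\mathcal{F}_{\sigma,\tau}(f) = f_\tau = g(\tau)$ for every $\tau\in U_\sigma(t)$, so $\iota_\sigma(f) = g$.

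Finally, (iii) reduces to verifying $\mathcal{F}_{\pi,\tau}\circ\mathcal{F}_{\sigma,\pi} = \mathcal{F}_{\sigma,\tau}$ at each $\tau\in U_\pi(t)$, which is pure functoriality of the restriction maps in $\mathcal{F}$. I expect the main obstacle to lie in the cocycle verification inside (ii), since that is the single step where the inductively established injectivity of $\iota_\tau$, the characteristic-$2$ even-parity assumption on cellular incidences, and the full exactness (not merely injectivity) of the strong sheaf axiom must be combined simultaneously.
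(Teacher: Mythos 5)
Your proof is correct and takes essentially the same route as the paper's two-lemma argument (Lemmas~\ref{lemma:strong_sheaf_1} and~\ref{lemma:strong_sheaf_2}): downward induction on $\dim\sigma$ establishing injectivity, then the image characterization by lifting the restrictions $g|_{U_{\sigma'}(t)}$ to elements $f_{\sigma'}\in\mathcal{F}_{\sigma'}$ using the inductive hypothesis, verifying the cocycle condition via injectivity of $\iota_\tau$ together with the even-incidence convention in characteristic two, and applying the exactness clause of the strong sheaf axiom to produce the global lift. The only noteworthy departure is your injectivity step, which is cleaner than the paper's: you show directly that $\iota_\sigma(g)=0$ forces $\mathcal{F}_{\sigma,\sigma'}(g)=0$ on every cofacet $\sigma'$ and then invoke the injectivity of $\delta_L$, whereas the paper reaches the same conclusion by constructing a left inverse of $\iota_\sigma$.
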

\begin{proof}
The proof follows from Lemma \ref{lemma:strong_sheaf_1} and \ref{lemma:strong_sheaf_2}.

\begin{lemma}\label{lemma:strong_sheaf_1}
    For each $i$-cell $\sigma$, the map $\iota_{\sigma}: \mathcal{F}_{\sigma}\rightarrow \prod_{\tau\in U_{\sigma}(t)} \mathcal{F}_{\tau}$ is injective, and the following diagram commutes
    \begin{equation}
    	\begin{tikzcd}
    		\mathcal{F}_{\sigma} \arrow[r, "\iota_{\sigma}"] \arrow[d, "\mathcal{F}_{\sigma,\pi}"'] & 
    		\prod_{\tau\in U_{\sigma}(t)}\mathcal{F}_{\tau} \arrow[d, "\cdot|_{U_{\pi}(t)}"] \\
    		\mathcal{F}_{\pi} \arrow[r, "\iota_{\pi}"'] & 
    		\prod_{\tau' \in U_{\pi}(t)}\mathcal{F}_{\tau'}
    	\end{tikzcd}
    \end{equation}
\end{lemma}
\begin{proof}[Proof of Lemma \ref{lemma:strong_sheaf_1}]
	The commutativity follows easily by the presheaf condition. We will prove the injectivity by induction. We note that by Eq.~\eqref{eq:iota} if one of $\mathcal{F}_{\sigma,\tau}$s is injective, then $\iota_\sigma$ is injective. However, it happens that none of $\mathcal{F}_{\sigma,\tau}$s is injective in real case, e.g., morphisms from $(t-1)$-cells to $t$-cells defined by local codes in Section \ref{sec:sheaf_code}, but $\iota_\sigma$ can be injective when the parity-check matrices are of full-rank.  
	
	For $i = t$, the injectivity of $\iota_{\sigma}$ is trivial. For $i=t-1$, the injectivity is merely one requirement of strong sheaf axiom. Suppose this is true for all $i \geq k+1$, and now we are going to prove the case $i = k$. Let $\sigma$ be any $i$-cell, by the strong sheaf axiom, we have the embedding
	\begin{align}\label{eq:iota_3}
		\mathcal{F}_{\sigma} \underset{\cong}{\overset{ \delta_L }{\longrightarrow}} 
		\left\{ \prod_{\pi \in X_{\geq \sigma}(i+1)} g_{\pi} : \forall \rho \in X_{\geq \sigma}(i+2),\sum_{\sigma\lessdot\pi\lessdot\rho}\mathcal{F}_{\pi,\rho}(g_{\pi})=0 \right\},
	\end{align}
	where the RHS condition is simply $\delta_L( \prod_{\pi\in X_{\geq \sigma}(i+1)} g_{\pi} ) = 0$.
	Suppose $h = \iota_{\sigma}g$. By the presheaf condition, $h(\tau) = \mathcal{F}_{\pi,\tau}(\mathcal{F}_{\sigma,\pi}(g))$ for any $\pi \in X_{\geq \sigma}(i+1)$ such that $\sigma<\pi<\tau$. Therefore,
	\begin{align}
		& \prod_{\tau\in U_{\pi}} h(\tau) = \prod_{\tau\in U_{\pi}} \mathcal{F}_{\pi,\tau} (\mathcal{F}_{\sigma,\pi}(g))  \in \iota_{\pi}\mathcal{F}_{\pi} \\
		\implies &
		\iota_{\pi}g_{\pi} = \prod_{\tau\in U_{\pi}}\mathcal{F}_{\pi,\tau}(g_{\pi}) = \prod_{\tau\in U_{\pi}} \mathcal{F}_{\pi,\tau} (\mathcal{F}_{\sigma,\pi}(g)) = \prod_{\tau\in U_{\pi}}h(\tau) \label{eq:iota_1}
	\end{align} 
	for some $g_{\pi} \in \mathcal{F}_{\pi}$. 
	
	On the other hand, for each $\rho \in X_{\geq\sigma}(i+2)$, $\sum_{\sigma\lessdot\pi\lessdot\rho}\mathcal{F}_{\pi,\rho}(g_{\pi})\in \mathcal{F}_{\rho}$ (we only sum over $\pi$ here). As a result,
	\begin{align}
	\begin{aligned}
		\iota_{\rho}(\sum_{\sigma\lessdot\pi\lessdot\rho}\mathcal{F}_{\pi,\rho}(g_{\pi}))
		& =\prod_{\tau\in U_{\rho}}\mathcal{F}_{\rho,\tau}(\sum_{\sigma\lessdot\pi\lessdot\rho}\mathcal{F}_{\pi,\rho}(g_{\pi})) \\
		& =\prod_{\tau\in U_{\rho}}\sum_{\sigma\lessdot\pi\lessdot\rho}\mathcal{F}_{\pi,\tau}(g_{\pi}) \\
		& =\sum_{\sigma\lessdot\pi\lessdot\rho} \prod_{\tau\in U_{\rho}} \mathcal{F}_{\pi,\tau}(g_{\pi}) 
		=\prod_{\tau\in U_{\rho}}\sum_{\sigma\lessdot\pi\lessdot\rho}h(\tau)
	\end{aligned}
	\end{align}
	Assume that the above formula equals to zero, then by the induction hypothesis that $\iota_{\rho}$ is injective, we get $\sum_{\sigma\lessdot\pi\lessdot\rho} \mathcal{F}_{\pi,\rho}(g_{\pi})=0$, hence $\prod_{\pi\in X_{\geq \sigma}(i+1)}g_{\pi}\in \im (\delta_L) \cong \mathcal{F}_{\sigma}$. 
	
	Furthermore, by Eq.~\eqref{eq:iota_1} and by the induction hypothesis that $\iota_{\pi}$ is injective, 
	\begin{align}
		g_{\pi} = \mathcal{F}_{\sigma,\pi}(g) \implies
		\prod_{\pi\in X_{\geq \sigma}(i+1)}g_{\pi} = \delta_L (g).
	\end{align}
    Since $\delta_L$ is an embedding, the map $h \mapsto \prod_{\pi\in X_{\geq \sigma}(i+1)}g_{\pi} \mapsto g$ is a left inverse of $\iota_{\sigma}$, which further indicates that $\iota_{\sigma}$ is injective. Therefore, it suffices to show that $\sum_{\sigma\lessdot\pi\lessdot\rho}h(\tau) = 0$, but this is immediate as we have an even number of $\pi$ in the sum.
\end{proof}

\begin{lemma}\label{lemma:strong_sheaf_2}
	For each cell $\sigma\in X(i)$,
	\begin{align}\label{eq:iota_2}
		\mathcal{F}_{\sigma} \cong \iota_{\sigma}(\mathcal{F}_\sigma) = \{g \in \prod_{\tau\in U_{\sigma}(t)}\mathcal{F}_{\tau}: \forall \tau'\in U_{\sigma}(t-1),\  g|_{U_{\tau'}(t)} \in \iota_{\tau'} \mathcal{F}_{\tau'}\}.
	\end{align}
\end{lemma}
\begin{proof}[Proof of Lemma \ref{lemma:strong_sheaf_2}]
	For $i=t$, we have no additional condition in the definition of the set on the RHS of Eq.~\eqref{eq:iota_2}. For $i = t-1$, the statement holds trivially. We still prove by induction. Suppose this is true for all $i \geq k+1$, let us prove for the case $i = k$. We define $\mathcal{P}(U_{\sigma})$ as the set on the RHS of Eq.~\eqref{eq:iota_2}. Then $0 \in \mathcal{P}(U_\sigma)$ and $\mathcal{P}(U_\sigma)$ is a well-defined vector space because the restriction condition is compatible with the linear structure. Given any $\prod_{\tau \in U_{\sigma}(t)} \mathcal{F}_{\sigma,\tau}(g) \in \iota_{\sigma}(\mathcal{F}_\sigma)$, since  $\mathcal{F}_{\sigma,\tau}(g) \in \mathcal{F}_\tau$, when $\tau' < \tau$ for a fixed $\tau'$
	\begin{align}
		\mathcal{F}_{\sigma,\tau}(g) \vert_{U_{\tau'}(t)} = \mathcal{F}_{\tau',\tau} \mathcal{F}_{\sigma,\tau'} (g).
	\end{align}
    Consequently,
    \begin{align}	
    	\left( \prod_{\tau \in U_{\sigma}(t)} \mathcal{F}_{\sigma,\tau}(g) \right) \vert_{U_{\tau'}(t)} 
    	= \prod_{\tau \in U_{\tau'}(t)} \mathcal{F}_{\tau',\tau} \mathcal{F}_{\sigma,\tau'} (g)
    	= \iota_{\tau'}( \mathcal{F}_{\sigma,\tau'} (g) ).
    \end{align} 
	
	We are left to prove $\im(\iota_{\sigma}) \supseteq  \mathcal{P}(U_\sigma)$. Let $\sigma$ be an $i$-cell. Given any $h \in \mathcal{P}(U_\sigma)$ and an $(i+1)$-cell $\pi \gtrdot \sigma$, we define $h_{\pi} \coloneqq  h|_{U_{\pi}(t)}$. Given any $\tau' \in U_{\pi}(t-1) \subseteq U_{\sigma}(t-1)$, we must have $U_{\tau'}(t) \subseteq U_{\pi}(t)$ and thus
	\begin{align}
		h_{\pi} \vert_{U_{\tau'}(t)} 
		= ( h|_{U_{\pi}(t)} ) \vert_{U_{\tau'}(t)}  = h\vert_{U_{\tau'}(t)} \in \iota_{\tau'} \mathcal{F}_{\tau'}. 
	\end{align}
	By induction, $h_{\pi} \in \im(\iota_{\pi})$. Let $g_{\pi} \in \mathcal{F}_{\pi}$ such that $h_{\pi} = \iota_{\pi}g_{\pi}$, i.e. for each $\tau \in U_{\pi}(t)$, $h(\tau) = \mathcal{F}_{\pi,\tau} (g_{\pi})$. By checking Eq.~\eqref{eq:iota_3} and the fact that $\sum_{\sigma\lessdot\pi\lessdot\rho}\mathcal{F}_{\pi,\rho}(g_{\pi}) = 0$, we can show that $\prod_{\pi\in X_{\geq \sigma}(i+1)}g_{\pi} \in \im(\delta_L)$, which means that we can find some $g \in \mathcal{F}_\sigma$ and
	\begin{align}
		g_{\pi} = \mathcal{F}_{\sigma,\pi} g 
		\implies
		h(\tau) = \mathcal{F}_{\pi,\tau} (g_{\pi}) = \mathcal{F}_{\pi,\tau} \mathcal{F}_{\sigma,\pi} g 
		\implies
		h \in \im(\iota_{\sigma}).
	\end{align}
	This finishes the proof.
\end{proof}

By the above two lemmas, $\mathcal{P}$ is a sheaf isomorphic to $\mathcal{F}$. 
\end{proof}

\begin{definition}[Locally acyclic sheaf]\label{def:acyclic_sheaf}
    Suppose $X$ is a $t$ dimensional cell complex with a sheaf $\mathcal{F}$, we say $\mathcal{F}$ is \emph{locally acyclic} if for each $i$-cell $\sigma$, the cohomology $H^j(X_{\geq \sigma},\mathcal{F})=0$ for $i\leq j\leq t-1$, i.e. the following sequence is exact
    \begin{equation}
    	\begin{tikzcd}[column sep=0.7cm]
    		0\arrow[r] & \mathcal{F}_{\sigma} \arrow[r] & \displaystyle 
    		\hspace{-3mm} \prod_{\sigma_{i+1}\in X_{\geq \sigma}(i+1)}\mathcal{F}_{\sigma_{i+1}} \arrow[r]
    		&\displaystyle
    		\hspace{-3mm} \prod_{\sigma_{i+2}\in X_{\geq \sigma}(i+2)}\mathcal{F}_{\sigma_{i+2}}\arrow[r] &  \cdots \arrow[r] & \displaystyle
    		\hspace{-3mm} \prod_{\sigma_{t}\in X_{\geq \sigma}(t)}\mathcal{F}_{\sigma_{t}}.
    	\end{tikzcd}
    \end{equation}
\end{definition}

Combining the definition with Theorem \ref{thm:strong_sheaf}, we get the following easy but important corollary.

\begin{corollary}\label{coro:acyclic_sheaf}
    Every locally acyclic sheaf satisfies the strong sheaf axiom.
\end{corollary}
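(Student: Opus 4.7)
The plan is to show that the corollary follows by directly truncating the exact sequence provided by local acyclicity, since the strong sheaf axiom is a strictly weaker demand. First I would unpack both definitions side-by-side and identify what each one requires. Local acyclicity of $\mathcal{F}$ at an $i$-cell $\sigma$ (Definition \ref{def:acyclic_sheaf}) asserts the exactness of the long sequence
\begin{equation}
0 \longrightarrow \mathcal{F}_\sigma \longrightarrow C^{i+1}(X_{\geq \sigma}, \mathcal{F}) \longrightarrow C^{i+2}(X_{\geq \sigma}, \mathcal{F}) \longrightarrow \cdots \longrightarrow C^t(X_{\geq \sigma}, \mathcal{F}),
\end{equation}
at every position $j$ with $i \leq j \leq t-1$. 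Here I would observe that $\mathcal{F}_\sigma = C^i(X_{\geq \sigma}, \mathcal{F})$, since $\sigma$ is the unique cell of minimal dimension in $X_{\geq \sigma}$, which is what makes $H^i(X_{\geq \sigma}, \mathcal{F}) = 0$ tantamount to injectivity of $\delta_L : \mathcal{F}_\sigma \to C^{i+1}(X_{\geq \sigma}, \mathcal{F})$.

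Next I would carry out the truncation in the two cases spelled out in Definition \ref{def:strong_sheaf}. For $\sigma \in X(i)$ with $0 \leq i \leq t-2$, exactness of $0 \to \mathcal{F}_\sigma \to C^{i+1}(X_{\geq \sigma}, \mathcal{F})$ at $\mathcal{F}_\sigma$ is the statement $H^i(X_{\geq \sigma}, \mathcal{F}) = 0$, and exactness at $C^{i+1}(X_{\geq \sigma}, \mathcal{F})$ is the statement $H^{i+1}(X_{\geq \sigma}, \mathcal{F}) = 0$; both are supplied by local acyclicity. For $\sigma \in X(t-1)$, the strong sheaf axiom only requires injectivity of $\mathcal{F}_\sigma \to \prod_{\sigma_t \gtrdot \sigma} \mathcal{F}_{\sigma_t}$, which is precisely the hypothesis $H^{t-1}(X_{\geq \sigma}, \mathcal{F}) = 0$ included in local acyclicity. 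Thus in either case the required three-term exactness (or injectivity) is obtained by forgetting the higher-degree terms of the acyclic sequence.

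I do not anticipate any real obstacle. The corollary is essentially a bookkeeping observation recording that local acyclicity is strictly stronger than the strong sheaf axiom; the substantive work in this subsection lies in Theorem \ref{thm:strong_sheaf}, which then leverages the strong sheaf axiom to recover the global sheaf $\mathcal{F}$ as the subsheaf $\mathcal{P}$ of top-dimensional data satisfying compatibility across codimension-one cells.
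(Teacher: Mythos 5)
Your proof is correct and matches the paper's (implicit, one-line) argument: the strong sheaf axiom is just the truncation of the locally acyclic exact sequence to its first two (or, in the $i=t-1$ edge case, first one) positions, with the observation that $C^i(X_{\geq\sigma},\mathcal{F})=\mathcal{F}_\sigma$ and $C^{i-1}(X_{\geq\sigma},\mathcal{F})=0$ making exactness at $\mathcal{F}_\sigma$ equivalent to injectivity of $\delta_L$.
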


We also have the notion of local acyclicity for cell complexes.

\begin{definition}[Locally acyclic cell complex]\label{def:acyclic_complex}
	We say that a cell complex $X$ is \emph{locally acyclic} if for any $k$-cell $f$, the homology $H_i(X_{\leq f},\mathbb{F}_q)=0$, for all $0 < i \leq k$. 
\end{definition}

The local acyclicity is a rather weak requirement for cellular complex. For example, every simplicial complex is locally acyclic, so do the cubical complexes because each $k$-cell $f$ inside is homeomorphic to a sphere.

\begin{proposition}\label{prop:d'_exact}
Suppose $X$ is a locally acyclic cell complex (Definition \ref{def:acyclic_complex}), then the map $d':K^{p,q}\rightarrow K^{p+1,q}$ is exact at any $p>0$.  
\end{proposition}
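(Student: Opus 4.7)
My strategy is to decompose the column $K^{\bullet, q}$ into a product of local cellular cochain complexes indexed by cells of top dimension $t-q$, and then apply local acyclicity componentwise.

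First, I would exchange the order of products in the definition:
\begin{equation}
    K^{p,q} = \prod_{\sigma \in X(p)} \prod_{\pi \in X_{\geq \sigma}(t-q)} \mathcal{F}_\pi
    \;\cong\; \prod_{\pi \in X(t-q)} \prod_{\sigma \in X_{\leq \pi}(p)} \mathcal{F}_\pi
    \;=\; \prod_{\pi \in X(t-q)} C^p\bigl(X_{\leq \pi},\,\mathcal{F}_\pi\bigr),
\end{equation}
where on the right, $C^\bullet(X_{\leq \pi}, \mathcal{F}_\pi)$ denotes the ordinary cellular cochain complex of the subcomplex $X_{\leq \pi}$ with constant coefficients in the vector space $\mathcal{F}_\pi$. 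Explicitly, a ``column element'' is the family $\alpha_\pi(\sigma) \coloneqq \alpha(\sigma,\pi)$ for $\sigma \leq \pi$, $\dim\sigma = p$.

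Next I would verify that under this decomposition $d'$ splits as a direct product of the standard coboundary operators. Indeed, by Eq.~\eqref{eq:d'_2} we have $(d'\alpha)(\sigma',\pi) = \sum_{\sigma \lessdot \sigma'} \alpha(\sigma,\pi)$, i.e.\ $(d'\alpha)_\pi = \delta_{X_{\leq\pi}} \alpha_\pi$, with the restriction maps of $\mathcal{F}_{t-q}$ acting as the identity on each $\mathcal{F}_\pi$ component (by Eq.~\eqref{eq:F_k_morphism} they are just domain restrictions). So computing the cohomology of $K^{\bullet,q}$ reduces, factor by factor, to computing the cellular cohomology $H^p(X_{\leq \pi}, \mathcal{F}_\pi)$.

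Finally I would invoke local acyclicity of $X$ together with the fact that we are over a field. Since $\dim\pi = t-q$, Definition~\ref{def:acyclic_complex} yields $H_i(X_{\leq \pi}, \mathbb{F}) = 0$ for $0 < i \leq t-q$; dualizing over $\mathbb{F}$ gives $H^i(X_{\leq \pi}, \mathbb{F}) = 0$ in the same range, and since $\mathcal{F}_\pi$ is a free $\mathbb{F}$-module the universal coefficient theorem gives $H^i(X_{\leq \pi}, \mathcal{F}_\pi) = H^i(X_{\leq \pi}, \mathbb{F}) \otimes_{\mathbb{F}} \mathcal{F}_\pi = 0$ for $0 < i \leq t-q$. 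For $p > t-q$ there are no $p$-cells in $X_{\leq \pi}$, so the factor vanishes identically and exactness is automatic. Since cohomology commutes with (finite) products of chain complexes of vector spaces, combining the factors yields $\ker d'_p = \operatorname{im} d'_{p-1}$ for every $p > 0$.

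\paragraph{Anticipated obstacle.}
The only nontrivial point is Step~3: bridging between homology of $X_{\leq \pi}$ with constant $\mathbb{F}$-coefficients (as in Definition~\ref{def:acyclic_complex}) and cohomology with $\mathcal{F}_\pi$-coefficients. Over a field this is painless, but care is needed to confirm that the ``cellular cochain complex of $X_{\leq\pi}$'' appearing via the decomposition really coincides (up to characteristic-$2$ incidence numbers, which the paper's conventions guarantee are $\pm 1$ and therefore $1$) with the standard cochain complex whose cohomology is controlled by local acyclicity. Once this identification is made, the argument is essentially mechanical.
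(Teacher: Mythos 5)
Your proof is correct and takes essentially the same route as the paper: swap the order of products to write $K^{p,q}$ as a product over top cells $\pi \in X(t-q)$ of cellular cochain complexes of $X_{\leq\pi}$ with constant coefficients $\mathcal{F}_\pi$, check that $d'$ splits into the componentwise coboundary $\delta_{\leq\pi}$, and invoke local acyclicity of $X$ (plus the universal coefficient theorem over a field) to get vanishing cohomology in positive degree. Your Step~3 spells out the passage from homology of $X_{\leq\pi}$ to cohomology with constant $\mathcal{F}_\pi$-coefficients a bit more carefully than the paper's terse ``transpose the chain complex,'' but the two arguments are the same.
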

\begin{proof}
    Note that 
    \begin{align}
        C^i(X,\mathcal{F}_k) = \prod_{f\in X(i)} \prod_{u \in X_{\geq f}(k)} \mathcal{F}_u
        = \prod_{u \in X(k)} \prod_{f \in X_{\leq u}(i)} \mathcal{F}_u 
        = \prod_{u\in X(k)} C^i(X_{\leq u},\mathcal{F}_u),
    \end{align}
    where $C^i(X_{\leq u},\mathcal{F}_u)$ is defined over the constant sheaf with values always in $\mathcal{F}_u$. This indicates that the cochain complex $C^\ast(X,\mathcal{F}_k)$ can be decomposed as
    \begin{align}
    	\cdots \rightarrow 
    	\prod_{u\in X(k)} C^i(X_{\leq u},\mathcal{F}_u) \rightarrow 
    	\prod_{u\in X(k)} C^{i+1}(X_{\leq u},\mathcal{F}_u) \rightarrow \cdots
    \end{align}
    We now show that the coboundary map also splits. For $\alpha \in C^i(X,\mathcal{F}_k)$, recall the coboundary map $(d'\alpha)(f',u) = \sum_{f\lessdot f'} \alpha(f,u)$. By viewing $\alpha (-,u)$ as a vector in $C^i(X_{\leq u},\mathcal{F}_u) \subseteq C^i(X,\mathcal{F}_k)$, then the boundary map satisfies
    \begin{align}
    	(d'\alpha(-,u))(f')=\sum_{f\lessdot f'} \alpha(f,u)=\sum_{f\lessdot f'} (\alpha(-,u))(f).
    \end{align}
    Therefore, $d'$ is restricted to the coboundary operator $\delta_{\leq u}$ of the chain complex $C^i(X_{\leq u},\mathcal{F})$ for a fixed $u$. Then $d' = \prod_{u \in X(k)} \delta_{\leq u}$. By Definition \ref{def:acyclic_complex}, $H^i(X_{\leq u}, \mathbb{F}_q) = 0$ (by transposing the chain complex). Since $\mathcal{F}_u$ here is a constant sheaf, $H^i(X_{\leq u},\mathcal{F}_u) = 0$ for all $i>0$ and thus $d'$ is exact.
\end{proof}

We prove the exactness of $d''$ in Proposition \ref{prop:d'_d''_exact} after introducing sheaf and dual sheaf codes.


\subsection{Sheaf codes and dual sheaf codes}\label{sec:sheaf_code}

By Theorem \ref{thm:strong_sheaf}, the sheaf $\mathcal{P}$ is completely defined by the $(t-1)$-cells: for any $i$-cell $\sigma$ with $i \le t - 1$,
\begin{align}
	\mathcal{P}(U_{\sigma}) 
	= \{g \in \prod_{\tau\in U_{\sigma}(t)}\mathcal{F}_{\tau}: \forall \tau'\in U_{\sigma}(t-1),\  g|_{U_{\tau'}(t)} \in \iota_{\tau'} \mathcal{F}_{\tau'}\}
\end{align} 
contains all vectors as long as their restriction to the $(t-1)$-cells satisfy the above condition. In the context of quantum error correction code, we always assume $\mathcal{F}_{\tau} = \mathbb{F}_q$ for each $\tau \in X(t)$. Hence for each $\tau' \in X(t-1)$, $\iota_{\tau'} \mathcal{F}_{\tau'}$ is a linear subspace of $\mathbb{F}_q^{U_{\tau'}(t)}$, and can be viewed as a classical code $\C_{\tau'} \coloneqq  \iota_{\tau'} \mathcal{F}_{\tau'}$. Specifying these subspaces $\{ \C_{\tau'}\}_{\tau' \in X(t-1)}$ is sufficient to define the sheaf $\mathcal{P}$.

\begin{definition}
    Let $X$ be a $t$-dimensional cell complex and let $\{ \C_{\tau'}\}_{\tau' \in X(t-1)}$ be a family of classical codes, where each $\C_{\tau'} \subseteq \mathbb{F}_q^{U_{\tau'}(t)}$ is a linear subspace. We define the sheaf $\mathcal{F} = \mathcal{F}[\{ \C_{\tau'}\}_{\tau' \in X(t-1)}]$ generated by $\{ \C_{\tau'}\}_{\tau' \in X(t-1)}$ as
    \begin{align}
        \mathcal{F}(U_{\sigma}) = \{c \in \mathbb{F}_q^{U_{\sigma}(t)} : \forall \tau' \in U_{\sigma}(t-1),\  c|_{U_{\tau'}(t)} \in \C_{\tau'}\}.
    \end{align}
    The restriction maps $\mathcal{F}_{\sigma,\sigma'}$ are given by restricting domains. 
\end{definition}

In classical coding theory, for a code $\C \subseteq \mathbb{F}_q^n$, we define its \emph{dual code} as
\begin{align}
	\C^{\perp} \coloneqq  \{y \in \mathbb{F}_q^n: \langle y,x \rangle = 0,\forall x \in \C \}.
\end{align}
Suppose $\C$ has a parity check matrix $h$, and $\C^{\perp}$ has a parity check matrix $h^{\perp}$, then we have the following relationship
\begin{align}
	\C = \ker h = \im(h^{\perp})^T, \quad
	\C^{\perp} = \ker h^{\perp} = \im h^T.
\end{align}
We can also define the dual sheaf of a sheaf generated by classical codes.

\begin{definition}[Dual sheaf]\label{def:sheaf_dual_sheaf}
    For a sheaf $\mathcal{F}$ generated by $\{\C_{\tau'}\}_{\tau' \in X(t-1)}$, we define the \emph{dual sheaf} $\mathcal{F}^{\perp}$ to be the sheaf generated by $\{\C_{\tau'}^{\perp}\}_{\tau' \in X(t-1)}$.
\end{definition}

Given any $i$-cell $\sigma$, by definition, $\mathcal{F}^{\perp}_{\sigma} = \{c \in \mathbb{F}_q^{U_{\sigma}(t)} : \forall \tau'\in U_{\sigma}(t-1),\  c|_{U{\tau'}(t)} \in \C_{\tau'}\}$ is a linear subspace of 
\begin{align}
	\mathbb{F}_q^{U_{\sigma}(t)} = \prod_{\tau \in X_{\geq \sigma}(t)} \mathcal{F}_\tau = \mathcal{F}_{t,\sigma}.
\end{align}
For any $c \in \mathcal{F}_{\sigma}^{\perp}$, the restriction $\mathcal{F}^{\perp}_{\sigma,\tau}$ maps $c$ to one of its components in $\mathcal{F}_\tau = \mathbb{F}_q$. Then $\prod_{\tau \in X_{\geq \sigma }(t)} \mathcal{F}^{\perp}_{\sigma,\tau} c$ is simply $c$ itself. 

Let us define a map $h''$:
\begin{align}
	C^i(X,\mathcal{F}^{\perp}) \xrightarrow{h''} C^i(X,\mathcal{F}_t)
\end{align}
by mapping $c \in \mathcal{F}_{\sigma}^{\perp}$ to $\prod_{\tau \in X_{\geq \sigma }(t)} \mathcal{F}^{\perp}_{\sigma,\tau} c \in \mathcal{F}_{t,\sigma}$. Then the following important properties follow.

\begin{proposition}\label{prop:h_exact}
    Let $\mathcal{F}$ and $\mathcal{F}^\perp$ be generated by classical codes and their dual, respectively. We have an exact sequence for each $0<i\leq t-1$
    \begin{equation}\label{eq:h_exact}
      \begin{tikzcd}
          0 \arrow[r] & C^i(X,\mathcal{F}^\perp) \arrow[r,"h''"] & C^i(X,\mathcal{F}_t) \arrow[r,"d''"] & C^i(X,\mathcal{F}_{t-1}),
      \end{tikzcd}  
    \end{equation}
    For $i = t$, there is a trivial isomorphism $C^t(X,\mathcal{F}^{\perp}) \cong \mathbb{F}_q^{X(t)}\cong C^t(X,\mathcal{F}_t)$.
\end{proposition}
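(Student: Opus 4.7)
The plan is to verify three properties for each $0 < i \le t-1$: (a) $h''$ is injective, (b) $d'' \circ h'' = 0$, and (c) $\ker d'' \subseteq \im h''$. For $i = t$ the claim is trivial, since $\mathcal{F}^\perp_\sigma = \mathbb{F}_q = \mathcal{F}_{t,\sigma}$ for every top cell $\sigma$ and $h''$ reduces to the identity on $C^t$. The key observation underlying (b) and (c) is that at each $(t-1)$-cell $\pi$ the local boundary map $\partial_L$ at the top degree implements exactly the orthogonality check against $\mathcal{C}_\pi$ applied to the restriction of the chain to $U_\pi(t)$.

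For (a), Definition~\ref{def:sheaf_dual_sheaf} exhibits $\mathcal{F}^\perp_\sigma$ as a linear subspace of $\mathbb{F}_q^{U_\sigma(t)} = \mathcal{F}_{t,\sigma}$, and $h''$ is simply the product over $\sigma \in X(i)$ of these stalkwise inclusions $c(\sigma) \mapsto \prod_{\tau \in U_\sigma(t)} \mathcal{F}^\perp_{\sigma,\tau}(c(\sigma))$, so injectivity is immediate.

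For (b) and (c), I would first unwind $\partial_L : C^t(X_{\ge \sigma},\mathcal{F}) \to C^{t-1}(X_{\ge \sigma},\mathcal{F})$ explicitly. The coboundary $\delta_L: \prod_{\pi \in U_\sigma(t-1)} \mathcal{C}_\pi \to \mathbb{F}_q^{U_\sigma(t)}$ sends $\prod_\pi \alpha_\pi$ to the vector whose $\pi'$-component is $\sum_{\pi \lessdot \pi'} (\alpha_\pi)(\pi')$; equivalently, each $\alpha_\pi \in \mathcal{C}_\pi \subseteq \mathbb{F}_q^{U_\pi(t)}$ is extended by zero to $\mathbb{F}_q^{U_\sigma(t)}$ and the results are summed. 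Fixing a basis $\{g_\pi^{(1)},\dots,g_\pi^{(k_\pi)}\}$ for each $\mathcal{C}_\pi$ and taking transposes against the standard pairings, one sees that $(\partial_L x)_\pi$ is the vector in $\mathcal{C}_\pi$ whose coordinates are $\langle g_\pi^{(j)}, x|_{U_\pi(t)} \rangle$ for $j=1,\dots,k_\pi$. Hence $(\partial_L x)_\pi = 0$ if and only if $x|_{U_\pi(t)}$ is orthogonal to every codeword of $\mathcal{C}_\pi$, i.e.\ $x|_{U_\pi(t)} \in \mathcal{C}_\pi^\perp$.

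Both remaining statements follow from this identification combined with the defining condition of $\mathcal{F}^\perp$. For (b), any $c \in \mathcal{F}^\perp_\sigma$ satisfies $c|_{U_\pi(t)} \in \mathcal{C}_\pi^\perp$ for every $\pi \in U_\sigma(t-1)$, so each $(\partial_L(h''c))_\pi$ vanishes and $d''(h''c) = 0$. For (c), if $d''x = 0$ then for every $\sigma$ and every $\pi \in U_\sigma(t-1)$ we have $x(\sigma)|_{U_\pi(t)} \in \mathcal{C}_\pi^\perp$, which is exactly the membership condition for $x(\sigma) \in \mathcal{F}^\perp_\sigma$; defining $c(\sigma) \coloneqq x(\sigma)$ then yields $c \in C^i(X,\mathcal{F}^\perp)$ with $h''c = x$. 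The main obstacle is keeping careful track of the transpose structure so that $(\partial_L x)_\pi$ is genuinely recognized as the pairing with $\mathcal{C}_\pi$; once this is done the proposition is a direct consequence of the definition of the dual code.
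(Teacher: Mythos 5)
Your proof is correct and takes essentially the same approach as the paper's: you reduce the claim to a stalkwise computation, then identify the top-degree local boundary map $\partial_L$ at each $(t-1)$-cell as the orthogonality check against $\mathcal{C}_\pi$, so that $\ker d'' = \mathcal{F}^\perp$ follows directly from the definition of the dual code. The paper phrases this step a bit more abstractly via the adjoint identity $\langle y,\iota_{\tau'}^T v\rangle=\langle\iota_{\tau'}y,v\rangle$ rather than fixing a basis of $\mathcal{C}_\pi$ as you do, but the two formulations are equivalent.
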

\begin{proof}
	By definition, for each $\mathcal{F}_{\sigma}^{\perp}$, the map $h'': \mathcal{F}^{\perp}_{\sigma} \hookrightarrow \mathcal{F}_{t,\sigma}$ is an inclusion.  
    Since \eqref{eq:h_exact} is the direct sum of the following sequences
    \begin{equation}
    \begin{tikzcd}
    	0 \arrow[r] & \mathcal{F}_{\sigma}^{\perp} \arrow[r,"h''"] & \mathcal{F}_{t,\sigma} \arrow[r,"\partial_L"] & \mathcal{F}_{t-1,\sigma}
    \end{tikzcd}  	
    \end{equation}
    As a result, $h'': C^i(X,\mathcal{F}^{\perp}) \hookrightarrow C^i(X,\mathcal{F}_t)$ is injective and we only need to focus on proving the above sequence is exact. Recall that by definition
    \begin{align}
    	\ker \partial_L=\{c\in \mathbb{F}_q^{U_{\sigma}(t)}:\forall\ \tau'\in U_{\sigma}(t-1),\ \sum_{\tau\gtrdot\tau'} \mathcal{F}_{\tau',\tau}^T(c(\tau)) = 0\}.
    \end{align}
    On the RHS, we sum over $\tau$ and $\sum_{\tau\gtrdot\tau'} \mathcal{F}_{\tau',\tau}^T(c(\tau)) = 0$ is equivalent to say that $\iota_{\tau'}^T (c|_{U_{\tau'}(t)}) = 0$. Furthermore, for any $y \in \mathcal{F}_{\tau'}$,
    \begin{align}
    	\langle y,\iota_{\tau'}^T(c|_{U_{\tau'}(t)}) \rangle
    	= \langle \iota_{\tau'}y,(c|_{U_{\tau'}(t)}) \rangle = 0,
    \end{align}
    which means that $c|_{U_{\tau'}(t)} \in \C_{\tau'}^{\perp}$. Therefore,
    \begin{align}
    	\ker \partial_L = \{c\in \mathbb{F}_q^{U_\sigma(t)}:\forall\ \tau'\in U_{\sigma}(t-1),\ c|_{U_{\tau'}(t)}\in \C_{\tau'}^{\perp}\} = \mathcal{F}_{\sigma}^{\perp}
    \end{align}
    and proof is done.
\end{proof}

\begin{proposition}\label{prop:h_naturality}
    The $h''$ is a cochain map, i.e., it commutes with the coboundary operators and we have the following commutative diagram:
    \begin{equation}
    \begin{tikzcd}
    	0 \arrow[r]  & C^i(X,\mathcal{F}^{\perp}) \arrow[r,"h'' "] \arrow[d,"\delta^{\perp}"] & C^i(X,\mathcal{F}_t) \arrow[r,"d'' "]\arrow[d,"d' "]& C^i(X,\mathcal{F}_{t-1})  \arrow[d,"d' "]  \\
    	0 \arrow[r]  & C^{i+1}(X,\mathcal{F}^{\perp}) \arrow[r,"h'' "] & C^{i+1}(X,\mathcal{F}_t) \arrow[r,"d'' "]& C^{i+1}(X,\mathcal{F}_{t-1}). 
    \end{tikzcd}
    \end{equation}
\end{proposition}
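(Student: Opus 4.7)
The plan is to verify the commutativity of the two squares in the diagram separately. The right square is the assertion $d' \circ d'' = d'' \circ d'$ as maps from $C^i(X,\mathcal{F}_t)$ to $C^{i+1}(X,\mathcal{F}_{t-1})$, which is precisely the commutativity of the two coboundary operators on the double complex $K^{p,q}$ already established earlier in this subsection. So the only substantive step is the left square, namely the identity $d' \circ h'' = h'' \circ \delta^{\perp}$.

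To verify the left square, I will evaluate both compositions on an arbitrary $c \in C^i(X,\mathcal{F}^{\perp})$ at an arbitrary pair $(\sigma',\tau)$ with $\sigma' \in X(i+1)$ and $\tau \in U_{\sigma'}(t)$. By the definition of $h''$ in Section~\ref{sec:sheaf_code}, the component $h''(c)(\sigma)$ equals $\prod_{\tau' \in U_\sigma(t)} \mathcal{F}^{\perp}_{\sigma,\tau'}(c(\sigma))$, so that its $\tau$-entry is $\mathcal{F}^{\perp}_{\sigma,\tau}(c(\sigma))$ for each $\sigma \in X_{\leq \tau}(i)$. Applying $d'$ after $h''$ and invoking Eq.~\eqref{eq:d'_2} yields
\begin{equation}
(d' h'' c)(\sigma',\tau) \;=\; \sum_{\sigma \lessdot \sigma'} h''(c)(\sigma,\tau) \;=\; \sum_{\sigma \lessdot \sigma'} \mathcal{F}^{\perp}_{\sigma,\tau}\bigl(c(\sigma)\bigr).
\end{equation}
On the other path, applying $\delta^{\perp}$ first gives $(\delta^{\perp} c)(\sigma') = \sum_{\sigma \lessdot \sigma'} \mathcal{F}^{\perp}_{\sigma,\sigma'}(c(\sigma))$, and then taking the $\tau$-component of $h''$ at $\sigma'$ produces
\begin{equation}
(h'' \delta^{\perp} c)(\sigma',\tau) \;=\; \mathcal{F}^{\perp}_{\sigma',\tau}\Bigl( \sum_{\sigma \lessdot \sigma'} \mathcal{F}^{\perp}_{\sigma,\sigma'}\bigl(c(\sigma)\bigr) \Bigr) \;=\; \sum_{\sigma \lessdot \sigma'} \mathcal{F}^{\perp}_{\sigma',\tau} \circ \mathcal{F}^{\perp}_{\sigma,\sigma'}\bigl(c(\sigma)\bigr).
\end{equation}
The presheaf composition rule $\mathcal{F}^{\perp}_{\sigma',\tau} \circ \mathcal{F}^{\perp}_{\sigma,\sigma'} = \mathcal{F}^{\perp}_{\sigma,\tau}$, valid whenever $\sigma \lessdot \sigma' < \tau$, then yields the termwise equality of the two sums.

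There is no genuine obstacle in this argument: once the defining formulas for $h''$, $d'$, and $\delta^{\perp}$ are spelled out, the verification is a bookkeeping exercise in the functoriality of $\mathcal{F}^{\perp}$. The only points requiring care are to recognize that the right square is simply a restatement of the commutativity of $d'$ and $d''$ proved on $K^{\bullet,\bullet}$, and to keep track of the identification of $\mathcal{F}^{\perp}_{\sigma,\tau}(c(\sigma))$ with the $\tau$-entry of the $t$-cell product that defines $h''$. Combined with Proposition~\ref{prop:h_exact}, the resulting commutative diagram will then allow one to treat $(h'',d'')$ as a cochain map embedding $C^\bullet(X,\mathcal{F}^\perp)$ into the horizontal edge of the double complex, which is exactly what is needed for the Poincaré duality arguments that follow.
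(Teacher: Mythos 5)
Your proof is correct and follows essentially the same route as the paper: both unwind the defining formulas for $h''$, $d'$, and $\delta^{\perp}$ and conclude by the presheaf composition law $\mathcal{F}^{\perp}_{\sigma',\tau}\circ\mathcal{F}^{\perp}_{\sigma,\sigma'}=\mathcal{F}^{\perp}_{\sigma,\tau}$. The only cosmetic differences are that you evaluate a general cochain at a pair $(\sigma',\tau)$ whereas the paper restricts to a single stalk summand $\mathcal{F}^{\perp}_{\sigma}$, and that you explicitly flag the right square as the already-established commutativity of $d'$ with $d''$, a point the paper leaves implicit.
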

\begin{proof}
We only need to check for each $i$-cell $\sigma$, the following diagram is commutative:
\begin{equation}
    \begin{tikzcd}
        \mathcal{F}^{\perp}_{\sigma} \arrow[r,"h'' "] \arrow[d,"\delta^{\perp}"] & C^t(X_{\geq\sigma},\mathcal{F})\arrow[d,"d'"] = \mathcal{F}_{t,\sigma} \\ \prod_{\sigma'\gtrdot\sigma} \mathcal{F}^{\perp}_{\sigma'}\arrow[r,"h'' "] & \prod_{\sigma'\gtrdot\sigma} C^t(X_{\geq\sigma'},\mathcal{F})
    \end{tikzcd}
\end{equation}
Let $c\in \mathcal{F}^{\perp}_{\sigma}$, then 
\begin{align}
	h''c = \prod_{\tau \in X_{\geq \sigma }(t)} \mathcal{F}^{\perp}_{\sigma,\tau} c, \quad
	d' h''c = \prod_{\sigma'\gtrdot\sigma} \prod_{\tau \in X_{\geq \sigma }(t)} \mathcal{F}^{\perp}_{\sigma,\tau} c.
\end{align}
On the other hand, 
\begin{align}
	\delta^{\perp} c = \prod_{\sigma'\gtrdot \sigma} \mathcal{F}_{\sigma',\sigma}^\perp c, \quad 
	h'' \delta^{\perp}c = \prod_{\tau \in X_{\geq \sigma'}(t)}  \prod_{\sigma'\gtrdot\sigma} \mathcal{F}_{\sigma',\tau}^\perp \mathcal{F}_{\sigma,\sigma'}^\perp c  
	= d' h'' c.
\end{align}
Actually, one can also check that $h$ is a sheaf morphism between $\mathcal{F}^{\perp}$ and $\mathcal{F}_t$ by direct calculation.
\end{proof}

Analogously to Proposition \ref{prop:h_exact} and \ref{prop:h_naturality}, we also have the following results for the chain map.

\begin{proposition}\label{prop:eta_exact}
    Suppose $\mathcal{F}$ is a sheaf generated by classical codes, then we have the following exact sequence for each $i \in \mathbb{Z}$:
    \begin{equation}\label{eq:eta_exact}
    \begin{tikzcd}
    	0 \arrow[r] & C_i(X,\mathcal{F}) \arrow[r,"\eta'"] & C^0(X,\mathcal{F}_i) \arrow[r,"d'"] & C^1(X,\mathcal{F}_{i}),
    \end{tikzcd} 	
    \end{equation}
    where $\eta'$ is defined by $x\mapsto\prod_{\sigma\in X(0)}x|_{X_{\geq\sigma}(i)}$. It is a chain map and the following diagram commutes:
    \begin{equation}\label{eq:eta_natural}
    	\begin{tikzcd}
    		0 \arrow[r]  & C_i(X,\mathcal{F}) \arrow[r,"\eta'"] \arrow[d,"\partial"] & C^0(X,\mathcal{F}_i) \arrow[r,"d'"]\arrow[d,"d''"]& C^1(X,\mathcal{F}_i)  \arrow[d,"d''"]  \\
    		0 \arrow[r]  & C_{i-1}(X,\mathcal{F}) \arrow[r,"\eta'"] & C^0(X,\mathcal{F}_{i-1}) \arrow[r,"d'"] & C^1(X,\mathcal{F}_{i-1}).  
    	\end{tikzcd}
    \end{equation}
\end{proposition}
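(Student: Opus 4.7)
My plan is to mirror the proofs of Propositions \ref{prop:h_exact} and \ref{prop:h_naturality}, but working at degrees $0,1$ in the horizontal direction of the double complex $K^{\bullet,\bullet}$ rather than at the top levels in the vertical direction. First, $\eta'$ is well-defined: for $x \in C_i(X,\F) = \prod_{\tau \in X(i)} \F_\tau$, the restriction $x|_{X_{\geq v}(i)}$ sits in $\prod_{\tau \in X_{\geq v}(i)} \F_\tau = \F_{i,v}$. Injectivity is immediate, because every $i$-cell $\tau$ lies in $X_{\geq v}$ for any vertex $v \leq \tau$, so $\eta'(x) = 0$ forces $x(\tau) = 0$ for all $\tau$. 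The vanishing $d' \circ \eta' = 0$ follows by unwinding Eq.~\eqref{eq:d'_2}: on an edge $e$ with endpoints $v_0, v_1$ and a cell $\pi \in X_{\geq e}(i)$,
\begin{equation}
(d' \eta'(x))(e,\pi) = \eta'(x)(v_0,\pi) + \eta'(x)(v_1,\pi) = x(\pi) + x(\pi) = 0
\end{equation}
in characteristic $2$.

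The heart of the argument is exactness at $C^0(X,\F_i)$. Suppose $y = (y_v)_{v \in X(0)}$ satisfies $d' y = 0$. Unwrapping the definition, this says $y_{v_0}(\pi) = y_{v_1}(\pi)$ for every edge $e$ with endpoints $v_0,v_1$ and every $i$-cell $\pi$ containing $e$. For any $i$-cell $\tau$, the $1$-skeleton of $X_{\leq \tau}$ is connected (the closure of a cell is a disk, hence path-connected), so for any two vertices $v, v' \leq \tau$ a chain of equalities along a path of edges in $X_{\leq \tau}$ yields $y_v(\tau) = y_{v'}(\tau)$. Hence the formula $x(\tau) \coloneqq y_v(\tau)$ is independent of the vertex $v \leq \tau$, defines an element $x \in C_i(X,\F)$, and manifestly satisfies $\eta'(x) = y$. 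This proves exactness. Alternatively, the same conclusion follows from the row decomposition $C^p(X,\F_i) = \prod_{u \in X(i)} C^p(X_{\leq u},\F_u)$ used in the proof of Proposition \ref{prop:d'_exact}, which identifies $\ker(d')$ at level $0$ with $\prod_u H^0(X_{\leq u}, \F_u) \cong \prod_u \F_u = C_i(X,\F)$ via constant functions.

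For the naturality diagram, the right square commutes by the already-established commutativity of $d'$ and $d''$ on $K^{\bullet,\bullet}$. For the left square, a direct computation using Eqs.~\eqref{eq:d''} and \eqref{eq: sheaved boundary map} gives, for any $v \in X(0)$ and $\pi \in X_{\geq v}(i-1)$,
\begin{equation}
(d'' \eta'(x))(v,\pi) = (\partial_L \eta'(x)(v))(\pi) = \sum_{\sigma \gtrdot \pi} \mathcal{F}^T_{\pi,\sigma}(x(\sigma)) = (\partial x)(\pi) = \eta'(\partial x)(v,\pi),
\end{equation}
where we have used that $\pi \geq v$ forces every $\sigma \gtrdot \pi$ to lie in $X_{\geq v}$, so the local summation range coincides with the global one. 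The single subtle ingredient, which I expect to be the main obstacle worth flagging, is the connectedness of the $1$-skeleton of $X_{\leq \tau}$ invoked in the exactness step; this is automatic for the simplicial, cubical, and locally acyclic complexes of interest here, but should be noted as an implicit assumption on the underlying cell complex.
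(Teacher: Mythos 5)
Your proof is correct and your primary argument (Route A) is essentially the paper's approach. The paper's proof establishes exactness at $C^0(X,\mathcal F_i)$ by noting $X_{\geq\rho}(i)=X_{\geq\sigma}(i)\cap X_{\geq\sigma'}(i)$ for each edge $\rho$ with endpoints $\sigma,\sigma'$, and then asserting that the resulting agreement on intersections lets one glue the local sections into $x\in C_i(X,\mathcal F)$; your version makes visible the one step the paper treats silently, namely that well-definedness of $x(\tau)$ requires propagating the edgewise agreement to arbitrary pairs of vertices under $\tau$, which is where connectedness of the $1$-skeleton of $X_{\leq\tau}$ enters. This is a legitimate and worthwhile flag: the paper's argument as written only directly yields $y_v(\tau)=y_{v'}(\tau)$ for adjacent vertices, and one does need a chain of edges (a property automatic for connected CW complexes, and in particular for the simplicial and cubical complexes used in the paper's constructions, but not spelled out). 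Your Route B — identifying $\ker(d')$ at horizontal degree $0$ with $\prod_{u\in X(i)}H^0(X_{\leq u},\mathcal F_u)\cong\prod_u\mathcal F_u$ via the row decomposition from Proposition~\ref{prop:d'_exact} — is a cleaner alternative that makes the connectedness hypothesis explicit through the $H^0$ computation and fits naturally with the way the paper already handles exactness of $d'$ at $p>0$. The remaining points (injectivity, $d'\circ\eta'=0$, and the left square of the naturality diagram, where you correctly note that $\pi\geq v$ forces $\sigma\gtrdot\pi$ to satisfy $\sigma\geq v$, so the local and global boundary sums coincide) match the paper's computations.
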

\begin{proof}
	In our case, each 1-cell $\rho$ contains exactly two 0-cell $\sigma,\sigma'\lessdot \rho$ and hence $X_{\geq\rho}(i) = X_{\geq\sigma}(i)\cap X_{\geq\sigma'}(i)$, which further indicates \eqref{eq:eta_exact} above is exact:
    \begin{equation}
    \begin{tikzcd}
    	0 \arrow[r] & C_i(X,\mathcal{F}) \arrow[r,"\eta'"] & C^0(X,\mathcal{F}_i) \arrow[r,"d'"] & C^1(X,\mathcal{F}_{i}) \\ [-4ex]
    	& x\arrow[r, mapsto] & \displaystyle\prod_{\sigma\in X(0)}x|_{X_{\geq\sigma}(i)} \\ [-4ex]
    	& & \displaystyle \prod_{\sigma \in X(0)}x_{\sigma} \arrow[r,mapsto] &\displaystyle \prod_{\rho \in X(0), \sigma,\sigma' < \rho} (x_{\sigma}|_{X_{\geq \rho}(i)} - x_{\sigma'}|_{X_{\geq \rho}(i)} ),
    \end{tikzcd}  	
    \end{equation}
    where $x_\sigma$ in the last line is from $\mathcal{F}_{i,\sigma}$.  Any $i$-cell contains both $\sigma$ and $\sigma'$ must also contain $\rho > \sigma,\sigma'$. Whenever $\prod_{\sigma\in X(0)}x_{\sigma}$ is mapped to zero, agreements of $x_{\sigma}$ and $x_{\sigma'}$ in their intersection at each 1-cell $\rho$ ensures that we can always glue them together to some $x \in C_i(X,\mathcal{F})$.
     
    To prove that $\eta'$ is a chain map, Note that for $x\in C_i(X,\mathcal{F})$, $\eta' (x)=\prod_{\sigma\in X(0)}x|_{X_{\geq\sigma}(i)}$, $d''(\eta' (x))=\prod_{\sigma\in X(0)}\partial_L(x|_{X_{\geq\sigma}(i)})$. On the other hand, $\eta'(\partial(x))=\prod_{\sigma\in X(0)}(\partial x)|_{X_{\geq\sigma}(i-1)}$. For each $\tau\in X_{\geq\sigma}(i-1)$
    \begin{align}
    	(\partial x)|_{X_{\geq \sigma}(i-1)}(\tau) 
    	= \sum_{\tau'\gtrdot\tau} \mathcal{F}_{\tau,\tau'}^T(x(\tau')),
    \end{align}
    while 
    \begin{align}
    	(\partial_L(x|_{X_{\geq\sigma}(i)}))(\tau) 
    	= \sum_{\tau'\gtrdot\tau} \mathcal{F}_{\tau,\tau'}^T(x|_{X_{\geq\sigma}(i)}(\tau'))
    	= \sum_{\tau'\gtrdot\tau} \mathcal{F}_{\tau,\tau'}^T(x(\tau')).
    \end{align}
    which verifies $\eta' \partial = d''\eta'$, and $\eta'$ is a chain map.
\end{proof}

As a summary of Proposition \ref{prop:d'_exact}, \ref{prop:h_exact},  \ref{prop:h_naturality} and \ref{prop:eta_exact}, we conclude that

\begin{proposition}\label{prop:d'_d''_exact}
Let $\mathcal{F}$ and $\mathcal{F}^\perp$ locally acyclic sheaves generated by classical codes $\{C_{\sigma}\}_{\sigma\in X(t-1)}$ and their dual, then we have the following exact sequences:
\begin{equation}\label{eq:eta_d'}
	\begin{tikzcd}[column sep=0.6cm, row sep=0.7cm]
		0 \arrow[r] & C_i(X,\mathcal{F}) \arrow[r,"\eta'"] & C^0(X,\mathcal{F}_i) \arrow[r,"d' "] & C^1(X,\mathcal{F}_{i}) \arrow[r,"d' "] & \cdots \arrow[r,"d' "] & C^t(X, \mathcal{F}_i) \arrow[r] & 0
	\end{tikzcd}.
\end{equation}
and
\begin{equation}\label{eq:h''_d''}
	\begin{tikzcd}[column sep=0.6cm, row sep=0.7cm]
		0 \arrow[r] & C^p(X, \mathcal{F}^\perp) \arrow[r,"h'' "]
		& C^p(X, \mathcal{F}_t) \arrow[r,"d'' "] & C^p(X, \mathcal{F}_{t-1}) \arrow[r,"d'' "] & \cdots \arrow[r,"d'' "] & C^p(X, \mathcal{F}_0) \arrow[r] & 0
	\end{tikzcd}.
\end{equation}
\end{proposition}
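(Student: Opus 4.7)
Since this proposition is explicitly presented as a summary of Propositions \ref{prop:d'_exact}, \ref{prop:h_exact}, \ref{prop:h_naturality}, and \ref{prop:eta_exact}, my plan is to check that the pieces glue together to give the two long exact sequences, filling in the endpoints that are not directly covered. Throughout I would use the (implicit) assumption that $X$ is a locally acyclic cell complex in the sense of Definition \ref{def:acyclic_complex}, which is needed whenever Proposition \ref{prop:d'_exact} is invoked.

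For the first sequence \eqref{eq:eta_d'}, I would take injectivity of $\eta'$ and exactness at $C^0(X,\mathcal{F}_i)$ directly from Proposition \ref{prop:eta_exact}, and exactness at each interior position $C^p(X,\mathcal{F}_i)$ with $p \geq 1$ from Proposition \ref{prop:d'_exact} applied to the row $q = t-i$ of the double complex. Since $U_\sigma(i) = \emptyset$ for any cell $\sigma$ of dimension $p > i$, the groups $C^p(X,\mathcal{F}_i)$ vanish above degree $i$, so the sequence terminates naturally and surjectivity of $d'$ onto $C^i(X,\mathcal{F}_i)$ is just exactness of Proposition \ref{prop:d'_exact} at $p = i$ with zero target.

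For the second sequence \eqref{eq:h''_d''} the argument splits stalk by stalk. Since both $d''$ and $h''$ act factorwise across $\prod_{\sigma\in X(p)}$ (by formula \eqref{eq:d''} and the construction of $h''$), a product of exact sequences remains exact, so it suffices to show that for every $\sigma \in X(p)$ of dimension $i$, the local sequence
\begin{equation}
0 \to \mathcal{F}^\perp_\sigma \xrightarrow{h''} \mathcal{F}_{t,\sigma} \xrightarrow{\partial_L} \mathcal{F}_{t-1,\sigma} \xrightarrow{\partial_L} \cdots \xrightarrow{\partial_L} \mathcal{F}_{i,\sigma} \to 0
\end{equation}
is exact, where $\mathcal{F}_{i,\sigma}= \mathcal{F}_\sigma$. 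Local acyclicity of $\mathcal{F}$ (Definition \ref{def:acyclic_sheaf}) gives exactness of the cochain direction $0 \to \mathcal{F}_\sigma \xrightarrow{\delta_L} \mathcal{F}_{i+1,\sigma} \to \cdots \to \mathcal{F}_{t,\sigma}$. Dualizing over the field $\mathbb{F}_q$ and using $\partial_L = \delta_L^T$ then yields vanishing homology of the chain complex at every interior position, together with surjectivity of $\partial_L: \mathcal{F}_{i+1,\sigma} \to \mathcal{F}_\sigma$ at the bottom (dual to injectivity of $\delta_L$).

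The one place this dualization does not close by itself is the top slot $\mathcal{F}_{t,\sigma}$, where the original cochain complex may have nonzero top cohomology. This is exactly the obstacle absorbed by Proposition \ref{prop:h_exact}: it identifies $\ker \partial_L \subseteq \mathcal{F}_{t,\sigma}$ with $\mathcal{F}^\perp_\sigma$ via the injection $h''$, so prepending $h''$ as the augmentation closes the sequence at the top. Once this is in place, taking the product over all $\sigma \in X(p)$ recovers \eqref{eq:h''_d''}, and no further calculation is required beyond carefully tracking the dualization and the role of $h''$.
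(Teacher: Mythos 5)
Your proposal is correct and follows the same route the paper intends, but it is more careful than the paper's one-line ``As a summary of Propositions~\ref{prop:d'_exact}, \ref{prop:h_exact}, \ref{prop:h_naturality} and \ref{prop:eta_exact}.'' In particular, you correctly observe that those four propositions alone do not cover exactness of the $d''$-direction at the interior positions $C^p(X,\mathcal{F}_{t-1}),\dots,C^p(X,\mathcal{F}_{p+1})$: one has to dualize the defining exact cochain sequence of a locally acyclic sheaf (Definition~\ref{def:acyclic_sheaf}) over the field to get vanishing of the local homology, and then let Proposition~\ref{prop:h_exact} close the top. The paper does supply essentially this step, but only in the subsequent proof of Proposition~\ref{prop:flabby_resolution} (``the rest is guaranteed by the local acyclicity and universal coefficient theorem''), which is just the UCT phrasing of your ``transpose over $\mathbb{F}_q$.'' Your termination argument for the $d'$-sequence via $C^p(X,\mathcal{F}_i)=0$ for $p>i$, and your explicit flagging of the implicit hypothesis that $X$ is a locally acyclic cell complex (needed for Proposition~\ref{prop:d'_exact}), are both correct and tighten a hypothesis that the Proposition's statement leaves silent. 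One cosmetic slip: ``for every $\sigma\in X(p)$ of dimension $i$'' should read ``of dimension $p$''; the subsequent identification $\mathcal{F}_{p,\sigma}=\mathcal{F}_\sigma$ makes clear you meant this.
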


There is a nice and important observation in the following. It is essential to prove our main result (Theorem \ref{thm:Poincaré_duality}), where the exactness is crucial for the ``local to global" method.

\begin{proposition}\label{prop:flabby_resolution}
    There is an exact sequence of sheaves
\begin{equation}
	\begin{tikzcd}
		0 \arrow[r] & \mathcal{F}^{\perp}\arrow[r,"h'' "]
		& \mathcal{F}_t \arrow[r,"\partial_L"] & \mathcal{F}_{t-1}\arrow[r,"\partial_L"] & \cdots \arrow[r,"\partial_L"] & \mathcal{F}_1 \arrow[r,"\partial_L"] & \mathcal{F}_0\arrow[r] & 0
	\end{tikzcd}.
\end{equation}
\end{proposition}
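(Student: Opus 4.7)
The strategy is to verify exactness stalk-wise: a sequence of sheaves on a poset equipped with the Alexandrov topology is exact if and only if the induced sequence on stalks is exact at every cell. Fix a cell $\sigma$ of dimension $i$, and recall that $\mathcal{F}_{k,\sigma} = \prod_{\tau \in X_{\geq \sigma}(k)} \mathcal{F}_\tau$ vanishes whenever $k < i$, since $X_{\geq \sigma}$ contains no cells of dimension below $i$. The stalk sequence therefore collapses to
\begin{equation}
0 \longrightarrow \mathcal{F}^\perp_\sigma \xrightarrow{h''} \mathcal{F}_{t,\sigma} \xrightarrow{\partial_L} \mathcal{F}_{t-1,\sigma} \xrightarrow{\partial_L} \cdots \xrightarrow{\partial_L} \mathcal{F}_{i,\sigma} \longrightarrow 0,
\end{equation}
with trivial exactness at all vanishing positions.

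The next step handles the two leftmost positions. Exactness at $\mathcal{F}^\perp_\sigma$ (injectivity of $h''$) and at $\mathcal{F}_{t,\sigma}$ (the equation $\ker \partial_L = \im h''$) follows directly from Proposition~\ref{prop:h_exact}, whose proof already establishes these facts stalk-wise by decomposing the global sequence as a direct product over cells. For exactness in the middle and for surjectivity at the rightmost nonzero term $\mathcal{F}_{i,\sigma}$, I would invoke the local acyclicity of $\mathcal{F}$, which by Definition~\ref{def:acyclic_sheaf} provides the exact local cochain sequence
\begin{equation}
0 \longrightarrow \mathcal{F}_\sigma \xrightarrow{\delta_L} \mathcal{F}_{i+1,\sigma} \xrightarrow{\delta_L} \cdots \xrightarrow{\delta_L} \mathcal{F}_{t,\sigma}.
\end{equation}
Since $\partial_L$ is defined using the transposes $\mathcal{F}_{\pi,\pi'}^T$ in exactly the dual manner to $\delta_L$'s use of $\mathcal{F}_{\pi,\pi'}$, a direct comparison of formulas gives $\partial_L = \delta_L^T$, and since transposition is an exact functor on finite-dimensional $\mathbb{F}_q$-vector spaces, the dual sequence
\begin{equation}
\mathcal{F}_{t,\sigma} \xrightarrow{\partial_L} \mathcal{F}_{t-1,\sigma} \xrightarrow{\partial_L} \cdots \xrightarrow{\partial_L} \mathcal{F}_{i+1,\sigma} \xrightarrow{\partial_L} \mathcal{F}_\sigma \longrightarrow 0
\end{equation}
is exact everywhere except possibly at the leftmost term, which Proposition~\ref{prop:h_exact} already handles. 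Identifying $\mathcal{F}_{i,\sigma} = \mathcal{F}_\sigma$ (as $\sigma$ is the unique $i$-cell of $X_{\geq \sigma}$), this supplies the missing interior exactness and the surjectivity at the rightmost nonzero term.

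The argument is essentially an assembly of Proposition~\ref{prop:h_exact} with the dualization of local acyclicity, wrapped in the reduction of sheaf exactness to stalk exactness. The only conceptually nontrivial move is the duality swap: local acyclicity is phrased cohomologically, but the desired resolution runs in the chain direction, so the transpose identification $\partial_L = \delta_L^T$ combined with the exactness of the dual functor form the bridge. The extreme cases $\sigma \in X(t)$ and $\sigma \in X(t-1)$ reduce to one-step verifications ($h''$ is an isomorphism on top cells; the injectivity of $\delta_L$ dualizes to surjectivity of $\partial_L$ on $(t-1)$-cells). I do not anticipate a serious obstacle, as the result is close to a direct corollary of the already-proved components of this subsection.
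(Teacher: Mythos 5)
Your proposal is correct and follows the same strategy as the paper: reduce exactness to stalks, obtain the two leftmost exactness positions from Proposition~\ref{prop:h_exact}, and derive the remaining positions from local acyclicity by dualization. The paper's invocation of the "universal coefficient theorem" in its terse final sentence is precisely your observation that $\partial_L = \delta_L^T$ together with the exactness of the dualization functor $\mathrm{Hom}(-,\mathbb{F})$ on finite-dimensional vector spaces; you have simply made that step explicit.
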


\begin{proof}
	Since $h''$ and $d''$ are sheaf morphisms, it suffices to check the exactness on stalks, i.e., for each $k$-cell $\rho \in X$, the following sequence is exact
    \begin{equation}
    \begin{tikzcd}
    	0 \arrow[r] & \mathcal{F}^{\perp}_{\rho} \arrow[r,"h'' "] 
    	& \mathcal{F}_{t,\rho} \arrow[r,"\partial_L"] & \mathcal{F}_{t-1,\rho} \arrow[r,"\partial_L"] & \cdots \arrow[r,"\partial_L"] & \mathcal{F}_{k+1,\rho} \arrow[r,"\partial_L"] & \mathcal{F}_{k,\rho} \arrow[r] & 0
    \end{tikzcd}.
    \end{equation}
    The exactness at $\mathcal{F}_{\sigma}^{\perp}$ and $\mathcal{F}_t$ is guaranteed by Proposition \ref{prop:h_exact}. While the rest is guaranteed by the local acyclicity and universal coefficient theorem.
\end{proof}


\subsection{Proof of the duality}\label{sec:duality-proof}

\begin{theorem}[Poincaré duality on sheaf codes]\label{thm:Poincaré_duality}
	Suppose $X$ is a $t$-dimensional sparse cell complex with a locally acyclic sheaf $\mathcal{F}$, then we have a duality between $\mathcal{F}$ and $\mathcal{F}^\perp$ in terms of logical qubits, code distances, (co)boundary expansions and decoders. To be precise, for any $0 \leq i \leq t$, there is an isomorphism:
	\begin{align}
		D: H^i(X,\mathcal{F}^\perp)\overset{\cong}{\longrightarrow} H_{t-i}(X,\mathcal{F}).
	\end{align}
	The (co)systolic distance are bounded by each other linearly:
	\begin{align}
		\mu_\partial(i)=\Theta(\mu_{\delta^\perp}(t-i)).
	\end{align}
	There exists a decoder $\mathcal{O}^\perp_{t-i}$ for the cochain $C^{t-i}(X,\mathcal{F}^\perp)$ if and only if there exists a decoder $\mathcal{O}_i$ for the chain $C_i(X,\mathcal{F})$. Their running time differs by a constant, and so is the decoding radius
	\begin{align}
		R(\mathcal{O}_i) = \Theta(R(\mathcal{O}^\perp_{t-i})).
	\end{align}
	For the (co)boundary expansion, we have
	\begin{align}
		\varepsilon_{\partial}(i) = 
		\Theta(\varepsilon_{\delta^{\perp}}(t-i)).
	\end{align}
	Furthermore, when $X$ can be subdivided into a simplicial complex, then the isomorphism $D$ is given by the cap product (see Definition~\ref{def:cap-I}) with an element $[X] \in C_t(X,\mathcal{F}^\perp \!\! \otimes \mathcal{F})$ ($[X]$ is not necessarily a cycle):
	\begin{align}
		D[\alpha] = [\alpha] \frown [X].
	\end{align}
\end{theorem}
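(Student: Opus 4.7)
My approach is to organize all of the data into the double complex $K^{p,q} = C^p(X, \mathcal{F}_{t-q})$ and exploit the two exact sequences of Proposition \ref{prop:d'_d''_exact}. The vertical sequence \eqref{eq:h''_d''} identifies $C^p(X, \mathcal{F}^\perp)$ with $\ker d''$ on the top row, and the horizontal sequence \eqref{eq:eta_d'} identifies $C_{t-q}(X, \mathcal{F})$ with $\ker d'$ on the left column. Because both $X$ and $\mathcal{F}$ are locally acyclic, these sequences are exact elsewhere, so in spectral-sequence language both filtrations of $K^{\bullet,\bullet}$ collapse to a single computable row or column. The isomorphism $D$ will come from an explicit zigzag along a total-degree-$i$ antidiagonal; the quantitative bounds will follow because each zigzag step is local; the cap product formula will come by reinterpreting the same zigzag on a simplicial subdivision.

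\textbf{Zigzag construction.} Given a cocycle $\alpha \in C^i(X, \mathcal{F}^\perp)$, set $\beta_0 = h''(\alpha) \in K^{i,0}$. Proposition \ref{prop:h_naturality} gives $d' \beta_0 = h''(\delta^\perp \alpha) = 0$, and $d''\beta_0 = 0$ by \eqref{eq:h''_d''}. Using the horizontal exactness of Proposition \ref{prop:d'_exact}, I pick $\gamma_1 \in K^{i-1,0}$ with $d'\gamma_1 = \beta_0$, and inductively set $\beta_k = d''\gamma_k \in K^{i-k,k}$ and choose $\gamma_{k+1} \in K^{i-k-1,k}$ with $d'\gamma_{k+1} = \beta_k$; the required identity $d'\beta_k = d''d'\gamma_k = d''\beta_{k-1} = 0$ makes this possible. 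After $i$ steps, $\beta_i \in K^{0,i}$ satisfies $d'\beta_i = 0$, so by \eqref{eq:eta_d'} there is a unique $y \in C_{t-i}(X, \mathcal{F})$ with $\eta'(y) = \beta_i$. The computation $\eta'(\partial y) = d''\eta'(y) = (d'')^2 \gamma_i = 0$ combined with the injectivity of $\eta'$ from Proposition \ref{prop:eta_exact} forces $\partial y = 0$, and I set $D[\alpha] := [y]$. Standard diagram-chase arguments show that altering the $\gamma_k$ or the representative of $[\alpha]$ only changes $y$ by a boundary, and the inverse map $D^{-1}$ is obtained by running the analogous chase from the left column to the top row.

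\textbf{Quantitative bounds.} The technical core is to track the block Hamming weight through this chase. The key observation is that the horizontal exactness of Proposition \ref{prop:d'_exact} is itself local: the splitting $C^p(X, \mathcal{F}_k) = \prod_{u \in X(k)} C^p(X_{\leq u}, \mathcal{F}_u)$ and $d' = \prod_u \delta_{\leq u}$ let us invert $d'$ one stalk at a time. Sparsity of $X$ implies that each local preimage has block weight $O(1)$ and the number of active stalks $u$ is $O(\|\beta_k\|)$, so $\|\gamma_{k+1}\| = O(\|\beta_k\|)$. The vertical differential $d''$ is stalk-wise, hence $\|d''\gamma\| \le \|\gamma\|$, and a double-counting argument using sparsity yields $\|\eta'(y)\| = \Theta(\|y\|)$ and $\|h''(\alpha)\| = \Theta(\|\alpha\|)$. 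Composing these over the $i = O(1)$ zigzag steps gives $\|y\| = \Theta(\|\alpha\|)$, which immediately yields $\mu_\partial(t-i) = \Theta(\mu_{\delta^\perp}(i))$ and the analogous equivalence of (co)boundary expansions. A decoder $\mathcal{O}^\perp_{t-i}$ transports to $\mathcal{O}_i$ by lifting the input through the zigzag, running $\mathcal{O}^\perp_{t-i}$, and pushing the output back; locality preserves both the runtime and the decoding radius up to a constant factor, and the converse direction uses the inverse zigzag.

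\textbf{Cap product identification.} To match $D$ with $\alpha \frown [X]$ I would pass to the simplicial subdivision $\tilde X$ of $X$ together with the pullback sheaf $\tilde{\mathcal{F}}$ from Section \ref{sec:multiplicative}, where $\frown$ has an explicit \v{C}ech-theoretic formula. The candidate fundamental element is $[X] = \sum_{\tau \in X(t)} \tau \in C_t(X, \mathcal{F}^\perp \otimes \mathcal{F})$, with the tensor coefficient encoding the canonical pairing at each top cell. The plan is to show, on representatives, that $\eta'(\alpha \frown [X])$ and $d''\gamma_i = \beta_i$ differ by an element of $\im d'$ in $K^{0,i}$, a local identity driven by the Leibniz rule for $\frown$; once this cochain-level statement is established, passing to homology gives $D[\alpha] = [\alpha] \frown [X]$. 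The likely main obstacle is the orientation and sign bookkeeping required because $X$ is only a cell complex and $[X]$ is not a cycle, which is precisely why the theorem invokes the simplicial-subdivision hypothesis to access the explicit formulas of Section \ref{sec:multiplicative}.
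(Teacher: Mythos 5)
Your proposal follows essentially the same route as the paper for the main body of the theorem: organizing the data into the double complex $K^{p,q}$, exploiting the two exact sequences of Proposition~\ref{prop:d'_d''_exact}, constructing $D$ by a zigzag starting from $h''\alpha$ through alternating applications of the exactness of $d'$ and of $d''$, and then tracking block Hamming weights stalk-by-stalk (using Proposition~\ref{prop:d'_exact}'s splitting $d'=\prod_u \delta_{\leq u}$ and sparsity, which is exactly the content of the paper's Lemma~\ref{dz=x upper bound for z using x}) to obtain the distance, decoder, and expansion dualities. Two points, though, deserve a closer look. First, the one-line claim $\|y\|=\Theta(\|\alpha\|)$ should really be two separate, one-sided bounds for carefully chosen representatives: the forward chase $\alpha\mapsto y$ controls one direction (inverting $d'$ stalk-by-stalk), while the reverse chase $y\mapsto\alpha'$ must invert $d''$ instead, which is controlled by a different local argument — the paper handles this explicitly by picking the preimage $\bar x^{0,i+1}$ of $d''$ to vanish wherever the image does, so the preimage weight does not grow at all. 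Second, the cap-product sketch has a genuine misstep: you propose to show that $\eta'(\alpha\frown[X])$ and $\beta_i$ differ by an element of $\im d'$ in $K^{0,i}$, but $d'$ into $K^{0,i}$ is the zero map (there is no column $p=-1$), so this would force an on-the-nose equality, which is false in general; they should differ by something in $\eta'(\im\partial)=d''(\im\eta')$, and establishing this takes the paper an entire separate induction (Theorem~\ref{thm:Cap induced poincare dual map}) that carries the cap product across the zigzag one step at a time, each step producing an explicit boundary term. Your diagnosis of the obstacle as "orientation and sign bookkeeping" is also a red herring — the field has characteristic $2$, so signs cancel identically — the actual subtlety is precisely the one you noted in passing but did not resolve, namely that $[X]$ need not be a cycle, so one cannot simply invoke naturality of $\frown$ on homology classes but must work at the chain level and track the extra boundary terms.
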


We divide the proof into the following parts.

\subsubsection*{Duality of (co)homologies/logical qubits:}

\noindent\emph{\textbf{Method 1: Flabby resolution}}. By Proposition \ref{prop:flabby_resolution}, we have an exact sequence of sheaves as follows
\begin{equation}
	\begin{tikzcd}
		0 \arrow[r] & \mathcal{F}^{\perp}\arrow[r,"h'' "]
		& \mathcal{F}_t \arrow[r,"\partial_L"] & \mathcal{F}_{t-1}\arrow[r,"\partial_L"] & \cdots \arrow[r,"\partial_L"] & \mathcal{F}_1 \arrow[r,"\partial_L"] & \mathcal{F}_0\arrow[r] & 0.
	\end{tikzcd}
\end{equation}
Note that since each sheaf $\F_i$ is flabby (see Definition \ref{def:flabby}), this is a flabby resolution, and hence can be used to compute sheaf cohomology, i.e. $(R^p\Gamma)(\F^\perp)$ is isomorphic to the $p$-th cohomology of the following cochain
\begin{equation}
	\begin{tikzcd}
		0 \arrow[r] & \Gamma(X,\mathcal{F}_t) \arrow[r] & \Gamma(X,\mathcal{F}_{t-1})\arrow[r] & \cdots \arrow[r] & \Gamma(X,\mathcal{F}_1) \arrow[r] & \Gamma(X,\mathcal{F}_0)\arrow[r] & 0.
	\end{tikzcd}
\end{equation}
Note that this is exactly the following chain complex
\begin{equation}
	\begin{tikzcd}[column sep=0.7cm]
		0 \arrow[r] & C_t(X,\mathcal{F}) \arrow[r] & C_{t-1}(X,\mathcal{F})\arrow[r] & \cdots \arrow[r] & C_{1}(X,\mathcal{F}) \arrow[r] & C_0(X,\mathcal{F})\arrow[r] & 0.
	\end{tikzcd}
\end{equation}
Therefore, we proved the generalized Poincaré duality between sheaf codes $H^i(X,\F^\perp)\cong H_{t-i}(X,\F)$.

\noindent\emph{\textbf{Method 2: Spectral sequence}}. There is also an alternative proof using spectral sequence proposed in \cite{lin2024transversalnoncliffordgatesquantum}. Let us rewrite the double complex $K^{p,q}$ defined in the beginning of Section \ref{sec:strong_sheaf} as $E_0^{p,q}$. And we can draw a table as follows.

\begin{equation*}
	E_0= \centering
	\resizebox{0.9\linewidth}{!}{%
		\begin{tikzpicture}[xscale={3},baseline={(X.base)}]
			\draw (0,4)node{$C^0(X,\mathcal{F}_0)$} (1,4)node{0} (2,4)node{0} (3,4)node{$\cdots$} (4,4)node{0};
			\draw (0,3)node{$C^0(X,\mathcal{F}_1)$} (1,3)node{$C^1(X,\mathcal{F}_1)$} (2,3)node{0} (3,3)node{$\cdots$} (4,3)node{0};
			\draw (0,2)node(X){$C^0(X,\mathcal{F}_2)$} (1,2)node{$C^1(X,\mathcal{F}_2)$} (2,2)node{$C^2(X,\mathcal{F}_2)$} (3,2)node{$\cdots$} (4,2)node{0};
			\draw (0,1)node{$\vdots$} (1,1)node{$\vdots$} (2,1)node{$\vdots$} (3,1)node{$\vdots$} (4,1)node{$\vdots$};
			\draw (0,0)node{$C^0(X,\mathcal{F}_t)$} (1,0)node{$C^1(X,\mathcal{F}_t)$} (2,0)node{$C^2(X,\mathcal{F}_t)$} (3,0)node{$\cdots$} (4,0)node{$C^t(X,\mathcal{F}_t)$};
			\draw[->] (-0.6,-0.5)--(4.5,-0.5)node[anchor=north]{$p$};
			\draw[->] (-0.6,-0.5)--(-0.6,4.5)node[anchor=east]{$q$};
		\end{tikzpicture}
	}
\end{equation*}

Then we can calculate the spectral sequence of $E_0$. For a more comprehensive guide on spectral sequences for double complexes, refer to \cite[Theorem 14.14]{Bott:1982xhp}. We can calculate the page-1 of type-I spectral sequence by ${}^\text{I}E_1=H_{d''}E_0$. By Proposition \ref{prop:h_exact} and acyclicity of $\mathcal{F}$, we know that 
\begin{equation}
	{}^\text{I} E_1^{p,q} =
	\begin{cases}
		C^{p}(X,\mathcal{F}^\perp) & ,q=0 \\
		0  & ,q\neq 0
	\end{cases}.
\end{equation}
Since $h''$ is a chain map, we know that the second page of type-I spectral sequence ${}^\text{I} E_2=H_{d'}{}^\text{I} E_1$ is
\begin{equation}
	{}^\text{I} E_2^{p,q} =
	\begin{cases}
		H^p(X,\mathcal{F}) & ,q=0 \\
		0  & ,q\neq 0
	\end{cases}.
\end{equation}
The differential $d_2:{}^\text{I}E_2^{p,q}\rightarrow {}^\text{I}E_2^{p+2,q+1}$ is zero. So do all differentials $d_r=0$ for $r \geq 2$. Therefore, the spectral sequence converges at the second page, i.e. ${}^{\text{I}}E_\infty={}^{\text{I}}E_2$

Similarly, we can calculate the page-1 of type-II spectral sequence by ${}^\text{II}E_1=H_{d'}E_0$. By Proposition \ref{prop:d'_exact} and \ref{prop:eta_exact}, we know that 
\begin{equation}
	{}^\text{II} E_1^{p,q} =
	\begin{cases}
		C_{t-q}(X,\mathcal{F}) & ,p=0 \\
		0  & ,p\neq 0
	\end{cases}.
\end{equation}
Note that since $\eta'$ is a chain map, the second page of the type-II spectral sequence ${}^\text{II}E_2=H_{d''}{}^\text{II}E_1$ is
\begin{align}
	{}^\text{II} E_2^{p,q} =
	\begin{cases}
		H_{t-q}(X,\mathcal{F}) & ,p=0 \\
		0  & ,p\neq 0
	\end{cases}.
\end{align}
Also the spectral sequence converges at the second page. Note that in our case the type-I and type-II spectral sequence both converge to the cohomology of total complex, i.e.
\begin{align}
	H^n(\text{Tot}(K))=\bigoplus_{p+q=n}{}^\text{I} E_{\infty}^{p,q}=\bigoplus_{p+q=n}{}^\text{II} E_{\infty}^{p,q}.
\end{align}
Therefore, we conclude with $H^p(X,\mathcal{F}^\perp)\cong H_{t-p}(X,\mathcal{F})$.

\subsubsection*{Explicit isomorphism of Poincaré duality:}

It is natural to seek an explicit isomorphism between the cohomology and homology. The technique of spectral sequence cannot help us find the answer, but we can use the method of diagram chasing from homological algebra to work it out. This method is also used in \cite{Dinur2024sheaf} to bound the code distance and (co)boundary expansion. To be precise, by Proposition \ref{prop:h_naturality}, we have the following commutative diagram:
\[
\begin{tikzcd}[column sep=2.2em, row sep=1.8em,scale=0.9, transform shape]
      C_0(X,\F) \arrow[r,"\eta'"] 
    & C^0(X,\mathcal{F}_0) \arrow[r,"d'"] 
    & 0  \arrow[r,"d'"] 
    & 0  \arrow[r,"d'"] 
    & \cdots  \arrow[r,"d'"] 
    & 0 \\
      C_1(X,\F) \arrow[r,"\eta'"]  \arrow[u,"\partial"]
    & C^0(X,\mathcal{F}_1) \arrow[u,"d''"] \arrow[r,"d'"] 
    & C^1(X,\mathcal{F}_1) \arrow[u,"d''"]  \arrow[r,"d'"] 
    & 0  \arrow[r,"d'"]  \arrow[u,"d''"] 
    & \cdots  \arrow[r,"d'"] 
    & 0  \arrow[u,"d''"]\\
      C_2(X,\F) \arrow[r,"\eta'"] \arrow[u,"\partial"]
    & C^0(X,\mathcal{F}_2) \arrow[u,"d''"] \arrow[r,"d'"] 
    & C^1(X,\mathcal{F}_2) \arrow[u,"d''"] \arrow[r,"d'"] 
    & C^2(X,\mathcal{F}_2) \arrow[u,"d''"]  \arrow[r,"d'"] 
    & \cdots  \arrow[r,"d'"] 
    & 0 \arrow[u,"d''"]\\
       \vdots  \arrow[u,"\partial"]
    &\vdots \arrow[u,"d''"] 
    & \vdots \arrow[u,"d''"] 
    & \vdots \arrow[u,"d''"] 
    & \ddots 
    & \vdots \arrow[u,"d''"] \\
   C_t(X,\mathcal{F}) \arrow[r,"\eta'"] \arrow[u,"\partial"]
    &C^0(X,\mathcal{F}_t) \arrow[u,"d''"] \arrow[r,"d'"] 
    & C^1(X,\mathcal{F}_t) \arrow[u,"d''"] \arrow[r,"d'"] 
    & C^2(X,\mathcal{F}_t) \arrow[u,"d''"] \arrow[r,"d'"] 
    & \cdots \arrow[r,"d'"] 
    & C^{t}(X,\mathcal{F}_t) \arrow[u,"d''"]\\
    0\arrow[u]\arrow[r]
    &C^0(X,\F^\perp) \arrow[u,"h''"]\arrow[r,"\delta^\perp"]
    &C^1(X,\F^\perp) \arrow[u,"h''"]\arrow[r,"\delta^\perp"]
    &C^2(X,\F^\perp) \arrow[u,"h''"]\arrow[r,"\delta^\perp"]
    &\cdots\arrow[r,"\delta^\perp"]
    &C^t(X,\F^\perp) \arrow[u,"h''"]
\end{tikzcd}
\]
Given any cocycle $\alpha \in C^p(X,\mathcal{F}^\perp)$, let $\alpha^{p,t} \coloneqq  h''\alpha \in C^p(X,\mathcal{F}_t)$. Since $h''$ is a chain map, we have
\begin{align}\label{eq:zig-zag_1}
	d' \alpha^{p,t} = d' h'' \alpha = h'' \delta^\perp \alpha = 0.
\end{align}
By exactness of $d'$, we may find $\alpha^{p-1,t} \in C^{p-1}(X,\mathcal{F}_t)$ such that $\alpha^{p,t} = d'\alpha^{p-1,t}$. Now, let $\alpha^{p-1,t-1} \coloneqq  d''\alpha^{p-1,t}$,
\begin{align}\label{eq:zig-zag_2}
	d' \alpha^{p-1,t-1} = d' d''\alpha^{p-1,t} = d'' d' \alpha^{p-1,t} = d'' \alpha^{p,t} = d'' h'' \alpha = 0
\end{align}
indicates that there exists some $\alpha^{p-2,t-1} \in C^{p-2}(X,\mathcal{F}_{t-1})$ such that $d' \alpha^{p-2,t-1} = \alpha^{p-1,t-1}$. We set $\alpha^{p-2,t-2} \coloneqq  d''\alpha^{p-2,t-1}$. By repeating this "zig-zag" argument, we obtain $\alpha^{k,t-p+k+1} \in C^k(X,\mathcal{F}_{t-p+k+1})$ for $0 \leq k \leq p-1$ and $\alpha^{k,t-p+k} \in C^k(X,\mathcal{F}_{t-p+k})$ for $0 \leq k \leq p$ such that 
\begin{equation}\label{eq:zig-zag_3}
     d'\alpha^{k-1,t-p+k} = \alpha^{k,t-p+k} = d''\alpha^{k,t-p+k+1}, \quad \alpha^{p,t} = h'' \alpha.
\end{equation}
In the last step, after we find $\alpha^{0,t-p+1}$ and define $\alpha^{0,t-p} \coloneqq  d'' \alpha^{0,t-p+1}$, we have $d'\alpha^{0,t-p} = 0$ like Eq.~\eqref{eq:zig-zag_1} and \eqref{eq:zig-zag_2}. By Proposition \ref{prop:eta_exact}, the exactness of $\eta'$ ensures that there exists some $D\alpha \in C_{t-p}(X,\mathcal{F})$ such that $\eta' (D\alpha) = \alpha^{0,t-p}$. We may call these $\alpha^{*,*}$ as \emph{explaining sequence} in the language of \cite{Dinur2024sheaf}. At first glance, $D\alpha$ depends on the choice of the sequence, but we are going to show that the homology class $[D\alpha]$ is uniquely determined by the cohomology class $[\alpha]$ and hence induces a map
\begin{align}
	D: H^p(X,\mathcal{F}^\perp)\longrightarrow H_{t-p}(X,\mathcal{F}),
\end{align}
by $D[\alpha]\coloneqq [D\alpha]$. 

We first show that the homology class $[D\alpha]$ is independent of the the choices of $\alpha^{*,*}$ when the initial cocycle $\alpha$ is fixed. Recall that $d'\alpha^{p-1,t} = \alpha^{p,t}$. Then by the exactness of $d'$, the only freedom on $\alpha^{p-1,t}$ is to add a coboundary $d'\beta^{p-2,t}$ where $\beta^{p-2,t} \in C^{p-2}(X,\mathcal{F}_t)$. Let
\begin{align}\label{eq:zig-zag_4}
	\tilde{\alpha}^{p-1,t} & \coloneqq  \alpha^{p-1,t} + d'\beta^{p-2.t}, \\
	\tilde{\alpha}^{p-1,t-1} & \coloneqq  d''(\alpha^{p-1,t} + d'\beta^{p-2,t}) = \alpha^{p-1,t-1} + d' d''\beta^{p-2,t}.
\end{align} 
Obviously, we still have $d'\tilde{\alpha}^{p-1,t-1} = 0$ like Eq.~\eqref{eq:zig-zag_2}. To find an element that can be mapped to it via $d'$, since $d'(\alpha^{p-2,t-1} + d''\beta^{p-2,t}) = \tilde{\alpha}^{p-1,t-1}$, we define
\begin{align}
	\tilde{\alpha}^{p-2,t-1} & \coloneqq  \alpha^{p-2,t-1} + d''\beta^{p-2,t} + d'\beta^{p-3,t-1}, \\
	\tilde{\alpha}^{p-2,t-2} & \coloneqq  d''\tilde{\alpha}^{p-2,t-1} = \alpha^{p-2,t-2} + d' d''\beta^{p-3,t-1}
\end{align}
as before by taking a boundary operator $d'\beta^{p-3,t-1}$ as the only freedom. Then we can recursively define $\tilde{\alpha}^{*,*}$'s with max freedom. By \eqref{eq:eta_exact} in Lemma \ref{prop:eta_exact}, it ends up with
\begin{align}
	\tilde{\alpha}^{1,t-p+1} & \coloneqq  d''\alpha^{1,t-p+2} + d' d''\beta^{0,t-p+2}, \\
	\tilde{\alpha}^{0,t-p} & \coloneqq  d'' \alpha^{0,t-p+1} + d'' \eta' \beta = \eta'( D\alpha + \partial \beta), \label{eq:zig-zag_5}
\end{align}
where $\partial: C_{t-p+2}(X,\mathcal{F}) \to C_{t-p+1}(X,\mathcal{F})$ and $\beta \in C_{t-p+2}(X,\mathcal{F})$. 
Since $\eta'$ is injective, the element constructed from $\tilde{\alpha}^{*,*}$ is $\tilde{D}\alpha = D\alpha + \partial\beta$. This verifies that $[D\alpha]$ is indeed fixed when $\alpha$ is given.

Next we show that $[D\alpha]$ is well-defined with respect to the cohomology class $[\alpha]$. Suppose $\alpha'$ is homologous to $\alpha$, i.e., there exist $\gamma\in C^{p-1}(X, \mathcal{F}^{\perp})$ such that $\alpha' = \alpha + \delta^\perp \gamma$. Then we construct two explaining sequences $\alpha^{*,*}$ and $\alpha'^{\,*,*}$ via $\alpha$ and $\alpha'$, respectively. In sequences, we must have
\begin{align}
	\alpha'^{\,p-1,t} = \alpha^{p-1,t} + h'' \gamma + d'\beta^{p-2,t}
\end{align}
for some coboundary $d'\beta^{p-2,t}$ as in Eq.~\eqref{eq:zig-zag_4}. Applying $d''$ to define $\alpha'^{\,p-1,t}$ and $\alpha^{p-1,t-1}$, $d'' h'' \gamma = h'' \delta^{\perp} \gamma$ vanishes. Together with Eq.~\eqref{eq:zig-zag_5}, we must have $[D\alpha] = [D\alpha']$ and the map $D: H^p(X,\mathcal{F}^\perp) \to H_{t-p}(X,\mathcal{F})$ is indeed well-defined.

Finally, let us prove that $D$ is an isomorphism. We can easily construct the inverse map of $D$. Note that for each cycle $x \in C_{t-p}(X,\mathcal{F})$, we can construct a explaining sequence $x^{*,*}$ similarly (by using the exactness of $d''$ in Proposition \ref{prop:d'_d''_exact}) such that there exist $x^{k,t-p+k} \in C^k(X,\mathcal{F}_{t-p+k})$ for $0\leq k\leq p$ and $x^{k,t-p+k+1} \in C^k(X,\mathcal{F}_{t-p+k+1})$ for $0\leq k\leq p-1$ such that 
\begin{equation}\label{eq:zig-zag_6}
     d'x^{k-1,t-p+k} = x^{k,t-p+k} = d''x^{k,t-p+k+1}, \quad x^{0,t-p} = \eta' x.
\end{equation}
Since $d''x^{p,t} = 0$, there exists a cocycle $\alpha_x \in C^p(X,\mathcal{F}^\perp)$ such that $h\alpha_x = x^{p,t}$ (cf. the definition of $D\alpha$). It is straightforward to see that the map $[x] \mapsto [\alpha_x]$ is the inverse map of $D$.\\

Later in Theorem \ref{thm:Cap induced poincare dual map} we will show that when $X$ can be subdivided into a simplicial complex, then $D[\alpha]=[\alpha]\frown [X]$, which is exactly the original form of Poincar\'e duality for manifolds.

\subsubsection*{Duality of code distances:}

One significant reason for explicitly building the isomorphism $D$ is that it enables us to bound the code distance and (co)cycle expansion and even yields a decoder. The basic idea is to carefully analyze the relationship of norms of (co)cycles in diagram chase. For convenience of presentation, let us denote $i= t-p$. By Eq.~\eqref{eq:zig-zag_3}, for each $0 \leq k \leq t-i-1$ we have $x^{k,i+k} = d'x^{k-1,i+k}$. By the definition of $d'$ in Eq.~\eqref{eq:d'_2}, this implies that 
\begin{equation}\label{eq:explain seq middle <}
	\|x^{k.i+k}\| \leq \text{max}_{\tau \in X(k-1)}|X_{\geq \tau}(k)|\cdot\|x^{k-1,i+k}\|=M^k_{k-1}\|x^{k-1,i+k}\|,
\end{equation}
where for each $j \geq i$ we define $M^{j}_i \coloneqq  \text{max}_{\tau \in X(i)}|X_{\geq \tau}(j)|$. The norm used here is the block Hamming weight, which records the number of nonzero local vectors of support at $x^{k.i+k}(\sigma,-)$. Similarly, for later use, we will also define $m^j_i\coloneqq \text{max}_{\pi \in X(i)}|X_{\leq \pi}(j)|$ with $j \leq i$.  Note that if $X$ is a sparse complex, then $M^*_*$ and $m^*_*$'s are always constants. For $x^{0,i} = \eta' x$, we also have an obvious estimation on the norms:
\begin{equation}\label{eq:explain seq initial <}
	\|x^{0,i}\| \leq \text{max}_{\pi \in X(i)}|X_{\leq \pi}(0)|\cdot\|x\| = m^0_i\|x\|.
\end{equation}
By using this inequality, we are going to bound all other terms appearing in the diagram chasing by $\|x\|$. For $x^{0,i+1}$ with $d'' x^{0,i+1} = x^{0,i}$, if $x^{0,i+1}(\sigma,-)=0$ for some $\sigma \in X(0)$, then $x^{0,i}(\sigma,-) = 0$ by the definition of $d''$ in Eq.~\eqref{eq:d''}. As a result, $\|x^{0,i}\| \leq \|x^{0,i+1}\|$, but this is not enough to bound $\|x^{0,i+1}\|$ via Eq.~\eqref{eq:explain seq initial <}. Since we have the freedom to choose $x^{0,i+1}$, We want to choose some $\bar{x}^{0,i+1}$ for which $d'' \bar{x}^{0,i+1} = x^{0,i}$ and $\bar{x}^{0,i+1}(\sigma,-)=0$ whenever $x^{0,i}(\sigma,-) = 0$ for some $\sigma \in X(k)$. This yields $\|x^{0,i}\| = \|\bar{x}^{0,i+1}\|$.

To this end, we can use the exactness of $d''$ in Proposition \ref{prop:d'_d''_exact} by adding some $d''\zeta^{0,i+2}$ to $x^{0,i+1}$. Then we also need to redefine $x^{1,i+2}$ and $x^{1,i+1}$ by (cf. Eq.~\eqref{eq:zig-zag_4})
\begin{align}
	\bar{x}^{1,i+2} = x^{1,i+2} + d'\zeta^{0,i+2}, \quad \bar{x}^{1,i+1} = x^{1,i+1} + d' d''\zeta^{0,i+2}. 
\end{align}
This guarantees $d' \bar{x}^{0,i+1} = \bar{x}^{1,i+1} = d'' \bar{x}^{1,i+2}$. By Eq.~\eqref{eq:explain seq middle <}, $\| \bar{x}^{1,i+1} \|$ can be bounded by $\| \bar{x}^{0,i+1}\| = \|x^{0,i}\|$, which is further controlled by $\|x\|$. Then we modify $\bar{x}^{1,i+2}$ and redefine all the remaining terms inductively. We still denote them by $x^{k,i+k}$, $x^{k,i+k+1}$ for simplicity, they are homological equivalent to the original explaining sequence. By Eq.~\eqref{eq:explain seq middle <} and \eqref{eq:explain seq initial <}, we have
\begin{equation}\label{eq:bound x^t-i,t by x}
    \|x^{t-i,t}\| \leq (\prod_{j=0}^{t-i-1}M_j^{j+1}) \cdot \|x^{0,i}\| \leq m_i^0(\prod_{j=0}^{t-i-1}M_j^{j+1}) \cdot \|x\|. 
\end{equation}
Note that since $h''$ is injective, we always have $\|x^{t-i,t}\| = \|\alpha_x\|$. Therefore, as long as
\begin{equation}\label{eq:x_small}
    \|x\| < \frac{\mu_{\delta^\perp}(t-i) }{ m_i^0(\prod_{j=0}^{t-i-1}M_j^{j+1}) },
\end{equation}
then $\|\alpha_x\| < \mu_{\delta^\perp}(t-i)$. By assumption, $\alpha_x$ can only be a coboundary and we can find some $\beta_x \in C^{t-i-1}(X,\mathcal{F}^\perp)$ such that $\delta^\perp \beta_x = \alpha_x$. In turn, we are going to show that $x$ must be a boundary of some element in $C_{i+1}(X,\mathcal{F})$ by using $\beta_x$. Therefore,
\begin{equation}
	\mu_{\partial}(i) \geq \frac{ \mu_{\delta^\perp}(t-i) }{ m_i^0(\prod_{j=0}^{t-i-1}M_j^{j+1}) }.
\end{equation}

To prove that $x$ is a boundary, let $\tilde{x}^{t-i-1,t} \coloneqq  x^{t-i-1,t} + h'' \beta_x \in C^{t-i-1}(X,\mathcal{F}_t)$. Then
\begin{align}
	d'\tilde{x}^{t-i-1,t} = x^{t-i,t} + d' h'' \beta_x = x^{t-i,t} + d'' \delta^\perp \beta_x = 0,
\end{align}
hence we can find $\tilde{x}^{t-i-2,t} \in C^{t-i-2}(X,\mathcal{F}_t)$ such that $d'\tilde{x}^{t-i-2,t} = \tilde{x}^{t-i-1,t}$. One may have the feeling that we are going to use the diagram chase again, but we now start from $(t-i-1,t)$. By definition,
\begin{align}
	d'' d'\tilde{x}^{t-i-2,t} = x^{t-i-1,t-1} + d'' h'' \beta_x = x^{t-i-1,t-1}.
\end{align}
Then we define $\tilde{x}^{t-i-2,t-1} \coloneqq  x^{t-i-2,t-1} + d''\tilde{x}^{t-i-2,t}$. We have
\begin{align}
	d'\tilde{x}^{t-i-2,t-1} = d' x^{t-i-2,t-1} + d' d''\tilde{x}^{t-i-2,t} = 0.
\end{align}
We can do this inductively and define $\tilde{x}^{k,i+k+1} \in C^k(X,\mathcal{F}_{i+k+1})$ for each $0\leq k\leq t-i-1$ and $\tilde{x}^{k-1,i+k+1}$ for each $1\leq k\leq t-i-1$ such that 
\begin{align}
    & \tilde{x}^{k,i+k+1} = x^{k,i+k+1}+d''\tilde{x}^{k,i+k+2}, \\ 
    & d'\tilde{x}^{k,i+k+1} = 0, \\
    & d'\tilde{x}^{k-1,i+k+1} = \tilde{x}^{k,i+k+1}. 
\end{align}
Finally, we have $d'\tilde{x}^{0,i+1} = 0$. Then we can define $\tilde{x}\in C_{i+1}(X,\mathcal{F})$ by $\eta'\tilde{x} = \tilde{x}^{0,i+1}$. Since  we also have $d''\tilde{x}^{0,i+1} = x^{0,i}$, $x = \partial\tilde{x}$ by \eqref{eq:eta_natural}. A similar argument shows that $\mu_{\delta^\perp}(t-i)\geq \Theta(\mu_{\partial}(i))$, which concludes the proof of code distance duality.

\subsubsection*{Duality of decoders:} 

By modifying the strategy of proof of distance duality a little bit, we can prove this duality. Suppose we have a decoder $\mathcal{O}_{t-i}^\perp$ with decoding radius $R(\mathcal{O}_{t-i}^\perp)$ when we put qubits in the cochain $C^{t-i}(X,\mathcal{F}^\perp)$. Then consider an error $e\in C_i(X,\mathcal{F})$ with syndrome $s=\partial e\in C_{i-1}(X,\mathcal{F})$. As usual, we construct an explaining sequence $s^{*,*}$ for $s$, and each element in the sequence is known to the decoder. Let $z^{0,i}\coloneqq \eta' e$. Note that since $d''(z^{0,i}+s^{0,i})=0$, we may find $z^{0,i+1}\in C^0(X,\mathcal{F}_{i+1})$ such that $d''z^{0,i+1}=\eta' e+s^{0,i}$. Then we define $z^{1,i+1}\coloneqq d'z^{0,i+1}$. The direct calculation gives $d''z^{1,i+1}=s^{1,i}$. Hence, we may further find $z^{1,i+2}\in C^1(X,\mathcal{F}_{i+2})$ such that $d''z^{1,i+2}=z^{1,i+1}+s^{1,i+1}$, and define $z^{2,i+2}\coloneqq d'z^{1,i+2}$, which satisfies $d''z^{2,i+2}=s^{2,i+1}$. By induction, for each $0\leq k\leq t-i-1$, there exist $z^{k,k+i+1}\in C^k(X,\mathcal{F}_{k+i+1})$, and for each $1\leq k\leq t-i$, there exist $z^{k,k+i}\in C^k(X,\mathcal{F}_{k+i})$ such that 
\begin{equation}
    z^{k+1,k+i+1}=d'z^{k,k+i+1},d''z^{k+1,k+i+1}=s^{k,k+i},d''z^{k+1, k+i+2}=z^{k+1,k+i+1}+s^{k+1,k+i+1}.
\end{equation}
Since $d''(z^{t-i,t}+s^{t-i,t})=0$, we can find $\beta_z\in C^{t-i}(X,\mathcal{F}^\perp)$ such that 
\begin{equation}
d''\beta_z=z^{t-i,t}+s^{t-i,t}.
\end{equation}
Direct calculation gives $\delta^\perp\beta_z = \alpha_s$. If we also have $\|\beta_s\|<R(\mathcal{O}_{t-i}^\perp)$, then we can use the decoder $\mathcal{O}_{t-i}^\perp$ to decode it, and finally decode $e$. 

From $s^{0,i-1}=d''\eta' e$, we deduce that $\|s^{0,i}\|=\|s^{0,i-1}\|\leq\|\eta' e\|\leq m^0_i\|e\|$. From Eq.~\eqref{eq:bound x^t-i,t by x} we know that 
\begin{equation}
\|s^{t-i,t}\|\leq(\prod_{j=0}^{t-i-1}M_j^{j+1})\cdot \|s^{0,i}\|\leq m^0_i(\prod_{j=0}^{t-i-1}M_j^{j+1})\cdot\|e\|.
\end{equation}
We begin to bound $\beta_z$ from
\begin{equation}
\|z^{0,i+1}\|\leq\|\eta' e\|+\|s^{0,i}\|\leq 2m^0_i\|e\|.
\end{equation}
There are useful inequalities in the intermediate step:
\begin{equation}
\|z^{k+1,i+k+1}\|=\|d'z^{k,i+k+1}\|\leq M^{k+1}_k\|z^{k,i+k+1}\|,
\end{equation}
\begin{equation}
\|z^{k+1,k+i+2}\|=\|z^{k+1,k+i+1}+s^{k+1, k+i+1}\|\leq M^{k+1}_k(\|z^{k,k+i+1}\|+\|s^{k,k+i}\|).
\end{equation}
By using them inductively, we get
\begin{equation}
\|z^{t-i,t}\|\leq(\prod_{j=0}^{t-i-1}M_j^{j+1})(\|z^{0,i+1}\|+(t-i-1)\|s^{0,i}\|),
\end{equation}
and hence
\begin{align*}
      \|\beta_z\|&\leq \|z^{t-i,t}\|+\|s^{t-i,t}\|\leq(\prod_{j=0}^{t-i-1}M_j^{j+1})(\|z^{0,i+1}\|+(t-i)\|s^{0,i}\|)\leq (t-i+2)(\prod_{j=0}^{t-i-1}M_j^{j+1})m^0_i\|e\|.
\end{align*}
Therefore, as long as 
\begin{equation}
\|e\|<\frac{R(\mathcal{O}_{t-i}^\perp)}{(t-i+2)(\prod_{j=0}^{t-i-1}M_j^{j+1})m^0_i},
\end{equation}
then we can use the decoder $\mathcal{O}_{t-i}^\perp$ to decode $\alpha_s$ and get $\tilde{\beta}_z\in C^{t-i}(X,\mathcal{F}^\perp)$ such that $\delta^\perp\tilde{\beta_z}=\alpha_s$ and $\tilde{\beta}_z$ is homologous to $\beta_z$. This allows us to find an element $\tilde{e}\in C_i(X,\F)$ homologous to $e$ as follows. Let $\tilde{z}^{t-i,t}\coloneqq s^{t-i.t}+h''\tilde{\beta}_z$, which is known to the decoder $\mathcal{O}^\perp_{t-i}$.  Obviously, $d'\tilde{z}^{t-i,t}=0$, so we can find $\tilde{z}^{t-i-1,t}$ such that $d'\tilde{z}^{t-i-1,t}=\tilde{z}^{t-i,t}$.
Suppose $\tilde{\beta}_z=\beta_z+\delta^\perp\gamma$ for some $\gamma\in C^{t-i+1}(X,\F)$, then $\tilde{z}^{t-i,t}=z^{t-i,t}+d'd''\gamma$. Therefore, $d'(\tilde{z}^{t-i-1,t}+d''\gamma+z^{t-i-1,t})=0$, which also provides some $\gamma^{t-i-2,t}$ such that $\tilde{z}^{t-i-1,t}=z^{t-i-1,t}+d''\gamma+d'\gamma^{t-i-2,t}$. Then we define $\tilde{z}^{t-i-1,t-1}\coloneqq d'' \tilde{z}^{t-i-1,t}=z^{t-i-1,t-1}+d'd''\gamma^{t-i-2,t}$. By doing this iteratively, we acquire $\tilde{z}^{0,i}=z^{0,i}+\eta'\partial \gamma^{-1,i-1}$ for some $\gamma^{-1.i+1}\in C_{i+1}(X,\F)$ and $d'\tilde{z}^{0,i}=0$. As a result, we get $\tilde{e}\in C_i(X,\F)$ such that $\eta'\tilde{e}=\tilde{z}^{0,i}$, and $\tilde{e}$ is known to the decoder and homologous to $e$.

Therefore, we claim that when we put qubits in $C_i(X,\mathcal{F})$, there is a decoder $\mathcal{O}_i$ such that the coding radius satisfies
\begin{equation}
R(\mathcal{O}_i)\geq\frac{R(\mathcal{O}_{t-i}^\perp)}{(t-i+2)(\prod_{j=0}^{t-i-1}M_j^{j+1})m^0_i}.
\end{equation}

Note that each process of diagram chasing takes constant time, hence the running time of $\mathcal{O}_i$ is constant time plus the running time of $\mathcal{O}_{t-i}^\perp$. A similar argument will show that we can construct a decoder $\mathcal{O}^\perp_{t-i}$ given $\mathcal{O}_i$ with $R(\mathcal{O}^\perp_{t-i})\geq \Theta(R(\mathcal{O}_i))$, which concludes the proof.

\subsubsection*{Duality of (co)boundary expansions:} 

By combining the proof of duality of code distance and decoder together, we can easily get the duality of (co)boundary expansion by carefully analyzing the relation of norms in the diagram chase. We use the same notation with the proof of duality of decoders. Without loss of generality, we may assume $\|s\|=\|\partial e\|=\varepsilon_{\partial}(i)\cdot\text{dist}(e,\ker\partial_i)$. Since $\alpha_s$ is a coboundary, we can choose another $\beta\in C^{t-i+1}(X,\F)$ such that $\delta^\perp\beta=\alpha_s$ and  
\begin{equation}
\forall\,\gamma\in\ker\delta^\perp_{ t-i}, \|\beta\|\leq\|\beta+\gamma\|.
\end{equation}
In other words, $\text{dist}(\beta,\ker \delta^\perp_{t-i})=\|\beta\|$. This can always be done by greedy algorithm. When $\alpha_s=0$, we can simply choose $\beta=0$. Otherwise $\delta^\perp\beta\neq0$, hence we have
\begin{equation}
\|\beta\|\leq\frac{\|\alpha_s\|}{\varepsilon_{\delta^\perp}(t-i)},
\end{equation}

Given $\beta$, we can imitate the process of constructing $\tilde{x}$ from $\beta_x$ in the proof of duality of code distance, i.e., we may start with the definition of
$\tilde{e}^{t-i,t}\in C^{t-i}(X,\mathcal{F}_t)$ by
\begin{equation}
\tilde{e}^{t-i,t}\coloneqq e^{t-i,t}+h''\beta.
\end{equation}
By a similar induction, We can define $\tilde{e}^{k,i+k}\in C^k(X,\mathcal{F}_{i+k})$ for each $0\leq k\leq t-i$ and $\tilde{e}^{k-1,i+k}$ for each $1\leq k\leq t-i$ such that 
\begin{equation}
    \tilde{e}^{k,i+k}=e^{k,i+k}+d''\tilde{e}^{k,i+k+1},\, d'\tilde{e}^{k,i+k}=0,\,d'\tilde{e}^{k-1,i+k}=\tilde{e}^{k,i+k}, 
\end{equation}
and $\tilde{e}\in C_{i}(X,\mathcal{F})$ by
\begin{equation}
\eta'\tilde{e}=\tilde{e}^{0,i}.
\end{equation}
Still, $\partial\tilde{e}=s$. We want to give the upper bound of $\|\tilde{e}\|$. This can be done as follows. We start with the observation that 
\begin{equation}
\|\tilde{e}^{t-i,t}\|=\|s^{t-i-1,t}+h''\beta\|\leq\|s^{t-i,t}\|+\|h''\beta\|=\|s^{t-i,t}\|+\|\beta\|.
\end{equation}
Then we note that 
\begin{equation}
\|\tilde{e}^{t-i-1,t-1}\|=\|s^{t-i-1,t-1}+d''\tilde{e}^{t-i-1,t}\|\leq\|s^{t-i-1,t-1}\|+\|\tilde{e}^{t-i-1,t}\|.
\end{equation}
We know $d'\tilde{e}^{t-i-1,t}=\tilde{e}^{t-i,t}$. By the previous method, $\|\tilde{e}^{t-i,t}\|\leq M^{t-i}_{t-i-1}\|\tilde{e}^{t-i-1,t}\|$. To give an upper bound of $\|\tilde{e}^{t-i-1,t}\|$ via $\|\tilde{e}^{t-i,t}\|$, we apply the following lemma

\begin{lemma}\label{dz=x upper bound for z using x}
    For each $1\leq i\leq k\leq t$, $\xi\in C^i(X,\mathcal{F}_k)$, there exist $\zeta\in C^{i-1}(X,\mathcal{F}_k)$ such that $d'\zeta=\xi$ and $\|\zeta\|\leq m^{i-1}_kM^k_i\|\xi\|$
\end{lemma}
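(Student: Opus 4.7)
\medskip

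\noindent\textbf{Proof plan for Lemma \ref{dz=x upper bound for z using x}.} The plan is to reduce the claim to a local statement on each star $X_{\leq u}$ using the direct-product decomposition of $C^\bullet(X,\mathcal{F}_k)$ already exploited in the proof of Proposition~\ref{prop:d'_exact}, and then count supports. Recall
\begin{equation}
C^i(X,\mathcal{F}_k)\;=\;\prod_{u\in X(k)} C^i(X_{\leq u},\mathcal{F}_u),
\end{equation}
and, under this isomorphism, $d'$ splits as $d'=\prod_{u\in X(k)}\delta_{\leq u}$, where $\delta_{\leq u}$ is the coboundary operator on the constant-coefficient cochain complex $C^\bullet(X_{\leq u},\mathcal{F}_u)$. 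Writing $\xi=\prod_{u\in X(k)}\xi_u$ with $\xi_u\in C^i(X_{\leq u},\mathcal{F}_u)$, the equation $d'\zeta=\xi$ becomes the family of local equations $\delta_{\leq u}\zeta_u=\xi_u$ for $u\in X(k)$.

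Second, I would invoke the hypothesis that $\xi\in\operatorname{im}(d')$ (which is how the lemma is used in the surrounding diagram chase, where $d'\xi=0$ and local acyclicity of $X$ forces $\xi$ into the image) to conclude that each $\xi_u$ lies in $\operatorname{im}(\delta_{\leq u})$. For every $u$ with $\xi_u\neq 0$, pick any preimage $\zeta_u\in C^{i-1}(X_{\leq u},\mathcal{F}_u)$; such a $\zeta_u$ exists because Definition~\ref{def:acyclic_complex} gives $H^i(X_{\leq u},\mathcal{F}_u)=0$ for $i\geq 1$. For $u$ with $\xi_u=0$, set $\zeta_u=0$. Assembling the $\zeta_u$'s defines $\zeta\in C^{i-1}(X,\mathcal{F}_k)$ with $d'\zeta=\xi$.

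Third, I would bound the block Hamming weight by a two-step counting argument. Since $\supp(\zeta_u)\subseteq X_{\leq u}(i-1)$, we have $\|\zeta_u\|\leq m^{i-1}_k$ by the definition of $m^{i-1}_k$. For any $(i-1)$-cell $\sigma$ and any $u\in X_{\geq \sigma}(k)$, the component $\zeta(\sigma,u)$ agrees with $\zeta_u(\sigma)$; thus
\begin{equation}
\|\zeta\|\;\leq\;\sum_{u\in X(k)\,:\,\zeta_u\neq 0}\|\zeta_u\|\;\leq\;m^{i-1}_k\cdot\#\{u\in X(k):\xi_u\neq 0\}.
\end{equation}
Finally, $\xi_u\neq 0$ forces the existence of some $\sigma\in X_{\leq u}(i)$ with $\xi(\sigma,u)\neq 0$, so each nonzero $\xi_u$ is charged to some $\sigma$ with $\xi(\sigma)\neq 0$, and each such $\sigma$ contributes at most $|X_{\geq\sigma}(k)|\leq M^k_i$ values of $u$. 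This yields $\#\{u:\xi_u\neq 0\}\leq M^k_i\cdot\|\xi\|$, and combining gives $\|\zeta\|\leq m^{i-1}_k M^k_i\|\xi\|$.

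There is no serious obstacle: the product decomposition of $d'$ has already been set up in Proposition~\ref{prop:d'_exact}, and the local acyclicity of $X$ guarantees solvability of the local equations. The only point requiring a little care is the double-counting inequality relating $\#\{u:\xi_u\neq 0\}$ to $\|\xi\|$, which uses the sparsity constants $m^{i-1}_k$ and $M^k_i$ in the right order; beyond that, the estimate is immediate.
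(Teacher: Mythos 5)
Your proof is correct and matches the paper's own argument in all essentials: both use the direct-product decomposition $C^\bullet(X,\mathcal{F}_k)=\prod_{u\in X(k)}C^\bullet(X_{\leq u},\mathcal{F}_u)$ from Proposition~\ref{prop:d'_exact} to reduce to local solvability on each $X_{\leq u}$, invoke local acyclicity to pick local preimages, and then count using $m^{i-1}_k$ and $M^k_i$. The paper phrases the counting as $\|\zeta(-,u)\|\leq m^{i-1}_k\|\xi(-,u)\|$ followed by $\sum_u\|\xi(-,u)\|\leq M^k_i\|\xi\|$ rather than your $\#\{u:\xi_u\neq 0\}\leq M^k_i\|\xi\|$, but these are the same estimate; you are also right that the lemma implicitly assumes $\xi\in\operatorname{im}(d')$, an assumption the paper's proof likewise relies on without stating.
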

\begin{proof}
    Recall in Proposition \ref{prop:d'_exact}, for each $u\in X(k)$, we define the map
    \begin{equation}
    \delta_{\leq u}: C^{i-1}(X_{\leq u},\mathcal{F}_u)\longrightarrow C^i(X_{\leq u},\mathcal{F}_u),
    \end{equation}
    which induces the cohomology of the subcomplex $X_{\leq u}$ with coefficients in $\mathcal{F}_u$. We have
    \begin{equation}
    \delta_{\leq u}\zeta(-,u)=\xi(-,u)
    \end{equation}
    when $\xi(-,u)=0$ for some $u\in X(k)$ and we can choose $\zeta(-,u)=0$. Otherwise, 
    \begin{equation}
    \frac{\|\xi(-,u)\|}{\|\zeta(-,u)\|}\geq\frac{1}{\|\zeta(-,u)\|}\geq\frac{1}{|X_{\leq u}(i-1)|}.
    \end{equation}
    Therefore, in either case, we have 
    \begin{equation}
    \|\zeta(-,u)\|\leq m^{i-1}_k \|\xi(-,u)\|.
    \end{equation}
    Hence, 
    \begin{equation}
    \|\zeta\|\leq \sum_{u\in X(k)}\|\zeta(-,u)\|\leq\sum_{u\in X(k)}m^{i-1}_k\|\xi(-,u)\|\leq m^{i-1}M^k_i\|\xi\|,
    \end{equation}
    where the last inequality is by the trivial bound $M^k_i\|\xi\|\geq\sum_{u\in X(k)}\|\xi(-,u)\|$.
\end{proof}

Substituting $\tilde{e}^{t-i-1,t}$ for $\zeta$ in the above lemma, we have 
\begin{align}
\begin{aligned}
    \|\tilde{e}^{t-i-1,t-1}\|&\leq \|s^{t-i-1,t-1}\|+m^{t-i-1}_tM^t_{t-i}\|\tilde{e}^{t-i,t}\|\\
    &\leq \|s^{t-i-1,t-1}\|+m^{t-i-1}_tM^t_{t-i}(\|s^{t-i,t}\|+\frac{\|\alpha_s\|}{\varepsilon_{\delta^\perp}(t-i)})\\
    &\leq(1+m^{t-i-1}_tM^t_{t-i}M^{t-i}_{t-i-1})\|s^{t-i-1,t-1}\|+m^{t-i-1}_tM^t_{t-i}\frac{\|\alpha_s\|}{\varepsilon_{\delta^\perp}(t-i)}.
\end{aligned} 
\end{align}
Then we can combine the inequality $\|\tilde{e}^{t-i-2,t-2}\|\leq\|s^{t-i-2,t-2}\|+\|\tilde{e}^{t-i-2,t-1}\|$ and Lemma \ref{dz=x upper bound for z using x} to give an upper bound of $\|\tilde{e}^{t-i-2,t-2}\|$ by $\|s^{t-i-2,t-2}\|$ and $\|\alpha_s\|$. By induction, we finally get
\begin{equation}
\|\tilde{e}^{0,i}\|\leq(1+\sum_{j=0}^{t-i-1}\prod_{k=0}^jm^k_{k+i+1}M^{k+i+1}_{k+1}M^{k+1}_k)\|s^{0,i}\|+(\prod_{j=0}^{t-i}m^j_{j+i}M^{j+i}_{j+1})\frac{\|\alpha_s\|}{\varepsilon_{\delta^\perp}(t-i)}.
\end{equation}
Note that 
\begin{align}
\begin{aligned}
    \|\tilde{e}\|&=\sum_{\sigma\in X(i)}\mathbbm{1}_{\tilde{x}(\sigma)\neq0}\leq\sum_{v\in X(0)}\sum_{\sigma\in X_{\geq v}(i)}\mathbbm{1}_{\tilde{x}(\sigma)\neq0}\\
    &=\sum_{v\in X(0)}\sum_{\sigma\in X_{\geq v}(i)}\mathbbm{1}_{\tilde{x}^{0,i+1}(v,\sigma)\neq0}\\
    &\leq\sum_{v\in X(0)}M^{i}_0\mathbbm{1}_{\tilde{x}^{0,i}(v-)\neq0}=M^{i}_0\|\tilde{x}^{0,i}\|.
\end{aligned}    
\end{align}
Together with the fact that $\|s^{0,i}\|=\|s^{0,i-1}\|\leq m^0_{i-1}\|s\|$ and Eq.~\eqref{eq:bound x^t-i,t by x}, we obtain
\begin{equation}
    \|\tilde{e}\|\leq m^0_{i-1}M^{i}_0 \Big(1+\sum_{j=0}^{t-i-1}\prod_{k=0}^jm^k_{k+i+1}M^{k+i+1}_{k+1}M^{k+1}_k+\frac{1}{\varepsilon_{\delta^\perp}(t-i)}\prod_{j=0}^{t-i}m^j_{j+i}M^{j+i}_{j+1}M^{j+1}_j\Big) \cdot\|s\|.
\end{equation}
Note that we always have an upper bound of (co)boundary expansion as follows. Suppose $\varphi\in C^{t-i}(X,\F^\perp)\setminus\ker\delta^\perp_{t-i}$ satisfies
\begin{equation}
\|\delta^\perp\varphi\|=\varepsilon_{\delta}^\perp(t-i)\cdot\|\varphi\|.
\end{equation}
Then the trivial bound
\begin{equation}
\|\delta^\perp\varphi\|\leq M^{t-i+1}_{t-i}\|\varphi\|   
\end{equation}
gives
\begin{equation}
\varepsilon_{\delta}^\perp(t-i)\leq M^{t-i+1}_{t-i}.
\end{equation} 
Therefore,
\begin{align}
\begin{aligned}
    \|\tilde{e}\|\leq m^0_{i-1}M^{i}_0M^{t-i+1}_{t-i} \Big( 1 & + \sum_{j=0}^{t-i-1}\prod_{k=0}^jm^k_{k+i+1}M^{k+i+1}_{k+1}M^{k+1}_k \\
    & + \frac{1}{M^{t-i+1}_{t-i}}\prod_{j=0}^{t-i}m^j_{j+i}M^{j+i}_{j+1}M^{j+1}_j\Big) \cdot\frac{\|s\|}{\varepsilon_{\delta^\perp}(t-i)}.
\end{aligned}
\end{align}
Note that $\tilde{e}-e\in\ker\partial_i$, and thus
\begin{equation}
\|\tilde{e}\|=\|e-(e-\tilde{e})\|\geq \text{dist}(e,\ker\partial_i).
\end{equation}
Together with $\|s\|=\|\partial e\|=\varepsilon_{\partial}(i)\cdot\text{dist}(e,\ker\partial_i)$ gives
\begin{equation}
\varepsilon_{\partial}(i)\geq\frac{\varepsilon_{\delta^\perp}(t-i)}{m^0_{i-1}M^{i}_0M^{t-i+1}_{t-i}\big(1+\sum_{j=0}^{t-i-1}\prod_{k=0}^jm^k_{k+i+1}M^{k+i+1}_{k+1}M^{k+1}_k+\frac{1}{M^{t-i+1}_{t-i}}\prod_{j=0}^{t-i}m^j_{j+i}M^{j+i}_{j+1}M^{j+1}_j\big)}.
\end{equation}
A similar argument gives $\varepsilon_{\delta^\perp}(t-i)\geq\Theta(\varepsilon_{\partial}(i))$, which completes the proof of (co)boundary expansion.


\section{Multiplicative structures and logical gates}\label{sec:multiplicative}

In algebraic topology, Poincaré duality is deeply intertwined with multiplicative structures such as the cup and cap products. It is therefore natural to ask whether sheaf codes admit such multiplicative structures. We show that this is indeed the case, and that the resulting structures substantially enrich the theory of sheaf codes. 

{Importantly for quantum computation, these multiplicative structures underlie the construction of multi-controlled-$Z$ gate in previous studies \cite{10.21468/SciPostPhys.14.4.065,Chen_2023,Wang_2024,,Breuckmann:2024cupandgate,lin2024transversalnoncliffordgatesquantum,Golowich_Lin2024,zhu2025transversalnoncliffordgatesqldpc}.
However, some of these constructions cannot be generalized to sheaf codes, and some would require extra structures, such as the multi-orthogonality on local codes. The multi-orthogonality is not likely to be compatible with a key product-expansion property on local codes~\cite{kalachev2025maximallyextendableproductcodes}, which is crucial for achieving the (nearly) optimal code parameters.  
Consequently, there is currently no known construction of constant-depth multi-controlled-$Z$ gate on (nearly) optimal qLDPC codes.} 

{By using both cup and cap products, we are able to write down a variety of multi-linear cohomological invariants on general sheaf codes {without} imposing any additional assumption on local codes. As a corollary of Theorem \ref{thm:Poincaré_duality}, we give the first example of transversal logical $\CZ$ gates on good qLDPC codes. 
We can also show that our method can yield logical $\CCZ$ gates that are provably logical, although the current technique is not yet sufficient to establish nontrivial lower bounds on $k_{\CCZ}$.
Our definition of cup product differs  from that in \cite{lin2024transversalnoncliffordgatesquantum}. First, that our cup product is exactly how it is defined using \v{C}ech cohomology as in Section \ref{sec:sheaf_cohomology}, while the one in  \cite{lin2024transversalnoncliffordgatesquantum} is not, hence we expect that a deeper mathematical investigation into our cup product will be useful to bound $k_{\CCZ}$. Second, our Poincar\'e duality isomorphism induced by the cap product is defined using tensor products of sheaves, which is fundamentally different from the construction in \cite{lin2024transversalnoncliffordgatesquantum}.}

In Section~\ref{subsec:cup_cap_pairing}, we define the cup product on simplicial complexes based on the key observation that the cohomology of sheaf codes is equivalent to \v{C}ech cohomology. We also introduce cap products as a new technique for studying sheaf codes. In Section~\ref{subsec:subdivision}, we extend all these concepts to the more general setting of cell complexes, which encompasses the cubical complexes considered in previous studies of qLDPC codes \cite{leverrier2022quantum,DHLV2022,Dinur2024sheaf}. In Section~\ref{subsec:duality-cap-product}, we complete the proof of the isomorphism of Poincar\'e duality induced by cap product. In Section~\ref{subsec:explicit}, we provide explicit calculations for cubical complexes. In Section~\ref{sec:explicit gate construction}, we build constant-depth logical $\CZ$ gates on good qLDPC codes, and present the conjectures about multi-linear cohomological invariants and non-Clifford gates on almost good qLTCs.

\subsection{Cup products, cap products and pairing on simplicial complexes}\label{subsec:cup_cap_pairing}
In {the theory of \v{C}ech cohomology}, given an open cover $\mathcal{U}=\{U_i\}_{i\in I}$ of $X$ with three sheaves $\F$, $\mathcal{G}$ and $\mathcal{H}$. Suppose we have a bilinear map of sheaves
\begin{equation}
\mu:\F\times\mathcal{G}\longrightarrow\mathcal{H},
\end{equation}
then for $\alpha\in C^i(\mathcal{U},\F)$, $\beta\in C^j(\mathcal{U},\mathcal{G})$, one may define $\alpha\smile\beta\in C^{i+j}(\mathcal{U},\mathcal{H})$ by 
\begin{equation}
(\alpha\smile\beta)_{i_0\cdots i_{p+q}}\coloneqq \mu(\alpha_{i_0\cdots i_p}|_{U_{i_0\cdots i_{p+q}}},\beta_{i_p\cdots i_{p+q}}|_{U_{i_0\cdots i_{p+q}}}).
\end{equation}

A canonical choice of $\mu$ is the tensor product defined as follows.

\begin{definition}[Tensor product of sheaves]
    Let $X$ be a topological space with two sheaves $\F$ and $\mathcal{G}$, the tensor product of the two sheaves $\F\otimes\mathcal{G}$ is defined to be the sheafification of the following presheaf, for any open set $U\subseteq X$,
    \begin{equation}
    U\longmapsto\F(U)\otimes \mathcal{G}(U).
    \end{equation}
\end{definition}
Since sheafification preserves the stalks, it is not hard to see that when $X$ is a poset with Alexandrov topology, the stalk of the tensor sheaf at $\sigma\in X$
\begin{equation}
(\F\otimes\mathcal{G})_\sigma=\F_\sigma\otimes\mathcal{G}_\sigma,
\end{equation}
and when $\sigma\leq \tau$, the restriction map is
\begin{equation}
(\F\otimes\mathcal{G})_{\sigma,\tau}=\F_{\sigma,\tau}\otimes\mathcal{G}_{\sigma,\tau}.
\end{equation}

Now we are able to define cup products on simplicial complexes by translating \v{C}ech theory into the context of sheaf codes.

\begin{definition}[Type-$\mathrm{I}$ cup product]
	Given cochain complexes $C^\bullet(X, \mathcal{F})$ and $C^\bullet(X,\mathcal{G})$ defined on a simplicial complex $X$, we define the  \te{type-$\mathrm{I}$ cup product}
	\begin{align}
		\smile: C^p(X,\mathcal{F}) \times C^q(X,\mathcal{G}) \rightarrow C^{p+q}(X,\mathcal{F} \otimes \mathcal{G})
	\end{align}
	as follows: for $\alpha\in C^p(X,\F)$, $\beta\in C^{q}(X,\mathcal{G})$, $\sigma=[ v_0,...,v_{p+q}]$ a $(p+q)$-simplex, let ${}_p\sigma=[v_0,v_1,...,v_p]$ be the \te{former p-face} and $\sigma_q=[v_p,\cdots, v_{p+q}]$ be the \te{later q-face},
	\begin{align}
		(\alpha \smile \beta)(\sigma) \coloneqq  \mathcal{F}_{{}_p\sigma, \sigma}(\alpha( {}_p\sigma ) ) \otimes \mathcal{G}_{\sigma_q, \sigma}(\beta( \sigma_q ) ) .
	\end{align}
\end{definition}

We will abuse the notation $\smile$ for all three versions without causing confusion, since the cup product is uniquely determined by its domain. If different types appear simultaneously in a formula, we will write $\smile_{\mathrm{I}},\,\smile_{\mathrm{II}},\, \smile_{\mathrm{III}}$ to distinguish them.

\begin{proposition}\label{prop:cup_Leibniz}
	The type-$\mathrm{I}$ cup product satisfies the \te{Leibniz rule}:
	\begin{align}
		\delta(\alpha \smile \beta) = (\delta \alpha) \smile \beta + \alpha \smile (\delta \beta). 
	\end{align}
\end{proposition}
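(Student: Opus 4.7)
The plan is a direct term-by-term verification after evaluating both sides on an arbitrary $(p+q+1)$-simplex $\tau=[v_0,\ldots,v_{p+q+1}]$. Writing $\tau^{(i)}$ for the face obtained by omitting $v_i$, the sheaved coboundary formula together with $(\F\otimes\mathcal{G})_{\sigma,\sigma'}=\F_{\sigma,\sigma'}\otimes\mathcal{G}_{\sigma,\sigma'}$ on stalks gives
\begin{equation*}
\delta(\alpha\smile\beta)(\tau)\;=\;\sum_{i=0}^{p+q+1}\F_{{}_p\tau^{(i)},\tau}\bigl(\alpha({}_p\tau^{(i)})\bigr)\otimes\mathcal{G}_{\tau^{(i)}_q,\tau}\bigl(\beta(\tau^{(i)}_q)\bigr),
\end{equation*}
where functoriality of the restriction maps has been used to push every restriction directly into $\tau$, and ${}_p\tau^{(i)}$ and $\tau^{(i)}_q$ denote the former $p$-face and later $q$-face of $\tau^{(i)}$.

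Next I would split the sum according to the position of the omitted vertex relative to the splitting index $p$. For $i<p$ the later $q$-face of $\tau^{(i)}$ equals $[v_{p+1},\ldots,v_{p+q+1}]$, so the $\mathcal{G}$-factor is independent of $i$ and the $\F$-factor ranges over the $p$-faces of ${}_{p+1}\tau=[v_0,\ldots,v_{p+1}]$; this matches the $i$th summand of $(\delta\alpha\smile\beta)(\tau)$ after expanding $\delta\alpha$ on ${}_{p+1}\tau$. The symmetric argument, with $v_p$ fixed in the former face, handles $i>p+1$ and reproduces the summands of $(\alpha\smile\delta\beta)(\tau)$ with index $j=i-p\in\{2,\ldots,q+1\}$. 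The two straddling cases $i=p$ and $i=p+1$ reproduce the $v_p$-removal summand of $\delta\alpha$ and the $v_{p+1}$-removal summand of $\delta\beta$ respectively, using that omitting $v_p$ from $\tau$ shifts $v_{p+1}$ into the former face of $\tau^{(p)}$ while omitting $v_{p+1}$ keeps $v_p$ there.

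The only delicate point is the pair of boundary summands on the right that have no counterpart in $\delta(\alpha\smile\beta)$: the $v_{p+1}$-removal summand of $\delta\alpha$ inside $(\delta\alpha\smile\beta)(\tau)$ and the $v_p$-removal summand of $\delta\beta$ inside $(\alpha\smile\delta\beta)(\tau)$. Both evaluate to
\begin{equation*}
\F_{[v_0,\ldots,v_p],\tau}\bigl(\alpha([v_0,\ldots,v_p])\bigr)\otimes\mathcal{G}_{[v_{p+1},\ldots,v_{p+q+1}],\tau}\bigl(\beta([v_{p+1},\ldots,v_{p+q+1}])\bigr),
\end{equation*}
and their sum vanishes because we work in characteristic two. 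This is exactly the point where the standard signed Leibniz rule $\delta(\alpha\smile\beta)=\delta\alpha\smile\beta+(-1)^{p}\alpha\smile\delta\beta$ degenerates into the unsigned identity of the proposition, and it is the sole nontrivial step in the argument; the rest is bookkeeping with the functoriality of the sheaf restriction maps.
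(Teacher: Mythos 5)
Your proof is correct and follows essentially the same strategy as the paper: expand $\delta(\alpha\smile\beta)$ over codimension-one faces, use functoriality of restriction maps, match the terms against $\delta\alpha\smile\beta$ and $\alpha\smile\delta\beta$, and observe that the two overlapping "cross" terms $\F_{{}_p\tau,\tau}(\alpha({}_p\tau))\otimes\mathcal{G}_{\tau_q,\tau}(\beta(\tau_q))$ coincide and cancel in characteristic two. The paper phrases this by adding that term twice to the left-hand side rather than cancelling it on the right, but this is the same observation.
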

\begin{proof}
For a $(p+q+1)$-cell $\sigma = [v_0,...,v_{p+q+1}]$,
\begin{align}
& ( \delta(\sigma \smile \beta) )(\sigma) 
		= \sum_{\sigma' \lessdot \sigma} (\mathcal{F} \otimes \mathcal{G})_{\sigma',\sigma}[ (\alpha \smile \beta)(\sigma') ] \notag \\
		= & \sum_{i = 0}^p (\mathcal{F} \otimes \mathcal{G})_{\sigma \setminus v_i,\sigma}[ 
		\mathcal{F}_{{}_{p+1}\sigma \setminus v_i, \sigma \setminus v_i}(\alpha( {}_{p+1}\sigma \setminus v_i ) ) \otimes \mathcal{G}_{\sigma_q, \sigma \setminus v_i}(\beta( \sigma_q ) )  ] \notag \\
		& + \sum_{i = p+1}^{p+q+1} (\mathcal{F} \otimes \mathcal{G})_{\sigma \setminus v_i,\sigma}[ 
		\mathcal{F}_{{}_p\sigma, \sigma \setminus v_i}(\alpha( {}_p\sigma ) ) \otimes \mathcal{G}_{\sigma_{q+1} \setminus v_i, \sigma \setminus v_i }(\beta( \sigma_{q+1} \setminus v_i) )  ] \\ 
		= & \sum_{i = 0}^p 
		\mathcal{F}_{{}_{p+1}\sigma \setminus v_i, \sigma }(\alpha( {}_{p+1}\sigma \setminus v_i ) ) \otimes \mathcal{G}_{\sigma_q, \sigma }(\beta( \sigma_q ) ) 
		+ \sum_{i = p+1}^{p+q+1} 
		\mathcal{F}_{{}_p\sigma, \sigma }(\alpha( {}_p\sigma ) ) \otimes \mathcal{G}_{\sigma_{q+1} \setminus v_i, \sigma }(\beta( \sigma_{q+1} \setminus v_i) )  \notag \\ 
		= & \sum_{i = 0}^{p+1}
		\mathcal{F}_{{}_{p+1}\sigma \setminus v_i, \sigma }(\alpha( {}_{p+1}\sigma \setminus v_i ) ) \otimes \mathcal{G}_{\sigma_q, \sigma }(\beta( \sigma_q ) ) 
		+ \sum_{i = p}^{p+q+1} 
		\mathcal{F}_{{}_p\sigma, \sigma }(\alpha( {}_p\sigma ) ) \otimes \mathcal{G}_{\sigma_{q+1} \setminus v_i, \sigma }(\beta( \sigma_{q+1} \setminus v_i) ), \notag
\end{align} 
where in the last term, we add 
\begin{equation}
\mathcal{F}_{{}_{p}\sigma, \sigma }(\alpha( {}_{p}\sigma ) ) \otimes \mathcal{G}_{\sigma_q, \sigma }(\beta( \sigma_q ) )
\end{equation}
twice and hence the equality. On the other hand,
\begin{align}
	\begin{aligned}
		( (\delta \alpha) \smile \beta) (\sigma) 
		= & \mathcal{F}_{{}_{p+1}\sigma, \sigma}( (\delta\alpha) ( {}_{p+1}\sigma ) ) \otimes \mathcal{G}_{\sigma_q, \sigma}(\beta( \sigma_q ) ) \\
		= & \sum_{i = 0}^{p+1} \mathcal{F}_{{}_{p+1}\sigma, \sigma}( \mathcal{F}_{{}_{p+1}\sigma \setminus v_i, {}_{p+1}\sigma} [\alpha ( {}_{p+1}\sigma \setminus v_i ) ] \otimes \mathcal{G}_{\sigma_q, \sigma}(\beta( \sigma_q ) ) \\
		=  & \sum_{i = 0}^{p+1} \mathcal{F}_{{}_{p+1}\sigma \setminus v_i, \sigma} (\alpha ( {}_{p+1}\sigma \setminus v_i ) ) \otimes \mathcal{G}_{\sigma_q, \sigma}(\beta( \sigma_q ) ) 
	\end{aligned}
\end{align}
Similarly,
\begin{align}
	(\alpha \smile (\delta \beta))(\sigma) = 
	\sum_{i = p}^{p+q+1} 
	\mathcal{F}_{{}_p\sigma, \sigma }(\alpha( {}_p\sigma ) ) \otimes \mathcal{G}_{\sigma_{q+1} \setminus v_i, \sigma }(\beta( \sigma_{q+1} \setminus v_i) )
\end{align}
and this finishes the proof.
\end{proof}

\begin{proposition}\label{prop:cup_assoc}
	The type-$\mathrm{I}$ cup product is associative:
	\begin{align}
		(\alpha \smile \beta) \smile \gamma = \alpha \smile (\beta \smile \gamma)   
	\end{align}
	for arbitrary $\alpha \in C^p(X, \mathcal{F})$, $\beta \in C^q(X, \mathcal{G})$ and $\gamma \in C^r(X, \mathcal{H})$.
\end{proposition}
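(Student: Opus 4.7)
The plan is to verify the identity by direct evaluation on an arbitrary $(p+q+r)$-simplex $\sigma = [v_0, \ldots, v_{p+q+r}]$, using only the definition of $\smile_{\mathrm{I}}$ together with the presheaf compatibility $\mathcal{F}_{\sigma,\sigma''} = \mathcal{F}_{\sigma',\sigma''} \circ \mathcal{F}_{\sigma,\sigma'}$ (and likewise for $\mathcal{G}, \mathcal{H}$) whenever $\sigma \le \sigma' \le \sigma''$. The associativity should be a formal bookkeeping consequence once the restrictions collapse correctly on both sides.

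First, I would unfold the LHS. Evaluating $((\alpha \smile \beta) \smile \gamma)(\sigma)$ by the definition gives
\begin{equation*}
(\mathcal{F} \otimes \mathcal{G})_{{}_{p+q}\sigma,\sigma}\bigl((\alpha \smile \beta)({}_{p+q}\sigma)\bigr) \otimes \mathcal{H}_{\sigma_r,\sigma}\bigl(\gamma(\sigma_r)\bigr),
\end{equation*}
and unfolding $(\alpha \smile \beta)({}_{p+q}\sigma)$ produces $\mathcal{F}_{{}_p\sigma,{}_{p+q}\sigma}(\alpha({}_p\sigma)) \otimes \mathcal{G}_{[v_p,\ldots,v_{p+q}],{}_{p+q}\sigma}(\beta([v_p,\ldots,v_{p+q}]))$. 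Here I use that for the tensor product sheaf the restriction factors as $(\mathcal{F} \otimes \mathcal{G})_{\tau,\tau'} = \mathcal{F}_{\tau,\tau'} \otimes \mathcal{G}_{\tau,\tau'}$ on stalks, which follows from the definition since sheafification preserves stalks. Applying the presheaf composition rule along the chains ${}_p\sigma \le {}_{p+q}\sigma \le \sigma$ and $[v_p,\ldots,v_{p+q}] \le {}_{p+q}\sigma \le \sigma$, the composed restrictions collapse to $\mathcal{F}_{{}_p\sigma,\sigma}$ and $\mathcal{G}_{[v_p,\ldots,v_{p+q}],\sigma}$ respectively, so the LHS reduces to the triple tensor
\begin{equation*}
\mathcal{F}_{{}_p\sigma,\sigma}\bigl(\alpha({}_p\sigma)\bigr) \otimes \mathcal{G}_{[v_p,\ldots,v_{p+q}],\sigma}\bigl(\beta([v_p,\ldots,v_{p+q}])\bigr) \otimes \mathcal{H}_{\sigma_r,\sigma}\bigl(\gamma(\sigma_r)\bigr).
\end{equation*}

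Next I would repeat the symmetric unfolding on the RHS: expand $(\alpha \smile (\beta \smile \gamma))(\sigma)$ as $\mathcal{F}_{{}_p\sigma,\sigma}(\alpha({}_p\sigma)) \otimes (\mathcal{G} \otimes \mathcal{H})_{\sigma_{q+r},\sigma}((\beta \smile \gamma)(\sigma_{q+r}))$, then expand the inner cup via ${}_q(\sigma_{q+r}) = [v_p,\ldots,v_{p+q}]$ and $(\sigma_{q+r})_r = [v_{p+q},\ldots,v_{p+q+r}] = \sigma_r$. Applying the same tensor-product restriction identity and presheaf composition along the chains $[v_p,\ldots,v_{p+q}] \le \sigma_{q+r} \le \sigma$ and $\sigma_r \le \sigma_{q+r} \le \sigma$ produces the same triple tensor as above. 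Comparison of the two expressions finishes the argument.

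I do not foresee a conceptual obstacle; the only points that demand a little care are (i) confirming that the restriction maps of $\mathcal{F} \otimes \mathcal{G}$ factor as $\mathcal{F}_{\tau,\tau'} \otimes \mathcal{G}_{\tau,\tau'}$ on stalks, which is already observed in the excerpt, and (ii) keeping the indexing of the ``former'' and ``later'' faces consistent so that the middle tensor factor (the $\beta$ term) has the same input and restriction target on both sides. Once (ii) is set up carefully, the equality is manifest. No Leibniz-style cancellation, signs, or induction are needed; associativity is purely a consequence of associativity of $\otimes$ together with functoriality of the restrictions.
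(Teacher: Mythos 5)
Your proposal is correct and follows essentially the same route as the paper: both evaluate on a $(p+q+r)$-simplex, unfold both sides using the definition and the stalk-level factoring $(\mathcal{F}\otimes\mathcal{G})_{\tau,\tau'}=\mathcal{F}_{\tau,\tau'}\otimes\mathcal{G}_{\tau,\tau'}$, collapse the composed restrictions via the presheaf law, and then observe that $({}_{p+q}\sigma)_q = {}_q(\sigma_{q+r}) = [v_p,\ldots,v_{p+q}]$, so both sides equal the same triple tensor.
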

\begin{proof}
Suppose $\sigma=[v_0,v_1,\cdots,v_{p+q+r}]\in X(p+q+r)$, then by definition
\begin{align}
    \begin{aligned}
        ((\alpha\smile\beta)\smile\gamma)(\sigma)&=((\mathcal{F}\otimes\mathcal{G})_{{}_{p+q}\sigma,\sigma}(\alpha\smile\beta)({}_{p+q}\sigma))\otimes\mathcal{H}_{\sigma_r,\sigma}\beta(\sigma_r)\\
        &=(\mathcal{F}\otimes\mathcal{G})_{{}_{p+q}\sigma,\sigma}(\mathcal{F}_{{}_p\sigma,{}_{p+q}\sigma}\alpha({}_p\sigma)\otimes\mathcal{G}_{({}_{p+q}\sigma)_q,{}_{p+q}\sigma}\beta(({}_{p+q}\sigma)_q))\otimes\mathcal{H}_{\sigma_r,\sigma}(\sigma_r)\\        &=\mathcal{F}_{{}_p\sigma,\sigma}\alpha({}_p\sigma)\otimes\mathcal{G}_{({}_{p+q}\sigma)_q,\sigma}\beta(({}_{p+q}\sigma)_q)\otimes\mathcal{H}_{\sigma_r,\sigma}(\sigma_r).
    \end{aligned}
\end{align}
Similarly, we have
\begin{equation}
(\alpha\smile(\beta\smile\gamma))(\sigma)=\mathcal{F}_{{}_p\sigma,\sigma}\alpha({}_p\sigma)\otimes\mathcal{G}_{{}_{q}(\sigma_{q+r}),\sigma}\beta({}_{q}(\sigma_{q+r}))\otimes\mathcal{H}_{\sigma_r,\sigma}(\sigma_r)
\end{equation}
Note that $({}_{p+q}\sigma)_q={}_{q}(\sigma_{q+r})=[v_p,v_{p+1},\cdots,v_q]$, hence the proof is done.
\end{proof}

Motivated by the discussion at the beginning of this section, we can introduce the following two additional versions of the cup product.
\begin{definition}[Type-$\mathrm{II}$ cup product]
    Given a sheaf $\F$ on $X$, we define the \te{type-$\mathrm{II}$ cup product} 
    \begin{equation}
    \smile:C^p(X,\F)\times C^q(X,\F)\longrightarrow C^{p+q}(X,\mathbb{F})
    \end{equation}
    as follows: for $\alpha\in C^p(X,\F)$, $\beta\in C^{q}(X,\F)$ and $\sigma\in X(p+q)$,
    \begin{equation}
    (\alpha\smile\beta)(\sigma)\coloneqq  \langle\mathcal{F}_{{}_p\sigma, \sigma}(\alpha( {}_p\sigma ) ), \mathcal{F}_{\sigma_q, \sigma}(\beta( \sigma_q ) )   \rangle.
    \end{equation}
\end{definition}

\begin{definition}[Type-$\mathrm{III}$ cup product]
    Given a sheaf $\F$ on $X$, we define the \te{type-$\mathrm{III}$ cup product} 
    \begin{equation}
    \smile:C^p(X,\mathbb{F})\times C^q(X,\F)\longrightarrow C^{p+q}(X,\F)
    \end{equation}
   as follows: for $\alpha\in C^p(X,\mathbb{F})$, $\beta\in C^{q}(X,\F)$ and $\sigma\in X(p+q)$,
    \begin{equation}
    (\alpha\smile\beta)(\sigma)\coloneqq  \alpha( {}_p\sigma ) \cdot\mathcal{F}_{\sigma_q, \sigma}(\beta( \sigma_q ) ) .
    \end{equation}
\end{definition}
Note that the type-$\mathrm{III}$ cup product can be viewed as a degenerate case of type-$\mathrm{I}$ when one of the sheaf is the constant sheaf $\mathbb{F}$. One may easily prove that all these cup products satisfy the Leibniz rule.

Recall that the \emph{pairing} between the $i$-th cochain $C^i(X,\mathcal{F})$ and chain $C_i(X,\mathcal{F})$ 
	\begin{align}
		\langle -, - \rangle: C^i(X,\mathcal{F}) \times C_i(X,\mathcal{F}) \rightarrow \mathbb{F}
	\end{align}
	is defined as by
	\begin{align}
		\langle \alpha,x \rangle = \sum_{\sigma \in X(i)} \langle \alpha(\sigma), x(\sigma) \rangle 
	\end{align}
    where $\langle \alpha(\sigma), x(\sigma) \rangle$ is the standard inner product in $\F_\sigma$.
	
    Given any cycle $x$,
	\begin{align}
		\langle -, x \rangle: C^i(X,\mathcal{F}) \times C_i(X,\mathcal{F}) \rightarrow \mathbb{F}_2
	\end{align}
	is a single-linear cohomological invariant because suppose $\alpha = \delta \beta$ as a boundary, then
	\begin{equation}
    \begin{aligned}
        \langle\delta\beta,x\rangle&=\sum_{\sigma\in X(i)}\sum_{\sigma'\lessdot\sigma}\langle\mathcal{F}_{\sigma',\sigma} \beta(\sigma'),x(\sigma)\rangle\\
        &=\sum_{\sigma'\in X(i-1)}\sum_{\sigma\gtrdot\sigma'}\langle\beta(\sigma'),\mathcal{F}_{\sigma',\sigma}^T x(\sigma)\rangle\\
        &=\langle\beta,\partial x\rangle=0.
    \end{aligned}
\end{equation}

Therefore, the cup product together with the pairing is sufficient to construct multi-linear cohomological invariants. Actually, we can build more multiplicative structures. To begin with, we define the following cap product as a ``partial pairing".

\begin{definition}[Type-$\mathrm{I}$ cap product]\label{def:cap-I}
Given (co)chain complexes $C^\bullet(X, \mathcal{F})$ and $C_\bullet(X,\mathcal{F})$ defined on a simplicial complex $X$, we define the (type-$\mathrm{I}$) \te{cap product} 
	\begin{align}
		\frown: C^p(X,\mathcal{F}) \times C_{p+q}(X,\mathcal{F}) \rightarrow C_q(X,\mathbb{F})
	\end{align}
    as follows: for any $\alpha \in C^p(X,\mathcal{F})$ and $x\in C_{p+q}(X,\mathcal{F})$, let $x=\sum_{\sigma\in X(p+q)}x(\sigma)\cdot\sigma$,
	\begin{align}
		\alpha \frown x = \sum_{\sigma \in X(p+q)} \langle \alpha({}_p\sigma), \mathcal{F}_{{}_p\sigma, \sigma}^T (x(\sigma)) \rangle \sigma_q.
	\end{align}
\end{definition}

As in the case of cup products, we have two additional versions of cap products. 
 
\begin{proposition}[Leibniz rule for type-$\mathrm{I}$ cap product]\label{prop:cap_Leibniz}
    $\partial(\alpha\frown x)=\alpha\frown (\partial x)+(\delta \alpha)\frown x$.
\end{proposition}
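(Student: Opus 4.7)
The plan is to mimic the proof of Proposition~\ref{prop:cup_Leibniz}: I would verify the identity on each $(p+q)$-simplex generator separately and expand all three terms as sums over vertex deletions of $\sigma$, then match them up. Since we work in characteristic~$2$, all signs vanish and the argument reduces to a parity count. By linearity, it suffices to take $x = c\cdot\sigma$ for a single $(p+q)$-simplex $\sigma = [v_0,\ldots,v_{p+q}]$ with coefficient $c\in\mathcal{F}_\sigma$.

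First I would expand $\partial(\alpha\frown x)$. By Definition~\ref{def:cap-I}, $\alpha\frown x = d\cdot\sigma_q$ with scalar $d = \langle\alpha({}_p\sigma),\mathcal{F}_{{}_p\sigma,\sigma}^T(c)\rangle$, so $\partial$ (acting on the constant sheaf $\mathbb{F}$) produces $d\sum_{i=0}^q[v_p,\ldots,\hat v_{p+i},\ldots,v_{p+q}]$. Next, $\alpha\frown(\partial x)$ expands via $(\partial x)(\sigma\setminus v_j) = \mathcal{F}_{\sigma\setminus v_j,\sigma}^T(c)$ followed by applying the cap product to each codimension-one face $\sigma\setminus v_j$ of $\sigma$, producing one term for every $j=0,\ldots,p+q$. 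Finally, $(\delta\alpha)\frown x$ expands by writing $(\delta\alpha)({}_{p+1}\sigma)$ via \eqref{eq: sheaved coboundary map} as a sum of deletions from ${}_{p+1}\sigma = [v_0,\ldots,v_{p+1}]$, producing one term for every $j = 0,\ldots,p+1$.

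To match these expansions, I would group the terms according to which vertex $v_j$ of $\sigma$ is deleted, using the presheaf composition identity $\mathcal{F}_{\tau'',\sigma}^T = \mathcal{F}_{\tau'',\tau'}^T\circ\mathcal{F}_{\tau',\sigma}^T$ whenever $\tau''\le\tau'\le\sigma$, together with the adjunction $\langle\mathcal{F}_{\tau',\tau}(u),w\rangle = \langle u,\mathcal{F}_{\tau',\tau}^T(w)\rangle$. The deletions with $j = 0,\ldots,p-1$ in $\alpha\frown(\partial x)$ land on the $q$-simplex $[v_{p+1},\ldots,v_{p+q}]$ with scalar $\langle\alpha({}_{p+1}\sigma\setminus v_j),\mathcal{F}_{{}_{p+1}\sigma\setminus v_j,\sigma}^T(c)\rangle$, and they coincide with the $j = 0,\ldots,p-1$ terms of $(\delta\alpha)\frown x$, hence cancel pairwise in characteristic~$2$. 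The deletion $j = p$ contributes identical terms to $\alpha\frown(\partial x)$ and to $(\delta\alpha)\frown x$ (both equal $\langle\alpha([v_0,\ldots,v_{p-1},v_{p+1}]),\mathcal{F}_{[v_0,\ldots,v_{p-1},v_{p+1}],\sigma}^T(c)\rangle\cdot[v_{p+1},\ldots,v_{p+q}]$), and these also cancel. The $j=p+1$ term of $(\delta\alpha)\frown x$ survives and equals $d\cdot[v_{p+1},\ldots,v_{p+q}]$, which is exactly the $i=0$ piece of $\partial(\alpha\frown x)$. The remaining deletions $j=p+1,\ldots,p+q$ in $\alpha\frown(\partial x)$ each produce a scalar equal to $d$ (again by composing restriction maps back to $\mathcal{F}_{{}_p\sigma,\sigma}^T$) and reproduce the remaining $i=1,\ldots,q$ terms of $\partial(\alpha\frown x)$.

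The main obstacle is purely bookkeeping: one must keep straight the various face inclusions and their corresponding restriction maps and apply functoriality of $\mathcal{F}$ consistently at each deletion. Conceptually, the argument is the sheaf-theoretic lift of the classical topological Leibniz rule for the cap product, and its overall structure is dictated by the analogous proof for the cup product given in Proposition~\ref{prop:cup_Leibniz}.
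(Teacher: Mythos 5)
Your proof is correct and follows essentially the same route as the paper's: restrict to a single simplex generator, expand all three sides as sums over vertex deletions, and let functoriality of the restriction maps plus characteristic-$2$ cancellation do the bookkeeping. (One small slip in wording: $[v_{p+1},\ldots,v_{p+q}]$ has $q$ vertices and is therefore a $(q-1)$-simplex, not a $q$-simplex; this does not affect the argument.)
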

\begin{proof}
    Since the cap product is a bilinear map, it suffices to show when $x$ contains only a single term $x=x(\sigma)\cdot \sigma$, where $\sigma=[v_0,\cdots,v_{p+q+1}]$.
    \begin{align}
    \alpha\frown(\partial x)&=\alpha\frown(\sum_{i=0}^{p+q+1}(\mathcal{F}_{\sigma\setminus v_i,\sigma}^Tx(\sigma))\cdot(\sigma\setminus v_i)) \\
            &=\sum_{i=0}^{p}\langle \alpha({}_{p+1}\sigma\setminus v_i), \mathcal{F}_{{}_{p+1}\sigma\setminus v_i,\sigma}^T(x(\sigma))\rangle\cdot \sigma_q
            +\sum_{i=p+1}^{p+q+1}\langle \alpha({}_{p}\sigma), \mathcal{F}_{{}_{p}\sigma,\sigma}^T(x(\sigma))\rangle\cdot (\sigma_{q+1}\setminus v_i) \notag \\
            &=\sum_{i=0}^{p+1}\langle \alpha({}_{p+1}\sigma\setminus v_i), \mathcal{F}_{{}_{p+1}\sigma\setminus v_i,\sigma}^T(x(\sigma))\rangle\cdot \sigma_q
            +\sum_{i=p}^{p+q+1}\langle \alpha({}_{p}\sigma), \mathcal{F}_{{}_{p}\sigma,\sigma}^T(x(\sigma))\rangle\cdot (\sigma_{q+1}\setminus v_i), \notag
    \end{align}
    where the last equality is obtained by adding $\langle\alpha({}_p\sigma),\mathcal{F}_{{}_p\sigma,\sigma}^Tx(\sigma)\rangle\cdot \sigma_{q}$ twice. Note that
\begin{align}
    \begin{aligned}
        (\delta\alpha)\frown x&=\langle(\delta\alpha)({}_{p+1}\sigma),\mathcal{F}_{{}_{p+1}\sigma,\sigma}^Tx(\sigma)\rangle\cdot\sigma_q\\
        &=\langle\sum_{i=0}^{p+1}\mathcal{F}_{{}_{p+1}\sigma\setminus v_i,{}_{p+1}\sigma}\alpha({}_{p+1}\sigma\setminus v_i),\mathcal{F}_{{}_{p+1}\sigma,\sigma}^Tx(\sigma)\rangle\cdot\sigma_q\\
        &=\sum_{i=0}^{p+1}\langle \alpha({}_{p+1}\sigma\setminus v_i), \mathcal{F}_{{}_{p+1}\sigma\setminus v_i,\sigma}^T(x(\sigma))\rangle\cdot \sigma_q.
    \end{aligned}    
\end{align}
On the other hand,
\begin{align}
    \begin{aligned}
        \partial(\alpha\frown x)&=\partial(\langle\alpha({}_{p}\sigma),\mathcal{F}^T_{{}_p\sigma,\sigma}x(\sigma)\rangle\cdot\sigma_{q+1})\\
        &=\sum_{i=p}^{p+q+1}\langle \alpha({}_{p}\sigma), \mathcal{F}_{{}_{p}\sigma,\sigma}^T(x(\sigma))\rangle\cdot (\sigma_{q+1}\setminus v_i).
    \end{aligned}
\end{align}
Therefore, we have
\begin{equation}
\partial(\alpha\frown x)=\alpha\frown (\partial x)+(\delta \alpha)\frown x.
\end{equation}
\end{proof}


\begin{definition}[Type-$\mathrm{II}$ cap product]
Given two sheaves $\mathcal{F},\mathcal{G}$ on $X$, we define the \te{type-$\mathrm{II}$ cap product}
    \begin{equation}
\frown:C^p(X,\mathcal{F}\otimes\mathcal{G})\times C_{p+q}(X,\mathcal{F})\longrightarrow C_q(X,\mathcal{G})
\end{equation}
as follows. Suppose $\alpha \in C^{p}(X,\mathcal{F}\otimes\mathcal{G})$ and $x\in C_{p+q}(X,\mathcal{F})$, inspired by the pairing in the previous definition of cap product, we define
\begin{equation}
\alpha\frown x\coloneqq \sum_{\sigma\in X(p+q)}\mathcal{G}^T_{\sigma_q,\sigma}\mathcal{G}_{{}_p\sigma,\sigma}\langle\alpha({}_p\sigma),\mathcal{F}^T_{{}_p\sigma,\sigma}x(\sigma)\rangle_{\mathcal{F}}\cdot \sigma_q,
\end{equation}
where for each cell $\tau\in X$, the pairing
\begin{equation}
\langle-,-\rangle_{\mathcal{F}}:(\mathcal{F}_\tau\otimes\mathcal{G}_\tau)\times\mathcal{F}_\tau^*\longrightarrow\mathcal{G}_\tau
\end{equation}
is exactly the partial pairing between $\mathcal{F}_\tau$ and $\mathcal{F}^*_\tau$. 
\end{definition}

One may imitate the proof of Proposition \ref{prop:cap_Leibniz} to get the following proposition.

\begin{proposition}[Leibniz rule for type-$\mathrm{II}$ cap product]\label{prop of cap}
    For $\alpha\in C^p(X,\mathcal{F}\otimes\mathcal{G})$ and $x\in C_{p+q+1}(X,\mathcal{F})$, we have $\alpha\frown x\in C_{q+1}(X,\mathcal{G})$ and  $\partial(\alpha\frown x)=(\delta\alpha)\frown x+\alpha\frown (\partial x)$
\end{proposition}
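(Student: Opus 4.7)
The plan is to mirror the proof of Proposition~\ref{prop:cap_Leibniz} for the type-$\mathrm{I}$ cap product, with the additional bookkeeping required by the extra sheaf $\mathcal{G}$. By bilinearity it suffices to verify the identity for a single-term chain $x = x(\sigma)\cdot\sigma$, where $\sigma=[v_0,\ldots,v_{p+q+1}]$. To streamline the manipulations, I will first rewrite the type-$\mathrm{II}$ cap product in the equivalent form
\begin{equation}
\alpha\frown x \;=\; \sum_{\sigma\in X(p+q+1)} \mathcal{G}^T_{\sigma_{q+1},\sigma}\bigl\langle (\mathcal{F}\otimes\mathcal{G})_{{}_p\sigma,\sigma}\alpha({}_p\sigma),\, x(\sigma)\bigr\rangle_{\mathcal{F}}\cdot \sigma_{q+1},
\end{equation}
which follows from the adjunction $\langle\mathcal{F}_{\tau',\tau}(f),\beta\rangle=\langle f,\mathcal{F}^T_{\tau',\tau}(\beta)\rangle$ together with the splitting $(\mathcal{F}\otimes\mathcal{G})_{\tau',\tau}=\mathcal{F}_{\tau',\tau}\otimes\mathcal{G}_{\tau',\tau}$.

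Next I would expand each of the three terms directly from the definitions. For $\alpha\frown(\partial x)$, the boundary $\partial x=\sum_{i=0}^{p+q+1}(\mathcal{F}^T_{\sigma\setminus v_i,\sigma}x(\sigma))\cdot(\sigma\setminus v_i)$ splits the resulting sum into two ranges: for $i\in\{0,\ldots,p\}$ the former $p$-face of $\sigma\setminus v_i$ is ${}_{p+1}\sigma\setminus v_i$ and its later $(q+1)$-face is $\sigma_{q+1}$, while for $i\in\{p+1,\ldots,p+q+1\}$ the former $p$-face is ${}_p\sigma$ and the later $q$-face is $\sigma_{q+1}\setminus v_i$. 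For $(\delta\alpha)\frown x$, I use $(\delta\alpha)({}_{p+1}\sigma)=\sum_{i=0}^{p+1}(\mathcal{F}\otimes\mathcal{G})_{{}_{p+1}\sigma\setminus v_i,{}_{p+1}\sigma}\alpha({}_{p+1}\sigma\setminus v_i)$ together with the functorial identity $(\mathcal{F}\otimes\mathcal{G})_{{}_{p+1}\sigma,\sigma}\circ(\mathcal{F}\otimes\mathcal{G})_{{}_{p+1}\sigma\setminus v_i,{}_{p+1}\sigma}=(\mathcal{F}\otimes\mathcal{G})_{{}_{p+1}\sigma\setminus v_i,\sigma}$ to express each summand in terms of $\alpha({}_{p+1}\sigma\setminus v_i)$ restricted directly to $\sigma$ and paired against $x(\sigma)$. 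For $\partial(\alpha\frown x)$, I apply the boundary to $\sigma_{q+1}$ and obtain a sum over $i\in\{p,\ldots,p+q+1\}$, collapsing $\mathcal{G}^T_{\sigma_{q+1}\setminus v_i,\sigma_{q+1}}\circ\mathcal{G}^T_{\sigma_{q+1},\sigma}$ into the single map $\mathcal{G}^T_{\sigma_{q+1}\setminus v_i,\sigma}$ by the same functoriality.

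The term-by-term match then follows from the same combinatorial trick used in the type-$\mathrm{I}$ case: both ``missing'' endpoint terms, namely the $i=p+1$ term in the first range of $\alpha\frown\partial x$ and the $i=p$ term in its second range, coincide with the common middle contribution with former face ${}_p\sigma$, later face $\sigma_q=\sigma_{q+1}\setminus v_p={}_{p+1}\sigma\setminus v_{p+1}$, and identical coefficient; adding this contribution in twice vanishes in characteristic $2$, which extends the first range to $\{0,\ldots,p+1\}$ and the second to $\{p,\ldots,p+q+1\}$, aligning them exactly with $(\delta\alpha)\frown x$ and $\partial(\alpha\frown x)$ respectively. I expect the main obstacle to be precisely this naturality bookkeeping: one must check that the partial pairing $\langle-,-\rangle_{\mathcal{F}}$ commutes appropriately with the restriction and transpose maps of both $\mathcal{F}$ and $\mathcal{G}$, so that after moving all $\mathcal{G}$-restrictions outside and all $\mathcal{F}$-restrictions into the pairing, the three expansions are expressed in the unified form above. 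Once this is done, the desired identity $\partial(\alpha\frown x)=(\delta\alpha)\frown x+\alpha\frown(\partial x)$ is a direct combinatorial verification.
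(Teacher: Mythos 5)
Your proposal is correct and follows essentially the same route as the paper's own proof: reduce to a single simplex by bilinearity, expand $\alpha\frown\partial x$, $(\delta\alpha)\frown x$, and $\partial(\alpha\frown x)$ directly from the definitions, split the boundary sum into the ranges $i\le p$ and $i\ge p+1$, and observe that the common $i=p$ (equivalently $i=p+1$) term can be added in twice in characteristic $2$ to align the three expansions. The one genuine addition in your argument is the preliminary rewriting of the type-$\mathrm{II}$ cap product into the form $\mathcal{G}^T_{\sigma_{q+1},\sigma}\langle(\mathcal{F}\otimes\mathcal{G})_{{}_p\sigma,\sigma}\alpha({}_p\sigma),x(\sigma)\rangle_{\mathcal{F}}$, which does check out (on simple tensors, $\mathcal{G}_{\tau',\tau}\langle f\otimes g,\mathcal{F}^T_{\tau',\tau}x\rangle_{\mathcal{F}}=\langle(\mathcal{F}\otimes\mathcal{G})_{\tau',\tau}(f\otimes g),x\rangle_{\mathcal{F}}$) and makes the functorial bookkeeping of restriction/transpose maps cleaner than the paper's presentation, which keeps the $\mathcal{G}_{{}_p\sigma,\sigma}$ and $\mathcal{G}^T_{\sigma_q,\sigma}$ factors separate throughout. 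Neither version changes the underlying combinatorics; your normalization merely trades two restriction maps on $\mathcal{G}$ for one, at the price of first moving $(\mathcal{F}\otimes\mathcal{G})_{{}_p\sigma,\sigma}$ inside the pairing.
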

\begin{proof}
    Similarly, we consider a single term $x=x(\sigma)\cdot \sigma$, $\sigma=[v_0,\cdots,v_{p+q+1}]$
    \begin{equation}
        \begin{aligned}
            &\alpha\frown(\partial x)=\alpha\frown(\sum_{i=0}^{p+q+1}(\mathcal{F}_{\sigma\setminus v_i,\sigma}^Tx(\sigma))\cdot(\sigma\setminus v_i))\\
            =& \sum_{i=0}^{p}\mathcal{G}^T_{\sigma_q,\sigma}\mathcal{G}_{{}_{p+1}\sigma\setminus v_i,\sigma}\langle \alpha({}_{p+1}\sigma\setminus v_i), \mathcal{F}_{{}_{p+1}\sigma\setminus v_i,\sigma}^T(x(\sigma))\rangle_{\mathcal{F}}\cdot \sigma_q \\
            & +\sum_{i=p+1}^{p+q+1}\mathcal{G}^T_{\sigma_{q+1}\setminus v_i,\sigma}\mathcal{G}_{{}_{p}\sigma,\sigma}\langle \alpha({}_{p}\sigma), \mathcal{F}_{{}_{p}\sigma,\sigma}^T(x(\sigma))\rangle_{\mathcal{F}}\cdot (\sigma_{q+1}\setminus v_i)\\
            = & \sum_{i=0}^{p+1}\mathcal{G}^T_{\sigma_q,\sigma}\mathcal{G}_{{}_{p+1}\sigma\setminus v_i,\sigma}\langle \alpha({}_{p+1}\sigma\setminus v_i), \mathcal{F}_{{}_{p+1}\sigma\setminus v_i,\sigma}^T(x(\sigma))\rangle_{\mathcal{F}}\cdot \sigma_q
            \\
            &+\sum_{i=p}^{p+q+1}\mathcal{G}^T_{\sigma_{q+1}\setminus v_i,\sigma}\mathcal{G}_{{}_{p}\sigma,\sigma}\langle \alpha({}_{p}\sigma), \mathcal{F}_{{}_{p}\sigma,\sigma}^T(x(\sigma))\rangle_{\mathcal{F}}\cdot (\sigma_{q+1}\setminus v_i),
        \end{aligned}
    \end{equation}
    where the last equality is derived by adding 
    \begin{equation}
    \mathcal{G}^T_{\sigma_q,\sigma}\mathcal{G}_{{}_{p}\sigma,\sigma}\langle\alpha({}_p\sigma),\mathcal{F}_{{}_p\sigma,\sigma}^Tx(\sigma)\rangle_{\mathcal{F}}\cdot \sigma_{q}
    \end{equation}
    twice. Note that
\begin{equation}
    \begin{aligned}
        (\delta\alpha)\frown x&=\mathcal{G}^T_{\sigma_q,\sigma}\mathcal{G}_{{}_{p+1}\sigma,\sigma}\langle(\delta\alpha)({}_{p+1}\sigma),\mathcal{F}_{{}_{p+1}\sigma,\sigma}^Tx(\sigma)\rangle_{\mathcal{F}}\cdot\sigma_q\\
        &=\mathcal{G}^T_{\sigma_q,\sigma}\mathcal{G}_{{}_{p+1}\sigma,\sigma}\langle\sum_{i=0}^{p+1}(\mathcal{F}_{{}_{p+1}\sigma\setminus v_i,{}_{p+1}\sigma}\otimes\mathcal{G}_{{}_{p+1}\sigma\setminus v_i,{}_{p+1}\sigma})\alpha({}_{p+1}\sigma\setminus v_i),\mathcal{F}_{{}_{p+1}\sigma,\sigma}^Tx(\sigma)\rangle_{\mathcal{F}}\cdot\sigma_q\\
        &=\sum_{i=0}^{p+1}\mathcal{G}^T_{\sigma_q,\sigma}\mathcal{G}_{{}_{p+1}\sigma\setminus v_i,\sigma}\langle \alpha({}_{p+1}\sigma\setminus v_i), \mathcal{F}_{{}_{p+1}\sigma\setminus v_i,\sigma}^T(x(\sigma))\rangle_{\mathcal{F}}\cdot \sigma_q.
    \end{aligned}    
\end{equation}
On the other hand,
\begin{equation}
    \begin{aligned}
        \partial(\alpha\frown x)&=\partial(\mathcal{G}^T_{\sigma_{q+1},\sigma}\mathcal{G}_{{}_{p}\sigma,\sigma}\langle\alpha({}_{p}\sigma),\mathcal{F}^T_{{}_p\sigma,\sigma}x(\sigma)\rangle_{\mathcal{F}}\cdot\sigma_{q+1})\\
        &=\sum_{i=p}^{p+q+1}\mathcal{G}^T_{\sigma_{q+1}\setminus v_i,\sigma}\mathcal{G}_{{}_{p}\sigma,\sigma}\langle \alpha({}_{p}\sigma), \mathcal{F}_{{}_{p}\sigma,\sigma}^T(x(\sigma))\rangle_{\mathcal{F}}\cdot (\sigma_{q+1}\setminus v_i).
    \end{aligned}
\end{equation}
Therefore, we have
\begin{equation}
\partial(\alpha\frown x)=\alpha\frown (\partial x)+(\delta \alpha)\frown x.
\end{equation}
\end{proof}

\begin{definition}[Type-$\mathrm{III}$ cap product]
We define the \te{type-$\mathrm{III}$ cap product}
\begin{equation}
\frown:C^p(X,\mathcal{F})\times C_{p+q}(X,\mathcal{F}\otimes\mathcal{G})\longrightarrow C_q(X,\mathcal G)
\end{equation}
as follows: for $\alpha \in C^{p}(X,\mathcal{F}), x\in C_{p+q}(X,\mathcal{F}\otimes\mathcal{G})$,
\begin{align}
	\alpha\frown x\coloneqq \sum_{\sigma\in X(p+q)}\mathcal{G}^T_{\sigma_q,\sigma}\langle\alpha({}_p\sigma),\mathcal{F}^T_{{}_p\sigma,\sigma}x(\sigma)\rangle_{\mathcal{F}}\cdot \sigma_q.
\label{eq:cap_other_definition}
\end{align}
\end{definition}

Using the previous method, it is easy to check that $\partial(\alpha\frown x)=\delta\alpha\frown x+\alpha\frown \partial x$. The (type-$\mathrm{III}$) cap product is of crucial importance for a more simplified description of the isomorphism $H^i(X,\mathcal{F}^\perp) \cong H_{t-i}(X,\mathcal{F})$ in Theorem~\ref{thm:Cap induced poincare dual map}.


\subsection{Subdivision and pullback sheaf}\label{subsec:subdivision}

Given that the existing (almost) good quantum codes are based on cubical complexes rather than simplicial complexes,  we need to extend the definition of cup and cap product to cubical complexes. Inspired by previous studies~\cite{FreedmanHastings2021,Portnoy2023,lin2024transversalnoncliffordgatesquantum}, we can achieve this by subdividing a cubical complex into a simplicial complex and define cup and cap products through certain (co)chain maps. To this end, we introduce various powerful tools from sheaf theory tools~\cite{curry2014sheavescosheavesapplications}. 

\begin{definition}[Subdivision]
    Let $X$ be a cell complex. A \te{subdivision} of $X$ is a cell complex $\t{X}$ which is the same topological space as $X$ but with a different cell decomposition, such that each cell of $X$ is a union of cells in $\t{X}$.
\end{definition}

For $\tilde{\sigma} \in \tilde{X}(i)$, let $\tau_{\tilde{\sigma}}$ be the minimal cell in $X$ that contains $\tilde{\sigma}$. We will write $\t{U}_{\t{\sigma}}\coloneqq \t{X}_{\geq \t{\sigma}}$ as the elements in the topology basis of $\t{X}$.

\begin{definition}[Pullback sheaf]
    Let $f:X\rightarrow Y$ be a continuous map, and $\F$ is a sheaf on $Y$. Then the \te{pullback sheaf} $f^{*}\F$ is the sheafification of the following presheaf, for each open set $U\subseteq X$,
    \begin{equation}
    U\longmapsto \varinjlim_{V\supseteq f(U)}\F(V)
    \end{equation}
\end{definition}

Given a subdivision, we can construct a map $s:\tilde{X}\rightarrow X$ between posets by, mapping each cell in $\tilde{X}$ to the minimal cell containing it in $X$. It is easy to see that this map preserves partial order of poset. $s$ is a continuous map with respect to Alexandrov topology, because for $U_\sigma$ a basis element of $X$, if $s(\tilde{\tau})\in U_\sigma$, then $s(\tilde{U}_{\tilde{\tau}})\subseteq U_{\sigma}$, hence $s^{-1}(U_{\sigma})$ is open. Therefore, given a cellular sheaf $\F$ on $X$, we can use the pullback to construct a new sheaf $s^*\F$. Since sheafification preserves the stalks, we have, for $\tilde{\sigma}\in\tilde{X}$,
\begin{equation}
(s^*\F)_{\tilde{\sigma}}=\varinjlim_{V\supseteq s(\tilde{U}_{\tilde{\sigma}})}\F(V)=\F_{s(\tilde{\sigma})}=\F_{\tau_{\tilde{\sigma}}}.
\end{equation}
And for each $\t{\sigma}\leq\t{\rho}$, the restriction map is simply
\begin{equation}
(s^*\F)_{\t{\sigma},\t{\rho}}=\F_{\tau_{\t{\sigma}},\tau_{\t{\rho}}}.
\end{equation}

We need one more definition to prove the next proposition.

\begin{definition}[Pushforward sheaf]
    Let $f:X\rightarrow Y$ be a continuous map and let $\F$ be a sheaf on $X$. Then we can define the \te{pushforward sheaf} $f_*\F$ on $Y$ by, for an open set $U\subseteq Y$,
    \begin{equation}
    f_*\F(U)\coloneqq \F(f^{-1}(U)).
    \end{equation}
\end{definition}
Note that when $X$ and $Y$ are cell poset, then the formula on stalk is given by, for $\sigma\in Y$ a cell,
\begin{equation}
(f_*\F)_\sigma=\F(f^{-1}(U_\sigma))=\varprojlim_{V\subseteq f^{-1}(U_\sigma)}\F(V)=\varprojlim_{f(\rho)\geq \sigma} \F_\rho.
\end{equation}

The crucial use of this functor is that for constant map to a point $p:X\rightarrow\star$, we can express the global section as
\begin{equation}
(p_*\F)_\star\cong \F(X)=\Gamma(X,\F).
\end{equation}

These sheaf-theoretic tools yield the following results:

\begin{proposition}[\protect{\cite[Theorem 7.3.9]{curry2014sheavescosheavesapplications}}]\label{prop: subdivided sheaf has same cohomology}
    Suppose $\F$ is a sheaf on $X$ which can be subdivided into $\t{X}$, and $s:\t{X}\rightarrow X$ be the inclusion relation from subdivision, then
    \begin{equation}
    H^\bullet(X,\F)\cong H^\bullet(\t{X},s^*\F).
    \end{equation}
\end{proposition}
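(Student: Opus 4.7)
The plan is to derive the isomorphism from the fact (used implicitly throughout Section~\ref{sec:sheaf_cohomology}) that sheaf cohomology is the right derived functor of global sections, and to exploit the factorization of global sections through pushforward. Concretely, since $s:\t X\to X$ is continuous with respect to the Alexandrov topologies, we have $\Gamma_{\t X}=\Gamma_X\circ s_*$, and the associated Grothendieck composition-of-functors spectral sequence reads
\begin{equation}
E_2^{p,q}=H^p(X,R^q s_*(s^*\mathcal F))\;\Longrightarrow\;H^{p+q}(\t X,s^*\mathcal F).
\end{equation}
If one can show both that the unit $\mathcal F\to s_*s^*\mathcal F$ is an isomorphism and that $R^q s_*(s^*\mathcal F)=0$ for $q>0$, the spectral sequence collapses to the required isomorphism $H^\bullet(\t X,s^*\mathcal F)\cong H^\bullet(X,\mathcal F)$.

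Both facts should be verified on stalks at an arbitrary cell $\sigma\in X$. Unwinding the definitions using the Alexandrov basis, $s^{-1}(U_\sigma)$ is precisely the collection of cells of $\t X$ whose minimal enclosing cell in $X$ lies in $U_\sigma$, i.e.\ the subdivision of the open star of $\sigma$. Therefore
\begin{equation}
(s_*s^*\mathcal F)_\sigma=(s^*\mathcal F)(s^{-1}U_\sigma)=\varprojlim_{\t\tau\in s^{-1}(U_\sigma)}\mathcal F_{\tau_{\t\tau}},
\end{equation}
and a small poset argument (using that every $\tau_{\t\tau}$ lies in $U_\sigma$, with $\sigma$ itself realized by some $\t\tau\in s^{-1}(U_\sigma)$) identifies this limit with $\mathcal F_\sigma$, proving $s_*s^*\mathcal F\cong\mathcal F$. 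For the higher direct images,
\begin{equation}
\bigl(R^qs_*(s^*\mathcal F)\bigr)_\sigma\;\cong\;H^q\!\bigl(s^{-1}(U_\sigma),\,(s^*\mathcal F)|_{s^{-1}(U_\sigma)}\bigr),
\end{equation}
so the vanishing in degree $q>0$ reduces to a local acyclicity statement on the subdivision of a single open star.

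To establish that vanishing, I would construct an explicit acyclic (for instance, flabby in the sense of Definition~\ref{def:flabby}) resolution of $(s^*\mathcal F)|_{s^{-1}(U_\sigma)}$ from the cellular cochain complex on the subdivision, analogous to the resolution used in Proposition~\ref{prop:flabby_resolution}. Equivalently, one can contract $s^{-1}(U_\sigma)$ onto the vertex $\sigma$ via an explicit chain homotopy built from the subdivision pattern: the open star is contractible and $s^*\mathcal F$ is, by construction, constant (with value $\mathcal F_{\tau_{\t\tau}}$) along each cell of $\t X$, so a straight-line contraction in the poset implements a cochain-level homotopy. As a sanity check, both sides of the desired isomorphism can alternatively be realized by the explicit cochain maps $S_\sharp$ and $A_\sharp$ (Eqs.~\eqref{eq:S_sharp} and \eqref{eq:A_sharp}) referenced later in the paper, and one should verify that these coincide with the edge maps of the spectral sequence above.

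The main obstacle will be the local acyclicity: the cellular sheaf $s^*\mathcal F$ is not constant on $s^{-1}(U_\sigma)$ (its stalks jump whenever $\t\tau$ crosses from one original cell to an adjacent one), so the vanishing cannot be read off immediately from topological contractibility. The right organizing principle is that $s^*\mathcal F$ is ``cellularly piecewise constant" with respect to the coarser stratification coming from $X$, which allows one to filter $(s^*\mathcal F)|_{s^{-1}(U_\sigma)}$ by subsheaves supported on subdivisions of $U_\tau$'s for $\tau\geq\sigma$ and to acyclically reduce to the constant case on each piece. Making this filtration argument clean (or, equivalently, producing the explicit homotopy between $S_\sharp$ and $A_\sharp$ composed with the identity) is the principal technical step; everything else is formal consequences of the derived-functor machinery and the adjunction $(s^*,s_*)$.
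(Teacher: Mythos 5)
Your overall approach matches the paper's: both proofs reduce the question to the factorization $\Gamma_{\t X}=\Gamma_X\circ s_*$ and the stalkwise computation $s_*s^*\mathcal{F}\cong\mathcal{F}$ (using surjectivity of $s$ onto the poset). You go one step further than the paper's written argument by explicitly invoking the Grothendieck composition spectral sequence and flagging that, beyond the unit isomorphism, one also needs $R^qs_*(s^*\mathcal{F})=0$ for $q>0$. The paper's proof writes ``if we can prove $s_*s^*\mathcal{F}\cong\mathcal{F}$, then the proof is done'' and supplies only the stalk computation, leaning on the cited reference (Curry, Theorem~7.3.9) for the rest; your extra care here is warranted, since the unit isomorphism alone does not suffice.

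That said, the vanishing of higher direct images is exactly where the substantive work lies, and your proposal leaves it as a plan rather than a proof. Note that
\begin{equation}
\bigl(R^qs_*(s^*\mathcal{F})\bigr)_\sigma \;\cong\; H^q\!\bigl(s^{-1}(U_\sigma),\, s^*\mathcal{F}\bigr),
\end{equation}
and since $U_\sigma$ has $\sigma$ as a minimum element (so $\Gamma(U_\sigma,-)$ is exact and $H^q(U_\sigma,\mathcal{F})=0$ for $q>0$), the required vanishing is precisely the proposition applied to the subcomplex $U_\sigma$ with its subdivision $s^{-1}(U_\sigma)$. A direct appeal therefore risks circularity, and one must escape it either by an induction on the size of the poset (the fibers $s^{-1}(U_\sigma)$ strictly shrink as $\sigma$ grows), or by producing an explicit cochain homotopy witnessing $S^\#\circ A^\#\simeq\mathrm{id}$ as you alternatively suggest, which is closer in spirit to the cited reference. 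Your filtration intuition (``cellularly piecewise constant'' along the coarser stratification) is the right organizing idea, but the associated spectral sequence of the filtration and its collapse are not spelled out, and as written your argument establishes the isomorphism only modulo this step.
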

\begin{proof}
Let $\star$ be a point. Let $p_X:X\rightarrow\star$ and $p_{\t{X}}:\t{X}\rightarrow\star$ be constant function mapping every cell to the point. Then we have the following commutative diagram
\begin{equation}
\begin{tikzcd}
\t{X} \arrow[r, "s"] \arrow[dr, "p_{\t{X}}"'] & X \arrow[d, "p_X"] \\
& \star
\end{tikzcd}
\end{equation}
$p_{\t{X}}=p_X\circ s$ gives $(p_{\t{X}})_*=(p_X)_*\circ s_*$. Therefore
\begin{equation}
(p_{\t{X}})_*s^*\F=(p_X)_*\circ s_*s^*\F.
\end{equation}
If we can prove $s_*s^*\F\cong \F$, then the proof is done. By definition
\begin{equation}
 (s_*s^*\F)_y=\varprojlim_{s(x)\geq y}(s^*\F)_{x}=\varprojlim_{s(x)\geq y}\F_{s(x)}.
\end{equation}
Note that $s:\tilde{X}\rightarrow X$ is surjective, therefore
\begin{equation}
(s_*s^*\F)_y=\varprojlim_{x\geq y}\,\F_x=\F_y.
\end{equation}
Hence $s_*s^*\F\cong \F$, and the proof is done.
\end{proof}

\begin{corollary}\label{pullback sheaf is locally acyclic}
    Suppose $\F$ is a locally acyclic sheaf on $X$, then $s^*\F$ is a locally acyclic sheaf on $\t{X}$.
\end{corollary}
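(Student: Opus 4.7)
The plan is to verify Definition \ref{def:acyclic_sheaf} directly for each cell $\tilde\sigma \in \tilde X(i)$, namely that the local cochain complex $C^\bullet(\tilde U_{\tilde\sigma}, s^*\F)$ is exact in degrees $i \leq j \leq t-1$. Let $\tau := \tau_{\tilde\sigma}$ and $k := \dim \tau \geq i$. My first step is to invoke a local refinement of Proposition \ref{prop: subdivided sheaf has same cohomology}: for any cell $\sigma \in X$, the restriction of $s$ to $s^{-1}(U_\sigma) \to U_\sigma$ is itself a subdivision map, and the identity $s_* s^* \F \cong \F$ used in the proof of that proposition is intrinsically local (it only relies on the stalkwise surjectivity of $s$). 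This will yield $H^\bullet(U_\sigma, \F) \cong H^\bullet(s^{-1}(U_\sigma), s^*\F)$, and applying it to $\sigma = \tau$ together with local acyclicity of $\F$ on $X$ gives $H^j(s^{-1}(U_\tau), s^*\F) = 0$ for $k \leq j \leq t-1$.

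Next, I would relate the object of interest $\tilde U_{\tilde\sigma}$ to the larger open $s^{-1}(U_\tau)$. When $i < k$ the inclusion $\tilde U_{\tilde\sigma} \subsetneq s^{-1}(U_\tau)$ is strict, so I plan to introduce a filtration of the local cochain complex by the image cell $\tau_{\tilde\rho}$, splitting the associated graded into layers $A_\sigma := \{\tilde\rho \in \tilde U_{\tilde\sigma} : \tau_{\tilde\rho} = \sigma\}$ for $\sigma \in X_{\geq \tau}$, each with constant coefficients $\F_\sigma$. The $E_1$ page of the resulting spectral sequence consists of cellular cohomologies $H^\bullet(A_\sigma, \F_\sigma)$; since each $A_\sigma$ for $\sigma > \tau$ is an intersection of an open cell of $|X|$ with the open star of $\tilde\sigma$, it deformation retracts to a disk and contributes only in degree zero. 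The higher differentials then assemble precisely into the local cochain complex of $\F$ on $U_\tau$, whose higher cohomology vanishes by hypothesis.

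For the complementary range $i \leq j < k$, any $j$-cell $\tilde\rho \in \tilde U_{\tilde\sigma}$ has $\dim \tilde\rho < k$ and hence lies in the interior of $\tau$, so $\tau_{\tilde\rho} = \tau$ and $(s^*\F)_{\tilde\rho} = \F_\tau$. The bottom strip of the local cochain complex therefore reduces to the cellular cochain complex of the open star of $\tilde\sigma$ inside the subdivided cell $\tau$, with constant coefficient $\F_\tau$. Since this open star is topologically a cone on the link of $\tilde\sigma$, local acyclicity of $\tilde X$ as a cell complex (Definition \ref{def:acyclic_complex}, automatic for the simplicial or cubical subdivisions of interest) supplies the desired exactness.

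The principal obstacle will be making the spectral-sequence argument rigorous, particularly identifying the graded pieces $A_\sigma$ and verifying their contractibility in the full cellular generality of the paper. A potentially slicker route is via the Leray spectral sequence for the restricted map $s|_{\tilde U_{\tilde\sigma}}$: if one proves that $R^q s_*(s^*\F) = 0$ for $q > 0$ (fiber-acyclicity of $s$), then $H^\bullet(\tilde U_{\tilde\sigma}, s^*\F) \cong H^\bullet(U_\tau, \F)$ collapses the problem to local acyclicity of $\F$ directly.
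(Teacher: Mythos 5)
The paper's proof takes a simpler route than your primary proposal: it restricts the subdivision map to $s|_{\tilde{U}_{\tilde\sigma}}\colon \tilde{U}_{\tilde\sigma}\to U_{\tau_{\tilde\sigma}}$, observes that this restricted map is still surjective (for each $\rho>s(\tilde\sigma)$ there is a simplex of $\tilde X$ inside $\rho$ whose closure meets $\tilde\sigma$), and then applies the pushforward identity $s_*s^*\F\cong\F$ of Proposition~\ref{prop: subdivided sheaf has same cohomology} directly to this restriction. Because $\tilde U_{\tilde\sigma}$ maps onto $U_{\tau_{\tilde\sigma}}$, there is no need to pass through the larger open set $s^{-1}(U_\tau)$, and hence no filtration or spectral sequence is needed at all. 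Your ``slicker route'' at the end --- the Leray argument for $s|_{\tilde U_{\tilde\sigma}}$ --- is essentially what the paper does, and that is the version you should pursue; your primary filtration argument is genuinely different and, in its current form, broken.

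The specific gap is in your ``bottom strip'' step. You assert that any $j$-cell $\tilde\rho\in\tilde U_{\tilde\sigma}$ with $j<k=\dim\tau_{\tilde\sigma}$ lies in the interior of $\tau$, so that $\tau_{\tilde\rho}=\tau$ and $(s^*\F)_{\tilde\rho}=\F_\tau$. This is false whenever $t\ge 3$. In the subdivision of Section~\ref{subsec:explicit} of a $4$-dimensional cubical complex, take $\tilde\sigma=[g,\,ga_1a_2a_3]$, so $\tau=\tau_{\tilde\sigma}$ is the $3$-cube $[g;a_1,a_2,a_3]$ and $k=3$. The $2$-simplex $\tilde\rho=[g,\,ga_1a_2a_3,\,ga_1a_2a_3a_4]$ satisfies $\tilde\sigma\le\tilde\rho$, $\dim\tilde\rho=2<k$, yet its interior lies inside the $4$-cube, so $\tau_{\tilde\rho}=[g;a_1,a_2,a_3,a_4]\gneq\tau$. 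In general, lower-dimensional simplices in $\tilde U_{\tilde\sigma}$ can map to cells of $X$ strictly larger than $\tau$, so the degrees $i\le j<k$ do \emph{not} reduce to a constant-coefficient complex on the interior of $\tau$. This breaks the clean split you rely on; those cells instead live in higher filtration layers of your step-2 spectral sequence, which then can no longer be separated from step~3. The paper sidesteps all of this by never splitting the complex: it treats $\tilde U_{\tilde\sigma}$ as a single unit and appeals only to the surjectivity of the restricted map and the resulting sheaf-theoretic isomorphism of cohomologies.
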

\begin{proof}
Let $\t{\sigma}\in \t{X}$ be an arbitrary cell, then we note that there is a commutative diagram
\begin{equation}
\begin{tikzcd}
\t{U}_{\t{\sigma}} \arrow[r, "s"] \arrow[dr, "p_{\t{U}_{\t{\sigma}}}"'] & U_{s(\t{\sigma})} \arrow[d, "p_{U_{s(\t{\sigma})}}"] \\
& \star
\end{tikzcd}
\end{equation}
Note that $s|_{\t{U}_{\t{\sigma}}}:\t{U}_{\t{\sigma}}\rightarrow U_{s(\t{\sigma})}$ is still surjective, because for each $\rho>s(\t{\sigma})$, $s(\t{\sigma})$ lies in the boundary of $\rho$. Since $\t{X}$ is the subdivision of $X$, there must be a cell $\t{\rho}\in\t{X}$ as a part of $\rho$ containing the part of boundary where $\t{\sigma}$ lies. Therefore, the method in Proposition \ref{prop: subdivided sheaf has same cohomology} still applies, giving for each $p<t$,
\begin{equation}
H^p(\t{U}_{\t{\sigma}},s^*\F)\cong H^p(U_{s(\t{\sigma})},\F)=0
\end{equation}
which means that $s^*\F$ is also locally acyclic.
\end{proof}

However, we really need a chain map rather than a solely isomorphism between cohomology groups in order to construct logical gate. Therefore we introduce the following method, as well to provide an alternative way to understand the pullback sheaf when $\F$ is locally acyclic, which is proposed in \cite{lin2024transversalnoncliffordgatesquantum}.

Note that any $t$-simplex is uniquely contained in a $t$-cell. As a result, any $\tilde{\rho} \in \tilde{X}_{\geq \tilde{\sigma}}(t)$ is uniquely contained in a cell $\rho \in X_{\geq \tau_{\tilde{\sigma}}}(t)$. Then the following map is well-defined:
\begin{align}
	I: \tilde{X}_{\geq \tilde{\sigma}}(t) \rightarrow X_{\geq \tau_{\tilde{\sigma}}}(t).
\end{align}
It is also surjective as any $\rho \in X_{\geq \tau_{\tilde{\sigma}}}(t)$ must contain some $t$-simplex where $\tilde{\sigma}$ belongs to. Applying the functor $\text{Hom}(-,\mathbb{F})$, we obtain the following injective linear map:
\begin{align}
	I^\ast: \mathbb{F}^{U_{\tau_{\tilde{\sigma}}}(t) } \rightarrow \mathbb{F}^{\tilde{U}_{\tilde{\sigma}}(t) },
\end{align} 
where for a set $V$, we write $\mathbb{F}^V$ as the linear space of functions $V\rightarrow\mathbb{F}$.

Now given $\tilde{\sigma} \leq \tilde{\sigma}' \leq \tilde{\sigma}''$, then by definition, $\tau_{\tilde{\sigma}} \leq \tau_{\tilde{\sigma}'} \leq \tau_{\tilde{\sigma}''}$ and $U_{\tau_{\tilde{\sigma}}}(t) \supseteq U_{\tau_{\tilde{\sigma}'}}(t) \supseteq U_{\tau_{\tilde{\sigma}''}}(t)$. The following commutative diagram holds by the inclusion relation
\begin{equation}
	\begin{tikzcd}
		U_{\tau_{\tilde{\sigma}}}(t) &
		U_{\tau_{\tilde{\sigma}'}}(t) \arrow[l,hook'] &
		U_{\tau_{\tilde{\sigma}''}}(t) \arrow[l,hook'] \\
		\tilde{U}_{\tilde{\sigma}}(t) \arrow[u,"I"] &
		\tilde{U}_{\tilde{\sigma}'}(t) \arrow[l,hook'] \arrow[u,"I"] &
		\tilde{U}_{\tilde{\sigma}''}(t) \arrow[l,hook'] \arrow[u,"I"]
	\end{tikzcd}
\end{equation}
Again by applying the functor $\text{Hom}(-,\mathbb{F})$, we have
\begin{equation}
	\begin{tikzcd}
		\mathbb{F}^{U_{\tau_{\tilde{\sigma}}}(t)} \arrow[r] \arrow[d,"I^\ast"] &
		\mathbb{F}^{U_{\tau_{\tilde{\sigma}'}}(t)}  \arrow[r] \arrow[d,"I^\ast"] &
		\mathbb{F}^{U_{\tau_{\tilde{\sigma}''}}(t)}   \arrow[d,"I^\ast"] \\	
		\mathbb{F}^{\tilde{U}_{\tilde{\sigma}}(t)} \arrow[r] &
		\mathbb{F}^{\tilde{U}_{\tilde{\sigma}'}(t)} \arrow[r] &
		\mathbb{F}^{\tilde{U}_{\tilde{\sigma}''}(t)}
	\end{tikzcd}
\end{equation}
where we all the horizontal maps are restriction of domain of functions.

Suppose $\mathcal{F}$ satisfies the strong sheaf axiom. Then the local coefficient space $\mathcal{F}_{\tau_{\tilde{\sigma}}}$ here is a subspace of $\mathbb{F}^{U_{\tau_{\tilde{\sigma}}}(t)}$ and the map $\mathcal{F}_{\tau_{\tilde{\sigma}}, \tau_{\tilde{\sigma}'}}$ is defined by restriction. For any $\tilde{\sigma}$, we define $\tilde{\mathcal{F}}_{\tilde{\sigma}} \coloneqq  I^\ast \mathcal{F}_{\tau_{\tilde{\sigma}}}$ and set $\tilde{\mathcal{F}}_{\tilde{\sigma},\tilde{\sigma}'}$ by taking restriction. Then we have the following commutative diagram:
\begin{equation}
\begin{tikzcd}
	\mathcal{F}_{\tau_{\tilde{\sigma}}} \arrow[r, "\mathcal{F}_{\tau_{\tilde{\sigma}}, \tau_{\tilde{\sigma}'}}"] \arrow[d,"I^\ast"] &
	\mathcal{F}_{\tau_{\tilde{\sigma}'}}  \arrow[r,"\mathcal{F}_{\tau_{\tilde{\sigma}'}, \tau_{\tilde{\sigma}''}}"] \arrow[d,"I^\ast"] &
	\mathcal{F}_{\tau_{\tilde{\sigma}''}} \arrow[d,"I^\ast"] \\	
	\tilde{\mathcal{F}}_{\tilde{\sigma}} \arrow[r, "\tilde{\mathcal{F}}_{\tilde{\sigma},\tilde{\sigma}'}"] &
	\tilde{\mathcal{F}}_{\tilde{\sigma}'} \arrow[r, "\tilde{\mathcal{F}}_{\tilde{\sigma}',\tilde{\sigma}''}"] &
	\tilde{\mathcal{F}}_{\tilde{\sigma}''} 
\end{tikzcd}
\end{equation}
Since $I^*$ is injective, $\tilde{\mathcal{F}}_{\tilde{\sigma}}\cong \mathcal{F}_{\tau_{\tilde{\sigma}}}$. From now on, we further restrict the definition of $I^*$, and use this notion only for this isomorphism, and treat it as an invertible matrix. Note that by definition $\t{\F}\cong s^*\F$.

\begin{definition}[Cellular map]
	Suppose $f: X \rightarrow Y$ is a continuous map between two cell complexes. It is called a \te{cellular map} if it maps the $p$-skeleton $X^p$ of $X$ to that of $Y$ for any $p \geq 0$: $f(X^p) \subseteq Y^p$. 
\end{definition}

The following well-known fact about cellular homology is useful for our later proofs.

\begin{theorem}
	Any cellular map $f: X \rightarrow Y$ induces a chain map between the cellular chain complexes:
	\begin{align}
		f_{\#}: C_\bullet(X, \mathbb{F}) \rightarrow C_\bullet(Y, \mathbb{F}).
	\end{align}  
\end{theorem}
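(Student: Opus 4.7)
The plan is to use the standard identification of the cellular chain group $C_p(X,\mathbb{F})$ with the relative singular homology $H_p(X^p, X^{p-1}; \mathbb{F})$, under which the cellular boundary $\partial_p$ coincides with the connecting homomorphism of the triple $(X^p, X^{p-1}, X^{p-2})$. Since $X^p/X^{p-1}$ is a wedge of $p$-spheres indexed by the $p$-cells, this gives the same concrete generators on which the cellular chain complex has been defined, and the same description applies to $Y$.

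Because $f$ is cellular, $f(X^p)\subseteq Y^p$ for every $p\ge 0$. Hence $f$ restricts to a continuous map of pairs $(X^p,X^{p-1})\to (Y^p,Y^{p-1})$ and, more usefully, to a map of triples $(X^p,X^{p-1},X^{p-2})\to (Y^p,Y^{p-1},Y^{p-2})$. I define $f_{\#}:C_p(X,\mathbb{F})\to C_p(Y,\mathbb{F})$ to be the induced map on relative singular homology $f_{*}:H_p(X^p,X^{p-1};\mathbb{F})\to H_p(Y^p,Y^{p-1};\mathbb{F})$. The chain map identity $f_{\#}\circ\partial=\partial\circ f_{\#}$ then follows immediately from the naturality of the connecting homomorphism of the long exact sequence of a triple with respect to maps of triples.

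An equivalent explicit description, preferable for direct computation, is $f_{\#}(e^p_\alpha)=\sum_\beta \deg(f_{\alpha\beta})\,e^p_\beta$, where $f_{\alpha\beta}:S^p\to S^p$ is the composition of the induced quotient map $\bar f:X^p/X^{p-1}\to Y^p/Y^{p-1}$ with the inclusion of the $\alpha$-summand of the domain wedge and the projection onto the $\beta$-summand of the codomain wedge. In this form the boundary-commutation is a direct degree computation from the attaching maps.

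I do not anticipate a substantive obstacle, since the statement is classical and purely topological. The only care required is to ensure the identification of cellular chains with relative singular homology is stated with the same basis convention already used in the paper, and to note that passing from $\mathbb{Z}$ to $\mathbb{F}$ coefficients is automatic, either by tensoring the integral cellular complex with $\mathbb{F}$ or by using the universal coefficient theorem; the degrees $\deg(f_{\alpha\beta})$ reduce modulo the characteristic of $\mathbb{F}$ with no extra hypothesis.
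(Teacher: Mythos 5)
The paper states this result without proof, labeling it a ``well-known fact about cellular homology''; there is therefore no proof in the paper to compare against. Your argument is the standard textbook one (cf.\ Hatcher, \S 2.2): identify $C_p(X,\mathbb{F})$ with $H_p(X^p,X^{p-1};\mathbb{F})$, observe that a cellular map restricts to maps of skeletal triples, define $f_{\#}$ as the induced map on relative homology, and invoke naturality of the connecting homomorphism to get $f_{\#}\partial=\partial f_{\#}$. This is correct, complete, and exactly what a careful reference would supply; the remark on passing from $\mathbb{Z}$ to $\mathbb{F}$ coefficients via tensoring or universal coefficients is also correct, and consistent with the paper's later simplifying convention that the matrix entries $\lambda_{\alpha\beta}$ of $f_{\#}$ reduce to $\{0,\pm1\}$.
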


For a cellular map $f$, we may further assume that if $f_\#e_\alpha=\sum_\beta \lambda_{\alpha\beta}\tilde{e}_\beta$, then each $\lambda_{\alpha,\beta}\in \{0,\pm 1\}$ for simplicity. Actually, a general coefficient is allowed and all the results remain. We will also write $f(e_\alpha)\coloneqq \{\t{e}_\beta:\lambda_{\alpha\beta}\neq 0\}$. For cohomology,
\begin{equation}
f^\#: C^\bullet(Y, \mathbb{F}) \rightarrow C^\bullet(X, \mathbb{F}),
\end{equation}
is defined by applying the functor $\text{Hom}(-,\mathbb{F})$ to $f_\#$, or simply saying that $f^\#$ is the matrix transpose of $f_\#$. Both of them induce maps on the (co)homology, respectively:
\begin{equation}
f_*:H_\bullet(X,\mathbb{F})\rightarrow H_\bullet(\t{X},\mathbb{F}),\quad f^*:H^\bullet(\t{X},\mathbb{F})\rightarrow H^\bullet(X,\mathbb{F}).
\end{equation}

Let $S:X\rightarrow\t{X}$ be the identity map between topological spaces. Then $S$ is a cellular map. We will call this \te{subdivision map}, which induces a chain map $S_\#$. However, the identity map $\t{X}\rightarrow X$ is not cellular, since we created more cells in $\t{X}$ doing subdivision. By cellular approximation theorem, this map is homotopic to a cellular map. However, this will not be enough for our purpose. Thus we introduce the following definition

\begin{definition}[Approximate inverse]\label{approximate inverse}
	An \te{approximate inverse} of of subdivision map $S$ is a cellular map $A: \tilde{X} \rightarrow X$ is a cellular map such that for any $\tilde{\sigma}$ and any $\sigma \in A(\tilde{\sigma})$, $\sigma \leq \tau_{\tilde{\sigma}}$, and $A_\#S_\#=\id_{C_\bullet(X,\mathbb{F})}$. We will say $\t{X}$ is a \te{simplicial approximation} of $X$ if such $A$ exists.
\end{definition}

The approximate inverse always exists. Intuitively, for each cell $\sigma\in X$, we only need to choose a particular cell $\t{\sigma}\in\t{X}$ contained in $\sigma$ and stretch it to cover $\sigma$, then map the remaining part to the boundary of $\sigma$. The special case on cubical complexes is given in Section \ref{subsec:explicit}. General discussion can also be found in \cite[Section 6.7]{lin2024transversalnoncliffordgatesquantum}.

Now we want to extend the definition of chain map $S_\#$ and $A_\#$ to sheaved chain complexes. We abuse the notation (note that this will not cause confusion since the maps are uniquely determined by their domains) and give the definition. 
\begin{equation}
S_\#: C_\bullet(\tilde{X},\tilde{\mathcal{F}}) \longrightarrow C_\bullet(X,\mathcal{F}),
\end{equation}
by, for $\sigma\in X$ and $x(\sigma)\in \F_\sigma$,

\begin{align}\label{eq:S_sharp}
	S_\# (x(\sigma) \sigma) = \sum_{\tilde{\sigma} \in S(\sigma)} \Big[ (I^{\ast T})^{-1} \mathcal{F}_{\tau_{\tilde{\sigma}},\sigma}^T x(\sigma ) \Big] \tilde{\sigma}.
\end{align}
Notice that $\mathcal{F}_{\tau_{\tilde{\sigma}},\sigma}^T$ is well-defined because $\sigma$ is a cell containing $\tau_{\t{\sigma}}$.  We also define
\begin{equation}
A_\#:C_\bullet(\t{X},\t{\F})\longrightarrow C_\bullet(X,\F)
\end{equation}
by, for $\t{\sigma}\in \t{X}$ and $\t{x}(\t{\sigma})\in \t{\F}_{\t{\sigma}}$,
\begin{align}\label{eq:A_sharp}
	A_\# (\tilde{x}(\tilde{\sigma}) \tilde{\sigma}) = \sum_{\sigma \in A(\tilde{\sigma})} \Big[ \mathcal{F}_{\sigma,\tau_{\tilde{\sigma}}}^T I^{\ast T} \tilde{x}(\tilde{\sigma}) \Big] \sigma.
\end{align}
Here $\mathcal{F}_{\sigma,\tau_{\tilde{\sigma}}}$ is well-defined due to the definition of approximate inverse.

\begin{proposition}\label{prop:A_S_chain_maps}
	The $S_\#, A_\#$ defined above are chain maps. 
\end{proposition}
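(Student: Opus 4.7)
The plan is to verify both identities $\partial\circ S_\# = S_\#\circ\partial$ and $\partial\circ A_\# = A_\#\circ\partial$ by a direct computation on a single sheaf-valued basis element, then extend by linearity. Two ingredients are combined: (i) the cellular maps $S$ and $A$ already induce chain maps at the level of $\mathbb{F}$-coefficients (for $A$, this is built into Definition~\ref{approximate inverse}); and (ii) the transposed naturality identity
\begin{equation*}
\t{\F}_{\t{\sigma}'',\t{\sigma}}^T \,(I^{*T}_{\t{\sigma}})^{-1} \;=\; (I^{*T}_{\t{\sigma}''})^{-1}\, \F_{\tau_{\t{\sigma}''},\tau_{\t{\sigma}}}^T,
\end{equation*}
obtained by taking the transpose of the commutative square defining $\t{\F}$. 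These together reduce the sheaf-valued chain-map question to matching coefficients on each output cell plus the underlying combinatorial identity for cellular maps.

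For $S_\#$ on a generator $x(\sigma)\sigma$ with $\sigma\in X(p)$, observe that for any $\t{\sigma}\in S(\sigma)$ one has $\tau_{\t{\sigma}}=\sigma$ (a same-dimensional piece of $\sigma$ has $\sigma$ as its minimal enclosing cell), so $\F^T_{\tau_{\t{\sigma}},\sigma}=\id$ and $S_\#(x(\sigma)\sigma)=\sum_{\t{\sigma}\in S(\sigma)}[(I^{*T}_{\t{\sigma}})^{-1}x(\sigma)]\,\t{\sigma}$. Applying $\partial$ and substituting the naturality identity yields, on each $\t{\sigma}''$, the coefficient $(I^{*T}_{\t{\sigma}''})^{-1}\F^T_{\tau_{\t{\sigma}''},\sigma}x(\sigma)$ weighted by $|\{\t{\sigma}\in S(\sigma):\t{\sigma}''\lessdot\t{\sigma}\}|\bmod 2$. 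Expanding $S_\#\partial(x(\sigma)\sigma)$ and using $\tau_{\t{\sigma}''}=\sigma''$ for $\t{\sigma}''\in S(\sigma'')$ produces exactly the same coefficient, but supported only on $\t{\sigma}''$ lying on the boundary of $\sigma$. The two expressions agree because in characteristic~$2$ an interior $(p-1)$-cell of $\sigma$ is a face of exactly two pieces of $S(\sigma)$ (multiplicity zero), whereas a boundary cell is a face of exactly one---this is precisely the cellular chain-map identity for $S$ applied to the single generator~$\sigma$.

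For $A_\#$ on $\t{x}(\t{\sigma})\t{\sigma}$, both $\partial A_\#$ and $A_\#\partial$ reduce to sums supported on output cells $\sigma''\in X$. Expanding $\partial A_\#$ directly uses the presheaf composition $\F^T_{\sigma'',\sigma}\F^T_{\sigma,\tau_{\t{\sigma}}}=\F^T_{\sigma'',\tau_{\t{\sigma}}}$, valid because $\sigma\leq\tau_{\t{\sigma}}$ for $\sigma\in A(\t{\sigma})$ by Definition~\ref{approximate inverse}. Expanding $A_\#\partial$ uses the naturality identity above, together with the same presheaf composition along $\sigma''\leq\tau_{\t{\sigma}''}\leq\tau_{\t{\sigma}}$ (the first inequality from Definition~\ref{approximate inverse}, the second from $\t{\sigma}''\leq\t{\sigma}$). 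Both routes arrive at the common coefficient $\F^T_{\sigma'',\tau_{\t{\sigma}}}I^{*T}_{\t{\sigma}}\t{x}(\t{\sigma})$ on each $\sigma''$, whereupon the equality of combinatorial multiplicities---summing over $\{(\t{\sigma}''\lessdot\t{\sigma},\,\sigma''\in A(\t{\sigma}''))\}$ versus $\{(\sigma\in A(\t{\sigma}),\,\sigma''\lessdot\sigma)\}$---is exactly the cellular chain-map identity $A_\#^{\mathbb{F}}\partial\t{\sigma}=\partial A_\#^{\mathbb{F}}\t{\sigma}$. The main obstacle is not conceptual but bookkeeping: one must carefully verify the directions and domains of all restriction maps and transposes of $I^*$ (which live at different cells), and confirm that every composition is well-defined via the required cell inclusions. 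Once these compatibilities are in place, both chain-map identities reduce cleanly to the underlying $\mathbb{F}$-level cellular chain-map property combined with the naturality of $I^*$.
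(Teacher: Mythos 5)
Your proposal is correct and takes essentially the same route as the paper: check the identities on a single generator, use the transposed naturality square relating $\t{\F}$ and $I^*$ (the paper calls this ``commutativity''), and reduce to the underlying $\mathbb{F}$-coefficient cellular chain-map identity for $S_\#$ and $A_\#$. Your added observation that $\tau_{\t\sigma}=\sigma$ for $\t\sigma\in S(\sigma)$, so $\F^T_{\tau_{\t\sigma},\sigma}=\id$, is a small legitimate simplification that the paper does not spell out, and your explicit interior-versus-boundary face accounting is simply an unpacked version of the chain-map identity the paper invokes abstractly.
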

\begin{proof}
	We can prove the statements on the standard basis vectors from the chain. That is, we start by taking $x = x(\sigma) \cdot \sigma$. By definition,
	\begin{align}
		\begin{aligned}
			S_\# \partial (x(\sigma) \sigma) & = S_\# (\sum_{\sigma' \lessdot \sigma} \F_{\sigma',\sigma}^T x(\sigma) \sigma' ) \\
			& = \sum_{\sigma' \lessdot \sigma} \sum_{\tilde{\sigma}' \in S(\sigma')} \Big[ (I^{\ast T})^{-1} \mathcal{F}_{\tau_{\tilde{\sigma}'},\sigma'}^T  (\F_{\sigma',\sigma}^T x(\sigma) ) \Big] \tilde{\sigma}' \\
			& = \sum_{\sigma' \lessdot \sigma} \sum_{\tilde{\sigma}' \in S(\sigma')} \Big[ (I^{\ast T})^{-1} \mathcal{F}_{\tau_{\tilde{\sigma}'},\sigma}^T  x(\sigma) \Big] \tilde{\sigma}'.
		\end{aligned}
	\end{align}
	On the other hand,
	\begin{align}
		\begin{aligned}
			\partial S_\# (x(\sigma) \sigma) & = \partial \Big( \sum_{\tilde{\sigma} \in S(\sigma)} \Big[ (I^{\ast T})^{-1} \mathcal{F}_{\tau_{\tilde{\sigma}},\sigma}^T x(\sigma ) \Big] \tilde{\sigma} \Big) \\
			& = \sum_{\tilde{\sigma} \in S(\sigma)} \sum_{\tilde{\sigma}' \lessdot \tilde{\sigma}} \Big[ \mathcal{F}_{\tilde{\sigma}',\tilde{\sigma}}^T (I^{\ast T})^{-1} \mathcal{F}_{\tau_{\tilde{\sigma}},\sigma}^T x(\sigma ) \Big] \tilde{\sigma}' \\
			& = \sum_{\tilde{\sigma} \in S(\sigma)} \sum_{\tilde{\sigma}' \lessdot \tilde{\sigma}} \Big[ (I^{\ast T})^{-1} \mathcal{F}_{\tau_{\tilde{\sigma}'},\tau_{\tilde{\sigma}}}^T \mathcal{F}_{\tau_{\tilde{\sigma}},\sigma}^T x(\sigma ) \Big] \tilde{\sigma}'\\
            &=\sum_{\tilde{\sigma}\in S(\sigma)}\sum_{\tilde{\sigma}'\lessdot\tilde{\sigma}}\Big[(I^{*T})^{-1}\mathcal{F}^T_{\tau_{\tilde{\sigma}'},\sigma}x(\sigma)\Big]\tilde{\sigma}',
		\end{aligned}
	\end{align}
	where we used the fact that $\mathcal{F}_{\tilde{\sigma}',\tilde{\sigma}}^T (I^{\ast T})^{-1} =  (I^{\ast T})^{-1} \mathcal{F}_{\tau_{\tilde{\sigma}'},\tau_{\tilde{\sigma}}}^T$ by commutativity. Note that, since $S_\#:C_\bullet(X,\mathbb{F})\rightarrow C_\bullet(\tilde{X},\mathbb{F})$ is a chain map, we have
	\begin{align}
		\sum_{\sigma'\lessdot\sigma}\sum_{\tilde{\sigma}'\in S(\sigma')}\tilde{\sigma}'=\sum_{\tilde{\sigma}\in S(\sigma)}\sum_{\tilde{\sigma}'\lessdot\tilde{\sigma}}\tilde{\sigma}'.
	\end{align}
	Therefore, $S_\#\partial(x(\sigma)\cdot\sigma)=\partial S_\#(x(\sigma)\cdot\sigma)$, and $S_\#:C_\bullet(X,\mathcal{F})\rightarrow C_\bullet(\tilde{X},\tilde{\mathcal{F}})$ is a chain map.
	
	For $A_\ast$,
	\begin{align}
		\begin{aligned}
			A_\# \partial (\tilde{x}(\tilde{\sigma}) \tilde{\sigma})  
			& = A_\# ( \sum_{\tilde{\sigma}' \lessdot \tilde{\sigma}} \tilde{\mathcal{F}}_{\tilde{\sigma}',\tilde{\sigma}}^T \tilde{x}(\tilde{\sigma}) \tilde{\sigma} )  \\
			& = \sum_{\tilde{\sigma}' \lessdot \tilde{\sigma}} \sum_{\sigma' \in A(\tilde{\sigma}')} \Big[ \mathcal{F}_{\sigma',\tau_{\tilde{\sigma}'}}^T I^{\ast T} \tilde{\mathcal{F}}_{\tilde{\sigma}',\tilde{\sigma}}^T \tilde{x}(\tilde{\sigma}) \Big] \sigma' \\
			& = \sum_{\tilde{\sigma}' \lessdot \tilde{\sigma}} \sum_{\sigma' \in A(\tilde{\sigma}')} \Big[ \mathcal{F}_{\sigma',\tau_{\tilde{\sigma}'}}^T \mathcal{F}_{\tau_{\tilde{\sigma}'},\tau_{\tilde{\sigma}}}^T  I^{\ast T} \tilde{x}(\tilde{\sigma}) \Big] \sigma'\\
            &=\sum_{\tilde{\sigma}'\lessdot\tilde{\sigma}}\sum_{\sigma'\in A(\tilde{\sigma}')}\Big[\mathcal{F}^T_{\sigma',\tau_{\tilde{\sigma}}}I^{*T}\tilde{x}(\tilde{\sigma})\Big]\sigma'
		\end{aligned}
	\end{align}
	And
	\begin{align}
		\begin{aligned}
			\partial A_\# (\tilde{x}(\tilde{\sigma}) \tilde{\sigma}) & = \partial \Big( \sum_{\sigma \in A(\tilde{\sigma})} \Big[ \mathcal{F}_{\sigma,\tau_{\tilde{\sigma}}}^T I^{\ast T} \tilde{x}(\tilde{\sigma}) \Big] \sigma \Big) \\
			& = \sum_{\sigma \in A(\tilde{\sigma})} \sum_{\sigma' \lessdot \sigma} \Big[ \mathcal{F}_{\sigma',\sigma}^T \mathcal{F}_{\sigma,\tau_{\tilde{\sigma}}}^T I^{\ast T} \tilde{x}(\tilde{\sigma}) \Big] \sigma' \\
			& = \sum_{\sigma \in A(\tilde{\sigma})} \sum_{\sigma' \lessdot \sigma} \Big[ \mathcal{F}_{\sigma',\tau_{\tilde{\sigma}}}^T I^{\ast T} \tilde{x}(\tilde{\sigma}) \Big] \sigma'.
		\end{aligned}
	\end{align}
	Again, since $A_\#: C_\bullet(\tilde{X},\mathbb{F}) \rightarrow C_\bullet(X,\mathbb{F})$ is a chain map,
	\begin{align}
		\sum_{\tilde{\sigma}' \lessdot \tilde{\sigma}} \sum_{\sigma' \in A(\tilde{\sigma}')} \sigma' = \sum_{\sigma \in A(\tilde{\sigma})} \sum_{\sigma' \lessdot \sigma} \sigma'
	\end{align}
	and this completes the proof.
\end{proof}

We also need to extend the definition to tensor sheaf. Let $\mathcal{F}, \mathcal{G}$ be sheaves generated by some classical codes. Then we can define $\tilde{\mathcal{F}}$ and $\tilde{\mathcal{G}}$ separately. Then we extend the definition as follows
	\begin{align}
		S_\#: C_\bullet(X,\mathcal{F} \otimes \mathcal{G} ) \rightarrow C_\bullet(\tilde{X},\tilde{\mathcal{F}} \otimes \tilde{\mathcal{G}} )
	\end{align}
	is defined by
	\begin{align}
		S_\# (x(\sigma) \sigma) = \sum_{\tilde{\sigma} \in S(\sigma)} \Big[ \Big( (I^{\ast T})^{-1} \otimes (I^{\ast T})^{-1} \Big) \Big( \mathcal{F}_{\tau_{\tilde{\sigma}},\sigma}^T \otimes \mathcal{G}_{\tau_{\tilde{\sigma}},\sigma}^T \Big) x(\sigma) \Big] \tilde{\sigma}, 
	\end{align}
	Here $x(\sigma)\in \F_\sigma\otimes\mathcal{G}_\sigma$ can be spanned by simple tensors like $x_{\mathcal{F}}(\sigma) \otimes x_{\mathcal{G}}(\sigma)$ and $I^\ast$ is defined independent of the choice of the sheaves $\mathcal{F}$ and $\mathcal{G}$ (both as subspaces of $\mathbb{F}^{X_{\geq \tau_{\tilde{\sigma}}}(t)}$). Similarly,
	\begin{align}
		A_\#: C_\ast(\tilde{X},\tilde{\mathcal{F}} \otimes \tilde{\mathcal{G}} ) \rightarrow C_\ast(X,\mathcal{F} \otimes \mathcal{G} )
	\end{align}
	is defined by
	\begin{align}
		A_\# (\tilde{x}(\tilde{\sigma}) \tilde{\sigma}) = \sum_{\sigma \in A(\tilde{\sigma})} \Big[ \Big( \mathcal{F}_{\sigma,\tau_{\tilde{\sigma}}}^T \otimes \mathcal{G}_{\sigma,\tau_{\tilde{\sigma}}}^T \Big) \Big( I^{\ast T} \otimes I^{\ast T} \Big) \tilde{x}(\tilde{\sigma}) \Big] \sigma.
	\end{align}
	
Now, we are able to define cup and cap products on cell complexes and. Let $\t{X}$ be a simplicial subdivision of $X$, then the type-$\mathrm{I}$ cup product on $X$ is defined by the following commutative diagram
\begin{equation}
    \begin{tikzcd}
    C^p(X,\mathcal{F}) \times C^q(X,\mathcal{G})
  \arrow[r, "\smile"]
  \arrow[d, "A^\# \times A^\#"']
& C^{p+q}(X,\mathcal{F}\otimes\mathcal{G})
   \\
C^p(\tilde{X},\t{\mathcal{F}}) \times C^q(\tilde{X},\t{\mathcal{G}})
  \arrow[r, "\smile"]
& C^{p+q}(\tilde{X},\t{\mathcal{F}}\otimes\t{\mathcal{G}}) \arrow[u, "S^\#" ]
     \end{tikzcd}
\end{equation}
We can extend the definition of other two types of cup product in a similar way. One may check by direct calculation that the Leibniz rule still holds by commutativity of the chain maps with the coboundary operator
\begin{align}
	\delta(\alpha \smile \beta) 
		=\delta \alpha \smile \beta + \alpha \smile \delta \beta.
\end{align}

The type-$\mathrm{III}$ cap product is defined by the following commutative diagram
\begin{equation}
    \begin{tikzcd}
    C^p(X,\mathcal{F}) \times C_{p+q}(X,\mathcal{F}\otimes\mathcal{G})
  \arrow[r, "\frown"]
  \arrow[d, "A^\# \times S_\#"']
& C_q(X,\mathcal{G})
   \\
C^p(\tilde{X}, \tilde{\mathcal{F}}) \times C_{p+q}(\tilde{X}, \tilde{\mathcal{F}} \otimes \tilde{\mathcal{G}} )
  \arrow[r, "\frown"]
& C_q(\tilde{X},\t{\mathcal{G}}) \arrow[u, "A_\#" ]
     \end{tikzcd}
\end{equation}
We can extend the definition of the other two types of cap product in a similar way. Note that We still have
\begin{align}
	\partial(\alpha \frown x)=\delta \alpha \frown x + \alpha \frown \partial x.
\end{align}

In algebraic topology, when using ordinary coefficients (e.g., in any number field), the cup and cap products are known to be independent of the choice of subdivision. We do not require an analogous invariance theorem for sheaf coefficients because it suffices to work with a fixed choice of subdivision for our purpose of constructing logical gates.

The following properties ensure that our construction of logical operation has constant-depth.

\begin{lemma}
	If the cell complex $X$ is sparse, then $\tilde{X}$ is also sparse.
\end{lemma}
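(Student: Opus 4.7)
The plan is to work with a concrete simplicial subdivision, such as the barycentric subdivision, under which each simplex $\tilde{\tau} \in \tilde{X}(k)$ corresponds bijectively to a strictly increasing chain $\sigma_0 < \sigma_1 < \cdots < \sigma_k$ of cells in $X$. In this correspondence the cover relations in $\tilde{X}$ have a clean combinatorial description: $\tilde{\tau}' \lessdot \tilde{\tau}$ means that the chain for $\tilde{\tau}'$ is obtained from that of $\tilde{\tau}$ by deleting one entry, while $\tilde{\tau}'' \gtrdot \tilde{\tau}$ means that the chain for $\tilde{\tau}''$ is obtained from that of $\tilde{\tau}$ by inserting one new cell $\rho$ strictly between two consecutive entries (with a standard convention for insertions before $\sigma_0$ or after $\sigma_k$, bounded by the stars of the endpoints inside the ambient cell of $\tilde{\tau}$).

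First I would bound the number of $(k-1)$-faces of $\tilde{\tau}$: any $k$-simplex has exactly $k+1$ such faces, and since $X$ has dimension $t$ we have $k \leq t$, giving the uniform bound $k+1 \leq t+1$. Next I would bound the number of $(k+1)$-cofaces of $\tilde{\tau}$. Each such coface arises from inserting a cell $\rho$ with $\sigma_{i-1} < \rho < \sigma_i$ for some $i$, and in particular $\rho \in X_{\geq \sigma_0} \cap X_{\leq \sigma_k}$. By the proposition on sparse complexes stated earlier in the paper, both $|X_{\geq \sigma_0}|$ and $|X_{\leq \sigma_k}|$ are bounded by constants that depend only on the sparsity parameters of $X$ and on $t$. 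Hence the number of insertion candidates, and therefore the number of $(k+1)$-cofaces of $\tilde{\tau}$, is uniformly bounded. Combining these two counts yields uniform constants, independent of $n$, controlling the number of $(k \pm 1)$-cells joined to any $k$-simplex of $\tilde{X}$, which is exactly the definition of sparsity.

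The main obstacle is really only bookkeeping: one must verify that the subdivision scheme actually used in the paper (following the construction of \cite{lin2024transversalnoncliffordgatesquantum} applied to cubical and more general cell complexes) admits the above flag-indexed description, so that faces and cofaces of $\tilde{\tau}$ are indeed enumerated by chains inside $X_{\geq \sigma_0} \cap X_{\leq \sigma_k}$ for suitable bounding cells. This is straightforward for barycentric subdivision and for the cubical subdivisions considered in the paper, since the link of any simplex in the subdivision is built from strictly smaller cells of $X$. Once this identification is in place, no further analytical input is required: sparsity of $\tilde{X}$ follows immediately from sparsity of $X$ together with the elementary flag-counting bound above.
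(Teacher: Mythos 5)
Your argument is correct for the barycentric subdivision, but it is not quite the route the paper takes, and the difference matters in a minor but noticeable way. The paper's proof is a three-line sketch restricted to cubical complexes: it observes that each $t$-cube is subdivided into a constant number (here $t!$) of simplices via the monotone-path (Kuhn/Freudenthal) triangulation described in Section~5.5, and then invokes sparsity of $X$ together with the inclusion structure. That triangulation is \emph{not} the barycentric subdivision, and its $k$-simplices are not indexed by flags $\sigma_0 < \cdots < \sigma_k$ of cells of $X$; they are indexed by monotone lattice paths inside a single cube. So your flag-enumeration of faces and cofaces, while a clean argument, does not literally describe the subdivision the paper works with. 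You do flag this at the end, but the verification you defer is exactly where the two proofs diverge.

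That said, your underlying counting idea is sound and, arguably, more explicit than the paper's. You correctly bound the number of $(k-1)$-faces by $k+1 \le t+1$ and observe that any $(k+1)$-coface is determined by inserting a cell into the flag; by the paper's proposition on sparse cell complexes, both $|X_{\le \sigma_k}|$ and $|X_{\ge \sigma_0}|$ are $O(1)$, so there are only $O(1)$ insertion candidates. One small inaccuracy: your claim that the inserted cell $\rho$ satisfies $\rho \in X_{\geq \sigma_0} \cap X_{\leq \sigma_k}$ only covers interior insertions; insertions before $\sigma_0$ give $\rho < \sigma_0$ and insertions after $\sigma_k$ give $\rho > \sigma_k$. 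Your parenthetical acknowledges this, and the bound survives because $|X_{\le \sigma_0}|$ and $|X_{\ge \sigma_k}|$ are also $O(1)$ by the same proposition, but the stated containment is wrong as written. If you wanted a proof that applies verbatim to the paper's cubical triangulation rather than the barycentric one, the cleanest version of your idea is: every simplex $\tilde\sigma$ of $\tilde X$ has a unique minimal cube $\tau_{\tilde\sigma}$; any coface $\tilde\sigma'$ has $\tau_{\tilde\sigma'} \ge \tau_{\tilde\sigma}$ and any face has $\tau_{\tilde\sigma'} \le \tau_{\tilde\sigma}$; sparsity of $X$ bounds the number of such cubes by $O(1)$, and each cube contributes $O(1)$ simplices to $\tilde X$. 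That is exactly the paper's argument, and it sidesteps the need to identify a flag structure.
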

\begin{proof}
	We only consider the case for cubical complexes. The subdivision of each cube has a constant number of simplices. Then by the inclusion relation and sparsity of $X$, we can prove the claim.  
\end{proof}

\begin{lemma}
	Suppose both $X$ and $\tilde{X}$ are sparse and each cell in $X$ is subdivided into constant many simplices, then the matrix representations of the operators $S_\#, S^\#$, $A_\#, A^\#$ are sparse. 
\end{lemma}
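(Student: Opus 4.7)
\medskip

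\noindent\textbf{Proof plan.} The strategy is to count, for each of the four operators, the number of nonzero block entries in every row and every column, and to verify that each of these counts is bounded by a universal constant depending only on the sparsity parameters of $X$, the sparsity parameters of $\tilde{X}$, the uniform bound $c$ on the number of simplices in the subdivision of any cell of $X$, and a uniform bound on the local dimensions $\dim \mathcal{F}_\sigma$. Since $S^\# = S_\#^T$ and $A^\# = A_\#^T$ under the canonical identification of chains and cochains, it suffices to bound the block row and column supports of $S_\#$ and $A_\#$.

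For $S_\#$, the column indexed by $\sigma \in X(i)$ is supported on $S(\sigma) \subseteq \tilde{X}(i)$, and by the assumption that each cell of $X$ subdivides into constantly many simplices, $|S(\sigma)| \le c$. Conversely, for the row indexed by $\tilde{\sigma} \in \tilde{X}(i)$, I would argue that the only $\sigma \in X$ with $\tilde{\sigma} \in S(\sigma)$ must satisfy $\dim \sigma = i$ and $\tilde{\sigma} \subseteq \sigma$; combined with the fact that an $i$-cell of $\tilde{X}$ contained in an $i$-cell $\sigma$ of $X$ must lie in the interior of $\sigma$, this forces $\sigma = \tau_{\tilde{\sigma}}$ to be unique. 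Hence the row support is $1$.

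For $A_\#$, the column indexed by $\tilde{\sigma} \in \tilde{X}(i)$ is supported on $A(\tilde{\sigma}) \subseteq X(i)$, which by Definition~\ref{approximate inverse} is contained in $\{\sigma \in X(i) : \sigma \leq \tau_{\tilde{\sigma}}\}$. By the sparsity of $X$, this set has size at most $m^i_{\dim \tau_{\tilde{\sigma}}} = O(1)$. For the row indexed by $\sigma \in X(i)$, the constraint $\sigma \in A(\tilde{\sigma})$ forces $\dim \tilde{\sigma} = i$ and $\tau_{\tilde{\sigma}} \geq \sigma$. I would enumerate the $\tilde{\sigma}$ in two stages: first choose $\tau_{\tilde{\sigma}} \geq \sigma$, of which there are $O(1)$ by sparsity of $X$ (bounded by $\sum_j M^j_i$ over the finitely many relevant $j$), and then choose $\tilde{\sigma}$ among the $i$-simplices in the subdivision of $\tau_{\tilde{\sigma}}$, of which there are at most $c$ by hypothesis. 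The product is again $O(1)$.

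Finally, to pass from block sparsity to ordinary matrix sparsity one notes that each $\mathcal{F}_\sigma$ and each $\tilde{\mathcal{F}}_{\tilde{\sigma}} \cong \mathcal{F}_{\tau_{\tilde{\sigma}}}$ has dimension bounded by a constant (since in the quantum-code setting $\mathcal{F}_\tau = \mathbb{F}_q$ for $t$-cells $\tau$, and $\mathcal{F}_\sigma \subseteq \mathbb{F}_q^{U_\sigma(t)}$ where $|U_\sigma(t)|$ is bounded by sparsity). Each nonzero block is at most a $O(1)\times O(1)$ matrix, so the total number of nonzero scalar entries per row or column is still $O(1)$. The tensor versions $S_\#, A_\#$ on $\mathcal{F} \otimes \mathcal{G}$ follow identically, since the supports $S(\sigma)$ and $A(\tilde{\sigma})$ do not depend on the sheaf and $\dim(\mathcal{F}_\sigma \otimes \mathcal{G}_\sigma) = O(1)$. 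No step presents a genuine obstacle here; the result is essentially a bookkeeping consequence of the hypotheses, and the only care required is in separating the subdivision-combinatorial bounds from the sheaf-coefficient bounds.
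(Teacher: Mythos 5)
Your proof is correct and follows essentially the same approach as the paper: the paper's one-line argument just observes that the block entries (built from $I^\ast$) are sparse because $X$ and $\tilde X$ are sparse and the subdivision has constant size, and you have simply made explicit the row/column bookkeeping for the block supports of $S_\#$ and $A_\#$ that the paper leaves implicit. Both reduce to the same counting; yours is the fully spelled-out version.
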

\begin{proof}
	This is also quite straightforward, as the matrix representation of $I^\ast$ is sparse due to the sparsity of both $X$ and $\tilde{X}$.
\end{proof}

The importance of our Definition \ref{approximate inverse} is that, together with Proposition \ref{prop: subdivided sheaf has same cohomology}, $S^*$ will be an isomorphism between (co)homology groups
\begin{equation}
    H^\bullet(\t{X},\t{\F})\overset{S^*}{\underset{\cong}{\longrightarrow}} H^\bullet(X,\F),
\end{equation}
where $A^*=(S^*)^{-1}$ is the inverse. This will be crucial in providing a lower bound of $k_{\CZ}$ in the construction of logical gate later.


\subsection{Poincar\'e duality via cap product}\label{subsec:duality-cap-product}

The cap product further uncovers the isomorphism $H^i(X,\mathcal{F}^\perp) \cong H_{t-i}(X,\mathcal{F})$ in Theorem \ref{thm:Poincaré_duality}. This insight will also be used to build the logical $\CZ$ and $\CCZ$ circuits, and to evaluate the lower bound on their logical action subrank. 
\begin{theorem}[Cap product induces Poincar\'e duality map]
	\label{thm:Cap induced poincare dual map}
	Suppose $X$ is a cell complex that admits a simplicial approximation. Let $[X]\in C_t(X,\mathcal{F}^\perp\!\!\otimes\mathcal{F})$ be the fundamental class defined by $[X]=\sum_{\sigma\in X(t)}\sigma$, then for each $0\leq i\leq t$, there is an isomorphism $D$ 
	\begin{equation}
		D:H^i(X,\mathcal{F}^\perp)\longrightarrow H_{t-i}(X,\mathcal{F}),
	\end{equation}
	given by $D[\alpha]=[\alpha]\frown [X]$.
\end{theorem}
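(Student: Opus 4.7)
The goal is to identify the isomorphism $D$ of Theorem \ref{thm:Poincaré_duality} (obtained there by a zig-zag diagram chase in the double complex $K^{p,q}$) with the cap-product formula $[\alpha]\mapsto [\alpha\frown [X]]$. Since the abstract isomorphism is already in hand, it suffices to (i) check that $[\alpha]\mapsto [\alpha\frown [X]]$ descends to a well-defined map on (co)homology, and (ii) verify that this map coincides with the diagram-chasing $D$.

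\textbf{Step 1 (reduction to the simplicial case).} On a general cell complex $X$, the cap product is defined in Section \ref{subsec:subdivision} through a simplicial approximation $\tilde X$ and the (co)chain maps $S_\#, A_\#, A^\#$. Since $A^* = (S^*)^{-1}$ is an isomorphism on (co)homology (Proposition \ref{prop: subdivided sheaf has same cohomology}), and $s^*\mathcal{F}$ remains locally acyclic (Corollary \ref{pullback sheaf is locally acyclic}), the statement transfers faithfully from $\tilde X$ to $X$. Hence we may assume that $X$ is itself simplicial in what follows.

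\textbf{Step 2 (well-definedness).} By the Leibniz rule for the type-III cap product,
\begin{equation}
\partial(\alpha\frown[X]) \;=\; (\delta\alpha)\frown[X] \;+\; \alpha\frown\partial[X].
\end{equation}
The chain $[X]$ is not a cycle in general. However, a direct computation gives
\begin{equation}
\partial[X] \;=\; \sum_{\tau\in X(t-1)}\Delta_\tau\cdot\tau, \qquad \Delta_\tau \;=\; \sum_{\sigma\gtrdot\tau}(\mathcal{F}^\perp)^T_{\tau,\sigma}(1)\otimes\mathcal{F}^T_{\tau,\sigma}(1) \;\in\; \mathcal{F}^\perp_\tau\otimes\mathcal{F}_\tau,
\end{equation}
where $\Delta_\tau$ is the canonical ``Casimir element'' associated with the orthogonal pair $(\mathcal{C}_\tau,\mathcal{C}^\perp_\tau)$ at $\tau$. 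Substituting $\partial[X]$ into the cap-product formula and regrouping the sum over top cells $\sigma\gtrdot\tau$, the resulting expression is a coboundary-type sum in the $\mathcal{F}^\perp$ direction; the cocycle condition $\delta\alpha = 0$ then forces the $\tau$-component of $\alpha\frown\partial[X]$ to vanish for every $\tau\in X(t-1)$. Thus $\alpha\frown[X]$ is a cycle whenever $\alpha$ is a cocycle. The same bookkeeping applied to $\beta$ in $\alpha = \delta\beta$ shows that $\alpha\frown[X]$ is a boundary, giving a well-defined map $D: H^i(X,\mathcal{F}^\perp) \to H_{t-i}(X,\mathcal{F})$.

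\textbf{Step 3 (agreement with the diagram chase).} Unfolding the cap product step by step realizes the entries of the explaining sequence $\{\alpha^{k,t-i+k}\}$ constructed in Theorem \ref{thm:Poincaré_duality}. Viewing $[X]$ through the flabby resolution of $\mathcal{F}^\perp$ (Proposition \ref{prop:flabby_resolution}), the top-level pairing at $\sigma\in X(t)$ reproduces $h''\alpha\in C^i(X,\mathcal{F}_t)$; successive applications of the local boundary $\partial_L$ on the $\mathcal{F}$-factor of $[X]$ mirror the vertical differential $d''$; and the restriction to former-faces ${}_k\sigma$ mirrors the horizontal differential $d'$. An induction on $k$ identifies the intermediate cap values with the $\alpha^{k,t-i+k}$ up to boundary corrections. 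The terminal value $\eta'(\alpha\frown[X])$ thus coincides with $\alpha^{0,t-i} = \eta'(D\alpha)$, and injectivity of $\eta'$ yields $[\alpha\frown[X]] = [D\alpha]$ in $H_{t-i}(X,\mathcal{F})$. Since the diagram-chasing $D$ is an isomorphism, so is the cap product map.

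The main obstacle is Step 2 --- verifying the chain-level cancellation $\alpha\frown\partial[X] = 0$ for cocycles. This is a careful bookkeeping computation combining (a) the cocycle identity for $\alpha$ at the $i$-cells adjacent to each $\tau\in X(t-1)$ and (b) the orthogonality $\mathcal{C}_\tau\perp\mathcal{C}^\perp_\tau$ encoded in $\Delta_\tau$. A cleaner alternative would be to derive well-definedness \emph{a posteriori} from Step 3: once the cap product is shown to coincide with $D$ on chain-level representatives, well-definedness on (co)homology descends for free from that of $D$.
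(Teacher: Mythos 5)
Your Step 1 (reduction to the simplicial case via the simplicial approximation $\tilde X$, the chain maps $S_\#, A_\#$, and the commutative square with $A^*, A_*$) is exactly how the paper handles the general cell complex case, so that part is fine.

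Your Step 3 is the substance of the paper's proof, but you have only sketched it. The paper carries out a careful upward induction on $k$: starting from $[D\alpha]$ written via $\alpha^{0,t-i}(\rho^0,\rho^{t-i})$, it replaces $\alpha^{0,t-i}$ by $d''\alpha^{0,t-i+1}$, splits the resulting sum using the simplex structure (former faces vs.\ later faces), identifies one half as a boundary $\partial(\cdots)$, and uses $d'\alpha^{k,\cdot}=\alpha^{k+1,\cdot}$ on the other half to climb one level of the explaining sequence. Each step depends on an explicit combinatorial split of $\{\rho^{t-i+k}\lessdot\rho^{t-i+k+1}\}$ into $\{\rho\setminus v_s: 0\le s\le k\}\sqcup\{\rho\setminus v_s: k+1\le s\}$ and the matching of former/later faces under deletion; this bookkeeping is precisely what makes the induction close and is not recoverable from the one-paragraph description ``mirrors the vertical/horizontal differentials.'' Your sketch identifies the right structures ($h''\alpha$ at the top, $\partial_L$ vs.\ $d''$, restrictions vs.\ $d'$) but without the induction it is not a proof.

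Your Step 2, by contrast, is actually correct and works for a simpler reason than you state. One does not need the cocycle condition at all: for \emph{any} $\alpha\in C^i(X,\mathcal F^\perp)$ one has $\alpha\frown\partial[X]=0$ at the chain level. Unwinding the type-III cap product, the $\tau$-component of $\alpha\frown\partial[X]$ is (after applying $\mathcal F^T_{\tau_{t-1-i},\tau}$) governed by the inner sum
\begin{equation}
\sum_{\sigma\gtrdot\tau}\bigl(\mathcal F^\perp_{{}_i\tau,\sigma}\alpha({}_i\tau)\bigr)\cdot\bigl(\mathcal F_{\tau,\sigma}\bigr)^{T}(1)
\;=\;\iota_\tau^{T}\Bigl(\alpha({}_i\tau)\big|_{U_\tau(t)}\Bigr),
\end{equation}
and $\alpha({}_i\tau)|_{U_\tau(t)}\in\mathcal C_\tau^\perp$ by definition of $\mathcal F^\perp$, so $\iota_\tau^{T}$ kills it. The cocycle condition $\delta\alpha=0$ plays no role in this cell-by-cell vanishing, and the same computation gives $\beta\frown\partial[X]=0$ for the coboundary case. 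So your Step 2 yields a clean stand-alone proof of well-definedness (in fact of the stronger identity $\partial(\alpha\frown[X])=(\delta\alpha)\frown[X]$ for all $\alpha$), which the paper does not spell out — there well-definedness is obtained a posteriori from the identification with $D$, exactly as in your closing remark. So Step 2 is a genuine complementary observation; Step 3 remains the gap you must close, by carrying out the inductive diagram-chase comparison.
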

Before proving the theorem, we want to provide some remarks. In algebraic topology, there is a well-known Poincar\'e duality for manifold \cite{Bott:1982xhp,Hatcheralgtop}. Suppose $M$ is a $n$-dimensional closed manifold, then there exist a homology class $[M] \in H_n(M,\mathbb{Z})$ which induces an isomorphism 
\begin{equation}
\begin{tikzcd}
    H^i(M,\mathbb{Z})\arrow[r,"\frown {[M]}"]&H_{n-i}(M,\mathbb{Z}).
\end{tikzcd}
\end{equation}

As an extension to Theorem \ref{thm:Poincaré_duality}, we want to find an analog for quantum error-correcting codes. However, for cubical complexes, the ``fundamental class" defined in Theorem \ref{thm:Cap induced poincare dual map} may not even be a cycle in $C_t(X,\mathcal{F}^\perp\!\!\otimes\mathcal{F})$. Consequently, one may anticipate that there could be an analog of Lefschetz duality which deals with manifolds with boundaries. However, even if we define the relative chain complex by $C_*(X,\partial X,\mathcal{F})\coloneqq C_*(X,\mathcal{F})/C_*(\partial X,\mathcal{F})$ and compute the homology, $[X]$ may still not be a cycle that represents a homology class in $H_t(X,\partial X,\mathcal{F}^\perp\!\!\otimes\mathcal{F})$. In this case, we will not know whether or not the cap product with a cocycle is still a cycle.

Fortunately, a simple intuition from isomorphism $H^0(X,\mathcal{F}^\perp) \cong H_t(X,\mathcal{F})$ when $X$ is a simplicial complex motivates the proof of Theorem \ref{thm:Cap induced poincare dual map}. To be precise, suppose $\alpha\in C^0(X,\mathcal{F}^\perp)$ is a cocycle, then $\delta^\perp\alpha=0$. Recall the diagram chase in Section \ref{sec:duality-proof}, by Eq.~\eqref{eq:d'_1}, Proposition \ref{prop:h_exact} and \ref{prop:h_naturality}, $d' h'' \alpha = 0$. Then by exactness of $h''$, we can find an element $x \in C_t(X,\mathcal{F})$ such that $\eta' x = h'' \alpha$. Since $d' h'' \alpha=0$, $x$ is a cycle in $C_t(X,\mathcal{F})$. As in the proof of Theorem \ref{thm:Poincaré_duality}, it is easy to check that the map $[\alpha]\mapsto[x]$ is a well-defined isomorphism between $H^0(X,\mathcal{F}^\perp)$ and $H_t(X,\mathcal{F})$. Formally, we can write $x=\sum_{\tau\in X(t)}\mathcal{F}^\perp_{\sigma(\tau),\tau}\alpha(\sigma(\tau))\cdot\tau$, where $\sigma(\tau)\in X(0)$ is a function of $\tau$. It can be chosen to be any element in $X(0)$ as long as it is under $\tau$. As a result, it is natural to choose $\sigma(\tau)={}_0\tau$, and then $x=\sum_{\tau\in X(t)}\mathcal{F}^\perp_{{}_0\tau,\tau}\alpha({}_0\tau)\cdot \tau$ is exactly $\alpha\frown [X]$. This inspires us to explore whether the argument holds for all dimensions, and it turns out to be true.

\begin{proof}[Proof of Theorem \ref{thm:Cap induced poincare dual map}]
We first prove the theorem for the case when $X$ is a simplicial complex. Before, we used a zig-zag method to construct the isomorphism $D$. Now we are going to prove that this map is equal to taking the cap product in Eq.~\eqref{eq:cap_other_definition}. Recall that by zig-zag method, for a cocycle $\alpha\in C^i(X,\mathcal{F}^\perp)$, we can find $\alpha^{k,t-i+k}\in C^k(X,\mathcal{F}_{t-i+k})$ for $0\leq k\leq i$ and $\alpha^{k,t-i+k+1}\in C^k(X,\mathcal{F}_{t-i+k+1})$ for $0\leq k\leq i-1$ such that 
    \begin{equation}
    d'\alpha^{k-1,t-i+k}=\alpha^{k,t-i+k}=d''\alpha^{k,t-i+k+1},\,\alpha^{i,t}=h'' \alpha,\,\alpha^{0,t-i}=\eta' D\alpha,\,\alpha^{i,t}=h'' \alpha .   \end{equation}
    By definition, we have 
    \begin{equation}
    \alpha\frown[X]=\sum_{\rho^t\in X(t)}\mathcal{F}^T_{\rho^t_{t-1},\rho^t}\,\mathcal{F}^\perp_{{}_i\rho^t,\rho^t}\alpha({}_i\rho^t)\cdot\rho^t_{t-1}.
    \end{equation}
    On the other hand, since for each $\rho^0\in X(0), \rho^{t-i}\in X(t-i), \rho^0\leq\rho^{t-i}$, we have $(D\alpha)(\rho^t)=\alpha^{0,t-i}(\rho^0,\rho^{t-i})$. Note that for fixed $\rho^{t-i}$ the value of $\alpha^{0,t-i}(\rho^0,\rho^{t-i})$ would be the same  and different choices of $\rho^0$ as long as $\rho^0\leq \rho^{t-i}$, therefore we may write $\rho^0=\rho^0(\rho^{t-i})$ as a function of $\rho^{t-i}$. Hence we may write
    \begin{equation}
    D\alpha=\sum_{\rho^{t-i}\in X(t-i)}\alpha^{0,t-i}(\rho^0(\rho^{t-1}),\rho^{t-i})\cdot \rho^{t-i}.
    \end{equation}
    Since $\alpha^{0,t-i}=d''\alpha^{0,t-i+1}$, we have
    \begin{align}
    \begin{aligned}
        D\alpha&=\sum_{\rho^{t-i}\in X(t-i)}\sum_{\rho^{t-i+1}\gtrdot\rho^{t-i}}\mathcal{F}^T_{\rho^{t-i},\rho^{t-i+1}}\alpha^{0,t-i+1}(\rho^0(\rho^{t-i}),\rho^{t-i+1})\cdot\rho^{t-i}\\
    &=\sum_{\rho^{t-i+1}\in X(t-i+1)}\sum_{\rho^{t-i}\lessdot\rho^{t-i+1}}\mathcal{F}^T_{\rho^{t-i},\rho^{t-i+1}}\alpha^{0,t-i+1}(\rho^0(\rho^{t-i}),\rho^{t-i+1})\cdot\rho^{t-i}\\
    &=\sum_{\rho^{t-i+1}\in X(t-i+1)}\sum_{{}_0\rho^{t-i+1}\lessdot\rho^{t-i}\lessdot\rho^{t-i+1}}\mathcal{F}^T_{\rho^{t-i},\rho^{t-i+1}}\alpha^{0,t-i+1}(({}_0\rho^{t-i+1}),\rho^{t-i+1})\cdot\rho^{t-i}\\
    &+\sum_{\rho^{t-i+1}\in X(t-i+1)}\mathcal{F}^T_{\rho^{t-i+1}_{t-i},\rho^{t-i+1}}\alpha^{0,t-i+1}({}_0(\rho^{t-i+1}_{t-i}),\rho^{t-i+1})\cdot\rho^{t-i+1}_{t-i}.
    \end{aligned}
    \end{align}
Where we set $\rho^0(\rho^{t-i})={}_0\rho^{t-i+1}$ when $\rho^{t-i}\geq {}_0\rho^{t-i+1}$, and $\rho^0(\rho^{t-i})={}_0(\rho^{t-i+1}_{t-1})$ otherwise. An important observation is that, from the fact that $d'\alpha^{0,t-i+1}=\alpha^{1,t-i+1}$, we get
\begin{equation}
\alpha^{0,t-i+1}({}_0\rho^{t-i+1},\rho^{t-i+1})+\alpha^{0,t-i+1}({}_0(\rho^{t-i+1}_{t-1}),\rho^{t-i+1})=\alpha^{1,t-i+1}({}_1\rho^{t-i+1},\rho^{t-i+1})
\end{equation}
Hence we get
\begin{align}
    \begin{aligned}
    D\alpha&=\sum_{\rho^{t-i+1}\in X(t-i+1)}\sum_{\rho^{t-i}\lessdot\rho^{t-i+1}}\mathcal{F}^T_{\rho^{t-i},\rho^{t-i+1}}\alpha^{0,t-i+1}(({}_0\rho^{t-i+1}),\rho^{t-i+1})\cdot\rho^{t-i}\\
    &+\sum_{\rho^{t-i+1}\in X(t-i+1)}\mathcal{F}^T_{\rho^{t-i+1}_{t-i},\rho^{t-i+1}}\alpha^{1,t-i+1}({}_1\rho^{t-i+1},\rho^{t-i+1})\cdot\rho^{t-i+1}_{t-i}\\
    &=\sum_{\rho^{t-i+1}\in X(t-i+1)}\partial(\alpha^{0,t-i+1}(({}_0\rho^{t-i+1}),\rho^{t-i+1})\cdot\rho^{t-i+1})\\
    &+\sum_{\rho^{t-i+1}\in X(t-i+1)}\mathcal{F}^T_{\rho^{t-i+1}_{t-i},\rho^{t-i+1}}\alpha^{1,t-i+1}({}_1\rho^{t-i+1},\rho^{t-i+1})\cdot\rho^{t-i+1}_{t-i}.    
    \end{aligned}
\end{align}
Hence gives the equality of homology class
\begin{equation}
[D\alpha] = \Big[ \sum_{\rho^{t-i+1}\in X(t-i+1)}\mathcal{F}^T_{\rho^{t-i+1}_{t-i},\rho^{t-i+1}}\alpha^{1,t-i+1}({}_1\rho^{t-i+1},\rho^{t-i+1})\cdot\rho^{t-i+1}_{t-i} \Big].
\end{equation}
Now we are going to use an inductive method. Suppose for $\forall 1\leq j\leq k$, we have
\begin{equation}
[D\alpha] = \Big[ \sum_{\rho^{t-i+j}\in X(t-i+j)}\mathcal{F}^T_{\rho^{t-i+j}_{t-i},\rho^{t-i+j}}\alpha^{j,t-i+j}({}_j\rho^{t-i+j},\rho^{t-i+j})\cdot\rho^{t-i+j}_{t-i} \Big].
\end{equation}
We are going to prove that the same form holds for $j=k+1$. Similarly, since $\alpha^{k,t-i+k}=d''\alpha^{k,k-i+k+1}$, we have
\begin{equation}
[D\alpha] = \Big[ \hspace{-1cm} \sum_{\rho^{t-i+k+1}\in X(t-i+k+1)}\sum_{\rho^{t-i+k}\lessdot\rho^{t-i+k+1}}\mathcal{F}^T_{\rho^{t-i+k}_{t-i}.\rho^{t-i+k+1}}\alpha^{k,t-i+k+1}({}_k\rho^{t-i+k},\rho^{t-i+k+1})\cdot\rho^{t-i+k}_{t-i} \Big].
\end{equation}
Now we consider a fixed $\rho^{t-i+k+1}=[v_0,v_1,\cdots,v_{t-i+k+1}]$. Then the set $\{\rho^{t-i+k}:\rho^{t-i+k}\lessdot\rho^{t-i+k+1}\}$ can be divided into two parts $\{\rho^{t-i+k+1}\setminus v_s:0\leq s\leq k\}\sqcup\{\rho^{t-i+k+1}\setminus v_s:k+1\leq s\leq t-i+k+1\}$. For $0\leq s\leq k$, we have 
\begin{equation}
(\rho^{t-i+k+1}\setminus v_s)_{t-i}=(\rho^{t-i+k+1})_{t-i},\quad {}_k(\rho^{t-i+k+1}\setminus v_s)=({}_{k+1}(\rho^{t-i+k+1}))\setminus v_s.
\end{equation}
And for $k+1\leq s\leq t-i+k+1$, we have 
\begin{equation}
(\rho^{t-i+k+1}\setminus v_s)_{t-i}=(\rho^{t-i+k+1}_{t-i+1})\setminus v_s,\quad {}_k(\rho^{t-i+k+1}\setminus v_s)={}_k(\rho^{t-i+k+1}).
\end{equation}

Therefore,
\begin{align}
    \begin{aligned}
    &\sum_{\rho^{t-i+k}\lessdot\rho^{t-i+k+1}}\mathcal{F}^T_{\rho^{t-i+k}_{t-i}.\rho^{t-i+k+1}}\alpha^{k,t-i+k+1}({}_k\rho^{t-i+k},\rho^{t-i+k+1})\cdot\rho^{t-i+k}_{t-i}\\
    &=\sum_{s=0}^{s=k}\mathcal{F}^T_{\rho^{t-i+k+1}_{t-i},\rho^{t-i+k+1}}\alpha^{k,t-i+k+1}(({}_{k+1}\rho^{t-i+k+1})\setminus v_s,\rho^{t-i+k+1})\cdot \rho^{t-i+k+1}_{t-i}\\
    &+\sum_{s=k+1}^{s=t-i+k+1}\mathcal{F}^T_{(\rho^{t-i+k+1}_{t-i+1})\setminus v_s,\rho^{t-i+k+1}}\alpha^{k,t-i+k+1}({}_{k}\rho^{t-i+k+1},\rho^{t-i+k+1})\cdot(\rho^{t-i+k+1}_{t-i+1})\setminus v_s\\
    &=\sum_{\rho^k\in X_{\leq{}_{k+1}\rho^{t-i+k+1}}(k)}\mathcal{F}^T_{\rho^{t-i+k+1}_{t-i},\rho^{t-i+k+1}}\alpha^{k,t-i+k+1}(\rho^k,\rho^{t-i+k+1})\cdot \rho^{t-i+k+1}_{t-i}\\
    &+\partial(\mathcal{F}^T_{\rho^{t-i+k+1}_{t-i+1},\rho^{t-i+k+1}}\alpha^{k,t-i+k+1}({}_{k}\rho^{t-i+k+1},\rho^{t-i+k+1})\cdot\rho^{t-i+k+1}_{t-i+1}),    
    \end{aligned}
\end{align}
where the last equality is done by adding
\begin{equation}
\mathcal{F}^T_{\rho^{t-i+k+1}_{t-i+1}\setminus v_k,\rho^{t-i+k+1}}\alpha^{k,t-k+i+1}({}_k\rho^{t-i+k+1},\rho^{t-i+k+1})\cdot\rho^{t-i+k+1}_{t-i+1}\setminus v_k
\end{equation}
twice. Note that the condition $d'\alpha^{k,t-i+k+1}=\alpha^{k+1,t-i+k+1}$ gives us the equation
\begin{equation}
\sum_{\rho^k\in X_{\leq{}_{k+1}\rho^{t-i+k+1}}(k)}\alpha^{k,t-i+k+1}(\rho^k,\rho^{t-i+k+1})=\alpha^{k+1,k-i+k+1}({}_{k+1}\rho^{t-i+k+1},\rho^{t-i+k+1}).
\end{equation}
Therefore, 
\begin{equation}
[D\alpha] = \Big[\sum_{\rho^{t-i+k+1}\in X(t-i+k+1)}\mathcal{F}^T_{\rho^{t-i+k+1}_{t-i},\rho^{t-i+k+1}}\alpha^{k+1,t-i+k+1}({}_{k+1}\rho^{t-i+k+1},\rho^{t-i+k+1})\cdot\rho^{t-i+k+1}_{t-i} \Big]
\end{equation}
By induction, we get
\begin{align}
\begin{aligned}
    [D\alpha]& = \Big[\sum_{\rho^t\in X(t)}\mathcal{F}^T_{\rho^t_{t-i},\rho^t}\alpha^{i,t}({}_i\rho^t,\rho^t)\cdot\rho^t_{t-1} \Big]
  & = \Big[\sum_{\rho^t\in X(t)}\mathcal{F}^T_{\rho^t_{t-i},\rho^t}\,\mathcal{F}^{\perp}_{{}_i\rho^t,\rho^t}\alpha({}_i\rho^t)\cdot\rho^t_{t-i} \Big] \\
  &=[\alpha]\frown[X],
\end{aligned}
\end{align}
where the second line is obtained by $\alpha^{i,t}=h'' \alpha$.

Now, let us consider a general cell complex $X$ with simplicial approximation $\t{X}$, the following commutative diagram completes the proof 
\begin{equation}
    \begin{tikzcd}
        H^i(X,\F^\perp)\arrow[r,"\frown {[X]}"]\arrow[d,"A^*"',"\cong"] & H_{t-i}(X,\F)\\
        H^i(\t{X},\t{\F}^\perp)\arrow[r,"\frown {[\t{X}]}","\cong"'] & H_{t-i}(\t{X},\t{\F}) \arrow[u,"A_*","\cong"']
    \end{tikzcd}
\end{equation}
\end{proof}

\begin{definition}[Dual pairing]
  Given two $\mathbb{F}$-linear space $V$ and $W$, a \te{dual pairing} between them is a map $\phi:V\times W\rightarrow\mathbb{F}$ such that $\phi(V)\cong W^*, \phi(W)\cong V^*$.
\end{definition}

\begin{proposition}\label{dual pairing between homology groups}
For a chain complex $C_*$ and cochain complex $C^*$, the pairing $\langle-,-\rangle$ between (co)homology groups $H^i$ and $H_i$ is a dual pairing.
\end{proposition}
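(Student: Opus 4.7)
The plan is to reduce the statement to the standard fact of finite-dimensional linear algebra that a non-degenerate bilinear pairing $V \times W \to \mathbb{F}$ identifies $V$ with $W^*$ and vice versa. The starting point is the observation, already recorded just before the proposition, that the chain-level pairing $\langle -, - \rangle : C^i \times C_i \to \mathbb{F}$ is perfect (it realizes the canonical isomorphism $C^i \cong \mathrm{Hom}(C_i,\mathbb{F})$) and that the coboundary and boundary are adjoint with respect to it: $\langle \delta \beta, x \rangle = \langle \beta, \partial x \rangle$ for every $\beta \in C^{i-1}$ and $x \in C_i$.

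First I would verify that the pairing descends to a well-defined bilinear form $\langle -, - \rangle : H^i \times H_i \to \mathbb{F}$. Using the adjointness above, $\langle \delta \beta, x \rangle = \langle \beta, \partial x \rangle = 0$ whenever $x \in \ker \partial_i$, and symmetrically $\langle \alpha, \partial y \rangle = \langle \delta \alpha, y \rangle = 0$ whenever $\alpha \in \ker \delta^i$. This is exactly the computation already given below Proposition~\ref{prop:cap_Leibniz} and only needs to be recorded.

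Next I would identify the $\mathrm{ann}$-type subspaces. Writing $W^\perp \coloneqq \{\phi \in V^* : \phi|_W = 0\}$ for the annihilator under the canonical pairing $C^i \cong (C_i)^*$, the adjoint relation immediately gives
\begin{equation}
    Z^i = \ker \delta^i = (\operatorname{im} \partial_{i+1})^\perp = B_i^\perp,
\end{equation}
since $\delta \alpha = 0$ is equivalent to $\langle \alpha, \partial y \rangle = 0$ for all $y \in C_{i+1}$. The symmetric statement $B^i = Z_i^\perp$ requires slightly more care: one inclusion is again a direct consequence of adjointness, while the reverse inclusion uses finite-dimensionality to extend any linear functional on $C_{i-1}$ defined on $B_{i-1} = C_{i-1}/Z_{i-1}$ (via $\partial$) to a full element $\beta \in C^{i-1}$, so that $\delta \beta$ realizes a given element of $Z_i^\perp$.

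Finally, putting these identifications together,
\begin{equation}
    H^i = Z^i / B^i = B_i^\perp / Z_i^\perp \;\cong\; (Z_i / B_i)^* = H_i^{\,*},
\end{equation}
where the middle isomorphism is the standard duality of quotients in finite-dimensional vector spaces, and it is induced precisely by the pairing $\langle -, - \rangle$. The symmetric conclusion $H_i \cong (H^i)^*$ follows by the same argument with the roles of chains and cochains reversed, yielding that $\langle -, - \rangle$ is a dual pairing. The main (mild) technical point is the surjectivity step $Z_i^\perp \subseteq B^i$; everything else is bookkeeping via the adjointness $\delta = \partial^T$.
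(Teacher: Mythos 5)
Your proof is correct and runs on the same underlying mechanism as the paper's: the key nontrivial content in both cases is the inclusion $Z_i^\perp \subseteq B^i$ (a cocycle vanishing on all cycles is a coboundary), which you establish by extending the functional $\partial x \mapsto \alpha(x)$ from $B_{i-1}$ to all of $C_{i-1}$ — precisely the $\beta$ the paper constructs with $\beta(\partial y) := \alpha(y)$. The only stylistic difference is how you finish: you invoke the quotient-duality identity $B_i^\perp/Z_i^\perp \cong (Z_i/B_i)^*$ to get the isomorphism in one stroke, whereas the paper argues injectivity of $[\alpha]\mapsto\langle[\alpha],-\rangle$ and then appeals to the dimension count $\dim H^i = \dim H_i^*$ for surjectivity. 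Your annihilator-and-quotient packaging is marginally cleaner in that it avoids treating dimensions as a separate input, but the two proofs use the same ingredients and are essentially interchangeable.
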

\begin{proof}
    Let $\alpha$ be a cocycle in $C^i$, and suppose that $\alpha(x)=\langle\alpha,x\rangle=0$ for all $x\in \ker \partial_{i}$. We show that $[\alpha]=0$, i.e., that $\alpha$ is a coboundary. To see this, define $\beta\in C^{i-1}$ by setting $\beta(\partial y)\coloneqq \alpha(y)$ for each $y\in C_i$. This $\beta$ is well-defined: if $y,y'\in C_i$ satisfy $\partial y'=\partial y$, then $y'-y\in \ker \partial_{i}$, and consequently $\beta(\partial y')-\beta(\partial y)=\alpha(y'-y)=0$. Moreover, $(\delta \beta)(y)=\beta(\partial y)=\alpha(y)$ for all $y$, so $\alpha = \delta\beta$ and $[\alpha]=0$. 

    Thus the map $[\alpha]\mapsto\langle[\alpha],-\rangle\in H_i^*$ is injective. Since $\dim H^i=\dim H_i^*$, this map is actually bijective. A symmetric argument shows that $[x]\mapsto\langle-,[x]\rangle$ is also bijective. Consequently, $\langle-,-\rangle$ defines a dual pairing between $H^i$ and $H_i$.
\end{proof}

\begin{corollary}\label{Poincare dual pairing}
    There is a dual paring $P$
    \begin{equation}
    P:H^i(X,\mathcal{F}^\perp)\times H^{t-i}(X,\mathcal{F})\longrightarrow\mathbb{F},
    \end{equation}
    given by $P([\alpha],[\beta])=\langle[\alpha]\smile[\beta],[X]\rangle$.
\end{corollary}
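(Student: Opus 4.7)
The plan is to deduce the corollary by combining Theorem~\ref{thm:Cap induced poincare dual map}, which realizes Poincar\'e duality via cap product with $[X]$, with Proposition~\ref{dual pairing between homology groups}, which says that the evaluation pairing between cohomology and homology is non-degenerate. The bridge between the two is the classical cup--cap adjunction
\begin{equation}\label{eq:adjunction-plan}
\langle \alpha \smile \beta,\, x \rangle \;=\; \langle \beta,\, \alpha \frown x \rangle
\end{equation}
for $\alpha \in C^i(X,\mathcal F^\perp)$, $\beta \in C^{t-i}(X,\mathcal F)$, and $x \in C_t(X,\mathcal F^\perp \!\otimes \mathcal F)$, where the left-hand pairing is the evaluation on $C^t(X,\mathcal F^\perp\!\otimes\mathcal F)\times C_t(X,\mathcal F^\perp\!\otimes\mathcal F)$ induced by the canonical pairing on tensor factors, and the right-hand cap is the type-III product.

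First, I would verify that $P$ descends to cohomology. Using the Leibniz rules established in Proposition~\ref{prop:cup_Leibniz} and Proposition~\ref{prop of cap} (and the fact that $\partial[X]$ is irrelevant because we only test against cocycles/cycles), if $\alpha = \delta \alpha'$ then $\alpha \smile \beta = \delta(\alpha'\smile\beta)$ whenever $\delta\beta=0$, so $\langle \alpha\smile\beta,[X]\rangle = \langle \alpha'\smile\beta,\partial[X]\rangle$; I would instead run this through the adjunction \eqref{eq:adjunction-plan} and the fact that $\alpha\frown[X]$ is a cycle whose homology class depends only on $[\alpha]$, which is exactly the content of Theorem~\ref{thm:Cap induced poincare dual map}. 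Explicitly, combining \eqref{eq:adjunction-plan} with that theorem gives
\begin{equation}
P([\alpha],[\beta]) \;=\; \langle [\beta],\, [\alpha]\frown [X]\rangle \;=\; \langle [\beta],\, D[\alpha]\rangle,
\end{equation}
which is manifestly well defined on cohomology.

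Next, the corollary follows formally: $D\colon H^i(X,\mathcal F^\perp) \xrightarrow{\cong} H_{t-i}(X,\mathcal F)$ is an isomorphism by Theorem~\ref{thm:Cap induced poincare dual map}, and by Proposition~\ref{dual pairing between homology groups} the evaluation $\langle -,-\rangle\colon H^{t-i}(X,\mathcal F)\times H_{t-i}(X,\mathcal F)\to \mathbb{F}$ is a dual pairing. Pulling back along $\mathrm{id}\times D$ preserves non-degeneracy on both sides, so $P$ is a dual pairing.

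The main thing to check is therefore the adjunction \eqref{eq:adjunction-plan}. For simplicial $X$, I would prove it by a direct calculation at the cochain level: unfolding the definitions of $\smile_{\mathrm{I}}$ and $\frown_{\mathrm{III}}$, both sides become
\begin{equation}
\sum_{\sigma\in X(t)} \Bigl\langle \mathcal F^{\perp}_{{}_i\sigma,\sigma}\alpha({}_i\sigma)\otimes \mathcal F_{\sigma_{t-i},\sigma}\beta(\sigma_{t-i}),\; x(\sigma)\Bigr\rangle,
\end{equation}
where the left side is the evaluation pairing on $\mathcal F^\perp_\sigma\otimes \mathcal F_\sigma$ and the right side results from moving $\mathcal F^{\perp}_{{}_i\sigma,\sigma}$ across the partial pairing and then coupling $\mathcal F_{\sigma_{t-i},\sigma}\beta(\sigma_{t-i})$ to the output component. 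For a general cell complex $X$ admitting a simplicial approximation, I would extend by naturality: by construction the cup and cap products on $X$ are defined through the chain maps $S_\#, A_\#, S^\#, A^\#$ of Section~\ref{subsec:subdivision}, so \eqref{eq:adjunction-plan} on $X$ reduces to \eqref{eq:adjunction-plan} on $\tilde X$ after checking that $(A^\# \alpha)\smile (A^\#\beta)$ and $A^\#\alpha \frown S_\# [X]$ have the expected compatibility with the pairing; this is essentially the transpose relation between $A_\#$ and $A^\#$. The main obstacle will be keeping track of the sheaf restriction maps and the $I^\ast$-twist in $S_\#, A_\#$ so that the two sides of \eqref{eq:adjunction-plan} really match; once this bookkeeping is done, the rest of the argument is purely formal.
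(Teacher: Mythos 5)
Your proposal follows the same route as the paper: establish the cup--cap adjunction $\langle\alpha\smile\beta,[X]\rangle=\langle\beta,\alpha\frown[X]\rangle$ by unwinding the definitions of $\smile_{\mathrm{I}}$ and $\frown_{\mathrm{III}}$ on the $t$-cells, then compose with the isomorphism $D$ from Theorem~\ref{thm:Cap induced poincare dual map} and the non-degeneracy of $\langle-,-\rangle$ from Proposition~\ref{dual pairing between homology groups}. The paper does exactly this (phrased as a commutative square); your extra remarks on well-definedness and on reducing the general cell-complex case to $\tilde X$ via $S_\#,A_\#$ are correct but are already subsumed by the theorem and the way the products on $X$ are defined, so they are consistent with, not a departure from, the paper's argument.
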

\begin{proof}
By direct calculation,
    \begin{equation}
    \langle\alpha\smile\beta,[X]\rangle=\sum_{\sigma\in X(t)}\mathcal{F}^\perp_{{}_p\sigma,\sigma}\alpha({}_p\sigma)\cdot\mathcal{F}_{\sigma_{t-p},\sigma}\beta(\sigma_{t-p})=\langle\beta,\alpha\frown[X]\rangle.
    \end{equation}
    As a result, the following diagram commutes
\begin{equation}
\begin{tikzcd}
    H^p(X, \mathcal{F}^\perp) \times H^{t-p}(X, \mathcal{F}) \arrow[r, "\smile"] \arrow[d,"D\times \text{id}"'] & 
    H^t(X, \mathcal{F}^\perp \otimes \mathcal{F}) \arrow[d,"\langle-{,}\,{[X]}\rangle"] \\ 
    H_{t-p}(X, \mathcal{F}) \times H^{t-p}(X, \mathcal{F}) \arrow[r,"\langle-{,}\,-\rangle"'] & 
    \mathbb{F}
\end{tikzcd}
\end{equation}
    Since $\langle-,-\rangle$ is a dual paring on $H^{n-p}(X,\mathcal{F})\times H_{n-p}(X,\mathcal{F})$, $P$ is also a dual paring.
\end{proof}

One can formulate the above discussion more generally via the following commutative diagram, although we will not make use of it in what follows.
\begin{equation}
\begin{tikzcd}
    H^p(X, \mathcal{F}^\perp) \times H^{q}(X, \mathcal{F}) \arrow[r, "\smile"] \arrow[d,"D\times \text{id}"'] & 
    H^{p+q}(X, \mathcal{F}^\perp \otimes \mathcal{F}) \arrow[d,"\langle-{,}\,{[X]}\rangle"] \\ 
    H_{t-p}(X, \mathcal{F}) \times H^{q}(X, \mathcal{F}) \arrow[r,"\langle-{,}\,-\rangle"'] & 
    H_{t-p-q}(X,\mathbb{F})
\end{tikzcd}
\end{equation}

The dual pairing in Corollary~\ref{Poincare dual pairing} will be crucial for the bound of $k_{\CZ}$ in Theorem~\ref{thm: CZ on good qLDPC} and~\ref{thm: CZ on almost good qLTC}.


\subsection{Explicit approximate inverse}\label{subsec:explicit}

We now give an explicit construction of the approximate inverse $A$ for cubical complexes.

\begin{example}\label{example:2D_CZ}
    We give an explicit calculation for a 2-dimensional cubical complex. Let $G$ be a set with pairwise commutative permutation sets $A_1$ and $A_2$, and let $X$ be the cubical complex generated by $\{G, A_1, A_2\}$. Following the notation of \cite{Dinur2024sheaf,lin2024transversalnoncliffordgatesquantum, nguyen2025quantum}, we denote a 2-cube by $[g;a_1,a_2]$. For simplicity, we write $g$ for the point $[g;0,0]$, $g a_1$ for $[g a_1;1,0]$, $g a_2$ for $[g a_2;0,1]$, and $g a_1a_2$ for $[g a_1 a_2;1,1]$. The triangulation is obtained by adding a segment from $[g;0,0]$ to $[g  a_1 a_2; 1,1]$, and we name the two 2-simplices as $[g, g  a_1, g  a_1 a_2]$ and $[g, g  a_2, g a_1 a_2]$. The 1-simplices are named according to the boundary map. The approximate inverse $A$ is defined by expanding the simplex $[g,ga_2,ga_1a_2]$ to fill the entire square $[g;a_1,a_2]$. This automatically collapses the other 2-simplex $[g, g  a_1, g  a_1 a_2]$ to the union of the two 1-simplices $[g, g  a_1]$ and $[g  a_1, g  a_1 a_2]$.
\end{example}

\begin{example}
   We now give a general formula for $t$-dimensional cubical complexes. Let $G$ be a set equipped with pairwise commutative permutation sets $A_1,A_2,\dots,A_t$, and let $X$ be the cubical complex generated by $\{ G, A_1, A_2,\dots,A_t \}$. For elements $a_i\in A_i$ ($i=1,\dots,t$), we write $ga_{i_1}\cdots a_{i_j}$ to denote the point $[ga_{i_1}\cdots a_{i_j};b_1,\dots, b_t]$, where $b_k=1$ if $k\in\{i_1,\dots,i_j\}$ and $b_k=0$ otherwise. The $t$-cube $[g;a_1,a_2,\dots,a_t]$ is decomposed into $t!$ many $t$-simplices as follows:
    \begin{equation}
        \tilde{X}(t) =  \Bigl\{[g,ga_{\pi(1)},\dots,ga_{\pi(1)}a_{\pi(2)}\cdots a_{\pi(t)}]:
        g\in G,\;a_1\in A_1,\dots,a_t\in A_t,\;\pi\in S_t \Bigr\},
    \end{equation}
    where $S_t$ is the permutation group. Simplices of lower dimensions are obtained by applying the boundary map. The approximate inverse $A$ is defined by extending the simplex 
    \[
    [g,ga_t,g a_ta_{t-1},\dots, g a_ta_{t-1}\cdots a_1]
    \]
    to fill the whole $t$-cube $[g;a_1, \dots, a_t]$. For convenience, set $a_0$ to be the identity. Then
    \begin{equation}
        A_\#[ga_1a_2\cdots a_i,ga_1a_2\cdots a_{i+j}] = \sum_{k=0}^{j-1}[ga_1\cdots a_{i+k},ga_1\cdots a_{i+k+1}],
    \end{equation}
    and $A_\#$ is uniquely determined by its action on 1-simplices. Although explicit formulas for $A_\#$ are difficult to write down, they are theoretically feasible to compute and can be implemented for concrete examples.
\end{example}


\subsection{Explicit constructions with nontrivial subrank lowerbound}\label{sec:explicit gate construction}
This section gives instantiations of the local codes so that we can prove a nonzero lower bound on $k_{\CZ}$ and $k_{\CCZ}$. First, we give the constructions of logical $\CZ$ on good qLDPC and qLTC.

\begin{theorem}\label{thm: CZ on good qLDPC}
    There exist $[\![n,\Theta(n),\Theta(n)]\!]$ quantum LDPC codes with transversal disjoint logical $\CZ$ gate where $k_{\CZ}=\Theta(n)$. 
\end{theorem}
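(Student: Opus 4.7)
The plan is to instantiate our Poincar\'e duality and cup-product framework on the good qLDPC code family of \cite{DHLV2022}, which is built from a locally acyclic sheaf $\mathcal{F}$ on a $2$-dimensional cubical complex $X$. By Theorem \ref{thm:Poincaré_duality} applied at $i=1$, $t=2$, the code placing qubits at $C^{1}(X,\mathcal{F})$ and the code placing qubits at $C^{1}(X,\mathcal{F}^{\perp})$ both have parameters $[\![n,\Theta(n),\Theta(n)]\!]$. My first step is to check directly that the DHLV sheaf satisfies Definition \ref{def:acyclic_sheaf}: the relevant short complex at each cell reduces to the local-code exactness guaranteed by the product-expansion / full-rank parity-check property exploited in \cite{DHLV2022}, and Corollary \ref{coro:acyclic_sheaf} together with Remark \ref{remark:strong_sheaf} packages this into the sheaf axiom.

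Next, I subdivide $X$ into a simplicial complex $\tilde{X}$ and pull $\mathcal{F},\mathcal{F}^{\perp}$ back to $\tilde{\mathcal{F}},\widetilde{\mathcal{F}^{\perp}}$; local acyclicity survives by Corollary \ref{pullback sheaf is locally acyclic}, and the chain maps $S_{\#},A_{\#}$ of Section \ref{subsec:subdivision} are sparse because both $X$ and $\tilde{X}$ are sparse. I then invoke Corollary \ref{Poincare dual pairing} with $t=2$, $i=1$, giving the bilinear form
\begin{equation*}
P: H^{1}(X,\mathcal{F}^{\perp})\times H^{1}(X,\mathcal{F})\longrightarrow \mathbb{F},\qquad P([\alpha],[\beta]) = \langle [\alpha]\smile[\beta],\, [X]\rangle,
\end{equation*}
which is a dual pairing on cohomology. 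Unfolded through $S^{\#},A^{\#}$ back to the cubical complex, $P$ is a constant-sparsity cohomology-invariant bilinear form by the discussion at the end of Section \ref{subsec:subdivision}.

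Because $P$ is a dual pairing, I can choose dual bases $\{[\alpha_{j}]\}$ of $H^{1}(X,\mathcal{F}^{\perp})$ and $\{[\beta_{j}]\}$ of $H^{1}(X,\mathcal{F})$ so that $P([\alpha_{j_{1}}],[\beta_{j_{2}}]) = \delta_{j_{1}j_{2}}$, hence the cohomology subrank of $P$ equals $\dim H^{1}(X,\mathcal{F}) = \Theta(n)$. Lemma \ref{lem:cohomology invariant and CCZ} then converts $P$ into a constant-depth logical $\CZ$ circuit between the two code blocks with $k_{\CZ}\geq \Theta(n)$, and Lemma \ref{lem:constant-depth-to-transversal} upgrades this to a depth-one, disjoint (transversal) circuit while degrading $n$, rate, distance and sparsity only by constant factors, so the resulting codes are still $[\![n,\Theta(n),\Theta(n)]\!]$ qLDPC with $k_{\CZ}=\Theta(n)$.

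The main obstacle I anticipate is the careful bookkeeping in step one: local acyclicity of $\mathcal{F}$ must be extracted from the specific local-code structure of \cite{DHLV2022} rather than assumed, and in particular the exactness of $\delta_{L}$ at the $(i{+}1)$-cells (as opposed to the trivial injectivity at $\mathcal{F}_{\sigma}$) requires the full product-expansion hypothesis. A secondary subtlety is ensuring that, after pulling the cup product back through $S^{\#}$ and $A^{\#}$, each physical qubit participates in only $O(1)$ monomials of $P$; this is what guarantees constant depth before the repetition-code concatenation, and hence genuine transversality after.
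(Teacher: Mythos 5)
Your proposal is correct and follows essentially the same route as the paper: instantiate the DHLV square complex at $(t,i)=(2,1)$, verify local acyclicity of the sheaf, subdivide and pull back, apply Corollary~\ref{Poincare dual pairing} to obtain the dual pairing $P([\alpha],[\beta])=\langle[\alpha]\smile[\beta],[X]\rangle$, choose dual bases of $H^1(X,\mathcal F^\perp)$ and $H^1(X,\mathcal F)$ to conclude the cohomology subrank is $\Theta(n)$, and finish with Lemma~\ref{lem:constant-depth-to-transversal}. Your proposal is in fact slightly more careful than the paper's proof, since you flag the step of actually verifying local acyclicity for the DHLV local codes and the sparsity of $P$ after pullback, both of which the paper asserts without elaboration.
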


\begin{theorem}\label{thm: CZ on almost good qLTC}
    There exist $[\![n,\Theta(n),\Theta(n/(\log\,n)^3)]\!]$ quantum locally testable codes with soundness $1/(\log\, n)^3$ and transversal disjoint logical $\CZ$ gate where $k_{\CZ}=\Theta(n)$.
\end{theorem}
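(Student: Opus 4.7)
The plan is to instantiate the Poincaré duality framework together with the dual pairing coming from the cup product on two concrete sheaf-code families already in the literature: the good qLDPC construction of \cite{DHLV2022} on a $2$-dimensional cubical complex (with $i=1$) for Theorem \ref{thm: CZ on good qLDPC}, and the almost-good qLTC of \cite{Dinur2024sheaf} on a $4$-dimensional cubical complex (with $i=2$) for Theorem \ref{thm: CZ on almost good qLTC}. In both cases, $X$ is sparse, admits a simplicial approximation, and the sheaf $\F$ generated by the local codes is locally acyclic in the sense of Definition \ref{def:acyclic_sheaf}, so the hypotheses of Theorem \ref{thm:Poincaré_duality} and Corollary \ref{Poincare dual pairing} are satisfied. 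Let $\mathcal{Q}_1$ be the CSS code obtained by placing qubits on $C^i(X,\F^\perp)$ and $\mathcal{Q}_2$ the CSS code on $C^{t-i}(X,\F)$.

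First, I would read off the code parameters. On one side of each construction, the original works directly yield rate $\Theta(n)$, the claimed distance, and (in the qLTC case) soundness. Theorem \ref{thm:Poincaré_duality} then transports these parameters to the opposite side up to constant factors, so that both $\mathcal{Q}_1$ and $\mathcal{Q}_2$ attain the parameters stated: $[\![n,\Theta(n),\Theta(n)]\!]$ qLDPC in the first theorem, and $[\![n,\Theta(n),\Theta(n/(\log n)^3)]\!]$ with soundness $1/(\log n)^3$ in the second.

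Next, I would exhibit the logical $\CZ$ circuit and the subrank lower bound in one stroke. Consider the bilinear form
\begin{equation}
    P : C^i(X,\F^\perp)\times C^{t-i}(X,\F) \longrightarrow \mathbb{F}, \qquad P(\alpha,\beta) = \langle \alpha \smile \beta,\,[X]\rangle,
\end{equation}
which is a cohomological invariant by the Leibniz rule for $\smile$ (Proposition \ref{prop:cup_Leibniz}) and has constant sparsity because both $X$ and its simplicial subdivision are sparse, while the chain maps $S_\#,A_\#$ of Section \ref{subsec:subdivision} are represented by sparse matrices. Applying Lemma \ref{lem:cohomology invariant and CCZ} to $P$ yields a constant-depth physical $\CZ$ circuit across the two code blocks that acts as a logical operation on the joint codespace, with $k_{\CZ}$ at least the cohomology subrank of $P$. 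Corollary \ref{Poincare dual pairing} now provides the crucial input: $P$ is a \emph{dual} pairing on $H^i(X,\F^\perp)\times H^{t-i}(X,\F)$, so we can pick bases dual under $P$ making the pairing matrix the identity of size $s=\dim H^i(X,\F^\perp)=\Theta(n)$. Hence the cohomology subrank equals $s$ and $k_{\CZ}\geq s = \Theta(n)$. Finally, Lemma \ref{lem:constant-depth-to-transversal} turns the constant-depth $\CZ$ circuit into a transversal disjoint one, incurring only constant-factor overhead in all code parameters.

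The hardest part will be verifying that the sheaves used in \cite{DHLV2022} and \cite{Dinur2024sheaf} are genuinely locally acyclic in the sense of Definition \ref{def:acyclic_sheaf}: this amounts to a concrete vanishing statement for the restrictions of the local tensor-product codes to every star $X_{\geq\sigma}$, a property implicit in those constructions that must be recorded explicitly before Theorem \ref{thm:Poincaré_duality} and Corollary \ref{Poincare dual pairing} can be invoked. A secondary concern is tracking constant factors carefully through Lemma \ref{lem:constant-depth-to-transversal} for the qLTC case so as to preserve the stated $1/(\log n)^3$ soundness; since that reduction only concatenates with the repetition code, all soundness, rate, and distance guarantees are retained up to constants, so the poly-logarithmic scaling survives.
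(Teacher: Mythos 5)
Your proposal matches the paper's proof essentially line by line: instantiate the $4$-dimensional cubical complex and locally acyclic sheaf of~\cite{Dinur2024sheaf} at $i=2$, invoke Corollary~\ref{Poincare dual pairing} to get the dual pairing $P([\alpha],[\beta])=\langle[\alpha]\smile[\beta],[X]\rangle$ on $H^2(X,\F^\perp)\times H^2(X,\F)$, pick bases dual under $P$ so the cohomology subrank equals $\dim H^2(X,\F^\perp)=\Theta(n)$, and pass through Lemma~\ref{lem:cohomology invariant and CCZ} and Lemma~\ref{lem:constant-depth-to-transversal}. The concern you raise about explicitly verifying local acyclicity of the sheaf from~\cite{Dinur2024sheaf} is fair---the paper itself only asserts this without a detailed proof---but your overall route is the paper's route.
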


\begin{proof}[Proof of Theorem \ref{thm: CZ on good qLDPC} and \ref{thm: CZ on almost good qLTC}]
    By Corollary \ref{Poincare dual pairing}, when $X$ is a sparse cell complex that admits simplicial approximation and $\F$ is a locally acyclic sheaf on it, then we have a dual pairing
     \begin{equation}
    P:H^i(X,\mathcal{F}^\perp)\times H^{t-i}(X,\mathcal{F})\longrightarrow\mathbb{F},
    \end{equation}
    given by $P([\alpha],[\beta])=\langle[\alpha]\smile[\beta],[X]\rangle$. In Section \ref{subsec:explicit}, we showed that a $t$-dimensional cubical complex admits simplicial approximation. When $t=2$, $i=1$, $X$ is the left-right Cayley square complex in \cite{DHLV2022} and $\F$ is the associated sheaf, then $\F$ is locally acyclic, and the bilinear map $P$ above gives the constant-depth logical $\CZ$ in this code. Note that $H^1(X,\F^\perp)\cong H^1(X,\F)$ by Poincar\'e duality, and $P$ is a dual pairing, therefore we can choose a basis $\{e_i\}$ for $H^1(X,\F^\perp)$ and a basis $\{f_j\}$ for $H^1(X,\F)$ such that $P(e_i, f_j)=\delta_{i,j}$, i.e., they are dual basis of each other. Therefore the number of logical $\CZ$ is exactly the number of logical qubits. Hence $k_{\CZ}=\Theta(n)$, and the gate is disjoint. Using Lemma~\ref{lem:constant-depth-to-transversal} we can convert the code such that the physical $\CZ$ circuit becomes transversal. This proves Theorem \ref{thm: CZ on good qLDPC}.

    When $t=4$, $i=2$, $X$ is the 4-dimensional cubical complex in \cite{Dinur2024sheaf}, and $\F$ is the associated sheaf, then $\F$ is locally acyclic. A similar argument proves Theorem \ref{thm: CZ on almost good qLTC}.
\end{proof}

We also provide an alternative proof, showing $k_{\CZ}>0$ instead of $k_{\CZ}=\Theta(n)$, serving as an evidence that one might use the technique of planting all-ones vectors into local code to prove $k_{\CCZ}$ in the future.

\begin{theorem}[$\CZ$ on square complex codes]\label{thm:cz}
    Consider the square complex sheaf code construction in~\cite{DHLV2022} with the following modification. First, we use the square complex $X$ construction in~\cite[Theorem 14]{Golowich2024_NLTSPlantedqLDPC} where $|X(2)|$ is an odd number. Second, we choose the sheaf $\mathcal{F}$ such that the classical codes $\im h_1^T$ and $\ker h_2=\im (h_2^\perp)^T$ contain the all-ones vector. Then there exists a cycle $x\in C_2(X,\mathcal{F}^\perp\!\!\otimes\mathcal{F})$ which induces a bilinear function
    \begin{equation}
    f_x:C^1(X,\mathcal{F}^\perp)\times C^1(X,\mathcal{F})\longrightarrow\mathbb{F},
    \end{equation}
    given by 
    \begin{equation}
    f_x(\alpha,\beta)=\langle\alpha\smile\beta,x\rangle.
    \end{equation}
    This bilinear function $f_x$ induces transversal $\CZ$ gate which is not logical identity. Furthermore, the code parameters are asymptotically good.
\end{theorem}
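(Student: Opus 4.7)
The strategy is to take $x$ to be the all-ones chain on top cells, verify it is a cycle using only Euclidean duality of the local codes, exhibit natural cocycles $\alpha_*,\beta_*$ coming from the planted all-ones vectors, compute $f_x(\alpha_*,\beta_*)=|X(2)|\bmod 2=1$ via the simplicial subdivision of Example~\ref{example:2D_CZ}, and invoke Appendix~\ref{appendix plant all-ones vector} to preserve asymptotic goodness.

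\textbf{Step 1 (cycle condition).} Set $x\coloneqq \sum_{\tau\in X(2)}\tau$, where the coefficient at each $\tau$ is the canonical generator of $\mathcal{F}^\perp_\tau\otimes\mathcal{F}_\tau\cong\mathbb F$. For any edge $\sigma$ and any simple tensor $u\otimes v$ with $u\in\mathcal{F}^\perp_\sigma$ and $v\in\mathcal{F}_\sigma$, unfolding the transposed restriction maps gives
\begin{equation*}
\langle \partial x(\sigma),\, u\otimes v\rangle
= \sum_{\tau\gtrdot\sigma} u(\tau)\,v(\tau)
= \langle u,v\rangle_{\mathbb{F}^{U_\sigma(2)}}
= 0,
\end{equation*}
because $\mathcal{F}^\perp_\sigma$ is the Euclidean dual of $\mathcal{F}_\sigma$. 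Hence $\partial x=0$, so by the Leibniz rule for the type-II cup product, $f_x(\alpha,\beta)\coloneqq \langle \alpha\smile\beta,x\rangle$ is a bilinear cohomology invariant of constant sparsity on the sparse square complex, and Lemma~\ref{lem:cohomology invariant and CCZ} promotes it to a constant-depth $\CZ$ circuit on two code blocks (made transversal via Lemma~\ref{lem:constant-depth-to-transversal}).

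\textbf{Step 2 (planted cocycles).} Define $\alpha_*,\beta_*\in C^1$ by $\alpha_*(\sigma)\coloneqq \mathbbm{1}_\sigma\in\mathcal{F}^\perp_\sigma$ and $\beta_*(\sigma)\coloneqq \mathbbm{1}_\sigma\in\mathcal{F}_\sigma$; both are well defined by the local-code hypothesis. Since every 2-cube has an even number of boundary edges and the restriction of $\mathbbm{1}_\sigma$ to any face-slot is $1$, the coboundaries $\delta\alpha_*,\delta\beta_*$ vanish modulo $2$.

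\textbf{Step 3 (nonzero evaluation).} Translating through the commutative diagram of Section~\ref{subsec:subdivision},
\begin{equation*}
f_x(\alpha_*,\beta_*)
= \bigl\langle A^{\#}\alpha_* \smile_{\tilde X} A^{\#}\beta_*,\; S_{\#} x \bigr\rangle.
\end{equation*}
At top cells the isomorphism $I^{\ast}$ is trivial, so $S_{\#}x=\sum_{\tilde\tau\in\tilde X(2)}\tilde\tau$ is again all-ones. A direct unpacking of the approximate inverse $A$ of Example~\ref{example:2D_CZ} (which stretches the diagonal triangle $[g,ga_2,ga_1a_2]$ to fill each square and collapses the complementary triangle $[g,ga_1,ga_1a_2]$ onto two edges of the square) shows that on each 2-cube exactly one of the two triangles carries a nonvanishing cup product equal to $1$, while the other contributes $0$. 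Summing over 2-simplices yields $f_x(\alpha_*,\beta_*)\equiv|X(2)|\pmod 2 = 1$, where oddness holds by the choice of complex from~\cite{Golowich2024_NLTSPlantedqLDPC}.

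\textbf{Step 4 (logicality and parameters).} The nonvanishing in Step~3 forces $[\alpha_*]\in H^1(X,\mathcal{F}^\perp)$ and $[\beta_*]\in H^1(X,\mathcal{F})$ to be nontrivial, so $k_{\CZ}\ge 1$ and the induced transversal $\CZ$ is not a logical identity. The base square complex is asymptotically good; the all-ones planting is exactly the constant-dimensional local-code perturbation analyzed in Appendix~\ref{appendix plant all-ones vector}, which preserves the LDPC property, linear rate, and linear distance up to constant factors (the product-expansion estimates driving~\cite{DHLV2022} are stable under such additions).

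\textbf{Main obstacle.} The delicate point is the subdivision bookkeeping in Step~3: both chain maps $S_{\#}$ and $A^{\#}$ are defined through the approximate inverse $A$, and a naive symmetrization of the two triangles tiling each square would produce a spurious cancellation. The resolution is the explicit asymmetry of $A$ recorded in Example~\ref{example:2D_CZ}, which concentrates all contribution on a single triangle per square; once this per-cube computation is verified, the odd parity of $|X(2)|$ directly delivers $f_x(\alpha_*,\beta_*)=1$.
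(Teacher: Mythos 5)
Your Step~1 is a genuinely nice observation that the paper does not make explicitly: taking $x=[X]=\sum_{\tau\in X(2)}\tau$ directly, the identity $\langle\delta y,[X]\rangle=\langle v,w\rangle=0$ for any simple tensor $y(\sigma)=v\otimes w$ with $v\in C_\sigma^\perp,\,w\in C_\sigma$ shows $\partial[X]=0$; the paper proves exactly this inner-product vanishing but uses it only to conclude $\mathbbm{1}_{\{1,2\}}$ is not a coboundary and then appeals abstractly to Proposition~\ref{dual pairing between homology groups} to produce some unspecified cycle. However, your Steps~2 and~3 have a genuine gap.

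In Step~2, the cocycles $\alpha_*,\beta_*$ that place $\mathbbm{1}_\sigma$ on \emph{every} edge are not well-defined under the theorem's hypothesis. As the paper's proof uses, $\iota_\sigma\mathcal{F}_\sigma=\bigotimes_{j\notin\type(\sigma)}\im h_j^T$, so for a type-$\{1\}$ edge $\mathcal{F}_\sigma=\im h_2^T$ and $\mathcal{F}^\perp_\sigma=\ker h_2$, while for a type-$\{2\}$ edge $\mathcal{F}_\sigma=\im h_1^T$ and $\mathcal{F}^\perp_\sigma=\ker h_1$. The hypothesis plants $\mathbbm{1}$ only in $\im h_1^T$ and $\ker h_2$, i.e.\ exactly in $\mathcal{F}_\sigma$ for type-$\{2\}$ edges and in $\mathcal{F}^\perp_\sigma$ for type-$\{1\}$ edges. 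There is no guarantee that $\mathbbm{1}\in\ker h_1$ or $\mathbbm{1}\in\im h_2^T$, so $\alpha_*(\sigma)$ for $\sigma$ of type $\{2\}$ and $\beta_*(\sigma)$ for $\sigma$ of type $\{1\}$ need not land in the required stalks. The paper avoids this by using $\mathbbm{1}_{\{1\}}\in C^1(X,\mathcal{F}^\perp)$ supported only on type-$\{1\}$ edges and $\mathbbm{1}_{\{2\}}\in C^1(X,\mathcal{F})$ supported only on type-$\{2\}$ edges, which matches the planting hypothesis exactly.

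This is not merely a technicality: your Step~3 computation would also be wrong if $\alpha_*,\beta_*$ were somehow legal. With both cocycles nonzero on every edge, \emph{both} triangles subdividing each square contribute $1$ to $\langle A^\#\alpha_*\smile A^\#\beta_*,S_\#x\rangle$, giving $2|X(2)|\equiv0\pmod 2$. The asymmetry of the approximate inverse $A$ does not by itself kill one triangle's contribution; what kills it in the paper's computation is the asymmetry of the cocycle supports. Concretely, with the subdivision $[g,ga_1,ga_1a_2]$ and $[g,ga_2,ga_1a_2]$, the former $1$-face of $[g,ga_2,ga_1a_2]$ lies over a type-$\{2\}$ cubical edge where $\mathbbm{1}_{\{1\}}$ vanishes, so that triangle contributes $0$; only $[g,ga_1,ga_1a_2]$ contributes $1$, giving the desired $|X(2)|\bmod 2$. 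The fix is therefore to replace your $\alpha_*,\beta_*$ with the type-restricted cocycles of the paper; with that replacement your $x=[X]$ observation can be retained.
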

\begin{proof}
    We plant the all-ones vector in $\im h_1^T$ and $\ker h_2=\im (h_2^\perp)^T$. By Theorem~\ref{all-ones vector planted product expanding codes}, there exist such choices so that the local codes are two-way product-expanding and hence, the code parameters are asymptotically good according to~\cite{DHLV2022}.
    Notice that for each cell $\sigma$, 
    \begin{equation}
    \iota_\sigma\mathcal{F}_\sigma=\bigotimes_{j\notin\type (\sigma)}\im h^T_j.
    \end{equation}
    Hence $\1_\sigma\coloneqq \otimes_{j\notin\type(\sigma)}\1_j\in  \iota_\sigma\mathcal{F}_\sigma$ is the all-ones vector. Therefore, we can write down some codewords explicitly. For each set $S\subseteq[t]$, we define 
    $\1_S\coloneqq \sum_{\text{type}(\sigma)=S}\1_\sigma\cdot\,\sigma.$
Then $\1_S$ is a cocycle in $C^{|S|}(X,\mathcal{F})$
    Then $\1_{\{1\}}\in C^1(X,\mathcal{F}^\perp)$ and $\1_{\{2\}}\in C^1(X,\mathcal{F})$ are cocycles. The the cup product gives
    \begin{equation}
    \1_{\{1\}}\smile\1_{\{2\}}=\1_{\{1,2\}}\in C^2(X,\mathcal{F}^\perp\!\!\otimes\mathcal{F}) 
    \end{equation}
    Hence if we choose $|X(2)|$ to be an odd number, then
    \begin{equation}
    \langle \1_{\{1\}}\smile\1_{\{2\}},[X]\rangle=|X(2)|\neq0.
    \end{equation} 
    We only need to show that $\1_{\{1,2\}}\in C^2(X,\mathcal{F}^\perp\!\!\otimes\mathcal{F})$ is not a coboundary. Choose an arbitrary $y\in C^1(X,\mathcal{F}^\perp\!\!\otimes\mathcal{F})$, without loss of generality, we may assume $y=y(\sigma)\cdot \sigma$ and $y(\sigma)=v\otimes w$ for some $\sigma\in X(1)$ and $v\in C^\perp_\sigma$, $w\in C_{\sigma}$. Then we have
    \begin{equation}
    \langle \delta y,[X]\rangle=\sum_{\tau\in U_\sigma(2)} \mathcal{F}^\perp_{\sigma,\tau}v\otimes\mathcal{F}_{\sigma,\tau}w=\langle v, w\rangle=0,
    \end{equation}
    where the last equation is done by the fact that $C^\perp_\sigma$ is the dual code of $C_\sigma$, hence $v$ and $w$ are orthogonal. Therefore, $\1_{\{1,2\}}$ is a nontrivial cohomology class in $C^2(X,\mathcal{F}^\perp\!\!\otimes\mathcal{F})$, and by dual pairing (Proposition \ref{dual pairing between homology groups}), we can always choose a cyle $x\in C_2(X,\mathcal{F}^\perp\!\!\otimes\mathcal{F})$ such that $f_x$ gives a logical $\CZ$ gate which is not the logical identity, i.e. $k_{\CZ} >0$.
\end{proof}

In the next construction, we use a spectral-expanding 3-dimensional cubical complex~\cite{Dinur2024sheaf} and an appropriate choice of sheaves to obtain a quantum LDPC code (that is one-sided locally testable) with a nontrivial $\CCZ$ gate and a conjectured polynomial distance.

\begin{definition}
    Let $X$ be the 3-dimensional Abelian cubical complex from~\cite{Dinur2024sheaf}. Let $C_1,C_2,C_3$ be classical linear codes and $C_\mathrm{rep}$ the repetition code of the same length.  We define quantum codes $\mathcal Q_1, \mathcal Q_2, \mathcal Q_3$ by placing qubits on level 1, Z checks on level 2, and X checks on level 0 of the sheaf complexes $C^\bullet(X,\mathcal F), C^\bullet(X,\mathcal G), C^\bullet(X,\mathcal H)$, respectively, where $\mathcal{F}$ is the sheaf generated by $\{C_\sigma\}_{\sigma\in X(2)}$, where
\begin{equation}
    C_\sigma = 
    \begin{cases}
    C_1, & \type(\sigma)=\{2,3\},\\
    C_2, & \type(\sigma)=\{1,3\},\\
    C_{\text{rep}}, &\type(\sigma)=\{1,2\}.
    \end{cases}
\end{equation}
$\mathcal{G}$ is the sheaf generated by $\{C'_\sigma\}_{\sigma\in X(2)}$, where
\begin{equation}
    C'_\sigma = 
    \begin{cases}
    C_{\text{rep}}, & \type(\sigma)=\{2,3\},\\
    C_2^\perp, & \type(\sigma)=\{1,3\},\\
    C_3^\perp, &\type(\sigma)=\{1,2\}.
    \end{cases}
\end{equation}
$\mathcal{H}$ is the sheaf generated by $\{C''_\sigma\}_{\sigma\in X(2)}$, where
\begin{equation}
    C''_\sigma = 
    \begin{cases}
    C_1^\perp, & \type(\sigma)=\{2,3\},\\
    C_{\text{rep}}, & \type(\sigma)=\{1,3\},\\
    C_3, &\type(\sigma)=\{1,2\}.
    \end{cases}
\end{equation}
Furthermore, we require that the all-ones vector is contained in $C_1^\perp,C_2,C_3^\perp$.
\end{definition}

\begin{theorem}[$\CCZ$ on 3-dimensional cubical complex codes]\label{thm:ccz1}
    Consider the quantum code constructions $\mathcal Q_1, \mathcal Q_2, \mathcal Q_3$ above. We can find choices of $C_1,C_2,C_3$ such that the following holds on the resulting quantum codes. There exists a cycle $x\in C_{3}(X,\mathcal{F}\otimes\mathcal{G}\otimes\mathcal{H})$, such that trilinear form 
    \begin{equation}
    f_x:C^i(X,\mathcal{F})\times C^j(X,\mathcal{G})\times C^k(X,\mathcal{H})\longrightarrow\mathbb{F},
    \end{equation}
    defined by
    \begin{equation}f_x(\alpha,\beta,\gamma)=\langle\alpha\smile\beta\smile\gamma,x\rangle
    \end{equation}
    has a nontrivial cohomology subrank.
    In other words, there exists a choice of (co)cycles $\alpha\in C^i(X,\mathcal{F})$, $\beta\in C^j(X,\mathcal{H})$, $\gamma\in C^k(X,\mathcal{H})$, $x\in C_{i+j+k}(X,\mathcal{F}\otimes\mathcal{G}\otimes\mathcal{H})$ such that
    \begin{equation}
    f_x(\alpha,\beta,\gamma)\neq 0,
    \end{equation}
    Thus, the quantum codes admit a transversal logical $\CCZ$ gate induced by $f_x$ which is not the logical identity.  
    Moreover, the code family has inverse-polylog relative $X$-distance and is local testable with inverse-polylog soundness against $X$ errors. 
\end{theorem}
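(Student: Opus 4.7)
The plan is to generalize the approach of Theorem~\ref{thm:cz} to three dimensions, using the asymmetric placement of repetition codes among $\mathcal{F}$, $\mathcal{G}$, and $\mathcal{H}$ to produce an explicit nontrivial cohomology triple at degree $i=j=k=1$. I would set
\[
\alpha = \1_{\{1\}} \in C^1(X, \mathcal{F}), \quad \beta = \1_{\{2\}} \in C^1(X, \mathcal{G}), \quad \gamma = \1_{\{3\}} \in C^1(X, \mathcal{H}),
\]
where $\1_S$ denotes the cochain assigning the all-ones vector to every $1$-cell of type $S$ and zero elsewhere. Each is a well-defined cocycle by the same logic as in Theorem~\ref{thm:cz}: at a type-$\{1\}$ $1$-cell the local space for $\mathcal{F}$ factors (up to $\iota$) as $C_\mathrm{rep} \otimes C_2$, so the hypothesis $\1 \in C_2$ places $\1$ in the local space; similarly $\beta$ is well-defined because $\1 \in C_3^\perp$, and $\gamma$ because $\1 \in C_1^\perp$. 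The cocycle condition follows in characteristic two from the fact that each $2$-cell has exactly two $1$-cell faces of each incident type.

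Next, I would compute $\alpha \smile \beta \smile \gamma$ using the subdivision machinery of Section~\ref{subsec:subdivision}. Choosing the approximate inverse so that the diagonal simplex $[g, g a_1, g a_1 a_2, g a_1 a_2 a_3]$ is extended to fill each $3$-cube $\tau = [g; a_1, a_2, a_3]$ (an admissible choice in the sense of Section~\ref{subsec:explicit}), the face ordering in the cup product formula matches the types $\{1\}, \{2\}, \{3\}$ of $\alpha, \beta, \gamma$ in sequence. Unwinding the definitions shows that each $3$-cube contributes exactly $\1_\tau \in (\mathcal{F} \otimes \mathcal{G} \otimes \mathcal{H})_\tau$, so $\alpha \smile \beta \smile \gamma = \1_{\{1,2,3\}}$. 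To argue nontriviality and to locate the cycle $x$, I would choose the parameters of the complex in~\cite{Dinur2024sheaf} so that $|X(3)|$ is odd, and consider the pairing against $[X] = \sum_{\tau \in X(3)} \tau$. Exactly as in Theorem~\ref{thm:cz}, any coboundary $\delta y$ with $y$ a simple tensor in $C^2(X, \mathcal{F} \otimes \mathcal{G} \otimes \mathcal{H})$ pairs with $[X]$ via an inner product between a codeword and a dual codeword, which vanishes, while $\langle \1_{\{1,2,3\}}, [X] \rangle = |X(3)| = 1$ in $\mathbb{F}_2$. This forces $[\alpha \smile \beta \smile \gamma] \neq 0$ in $H^3(X, \mathcal{F} \otimes \mathcal{G} \otimes \mathcal{H})$, and Proposition~\ref{dual pairing between homology groups} furnishes a cycle $x$ against which the pairing is nonzero, so Lemma~\ref{lem:cohomology invariant and CCZ} delivers a non-identity transversal logical $\CCZ$ with $k_{\CCZ} \geq 1$.

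The main obstacle will be establishing the claimed code parameters. The qLDPC property is immediate from sparsity of the cubical complex, and one-sided inverse-polylog $X$-soundness together with polynomial relative $X$-distance should inherit from the coboundary-direction high-dimensional expansion arguments of~\cite{Dinur2024sheaf}, which are proved intrinsically rather than through duality. The delicate technical point is that planting all-ones vectors into $C_1^\perp$, $C_2$, and $C_3^\perp$ weakens the product-expansion hypotheses on the local codes used there. To handle this, I would invoke a three-dimensional analogue of Theorem~\ref{all-ones vector planted product expanding codes} -- the 2D planting lemma underlying Theorem~\ref{thm:cz} -- to show that product expansion survives the one-sided planting with only a bounded loss in constants, at the expense of perhaps additional polylogarithmic factors. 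Because the dual codes $C_1$, $C_2^\perp$, and $C_3$ are not assumed to contain the all-ones vector, the $Z$-side expansion is broken, which is precisely why the distance and soundness statements in the theorem are one-sided.
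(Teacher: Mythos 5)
Your argument for the existence of the nontrivial logical $\CCZ$ gate mirrors the paper's proof almost exactly: same choice $\alpha=\1_{\{1\}}$, $\beta=\1_{\{2\}}$, $\gamma=\1_{\{3\}}$, same observation that $\alpha\smile\beta\smile\gamma=\1_{\{1,2,3\}}$, same odd-$|X(3)|$ pairing with $[X]$, and same use of the vanishing $\langle\delta y,[X]\rangle=0$ for simple tensors $y$ (which in each of the three cases reduces to the orthogonality of a code with its dual after the repetition-code factor pulls out as a constant). Invoking Proposition~\ref{dual pairing between homology groups} to extract the cycle $x$ is also how the paper finishes this part.

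Where your proposal diverges is the code-parameter argument, and there is a real gap. You propose to prove product expansion by ``a three-dimensional analogue of Theorem~\ref{all-ones vector planted product expanding codes},'' but the triples that actually need to be product-expanding here are $(C_1,C_2,C_{\mathrm{rep}})$, $(C_{\mathrm{rep}},C_2^\perp,C_3^\perp)$, and $(C_1^\perp,C_{\mathrm{rep}},C_3)$, each containing a \emph{fixed} repetition code rather than a uniformly random code. A straightforward $t=3$ instance of that theorem therefore does not apply: it controls tuples of independently random subspaces, not tuples with a deterministic repetition factor. The paper circumvents this by a different two-step route (its Claim~\ref{claim:plant-rep-code}): first plant the all-ones vectors and obtain two-way product expansion of the \emph{pairs} $(C_1,C_2)$, $(C_1^\perp,C_3)$, $(C_2^\perp,C_3^\perp)$ using the $t=2$ case of Theorem~\ref{all-ones vector planted product expanding codes}, then invoke an external lemma (\cite[Lemma 7.2]{tan2025singleshotuniversalityquantumldpc}) saying that appending a repetition code to a $\rho$-product-expanding pair yields a $(\rho/3)$-product-expanding triple. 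Without something playing that lemma's role, your outline does not establish the hypothesis of~\cite{Dinur2024sheaf}'s coboundary-side analysis. Your remark that the $Z$-side parameters are ``broken'' is also a bit stronger than the paper's stance: the paper simply leaves the $Z$-distance unproven (with a stated belief that it is polynomial), and the theorem statement is deliberately silent on it; the claimed parameters in the statement match what the paper actually proves in the $X$/coboundary direction.
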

\begin{proof}

First we show the following claim, which allows us to choose the classical codes to `plant' the all-1's codewords appropriately, while also obtaining the property of product expansion. The planted all-1's vector will be useful in proving the nontriviality of the logical $\CCZ$ circuit. The product expansion will be used to establish the code parameters.

\begin{claim}\label{claim:plant-rep-code}
    We can choose $C_1,C_2,C_3\subseteq\mathbb{F}_q^n$ such that $(C_1,C_2,C_{\text{rep}})$, $(C_{\text{rep}}$, $C_2^\perp,C_3^\perp)$ and $(C_1^\perp$, $C_{\text{rep}}$, $C_3)$ are (one-way) $\rho$-product-expanding at the same time when all-ones vector contained in $\1\in C_1^\perp,C_2,C_3^\perp$.
\end{claim}
\begin{proof}[Proof of Claim~\ref{claim:plant-rep-code}]
    We choose $C_1,C_2,C_3$ in uniformly random distribution on the Grassmannians, with the constraints that the all-ones vector is contained in $C_1^\perp,C_2,C_3^\perp$. The codes $C_1$ and $C_2$ are independent, so are $C_1^\perp$ and $C_3$, $C_2^\perp$ and $C_3^\perp$. By the proof of Theorem \ref{all-ones vector planted product expanding codes}, we can choose $(C_1,C_2)$, $(C_1^\perp, C_3)$, $(C_2^\perp,C_3^\perp)$ to be two-way $\rho$-product-expanding at the same time, for some universal constant $\rho$. Next we use the following lemma.
    
\begin{lemma}[\protect{\cite[Lemma 7.2]{tan2025singleshotuniversalityquantumldpc}}]
    If $(C_1, C_2)$ is $\rho$-product-expanding, then the $(C_1, C_2, C_{\mathrm{rep}})$ is $\rho/3$-product-expanding.
\end{lemma}
This lemma proves that $(C_1,C_2,C_{\text{rep}})$, $(C_{\text{rep}},C_2^\perp,C_3^\perp)$ and $(C_1^\perp,C_{\text{rep}},C_3)$ are one-way $\rho/3$-product-expanding. 
\end{proof}

\paragraph{Logical $\CCZ$ gate:}
We show the existence of a nontrivial logical $\CCZ$ gate. Similarly to the proof of Theorem~\ref{thm:cz}, we observe that $\1_{\{1\}}\in C^1(X,\mathcal{F})$, $\1_{\{2\}}\in C^1(X,\mathcal{G})$ and $\1_{\{3\}}\in C^1(X,\mathcal{H})$ are cocycles, and
\begin{equation}
\1_{\{1\}}\smile\1_{\{2\}}\smile\1_{\{3\}}=\1_{\{1,2,3\}}\in C^3(X,\mathcal{F}\otimes\mathcal{G}\otimes\mathcal{H}).
\end{equation}
We claim that when $|X(3)|$ is an odd number, then $\1_{\{1,2,3\}}$ is not a coboundary. This is because, for $y\in C^2(X,\mathcal{F}\otimes\mathcal{G}\otimes\mathcal{H})$, without loss of generality, we assume $y=y(\sigma)\cdot\sigma$, where $\sigma\in X(2)$ and $y(\sigma)=u\otimes v\otimes w$, $u\in C_\sigma$, $v\in C'_\sigma$, $w\in C''_\sigma$. Then note that 
\begin{equation}
\langle\delta y,[X]\rangle=\sum_{\tau\in U_\sigma(3)}\mathcal{F}_{\sigma,\tau}u\otimes\mathcal{G}_{\sigma,\tau}v\otimes\mathcal{H}_{\sigma,\tau}w=\sum_{\tau\in U_\sigma}u_\tau\cdot v_\tau \cdot w_\tau=0
\end{equation}
This equals zero because by our construction, $u,v,w$ are always chosen from a local code, the dual of the local code and a repetition code. However,
\begin{equation}
\langle \1_{\{1,2,3\}},[X]\rangle=|X(3)|\neq0.
\end{equation}
Therefore, $\1_{\{1,2,3\}}$ is always not a coboundary. By Proposition~\ref{dual pairing between homology groups}, there always exists a cycle $x\in C_3(X,\mathcal{F}\otimes\mathcal{G}\otimes\mathcal{H})$ such that 
\begin{equation}
\langle\1_{\{1\}}\smile\1_{\{2\}}\smile\1_{\{3\}},x\rangle\neq0,
\end{equation}
which means that the code admits transversal $\CCZ$ inducing $k_{\CCZ} \geq 1$.

\paragraph{Code parameters:} Now we evaluate the code parameters. 

First, we deduce that the cocycles $\1_{\{1\}}\in C^1(X,\mathcal{F})$, $\1_{\{2\}}\in C^1(X,\mathcal{G})$ and $\1_{\{3\}}\in C^1(X,\mathcal{H})$ are not co-boundaries. This is because $\1_{\{1,2,3\}}$ would be a coboundary otherwise, a contradiction. So each quantum code encodes at least $1$ logical qubits. In the coboundary direction, the X distance and local testability soundness have the same scaling as~\cite{Dinur2024sheaf} because the the product-expansion in Claim~\ref{claim:plant-rep-code} suffices for the analysis in their Section 7. For the boundary direction, we believe that the $Z$ distance has a polynomial scaling, as in~\cite{zhu2025transversalnoncliffordgatesqldpc}, and leaves the proof for a future version of this work.
\end{proof}

Finally, recall that in Conjecture~\ref{main conjecture} Let $X$ be the $t$-dimensional cubical complex and $\F$ be the sheaf satisfying the requirement in~\cite{Dinur2024sheaf}. We conjecture that for $2\leq i,j,k,l\leq t-2$, $i+j+k\leq t$, $i+j\leq l$, there exist (co)homology classes $\alpha\in H^i(X,\F)$, $\beta\in H^j(X,\F)$, $\gamma\in H^k(X,\F)$, $\theta\in H_l(X,\F)$, such that at least one of the following three (co)homological classes is not zero:
    \begin{itemize}
        \item $\alpha\smile_{\mathrm{I}}\beta\smile_{\mathrm{I}}\gamma\neq 0\in H^{i+j+k}(X,\F^{\otimes 3})$,
        \item 
        $(\alpha\smile_{\mathrm{II}}\beta)\smile_{\mathrm{III}}\gamma\neq 0\in H^{i+j+k}(X,\F),$
        \item 
        $(\alpha\smile_{\mathrm{I}}\beta)\frown_{\mathrm{II}}\theta\neq 0\in H_{l-i-j}(X,\F)$.
    \end{itemize}
This will lead to the existence of transversal $\CCZ$ because, for example, if $(\alpha\smile_{\mathrm{I}}\beta)\frown_{\mathrm{II}}\theta= \zeta\neq 0$, then by dual pairing (Proposition~\ref{dual pairing between homology groups}), we may find a cycle $x\in C_{l-i-j}(X,\F)$ such that $\langle (\alpha\smile_{\mathrm{I}}\beta)\frown_{\mathrm{II}}\theta,[x]\rangle=1$. Therefore the following trilinear cohomological invariant
\[
f_x(-,-,-):C^i(X,\F)\times C^j(X,\F)\times C^k(X,\F)\longrightarrow\mathbb{F},
\]
gives the desired logical gate with $k_{\CCZ}\geq 1$. Actually, as long as there is a nontrivial homology class $[x]\in H_{l-i-j}(X,\F)$, then we can write down the trilinear function $f_x$ which either induces transversal logical $\CCZ$ or the logical identity. 
Although more advanced techniques are needed to theoretically determine the exact logical action, which we leave as a direction for future work, our framework provides the possibility of  settling this with numerical computation. 
Further numerical studies are also expected to help more precisely understand the logical actions, which is  valuable for practical use.


\printbibliography[heading=bibintoc,title=References]

\appendix

\section{Two-way product expansion of planted random codes}
In this appendix, we verify the existence of a set of product-expanding codes containing the all-ones vector. \label{appendix plant all-ones vector}

\begin{theorem}\label{all-ones vector planted product expanding codes}
    For each collection of intervals $I_1,\dots,I_t\subseteq(0,1)$, there exists some $\rho>0$ such that for $n\in \mathbb{N}$ there exist codes $\mathcal{C}_1,\dots, \mathcal{C}_t\subseteq\mathbb{F}^n_{q}$ such that each $\mathcal{C}_i^\perp$ contains the all-ones vector, $\frac{1}{n}\dim \mathcal{C}_i\in I_i$, and
    \begin{equation}
    \rho(\mathcal{C}_1,\dots,\mathcal{C}_t)>\rho,\ \ \ \ \  \rho(\mathcal{C}_1^\perp,\dots,\mathcal{C}_t^\perp)>\rho,
    \end{equation}
    where $q$ is a power of 2 sufficiently large such that
    \begin{equation}
    \left(1-n\frac{q^{n-\min_{i\in[t]}\dim\mathcal{C}_i}-1}{q^n-1}\right)^t-n^t2^{n^t+2}q^{-1}>0
    \end{equation}
\end{theorem}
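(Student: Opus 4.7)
The plan is to use the probabilistic method on Grassmannians, sampling each $\mathcal{C}_i$ subject to the planting constraint $\mathbf{1}\in\mathcal{C}_i^\perp$ and verifying two-way product expansion via a union bound.

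First, I would pick integers $k_i$ with $k_i/n\in I_i$ and sample each $\mathcal{C}_i$ independently and uniformly at random from the Grassmannian of $k_i$-dimensional subspaces of $\mathbf{1}^\perp\subseteq\mathbb{F}_q^n$. This enforces $\mathbf{1}\in\mathcal{C}_i^\perp$ automatically, replacing conditioning with a choice of sample space. Under the duality $\mathcal{C}\mapsto\mathcal{C}^\perp$, each $\mathcal{C}_i^\perp$ is uniformly distributed among $(n-k_i)$-dimensional subspaces of $\mathbb{F}_q^n$ containing $\mathbf{1}$, so the dual family $(\mathcal{C}_1^\perp,\ldots,\mathcal{C}_t^\perp)$ is a planted random configuration of the same type, and a single argument will cover both expansion bounds.

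Second, I would bound the probability that either $\rho(\mathcal{C}_1,\ldots,\mathcal{C}_t)\leq\rho$ or $\rho(\mathcal{C}_1^\perp,\ldots,\mathcal{C}_t^\perp)\leq\rho$ by exhibiting, for each failure, a finite witness -- a minimal bad tensor in $\mathbb{F}_q^{n^t}$ of the sort used in the random-code analyses of product expansion. The total number of candidate support patterns is at most $2^{n^t}$, and a polynomial factor $n^t$ accounts for the threshold parameter, giving the $n^t2^{n^t+2}$ combinatorial factor in the hypothesis. For a fixed pattern, the probability that the random codes realize it is controlled by the Grassmannian count $\frac{q^{n-k_i}-1}{q^n-1}$ -- the chance that a prescribed nonzero vector lies in $\mathcal{C}_i^\perp$ under the planted distribution -- together with the genericity factor $\Big(1-n\frac{q^{n-\min_i k_i}-1}{q^n-1}\Big)^t$, which bounds from below the probability that $n$ coordinate directions remain in generic position relative to the $t$ planted codes simultaneously. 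The hypothesis on $q$ is precisely what is needed for the probability of the complementary event to be strictly positive.

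Applying the probabilistic method then produces a single realization $(\mathcal{C}_1,\ldots,\mathcal{C}_t)$ satisfying all three requirements: $\mathbf{1}\in\mathcal{C}_i^\perp$, $k_i/n\in I_i$, and simultaneous $\rho$-product-expansion of the primal and dual families. The main obstacle is the combinatorial reduction step: product expansion is defined as a minimum over an infinite subspace of tensors, and reducing failure to a finite union bound requires identifying a well-behaved family of minimal witnesses, in the spirit of minimum-weight codewords for linear codes. This is exactly the technical heart of the argument in \cite{Golowich2024_NLTSPlantedqLDPC}, which we expect to generalize directly to $t$ factors once the planting is incorporated into the sample space as above; after that reduction, the remaining Grassmannian probability estimates are standard linear-algebraic counts.
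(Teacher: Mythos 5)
Your proposal takes a genuinely different route from the paper, and the difference matters: the paper never samples from a constrained Grassmannian, whereas your plan begins by sampling each $\mathcal{C}_i$ uniformly from subspaces of $\mathbf{1}^\perp$. The paper instead samples $(\mathcal{C}_1,\ldots,\mathcal{C}_t)$ uniformly from the \emph{unrestricted} Grassmannians, invokes Theorem \ref{random two-way maximally extendable codes} (from \cite{kalachev2025maximallyextendableproductcodes}) to get two-way maximal extendability with probability at least $1-n^t2^{n^t-m+2}$, and separately invokes Lemma \ref{finding all-nonzero vector} to show each $\mathcal{C}_i$ contains an all-component-nonzero vector with probability at least $1-n\frac{q^{n-k_i}-1}{q^n-1}$. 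The threshold condition on $q$ is precisely what makes these two events hold simultaneously with positive probability. Then the planting is achieved \emph{after the fact}: one picks an all-nonzero vector in each $\mathcal{C}_i$, builds the invertible diagonal matrix $D_i$ rescaling it to $\mathbf{1}$, and replaces $\mathcal{C}_i^\perp$ by $D_i\mathcal{C}_i^\perp$. Lemma \ref{diagonal matrices preserve maximal extendability} shows this preserves maximal extendability, and Lemma \ref{maximal extendability implies product expansion} converts that to the desired two-way product expansion. This is an invariance trick that sidesteps any need to reason about conditioned distributions.

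The gap in your version is exactly the step you flag as one you ``expect to generalize directly.'' Both Theorem \ref{random two-way maximally extendable codes} and Lemma \ref{maximal extendability implies product expansion} are stated for uniformly random subspaces of $\mathbb{F}_q^n$, not for uniformly random subspaces of a fixed hyperplane. Re-deriving a two-way maximal extendability bound (or a direct product-expansion union bound) for the conditioned family is not a routine adaptation: the Grassmannian of $\mathbf{1}^\perp$ is a strictly smaller sample space, and the cited proofs exploit transitivity under $\mathrm{GL}_n(\mathbb{F}_q)$ on the full Grassmannian. The whole point of the paper's $D_i$-rescaling is to avoid redoing that analysis. Your reading of the quantitative terms also does not match the proof: $\frac{q^{n-k_i}-1}{q^n-1}$ is the probability that $\mathcal{C}_i$ contains no all-component-nonzero vector (Lemma \ref{finding all-nonzero vector}), not the chance that a prescribed vector lands in $\mathcal{C}_i^\perp$; the $(1-n\cdot(\cdots))^t$ factor is the success probability of this ``find an all-nonzero codeword'' event across all $t$ codes, not a coordinate-genericity bound; and $n^t2^{n^t+2}q^{-1}$ is the failure probability from the black-box maximal extendability theorem with $q=2^m$, not a raw count of tensor supports in a union bound you perform yourself. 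So the high-level probabilistic-method framing is right, but the combinatorial reduction you defer to is the hard part, and the paper's diagonal-rescaling device is what makes the proof actually go through without confronting it.
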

We first introduce two concepts from \cite{kalachev2025maximallyextendableproductcodes}.

\begin{definition}[\cite{kalachev2025maximallyextendableproductcodes}]
    For the product code $\bigotimes_{i\in [t]}\mathcal{C}_i^\perp$, we say that a set $M\subseteq [n]^t$ is extendable in the product code if for every local codeword $c_M\in \mathbb{F}_q^n$ satisfying all local checks $z\in \mathcal{C}_\ell$, $\ell\subseteq M$ can be extended to a global codeword $c\in \bigotimes_{i\in [t]}\mathcal{C}_i^\perp$, $c|_M=c_M$. Here for each line $\ell\in \mathcal{L}(n,t)$ we define
    \begin{equation}
    \mathcal{C}_\ell\coloneqq \{c\in \mathbb{F}^{[n]^t}_q:\supp c \subseteq \ell, c|_\ell\in \mathcal{C}_i\}
    \end{equation}
\end{definition}
\begin{definition}[\cite{kalachev2025maximallyextendableproductcodes}]
    We say that a product code 
$\mathcal{C} = \bigotimes_{i\in[D]} \mathcal{C}_i \subseteq \mathbb{F}_{q}^{[n]^t}$
is maximally extendable if for every other product code $\mathcal{C}' = \bigotimes_{i\in[D]} \mathcal{C}'_i \subseteq \mathbb{F}_{q}^{[n]^t}$
with $\dim \mathcal{C}_i = \dim \mathcal{C}'_i,\, i\in[t]$,
when $M$ is extendable in $\mathcal{C}'$ it is also extendable in $\mathcal{C}$.
\end{definition}

Then we present the following two lemmas.

\begin{lemma}\label{finding all-nonzero vector}
    For a code $\mathcal{C}\subseteq\mathbb{F}_q^n$ picked uniformly from $\mathrm{Gr}_q(n, k)$, the possibility of containing all-component-nonzero vector is at least $1-n(q^{n-k}-1)/(q^n-1)$.
\end{lemma}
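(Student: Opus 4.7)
The plan is to translate the event ``$\mathcal{C}$ contains no all-component-nonzero vector'' into a subspace-inclusion event that can be handled by a union bound. Setting $A_i \coloneqq \{x \in \mathbb{F}_q^n : x_i = 0\}$, I first observe that $\mathcal{C}$ contains no all-component-nonzero vector iff every $v \in \mathcal{C}$ has at least one zero coordinate, i.e., $\mathcal{C} \subseteq A_1 \cup \cdots \cup A_n$.

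The first real step is to reduce this ``union of hyperplanes'' containment to a ``single hyperplane'' containment. For this I would invoke the classical fact that a vector space over $\mathbb{F}_q$ cannot be written as a union of fewer than $q+1$ proper subspaces. Under the hypothesis in Theorem~\ref{all-ones vector planted product expanding codes}, $q$ is chosen large enough that in particular $q \geq n$; then if $\mathcal{C} \subseteq \bigcup_i A_i$, the subspaces $\mathcal{C} \cap A_i$ cover $\mathcal{C}$, so at least one of them must equal $\mathcal{C}$ itself (otherwise $\mathcal{C}$ would be a union of $n < q+1$ proper subspaces of itself, contradicting the classical fact). Hence $\mathcal{C} \subseteq A_i$ for some $i$.

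Next I would compute $\Pr[\mathcal{C} \subseteq A_i]$ for a fixed $i$. Since $A_i$ is a hyperplane, the number of $k$-dimensional subspaces contained in it is $\binom{n-1}{k}_q$, while the total number of $k$-dimensional subspaces is $\binom{n}{k}_q$. The ratio telescopes via the definition of the Gaussian binomial coefficient to
\begin{equation}
\Pr[\mathcal{C} \subseteq A_i] \;=\; \frac{q^{n-k}-1}{q^n-1}.
\end{equation}
A union bound over $i \in [n]$ then gives $\Pr[\exists i : \mathcal{C} \subseteq A_i] \leq n(q^{n-k}-1)/(q^n-1)$, and combining with the reduction from the previous step yields the claimed lower bound on $\Pr[\mathcal{C} \text{ contains some all-nonzero vector}]$.

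The only delicate point is the finite-field linear-algebra lemma used to collapse the ``union event'' into the ``single-subspace event''; this is what distinguishes the argument from a naive union bound applied to a set-theoretic union. For small $q$ one can genuinely have a subspace lying in the union of coordinate hyperplanes without lying in any single one (e.g.\ the even-weight subspace in $\mathbb{F}_2^3$), which would break the reduction, but the size requirement on $q$ imposed by Theorem~\ref{all-ones vector planted product expanding codes} comfortably avoids this regime.
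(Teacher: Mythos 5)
Your proof takes the same core computation as the paper --- a union bound over the $n$ events $\{\mathcal{C}\subseteq A_j\}$, with $\Pr[\mathcal{C}\subseteq A_j]=(q^{n-k}-1)/(q^n-1)$ computed via Gaussian binomials --- but it supplies a logical step that the paper's proof omits, and that step is genuinely needed. The paper's argument bounds the probability that $\mathcal{C}$ lies in some single coordinate hyperplane, yet ``$\mathcal{C}$ has no all-nonzero codeword'' is only equivalent to the weaker containment $\mathcal{C}\subseteq\bigcup_j A_j$, not to $\mathcal{C}\subseteq A_j$ for a single $j$. As you observe, these really do differ: in $\mathrm{Gr}_2(3,2)$, the even-weight code $\{000,110,101,011\}$ lies in the union of the three coordinate planes but in none of them individually, and only $3$ of the $7$ planes in $\mathrm{Gr}_2(3,2)$ contain $111$, falling short of the bound $1-3\cdot\tfrac{1}{7}=\tfrac{4}{7}$ the lemma would predict. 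Your reduction via the classical fact that a vector space over $\mathbb{F}_q$ cannot be covered by $q$ or fewer proper subspaces --- valid exactly when $n\leq q$, which the size hypothesis on $q$ in Theorem~\ref{all-ones vector planted product expanding codes} guarantees --- is the correct way to close this gap. Strictly speaking, the lemma needs the standing hypothesis $q\geq n$ to be true as stated, and your proof, unlike the one in the paper, actually establishes the bound under that hypothesis.
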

\begin{proof}
    For each $j\in[n]$, let $P_j\coloneqq \{x\in\mathcal{C}:x_j=0\}$ be the linear subspace with $j$-th coordinate being zero. Then 
    \begin{equation}
    |P_j|\leq |\mathrm{Gr}_q(n-1,k)|=\binom{n-1}{k}_q.
    \end{equation}
    By union bound, we have
    \begin{equation}
    \mathbb{P}[\mathcal{C}\in\bigcup_{j\in[t]}P_j]\leq n\cdot\frac{\binom{n-1}{k}_q}{\binom{n}{k}_q}=n\cdot\frac{q^{n-k}-1}{q^n-1},
    \end{equation}
    which proves the lemma.
\end{proof}
\begin{lemma}\label{diagonal matrices preserve maximal extendability}
    Suppose $D_i$ is a invertible diagonal matrix, $i\in[t]$. If $\bigotimes_{i\in [t]}\mathcal{C}_i^\perp$ is maximally extendable, so do $\bigotimes_{i\in [t]}D_i\mathcal{C}_i^\perp$.
\end{lemma}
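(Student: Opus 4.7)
The plan is to exhibit a single coordinate-wise linear bijection $\Phi$ on $\mathbb{F}_q^{[n]^t}$ that simultaneously (a) sends $\bigotimes_{i}\mathcal{C}_i^\perp$ onto $\bigotimes_{i} D_i\mathcal{C}_i^\perp$, (b) intertwines the two extension problems, and (c) behaves well under restriction to any subset $M\subseteq[n]^t$. Once this is set up, preservation of maximal extendability drops out by a short symmetry/chase argument.

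Concretely, first I would define $\Phi\colon\mathbb{F}_q^{[n]^t}\to \mathbb{F}_q^{[n]^t}$ by $(\Phi c)_{(j_1,\dots,j_t)} \coloneqq d_{1,j_1}d_{2,j_2}\cdots d_{t,j_t}\, c_{(j_1,\dots,j_t)}$, where $d_{k,j}$ is the $j$-th diagonal entry of $D_k$. This is just multiplication by the rank-one tensor $D_1\otimes\cdots\otimes D_t$ and is a bijection since each $D_k$ is invertible. For any line $\ell$ in direction $k$, the restriction $\Phi c|_\ell$ equals a nonzero scalar (depending only on the fixed coordinates $j_i$, $i\neq k$) times $D_k(c|_\ell)$. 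Hence $c|_\ell\in\mathcal{C}_k^\perp$ iff $\Phi c|_\ell\in D_k\mathcal{C}_k^\perp$, and $\Phi$ restricts to a linear isomorphism $\bigotimes_i \mathcal{C}_i^\perp\xrightarrow{\cong}\bigotimes_i D_i\mathcal{C}_i^\perp$. The analogous statement applied to the dual product codes shows that the local check spaces $\mathcal{C}_\ell$ for $\bigotimes_i D_i\mathcal{C}_i^\perp$ are obtained from the $\mathcal{C}_\ell$ of $\bigotimes_i \mathcal{C}_i^\perp$ by the same kind of coordinate rescaling.

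Next I would show that $\Phi$ is compatible with restriction: the coordinate-wise nature of $\Phi$ means there is a diagonal bijection $\Phi_M\colon\mathbb{F}_q^M\to\mathbb{F}_q^M$ with $\Phi_M(c|_M)=(\Phi c)|_M$ for every $c$. Since the line restrictions $\ell\subseteq M$ of $\Phi_M(c_M)$ are again nonzero scalar multiples of $D_k$ applied to $c_M|_\ell$, a local codeword $c_M$ satisfies the local checks of $\bigotimes_i \mathcal{C}_i^\perp$ on $M$ if and only if $\Phi_M c_M$ satisfies the local checks of $\bigotimes_i D_i\mathcal{C}_i^\perp$ on $M$. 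Combining this with the fact that $\Phi$ maps global codewords of one code bijectively to global codewords of the other, we conclude:
\begin{equation}
M \text{ is extendable in } \bigotimes_i\mathcal{C}_i^\perp \iff M \text{ is extendable in } \bigotimes_i D_i\mathcal{C}_i^\perp.
\end{equation}
Moreover, $\dim D_i\mathcal{C}_i^\perp = \dim \mathcal{C}_i^\perp$, so the dimension profile is preserved.

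Finally, I would deduce the lemma. Let $\bigotimes_i \mathcal{E}_i$ be any product code with $\dim\mathcal{E}_i=\dim D_i\mathcal{C}_i^\perp=\dim\mathcal{C}_i^\perp$ in which $M$ is extendable. Applying the equivalence above to the diagonal matrices $D_i^{-1}$ and the codes $\mathcal{E}_i$, we get that $M$ is extendable in $\bigotimes_i D_i^{-1}\mathcal{E}_i$, a product code with the same dimension profile as $\bigotimes_i \mathcal{C}_i^\perp$. Maximal extendability of $\bigotimes_i\mathcal{C}_i^\perp$ then yields that $M$ is extendable in $\bigotimes_i\mathcal{C}_i^\perp$, and the equivalence applied once more shows $M$ is extendable in $\bigotimes_i D_i\mathcal{C}_i^\perp$. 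Thus $\bigotimes_i D_i\mathcal{C}_i^\perp$ is maximally extendable. The only non-routine point is verifying that $\Phi$ really is coordinate-preserving at the level of both global codewords and local checks on every line; everything else is a short diagram chase, so I anticipate no substantive obstacle.
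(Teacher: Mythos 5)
Your proof is correct and takes essentially the same approach as the paper: define a coordinate-wise rescaling map (the paper calls it $T$, you call it $\Phi$), observe that it is a bijection intertwining the line codes and hence the extension problems, and then deduce maximal extendability by a short chase. The only cosmetic difference is that in your final step you first transfer $\bigotimes_i \mathcal{E}_i$ to $\bigotimes_i D_i^{-1}\mathcal{E}_i$ before invoking maximal extendability of $\bigotimes_i \mathcal{C}_i^\perp$, which is harmless but unnecessary since $\dim \mathcal{E}_i = \dim \mathcal{C}_i^\perp$ already permits the invocation directly, as the paper does.
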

\begin{proof}
    Let $D_i=\text{diag}(\lambda_{i,1},\cdots,\lambda_{i,t})$. We define a linear map $T:\mathbb{F}_q^{[n]^t}\longrightarrow \mathbb{F}_q^{[n]^t}$ by
    \begin{equation}
    (Tx)_{i_1,\cdots,i_t}\coloneqq \left(\prod_{j=1}^t\lambda_{j,i_{j}}\right) \cdot x_{i_1,\cdots,i_t}
    \end{equation}
    It is easy to see that $T$ gives an isomorphism between $\bigotimes_{i\in [t]}\mathcal{C}_i^\perp$ and $\bigotimes_{i\in [t]}D_i\mathcal{C}_i^\perp$. Suppose $\bigotimes_{i\in [t]}\mathcal{C}_i^\perp$ is maximally extendable, then for every other product code $\mathcal{C}' = \bigotimes_{i\in[D]} \mathcal{C}'_i \subseteq \mathbb{F}_{q}^{[n]^t}$
with $\dim D_i\mathcal{C}_i^\perp = \dim \mathcal{C}'_i,\, i\in[t]$, and $M\subseteq[n]^t$ is extendable in $\mathcal{C}' = \bigotimes_{i\in[D]} \mathcal{C}'_i \subseteq \mathbb{F}_{q}^{[n]^t}$, the $M$ must also be extendable in $\bigotimes_{i\in [t]}\mathcal{C}_i^\perp$ since $\dim \mathcal{C}_i^\perp=\dim D_i\mathcal{C}_i^\perp$. Note that for each local codeword $c_M\in\mathbb{F}_q^{M}$ satisfying all local check $z\in (D\mathcal{C}^\perp)^\perp_{\ell}$, $\ell\in M$, i.e. $\langle z, c_m\rangle=0$. Then we notice that $(D_i\mathcal{C}_i^\perp)^\perp=D_i^{-1}\mathcal{C}_i$, and all local check $\tilde{z}\in (C^\perp)^\perp_\ell$ are in bijection with $z$ via $\tilde{z}=Tz$, hence $\langle \tilde{z},T^{-1}c_M\rangle=\langle Tz, T^{-1}c_M\rangle=\langle z,c_M\rangle=0$, i.e. $Tc_M$ satisfies all local checks of $\bigotimes_{i\in [t]}\mathcal{C}_i^\perp$, hence can be extended to a global codeword $\tilde{c}$ such that $\tilde{c}|_M=T^{-1}c_M$. Note that $T\tilde{c}$ is a global codeword of $\bigotimes_{i\in [t]}D_i\mathcal{C}_i^\perp$ and $(T\tilde{c})|_M=c_M$, hence $M$ is extendable in $\bigotimes_{i\in [t]}D_i\mathcal{C}_i^\perp$, which proves that $\bigotimes_{i\in [t]}D_i\mathcal{C}_i^\perp$ is maximally extendable.
\end{proof} 

Finally, we borrow two more important results from \cite{kalachev2025maximallyextendableproductcodes}.

\begin{lemma}[\cite{kalachev2025maximallyextendableproductcodes}]\label{maximal extendability implies product expansion}
For all $t\in\mathbb{N}$ there is a function $\mu_t:(0,1)^t\rightarrow(0,1)$ such that for every tuple of rates $(r_1,\dots,r_t)\in(0,1)^t$ and maximally extendable code $\mathcal{C}_1^\perp\otimes\cdots\otimes \mathcal{C}_t^\perp$ such that $\mathcal{C}_i\in \mathrm{Gr}_{2^m}(n,k_i)$ and $k_i\leq r_in$, we have $\rho(\mathcal{C}_1,\cdots,\mathcal{C}_t)\geq\mu_t(r_1,\cdots,r_t)$.
\end{lemma}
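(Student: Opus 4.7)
The plan is to prove the lemma by establishing a general principle: for every tuple of rates there exist product codes of those rates with positive product expansion, and the hypothesis of maximal extendability transfers this bound to our code. The approach factors through a bridge between product expansion and extendability of certain combinatorial subsets of $[n]^t$.

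First, I would use a standard first-moment/union-bound calculation to show that, for each rate tuple $(r_1,\ldots,r_t)\in(0,1)^t$, a uniformly random tuple $(\mathcal{C}'_1,\ldots,\mathcal{C}'_t)$ with $\mathcal{C}'_i\in\mathrm{Gr}_{2^m}(n,k_i)$ yields a product code $\bigotimes_{i\in[t]}\mathcal{C}'^{\perp}_i$ with product expansion at least some value $\mu_t(r_1,\ldots,r_t)>0$ depending only on the rates (not on $n$ or $m$). Concretely, one bounds the expected number of ``bad'' configurations---nonzero elements of the ambient tensor space admitting a decomposition into line-supported local codewords whose total weight exceeds $1/\mu$ times the global support size---via dimension counting, and extracts the threshold $\mu_t(r_1,\ldots,r_t)$ below which this expected count is strictly less than $1$.

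Second, and more importantly, I would translate product expansion into the language of extendability. Specifically, I would prove that a product code $\bigotimes \mathcal{C}^{\perp}_i$ has product expansion at least $\mu$ if (and essentially only if) every set $M\subseteq[n]^t$ belonging to a combinatorial family $\mathcal{S}_\mu$---depending only on $\mu$ and the rates---is extendable in the code. The natural candidate for $\mathcal{S}_\mu$ consists of complements of supports of would-be product expansion witnesses: the reformulation asserts that a witness for failure of $\mu$-product expansion is, after passing to the dual, a nonextendable check-consistent local configuration on some $M\in\mathcal{S}_\mu$. Once this bridge is established, the conclusion follows immediately: by the first step there exists a product code $\bigotimes \mathcal{C}'^{\perp}_i$ of the given dimensions in which every set in $\mathcal{S}_{\mu_t}$ is extendable; by the definition of maximal extendability, every such set is also extendable in our code $\bigotimes \mathcal{C}^{\perp}_i$; applying the equivalence in reverse yields $\rho(\mathcal{C}_1,\ldots,\mathcal{C}_t)\geq\mu_t(r_1,\ldots,r_t)$.

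The main obstacle is the second step. The two notions speak rather different languages: product expansion concerns global decompositions into line-supported local codewords with controlled total weight, while extendability concerns whether a locally check-consistent assignment on a set $M$ admits a global extension. Identifying the precise combinatorial family $\mathcal{S}_\mu$ and proving a clean equivalence demands a careful dual analysis---one must show that a bad product expansion witness $v$ together with its decomposition produces a check-consistent local configuration on $[n]^t\setminus\supp(v)$ that admits no extension, and conversely that every nonextendable configuration on a set of $\mathcal{S}_\mu$ gives rise to such a witness. Once that bridge is in place, the probabilistic and monotonicity arguments in the first and third steps follow standard patterns.
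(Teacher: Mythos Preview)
The paper does not prove this lemma at all: it is stated with a citation to \cite{kalachev2025maximallyextendableproductcodes} and used as a black box (the text says ``we borrow two more important results from \cite{kalachev2025maximallyextendableproductcodes}''). So there is no proof in the paper to compare against, and your proposal goes well beyond what the paper itself does.

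That said, your outline is a reasonable sketch of the strategy in the cited reference. The decisive idea is exactly the bridge you identify in step two: one shows that failure of $\mu$-product expansion for $(\mathcal{C}_1,\ldots,\mathcal{C}_t)$ is witnessed by a set $M\subseteq[n]^t$ (depending only on $\mu$, $n$, and the dimensions $k_i$, not on the particular codes) that is non-extendable in $\bigotimes_i \mathcal{C}_i^\perp$. Once product expansion is encoded as ``every $M$ in a fixed combinatorial family is extendable,'' the definition of maximal extendability transfers the bound from any single good instance to all maximally extendable codes of the same dimensions, and step one supplies such an instance. Your description of $\mathcal{S}_\mu$ as complements of supports of potential witnesses is on the right track, though in the actual argument the relevant sets are characterized more carefully in terms of the line structure; making this precise and proving the two directions of the equivalence is, as you correctly flag, where the real work lies.
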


\begin{theorem}[\cite{kalachev2025maximallyextendableproductcodes}]\label{random two-way maximally extendable codes}
For a collection $(\mathcal{C}_1,\ldots,\mathcal{C}_t)$ picked uniformly at random from 
$\mathrm{Gr}_{2^m}(n,k_1)\times\cdots\times\mathrm{Gr}_{2^m}(n,k_t)$,
the code $\mathcal{C}_1\otimes\cdots\otimes\mathcal{C}_t$ and $\mathcal{C}_1^\perp\otimes\cdots\otimes\mathcal{C}_t^\perp$ are maximally extendable at the same time with probability at least $1 - n^t 2^{n^t - m + 2}$.
\end{theorem}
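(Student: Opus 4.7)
The plan is to establish Theorem \ref{random two-way maximally extendable codes} by the standard Schwartz–Zippel-style random-coding argument over the large finite field $\mathbb{F}_{2^m}$, following the approach in \cite{kalachev2025maximallyextendableproductcodes}. The key observation is that maximal extendability is a generic linear-algebraic condition on the codes $\mathcal{C}_1, \ldots, \mathcal{C}_t$: the code $\bigotimes_i \mathcal{C}_i^\perp$ is maximally extendable precisely when, for every subset $M \subseteq [n]^t$, the projection $\pi_M(\bigotimes_i \mathcal{C}_i^\perp) \subseteq \mathbb{F}_q^M$ attains the maximum possible dimension among all product codes with the fixed dimension profile $(k_1, \ldots, k_t)$. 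When this generic maximum equals the local-codeword dimension $\dim L(M)$, the set $M$ is extendable; when it is strictly smaller, no product code with these dimensions makes $M$ extendable, so no obstruction to maximal extendability can arise from such an $M$.

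The first step is to recast the extendability question as the non-vanishing of a specific minor of a matrix $\Phi_M$ whose entries are polynomial functions of the entries of generator matrices $G_1, \ldots, G_t$ of $\mathcal{C}_1^\perp, \ldots, \mathcal{C}_t^\perp$. I would show that there exists a choice of generator matrices for which $\Phi_M$ attains its maximum rank (simply by exhibiting any code configuration realizing this rank), which implies that the corresponding determinant polynomial in the $G_i$ entries is not identically zero. A parallel matrix $\Psi_M$ handles the primal code $\bigotimes_i \mathcal{C}_i$. Passing from the affine space of generator matrices (parameterized by $k_i \times n$ entries) to the Grassmannian is harmless: the failure locus is $\mathrm{GL}_{k_i}$-invariant on each factor, so the induced Grassmannian failure probability equals the affine one.

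The second step is the Schwartz–Zippel estimate: for each fixed $M$, the minor determinant of $\Phi_M$ has degree at most $n^t$ in the entries of the $G_i$ (the matrix has at most $n^t$ rows indexed by points of $[n]^t$), so it vanishes on at most an $n^t / q$ fraction of affine inputs. Combining the primal and dual contributions gives per-$M$ failure probability at most $2 n^t / q = n^t \cdot 2^{1-m}$. A final union bound over all $2^{n^t}$ subsets $M \subseteq [n]^t$ then yields total failure probability at most $n^t 2^{n^t - m + 2}$, matching the claimed bound.

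The main obstacle is the first step: identifying matrices $\Phi_M$ and $\Psi_M$ so that the maximum-rank condition cleanly characterizes the generic extendability threshold of $M$ and admits a manageable polynomial degree bound. Extendability intertwines the codes $\mathcal{C}_1, \ldots, \mathcal{C}_t$ along every line contained in $M$, and collecting these line-wise linear constraints into a single rank condition, while simultaneously coordinating the primal and dual settings so that both are controlled by the same $\mathbb{F}_{2^m}$-random choice of generators, is the technical heart of the argument in \cite{kalachev2025maximallyextendableproductcodes}.
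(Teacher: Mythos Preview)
The paper does not supply its own proof of this theorem: it is quoted verbatim as a result of \cite{kalachev2025maximallyextendableproductcodes} and used as a black box in the proof of Theorem~\ref{all-ones vector planted product expanding codes}. Your proposal is therefore not to be compared against anything in the present paper; rather, you have sketched (accurately, as far as it goes) the Schwartz--Zippel plus union-bound argument from the cited source, which is indeed the origin of the bound $1 - n^t 2^{n^t - m + 2}$.
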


\begin{proof}[Proof of Theorem \ref{all-ones vector planted product expanding codes}]
By Theorem \ref{random two-way maximally extendable codes}, we may find $\mathrm{Gr}_{2^m}(n,k_1)\times\cdots\times\mathrm{Gr}_{2^m}(n,k_t)$,
the code $\mathcal{C}_1\otimes\cdots\otimes\mathcal{C}_t$ and $\mathcal{C}_1^\perp\otimes\cdots\otimes\mathcal{C}_t^\perp$ satisfying desired rates and being maximally extendable at the same time. By Lemma \ref{finding all-nonzero vector}, we may further choose, for each $\mathcal{C}_i^\perp$, there is an all-component nonzero vector $(c_{i,1},\cdots,c_{i,t})^T\in \mathcal{C}_i$. Then we define $D_i\coloneqq \text{diag}(c^{-1}_{i,1},\cdots,c^{-1}_{i,t})$, this is invertible. Note that each $D_i\mathcal{C}_i^\perp$ contains all-ones vector. By Lemma \ref{diagonal matrices preserve maximal extendability}, $\bigotimes_{i\in[t]}D_i\mathcal{C}_i^\perp$ is still maximally extendable. Also, since $(D_i\mathcal{C}_i^\perp)^\perp=D_i^{-1}\mathcal{C}_i$, $\bigotimes_{i\in [t]}(D_i\mathcal{C}_i^\perp)^\perp$ is still maximally extendable. Therefore, by Lemma \ref{maximal extendability implies product expansion}, $\bigotimes_{i\in [t]}(D_i\mathcal{C}_i^\perp)$ and $\bigotimes_{i\in [t]}(D_i\mathcal{C}_i^\perp)^\perp$ are the desired all-ones vector planted two-way product expanding codes.
\end{proof}

\end{document}